\newlength{\dhatheight}
\newcommand{\doublehat}[1]{%
    \settoheight{\dhatheight}{\ensuremath{\widehat{#1}}}%
    \addtolength{\dhatheight}{-0.35ex}%
    \widehat{\vphantom{\rule{1pt}{\dhatheight}}%
    \smash{\widehat{#1}}}}
\def\hZ {{\textbf{Z}}}
\def\calK{{{\cal{K}}}}
\def\calA{{{\cal{A}}}}
\def\matZ{{\mathbb{Z}}}
\def\matQ{{\mathbb{Q}}}
\def\matC{{\mathbb{C}}}
\def\M{ {\textbf{M}}  }
\def\sldh{{\mathscr{U}_{\hbar}(\widehat{\frak{gl}}_2)}}
\def\gt{{U_{q}(\doublehat{\frak{gl}}_1)}}
\def\slope{s} 
\def\dz {z}   
\def\Wc {B}  
\def\tb {{\cal{V}}} 
\def\KG {{G}} 
\def\qmV {{\mathscr{V}}} 
\def\qmW {{\mathscr{W}}} 
\def\qmQ {{\mathscr{Q}}} 
\def\qmM {{\mathscr{M}}} 
\def\qm {\textsf{QM}} 
\def \vss {\widehat{{\cO}}_{\textrm{vir}}} 
\def \capping{{\textbf{J}}} 
\def \ev {{\textrm{ev}}} 
\def \shift {\mathsf{S}} 
\def \Rtot {{\mathscr{R}}} 
\def \Rwal {{\sf{R}}} 
\def \geomM {{\mathbf{M}}} 
\def \repM {{\mathbf{B}}} 
\def \wJ {{\textbf{J}}} 
\def \T {{T}} 
\def\qKZ {{\mathscr{K}}} 
\def\qA {{\mathscr{A}}} 
\def\be{\begin{eqnarray}}
\def\ee{\end{eqnarray}}
\newcommand{\C}{\mathbb{C}}
\newcommand{\Z}{\mathbb{Z}}
\newcommand{\R}{\mathbb{R}}
\newcommand{\B}{\mathbb{B}}
\newcommand{\Q}{\mathbb{Q}}
\newcommand{\N}{\mathbb{N}}
\newcommand{\bv}{\mathsf{v}}
\newcommand{\bw}{\mathsf{w}}
\newcommand{\bmu}{\boldsymbol{\mu}}
\newcommand{\bal}{\widehat{\alpha}}
\newcommand{\Uq}{\mathscr{U}_\hbar}
\newcommand{\fg}{\mathfrak{g}}
\newcommand{\fh}{\mathfrak{h}}
\newcommand{\fgh}{\widehat{\fg}}
\newcommand{\fhh}{\widehat{\fh}}
\newcommand{\fgKM}{\mathfrak{g_\textup{\tiny \sc KM}}}
\newcommand{\fghKM}{\widehat{\fg}_\textup{\tiny\sc KM}}
\newcommand{\cU}{\mathscr{U}}
\newcommand{\cS}{\mathscr{S}}
\newcommand{\cO}{\mathscr{O}}
\newcommand{\cL}{\mathscr{L}}
\newcommand{\cE}{\mathscr{E}}
\newcommand{\cM}{\mathscr{M}}
\newcommand{\cN}{\mathscr{N}}
\newcommand{\rr}{ } 
\newcommand{\bb}{} 
\newcommand{\bA}{\mathsf{A}}
\newcommand{\bT}{G}
\newcommand{\fC}{\mathfrak{C}}
\newcommand{\Topp}{T^{1/2}_\textup{opp}}
\newcommand{\Ct}{{\C^{\times}}}
\newcommand{\pt}{\textup{pt}}
\newcommand{\Hd}{{H}^{\raisebox{0.5mm}{$\scriptscriptstyle \bullet$}}}
\newcommand{\Ldm}{{\Lambda}^{\!\raisebox{0.5mm}{$\scriptscriptstyle \bullet$}\,}_{\scriptscriptstyle{-}}}
\DeclareMathOperator{\Hom}{\mathscr{H}\text{\kern -3pt {\calligra\large om}}\,}
\newcommand{\bkap}{\boldsymbol\kappa}
\DeclareMathOperator{\Aut}{Aut}
\DeclareMathOperator{\Pic}{Pic}
\DeclareMathOperator{\Hilb}{Hilb}
\DeclareMathOperator{\Stab}{Stab}
\DeclareMathOperator{\Lie}{Lie}
\DeclareMathOperator{\Attr}{Attr}
\DeclareMathOperator{\diag}{diag}
\DeclareMathOperator{\rk}{rk}
\DeclareMathOperator{\Coh}{Coh}
\DeclareMathOperator{\codim}{codim}
\newtheorem{Corollary}{Corollary}
\newtheorem{Theorem}{Theorem}
\newtheorem{Lemma}{Lemma}
\newtheorem{Proposition}{Proposition}
\theoremstyle{definition}
\newtheorem{Remark}{Remark}
\begin{document}

\title{Quantum difference equation for Nakajima varieties}
\author{A. Okounkov and A.  Smirnov}
\date{}
\maketitle

\begin{abstract}
For an arbitrary Nakajima quiver variety $X$, we construct
an analog of the quantum dynamical Weyl group acting
in its equivariant K-theory. The correct generalization of
the Weyl group here is the fundamental groupoid of a certain
periodic locally finite hyperplane arrangement in $\Pic(X)\otimes
\C$. We identify the lattice part of this groupoid with the
operators of quantum
difference equation for $X$. The cases of quivers of finite
and affine type are illustrated by explicit examples.
\end{abstract}

\setcounter{tocdepth}{2}
\tableofcontents

\section{Introduction}

\subsection{The quantum differential equation}

\subsubsection{}

This paper is about enumerative $K$-theory of rational curves in
Nakajima quiver varieties. The cohomological
version of the questions that we answer here may be
asked very generally, for example one may replace a Nakajima
variety $X$ by a general smooth quasiprojective variety over $\C$
as long as rational curves in $X$ satisfy certain properness
conditions.

Consider the cone of effective curves in $H_2(X,\Z)$ and its
semigroup algebra spanned by monomials $z^d$, where
$d\in H_2(X,\Z)_\textup{effective}$. It has  a natural
completion which we denote $\C[[z^d]]$. The cup product in
$\Hd(X,\C)$ has an associative supercommutative deformation
\begin{equation}
\alpha \star \beta = \alpha \cup \beta + O(z) \,,\label{star_p}
\end{equation}
parametrized by $\C[[z^d]]$,
in which one counts not only triple intersections of cycles but
also rational curves meeting three given cycles, see \cite{CK}
for an introduction. The corresponding algebra is known as
the quantum cohomology of $X$.
The construction works equivariantly
with respect to $\Aut(X)$; in what follows, it will be important
to work equivariantly with respect to a torus $\bT \subset \Aut(X)$.

Associated to \eqref{star_p}  is a remarkable
\emph{flat} connection on the trivial $\Hd_\bT(X,\C)$-bundle over
$\textup{Spec}\, \C[[z^d]]$ known as the
quantum connection, the Dubrovin connection, or the
quantum differential equation. It has the form
\begin{equation}
\frac{d}{d\lambda} \Psi(z) = \lambda \star \Psi(z) \,, \quad \Psi(z)
\in \Hd(X)\,,
\quad \frac{d}{d\lambda} z^d = (\lambda,d) \, z^d \,,\label{Dubr}
\end{equation}
where $\lambda\in H^2(X,\C)$. Flat sections of this connection
play a very important enumerative role.

\subsubsection{}

For Nakajima varieties, the formal series in $z$ in \eqref{star_p}
converge to rational functions, and the connection extends as a connection with
regular singularities to a certain toric compactification
$$
\textup{K\"ahler moduli space} \supset \Pic(X)
\otimes \Ct \owns z \,.
$$
In fact, the following representation-theoretic
interpretation of this connection was proven in \cite{MO}.

Recall that Nakajima quiver varieties \cite{NakALE,NakQv}
play a central role in
geometric representation theory and very interesting algebras
act by correspondences between Nakajima varieties. In particular,
quantum loop algebras $\cU_\hbar(\fghKM)$
associated to a Kac-Moody Lie algebra $\fgKM$ were
realized geometrically by Nakajima in equivariant K-theory of his quiver
varieties, see \cite{Nakfd}.
 Parallel results for Yangians
$Y(\fgKM)$ in cohomology were
proven by Varagnolo in \cite{Varan}.

A representation-theoretic description of the quantum
differential equation requires a certain larger Lie algebra
$\fg \supset \fgKM$. It coincides with the Kac-Moody
Lie algebra for quivers of finite ADE type and otherwise can be
significantly larger. This Lie algebra, together with the
corresponding Yangian $Y(\fg)$, was constructed in \cite{MO}.
This construction will be recalled in Section \ref{frt} below, in the
generality of quantum loop algebras.

The Lie algebra $\fg$ has a root decomposition
$$
\fg = \fh \oplus \bigoplus_{\alpha} \fg_\alpha
$$
in which $\fh = \Pic(X) \otimes \C \oplus \textup{center}$
and $\alpha \in \pm H_2(X,\Z)_\textup{effective}$. The root
subspaces are finite-dimensional and
$\fg_{-\alpha} = \fg_\alpha^*$ with respect to a nondegenerate
symmetric invariant form.

The main
result of \cite{MO} reads
\begin{equation}
  \label{eq1}
c_1(\lambda) \star_\textup{\tiny modif}  =
c_1(\lambda) \cup - \hbar \sum_{\theta\cdot \alpha >0}
(\lambda, \alpha) \, \frac{z^\alpha}{1-z^\alpha} \, e_\alpha
e_{-\alpha} + \dots
\end{equation}
where
$$
\lambda \in \Pic(X) \otimes \C \subset \fh
$$
and
the subscript in $\star_\textup{\tiny modif}$ means a
shift of the form $z^d \mapsto (-1)^{(d,\bkap)} z^d$ for a
certain canonically defined $\bkap\in H^2(X,\Z/2)$. We
will
see a parallel shift in our formulas below (see the footnote after Theorem \ref{okth}).
Further in \eqref{eq1},
$$
\hbar\in H^2_\bT(\pt) = \left(\Lie \bT\right)^*
$$
 is the equivariant weight of
the symplectic form and the pairing $\theta\cdot \alpha$ with
the stability parameter $\theta\in H^2(X,\R)$ selects the
effective representative from each $\pm \alpha$ pair.
The abbreviation
$$
e_\alpha e_{-\alpha} \in \fg_\alpha \fg_{-\alpha} \subset
\cU(\fg)
$$
stands for the image of the canonical element of
$\fg_\alpha \otimes \fg_{-\alpha}$ under multiplication.
Finally, the dots in \eqref{eq1} stand for the a multiple of the
identity. Such normalization ambiguity is typical, and is resolved
e.g.\ by the requirement that the purely quantum part of \eqref{eq1}
annihilates $1\in H^0(X)$. We will see a similar multiplicative
scalar ambiguity in our main formula.

The poles in \eqref{eq1} are contained in
\begin{equation}
\{z^\alpha = 1\,, 0 < \alpha \le \bv \} \label{sing_QDE}\,,
\end{equation}
where $\bv$ is the dimension vector for a given quiver
variety. The condition $\alpha \le \bv$ is necessary for
$\fg_\alpha \Hd(X) \ne 0$ and hence for the occurrence of
the corresponding pole in \eqref{eq1}. The singularities
\eqref{sing_QDE}
 lift to a periodic locally finite arrangement of hyperplanes
\begin{equation}
\{(\lambda,\alpha) \in \Z \,, 0<\alpha\le \bv \} \label{hyper}
\end{equation}
on the universal cover $H^2(X,\C)$ of the
K\"ahler torus $\Pic(X) \otimes \Ct$. These \emph{
affine root hyperplanes}
will play an important role below.

\subsubsection{}

The Yangian $Y(\fg)$ is a certain Hopf algebra deformation of
the algebra $\cU(\fg[t])$ of polynomial loops and one of
its basic features is that the operator $c_1(\lambda)$ is a
deformation of $t\lambda \in \fh[t]$. Thus \eqref{eq1}
becomes an instance of the trigonometric
\emph{Casimir connection}, studied in \cite{Tol1}
for Yangians of
finite-dimensional semisimple Lie algebras, see also
the work \cite{TV1,TV2} by Tarasov and Varchenko.

In fact, the program of constructing the general Yangians
$Y(\fg)$ and
identifying their Casimir connections with the quantum connection
for Nakajima varieties
was born out of conjectures made by Nekrasov and Shatashvili
on one hand \cite{NS1,NS2}
, and Bezrukavnikov and his collaborators ---
on the other.

Already back then it was predicted by Etingof that the
correct K-theoretic version of the quantum connection should
be identified with a similar generalization of
 dynamical difference equations studied by Tarasov, Varchenko,
Etingof, and others (see e.g.\ \cite{TV3,EV}
) for finite-dimensional
Lie algebras $\fg$. In particular, Balagovic proved \cite{Bal}
 that for
a finite-dimensional $\fg$, the dynamical equations
degenerate to the Casimir connection in the appropriate
limit. While both our methods and objects of study
differ significantly from the
above cited works, it is fundamentally this vision of Etingof
that is realized in the present paper.

\subsubsection{}
For quivers of affine ADE type, Nakajima varieties are moduli
of framed coherent sheaves on the corresponding surfaces.
In particular, the Hilbert schemes
$\Hilb(S,\textup{points})$, where $S$ is an
ADE  surface, are
Nakajima varieties. Quantum differential equations for
those were determined earlier in \cite{OP,ObloMal},  and
play a key role in enumerative geometry of curves in
threefolds. Such enumerative theories exist in
different flavors known as the Gromov-Witten and the
Donaldson-Thomas theories\footnote{Here the threefold need
not be Calabi-Yau, to point out a frequent misconception.
For example, the equivariant
Donaldson-Thomas theory of toric varieties is a very rich
subject with many applications in mathematical
physics.}. A highly nontrivial
equivalence between the two was conjectured in \cite{MNOP1,MNOP2}
and its proof for toric varieties given in \cite{MOOP}
rests
on reconstructing both from the quantum difference
equation for the Hilbert schemes of points in $A_n$ surfaces.

In fact, it may be accurate to say that the GW/DT correspondence
in the generality known today, see especially \cite{PP1}
 for
state of the art results, is
proven by breaking the threefolds in pieces until we get to
an ADE surface fibration, for which the computations on
both sides can be equated to a computation in quantum
cohomology of $\Hilb(S,\textup{points})$.  It is not
surprising that such a connection exists, because a curve
$$
C \to \Hilb(S,\textup{points})
$$
defines a subscheme of $C\times S$. However, it is
very important for $S$ to be a symplectic surface for this
correspondence to remain precise enumeratively, and not be corrected
by contributions of nonmatching strata in different moduli
spaces.

As a particular case of our general result, we compute the
quantum difference connection in the quantum K-theory
of $\Hilb(S,\textup{points})$. This has an entirely parallel
use in K-theoretic Donaldson-Thomas theory of threefolds, see \cite{pcmilect}.
There is a great interest in this theory, for instance because
of its conjectural connection to a certain curve-counting in
Calabi-Yau 5-folds, which is expected to be an
algebro-geometric version of computing the contribution of
membranes to the index of M-theory, see \cite{NO}.

\subsubsection{}
Another reason why quantum differential equations are important
is because the conjectures of Bezrukavnikov and his
collaborators relate them to representation theory of
\emph{quantizations} of $X$, see for example \cite{PE} and
also e.g.\ \cite{BL}
for subsequent developments.

Much technical and conceptual progress in
representation theory has been achieved by treating
algebras of interest, such as e.g.\ universal enveloping
algebras of semisimple Lie algebras, as quantizations of
algebraic symplectic varieties, see e.g.\ \cite{BK,BM,Dkad,BF},  especially
in prime characteristic. By construction, Nakajima
varieties are algebraic symplectic reductions of linear
symplectic representations, and hence come
with a natural family of quantizations $\widehat{X}_\lambda$.
Here $\lambda$ is
a parameter of the quantization, which is of the same
nature as commutative deformations of $X$, e.g.
the central character in the case
$$
\cU(\fg)\big/ \textup{central character} =
\textup{Quantization of $T^*G/B$} \,.
$$
For example, the Hilbert
scheme of $n$ points in the plane yields the spherical
subalgebra of Cherednik's double affine Hecke algebra
of $\mathfrak{gl}(n)$ --- a structure of great depth and
importance in applications.

Using quantization in characteristic $p\gg 0$, one constructs
an action of the fundamental group of the complement of
a certain periodic locally finite arrangement of rational
hyperplanes in $H^2(X,\C)$ by autoequivalences of
$D^b_\bT(\Coh X)$. It is known in special cases
and conjectured in general that these hyperplanes
coincide with \eqref{hyper} and, moreover, one conjectures
a precise identification of the resulting action on $K_\bT(X)$ with
the monodromy of the quantum differential equation.
This can be verified for the Hilbert schemes of points and other
Nakajima varieties whose fixed loci under a torus action consists of isolated points
\cite{OP,BO}
and it is quite possible that similar arguments can be
made to work for general Nakajima varieties. There are
parallel links between the singularities of \eqref{eq1} and
representation theory of $\widehat{X}_\lambda$ for
special values of $\lambda$
in characteristic $0$, see \cite{PE}.

\subsubsection{}

An important structure which emerges from the quantization
viewpoint is an association of a $t$-structure on $D^b_\bT(\Coh X)$
to each alcove of the complement of \eqref{hyper} in
$H^2(X,\R)$. The abelian hearts of the corresponding
$t$-structures are identified with $\widehat{X}_\lambda$-modules
for the corresponding range of parameters $\lambda$. In this way,
the action of the fundamental group by derived autoequivalences
of $\Coh X$ fits into an action of the fundamental
groupoid
$$
\B= \pi_1\left(H^2(X,\C) \setminus \textup{affine root hyperplanes}\right)
$$
by derived equivalences between the categories of
$\widehat{X}_\lambda$-modules.  In particular, $\B$ acts on
the common K-theory $K_\bT(X)$ of all these categories.

The main object constructed in this paper is a
\emph{dynamical} extension of the action of
$\B$ on $K_\bT(X)$. By definition, this means that
the operators of $\B$ depend on the K\"ahler
variables $z$ and the braid relations are understood
accordingly.

To be precise, in this paper we construct
a dynamical action of $\B$ and we \emph{prove} its
relation to the quantum difference equation. The
connection with quantization
in characteristic $p\gg 0$ is not considered in this
paper, see \cite{BO}.
Similarly, a categorical lift of the dynamical action at this
point remains an open problem. It is possible that it easier
to categorify the \emph{monodromy} of the quantum
difference equation, which can be characterized in terms
of an action of an elliptic quantum group on the elliptic
cohomology of Nakajima varieties, see \cite{AO}.

\subsection{Quantum difference equations}

\subsubsection{}
The quantum difference equation is a flat
$q$-difference connection
$$
\Psi(q^\cL z) = \M_\cL(z) \Psi(z)
$$
on functions of $z$ with
values in $K_\bT(X)$. It shifts the argument by
$$
z \mapsto q^{\cL} z\,,
$$
where $\cL\in \Pic(X)$ is a line bundle on $X$ or, equivalently,
a cocharacter of the K\"ahler torus $\Pic(X) \otimes \Ct$.
See \cite{pcmilect} for an introductory exposition of their
construction and enumerative significance; these are
briefly recalled in Section \ref{s_quant_K}.

In particular, in \cite{pcmilect} it is shown that these equations commute
with the quantum Knizhnik-Zamolodchikov equations for
the $\cU_\hbar(\fgh)$-action on $K_\bT(X)$. This
commutation property will be the key ingredient in
determining the quantum difference equation.

\subsubsection{}
The arrangement \eqref{hyper} is periodic under the
action of the lattice $\Pic(X)$ and hence there is a copy of
this lattice in the fundamental groupoid. Our main
result is the identification of this lattice with the
operators of the quantum difference equation.

Concretely, this means the following formula for the
quantum difference equation. Let
$$
\nabla \subset \Pic(X)\otimes \R \setminus \{
\textup{affine root hyperplanes}\}
$$
be the unique alcove contained in minus the ample cone and
whose closure contains the origin. Let $\cL$ be an ample
line bundle and choose a path connecting $\nabla$ to the
alcove $\nabla-\cL$. Let $w_1,w_2,\dots$ be the ordered
list of affine root hyperplanes that this path crosses.

Each $w$ determines a set of affine roots that vanish on it and the corresponding rank $1$ subalgebra
$$
\fg_{w} \subset \fgh = \fg \otimes \C[t^{\pm1}]  \,.
$$
While there is no canonical root subalgebra
$\cU_\hbar(\fg_w) \subset \cU_\hbar(\fgh)$ in the quantized
loop algebra, the choice of a path as above is precisely
the additional data needed to fix such $\cU_\hbar(\fg_w)$.

Each $\cU_\hbar(\fg_w)$ is a triangular Hopf algebra and
to any such one can associate a universal element
$\repM_w(\lambda)$, $\lambda\in \fh_w$, in its
completion. It reduces to the dynamical operator
of Etingof and Varchenko when $\fg_w\cong
\mathfrak{sl}_2$.  When $\fg_w$ is a Heisenberg algebra,
which happens in the case of Hilbert schemes of points
in ADE surfaces, there is an equally explicit formula
for the element $\repM_w(\lambda)$, see Sections \ref{s_ExamplesG} and \ref{apa}.

Our main result, Theorem \ref{mainth}, says that
$$
\M_\cL = \textup{const} \, \cL \, \cdots \repM_{w_3} \repM_{w_2}
\repM_{w_1}
$$
where $\cL$ is the operator of tensor product by $\cL$ in
$K_\bT(X)$. By the basic property of the fundamental
groupoid, the result is independent of the choice of the path.

\subsubsection{}
Intertwining operators between Verma modules, which
are the main technical tool of \cite{EV}, are only available for real
roots and $\fg_w\cong\mathfrak{sl}_2$.  Outside quivers
of finite ADE type, these do not generate a large enough
dynamical Weyl group. It is therefore important to use
an abstract formula for the operator $\repM_w(\lambda)$.

Such a general formula is given by
\begin{equation}
\repM_w(\lambda)=
 \left.\textbf{m} \Big(1 \otimes S_{w}( \, \wJ^{-}_{w}(\lambda)^{-1}\, )
   \Big)\right|_{\lambda\rightarrow\lambda+\textup{shift}}
 \,,
 \label{gen_form_B}
\end{equation}
where $\wJ^{-}$ lies in a completion of
the tensor square of $\cU_\hbar(\fg_w)$
and is a fundamental solution of a qKZ-like equation known as
the ABRR equation in honor of D.~Arnaudon, E.~Buffenoir,
E.~Ragoucy, and Ph.~Roche \cite{abrr}.  One then applies the
antipode $S_w$ of $\cU_\hbar(\fg_w)$ in one of the tensor
factors and the multiplication map
$$
\textbf{m}: \cU_\hbar(\fg_w)^{\otimes 2} \to \cU_\hbar(\fg_w)
$$
to get an element in the completion of $\cU_\hbar(\fg_w)$.

One makes $\repM_w$ a function of $\lambda\in \fhh$
via the natural surjection
\begin{equation}
\fhh \to \fh_w \to 0 \,,\label{surjhh}
\end{equation}
where $\fh_w \cong \C$ is the Cartan subalgebra of $\fg_w$ and
$\fhh$ is the Cartan subalgebra of $\fgh$ that includes
$\fh$ and the infinitesimal
loop rotation  $t\frac{d}{dt}$. In particular, the operator
$\repM_w(\lambda)$ depends on $q$ via
$$
q \frac{d}{dq} \mapsto t\frac{d}{dt}
\,.
$$
The shift in \eqref{gen_form_B} includes the shift by $\hbar^\kappa$,
where
$$
2 \kappa =  \bw - C \bv
$$
 is the weight of the component with dimension
vector $\bv$ with respect to the geometric action of the quantum
loop algebra. Here $C$ is the Cartan matrix of the quiver.
The shifts by $\hbar^\kappa$ in all formulas can be traced to the
$\hbar^{\codim/4}$ prefactor in $R$-matrices, see Section
\ref{s_Omega_Nakajima}.

For quivers of finite or affine type, all root subalgebras are
either $\mathfrak{sl}_2$ or Heisenberg algebras, and the
general formula for $\repM_w(\lambda)$ may be converted
into something very explicit. We consider these examples in
Sections \ref{s_ExamplesG} and~\ref{apa}.

\subsubsection{}
The main result of this paper is a description of the quantum
difference equations that arise in the enumerative K-theory
of \emph{quasimaps} to Nakajima varieties, see \cite{pcmilect} for
an introduction. This is the natural generality in which our
methods of geometric representation theory work.

There exist both more general and more special problems
in enumerative K-theory. A very general study of K-theoretic
questions using the moduli spaces of \emph{stable maps}
was initiated many years ago by Givental. In that theory,
there exist difference equations as shown by Givental and
Tonita \cite{GV}.
The general theory lacks certain crucial self-duality
properties that are exploited in the construction of
the quantum Knizhnik-Zamolodchikov equations, see the
discussion in \cite{pcmilect}, and it
remains to be seen how much progress one can make in
the study of the difference equations of \cite{GV}.

On the other side, there exist quantum K-theory of homogeneous
spaces, initiated by Givental and Lee \cite{GL}
who discovered, in particular, its connection to the difference Toda
equations. One expects this theory to extend to symplectic
resolutions $T^*G/P$, with a connection to Macdonald theory
similar to \cite{BMO}. For $G=GL(n)$, these were studied in \cite{GL}.
In this case, $T^*G/P$ is a Nakajima variety for a linear
quiver and so is covered by our result. The relation of the
quantum dynamical Weyl group to Macdonald operators was
already explicitly present in the original work of Etingof, Tarasov,
and Varchenko.

\subsection{Other directions} 
{\rr Substantial progress has been made since the first release of this paper in 2016. The construction of stable envelope, which is an important tool of this paper, was generalized to elliptic cohomology setting in \cite{AO}. Explicit combinatorial formulas for the elliptic, K-theoretic and cohomological stable envelopes are now available for many classes of varieties \cite{SmirnovEllipticHilbert,DinkinsAffineQuivers, RimanyiShouBowVarieties}. The class of varieties for which the stable envelope exists has been extended in ~\cite{OkounkovInductive}, see also \cite{RimanyiRozansky} for a super-algebra generalization.
	
An important new feature of the elliptic stable envelope is that, in addition to the torus equivariant parameters, it depends on the K\"ahler parameters.  This makes the elliptic stable envelope a natural object in the study of the so called {\it three-dimensional mirror symmetry} which, among other things, interchanges the equivariant and the K\"ahler parameters. Three-dimensional mirror symmetry of the elliptic stable envelope has been investigated and proven for many examples of symplectic varieties, see \cite{RSVZid,RSVZj,SmirnovZhuHypertoric, RimanyiWeber, Dinkins3Dmir}.  
	
The elliptic stable envelope provides the transition matrices between
various bases of solutions of the quantum difference equations which we study in this paper, see \cite{AO}. In particular, one can use the elliptic stable envelope to describe the monodromy of these equations and to obtain integral representations for their solutions \cite{OkounkovNonabelian,AgannagicOkounkovBethe}. These results, combined with the three-dimensional mirror symmetry, lead to a new geometric descriptions of many constructions of our paper. 
As an example, the dynamical braid group generators~(\ref{gen_form_B}), playing the most fundamental role in this paper, can be identified with  K-theoretic $R$-matrices of certain subvarieties of the $3D$-mirror variety \cite{SmirnovQDE2}, see also \cite{SmirnovKononov1,SmirnovKononov2} for similar applications.
   
The $q\to 1$ limit of the quantum difference equations provides a natural description of the quantum K-theory ring of corresponding varieties.  Our results can be used to relate quantum K-theory rings to known integrable systems and give a proof of various predictions from theoretical physics \cite{PushkarZeitlinBaxter,KPZS,KorZetIInstantons}. 
	
}

\subsection{Acknowledgements}

During our work on this project, we greatly benefited
from interaction with M.~Aganagic,
R.~Bezrukavnikov, H. Dinkins, S.~Gautam, D.~Maulik,
 S.~Shakirov, V.~Toledano Laredo and A.~Oblomkov.
We are particularly grateful to Pavel Etingof for his
inspiration and guidance.

A.O.\ thanks the Simons foundation for being financially
supported as a Simons investigator.

The work of A.S. was partially supported by NSF grant DMS-2054527 and by the RSF grant 19-11-00062.

\section{Equivariant K-theory of
  Nakajima varieties and $R$-matrices  \label{sec1}}

\subsection{Stable envelopes in K-theory}

\subsubsection{}
\label{defsect}
Let $X$ be an algebraic symplectic variety and $G$ a
reductive group acting on $X$.
Since the algebraic symplectic form $\omega$
on $X$ is
unique up to a multiple, the group $G$ scales $\omega$ by a character
$\hbar$. Replacing $G$ by its double cover if necessary, we can
assume that $\hbar^{1/2}$ exists.

Let
$\bA\subset G$ be a torus in the center of $G$ and in the
kernel of $\hbar$. By definition, the K-theoretic
stable envelope is a K-theory class on the product~\cite{pcmilect}:
$$
\Stab \subset K_{G}(X \times X^\bA)\,,
$$
uniquely defined by certain support, degree, and normalization conditions.  {\rr The corresponding conditions are summarized in the Theorem \ref{themstab} below. } The stable envelope provides a wrong way map
$$
\Stab: K_{G}(X^\bA) \to  K_{G}(X)\,,
$$
which we denote by the same symbol.

\subsubsection{}

The construction of stable envelopes requires additional data,
namely the choice of:
\begin{itemize}
\item a cone $\fC \subset \Lie(\bA)$,
which divides the normal directions
to $X^\bA$ into attracting and repelling ones and determines the support
of $\Stab$,
\item a polarization $T^{1/2} \in K_G(X)$, which is a choice of a half
of the tangent bundle $TX\in K_G(X)$, that is, a solution of
 \begin{equation}
T^{1/2} + \hbar^{-1} \otimes \left(T^{1/2}\right)^\vee =
TX
\label{def_polar}
\end{equation}
in $K_G(X)$,
\item a slope $\slope \in \Pic(X)\otimes_{\Z} \Q$, which should be suitably
generic, see below.
\end{itemize}

\noindent
Of these pieces of data, the cone $\fC$ is exactly the same as
in cohomology \cite{MO}. The polarization reduces in cohomology to a
certain sign, while the slope parameter is genuinely
$K$-theoretic.

We recall from \cite{MO}, Section 2.2.7, that a Nakajima variety,
like any symplectic reduction of a cotangent bundle, has
natural polarizations. For any polarization $T^{1/2}$, there is the
opposite polarization
\begin{equation}
\Topp =\hbar^{-1} \otimes \left(T^{1/2}\right)^\vee \label{Topp} \,.
\end{equation}

\subsubsection{}
Let $\cN$ be the normal bundle to $X^\bA$ in $X$. The
$\bA$-weights $v$ appearing in $\cN$ define hyperplanes
$\{v=0\}$ in $\Lie \bA$. By definition, a cone
$$
\fC \subset \Lie \bA \setminus \bigcup_{v} \{v=0\}
$$
is one of the chambers of  the complement. We write $v>0$ if
$v$ is positive on $\fC$. A choice of $\fC$ thus
determines the decomposition
$$
\cN = \cN_+ \oplus \cN_{-}
$$
into attracting and repelling directions, with the
corresponding attracting manifold
$$
\Attr = \left\{(x,y), \lim_{a\to 0} a\cdot x = y \right\} \subset X \times X^\bA
$$
where $a\to 0$ means that $v(a)\to 0 $ for all $v>0$.

We define the full attracting set $\Attr^{\,f} \subset X\times X^{\bA}$ as the minimal closed set which contains the diagonal $X^{\bA} \times X^{\bA}$ and is invariant under taking $\Attr(\cdot)$. In other words, the components of  $\Attr^{\,f}$ are obtained from the components of the diagonal  $X^{\bA}\times X^{\bA}$ iterating taking $\Attr(\cdot)$ and the closure. The stable envelope is supported at the full attracting set:
\be \label{supcond}
\textrm{supp}(\Stab) \subset \Attr^{\,f}.
\ee

\subsubsection{}
Let $F$ be a component of $X^\bA$.
By Koszul resolution,
$$
\cO_{\Attr}\Big|_{F \times F}  =
\cO_{\diag F} \otimes \Ldm \cN_{-}^\vee \,,
$$
where the subscript in $\Ldm$ indicates an alternating
sum of exterior powers. We require
\begin{equation*}
  \label{Stab_norm_}
  \Stab \Big|_{F \times F} = \pm \, \, \textup{line bundle} \otimes \cO_{\Attr}\Big|_{F \times F}
\end{equation*}
where the sign and the line bundle are determined
by the choice of polarization.

Concretely, let
$$
T^{1/2}\big|_F =   T^{1/2}_{0}
\oplus T^{1/2}_{\ne 0}
$$
be the splitting of the polarization into trivial and nontrivial
$\bA$-characters. We have
$$
\cN_{-} \ominus  T^{1/2}_{\ne 0} = \hbar^{-1}
\left( T^{1/2}_{>0} \right)^\vee \ominus T^{1/2}_{>0} \,,
$$
and therefore the determinant of this virtual vector bundle is
a square (recall that we replace $G$ by its double cover if the
character $\hbar$ is not a square). We set
\begin{equation}
  \label{Stab_norm}
  \Stab \Big|_{F \times F} = (-1)^{\rk T^{1/2}_{>0}}
\left( \frac{\det \cN_{-}}{\det T^{1/2}_{\ne 0}}\right)^{1/2}
\otimes \cO_{\Attr}\Big|_{F \times F} \,.
\end{equation}

\subsubsection{}
The key property of stable envelopes are degree bounds
satisfied by $\Stab\big|_{F_2 \times F_1}$, where $F_1$ and $F_2$
are two different components of $X^\bA$. Note that
because of the support condition, this restriction vanishes
unless $F_2 < F_1$ in the partial ordering defined by
the closures of attracting manifolds, that is, by
$$
\exists x, \quad \lim_{a \to 0} a^{\pm 1} x  \in F_\pm
\Rightarrow F_+ > F_-  \,.
$$
Recall that in cohomology the degree bound reads
\begin{equation}
\deg_\bA \Stab\Big|_{F_2 \times F_1} <
\deg_\bA \Stab\Big|_{F_2 \times F_2} \,,\label{degH}
\end{equation}
where $\deg_\bA$ for an element of
$$
\Hd_G(X^\bA,\Q) \cong \Hd_{G/\bA}(X^\bA,\Q) \otimes \Q[\Lie \bA]
$$
is its degree in the variables $\Lie \bA$.

\subsubsection{}
Now in K-theory the degree $\deg_\bA f$ of a Laurent polynomial
$$
f = \sum_{\mu\in \bA^\wedge} f_\mu \, a^\mu \in \Z[\bA] = K_\bA(\pt)
$$
is its Newton polygon
$$
\deg_\bA f  = \textup{Convex hull} \,
\left(\{ \mu, f_\mu \ne 0\} \right) \subset \bA^\wedge \otimes_\Z \Q
\,,
$$
with the natural partial ordering on polygons defined by inclusion.

Such a definition has a caveat, in that the degree of an invertible
function $a^\mu$ should really be zero, and so the Newton polygons
should really be considered up to translation by the lattice
$\bA^\wedge$. If we want to compare two Newton polygons by
inclusion, a possibility of inclusion after a shift appears, and this
is where the slope parameter $\slope$ comes in.

The K-theoretic analog of \eqref{degH} is the following condition
\be  \label{degcond}
\deg_\bA \Stab_\slope\Big|_{F_2 \times F_1} \otimes \slope\Big|_{F_1}
\,\, \subset \,\,
\deg_\bA \Stab_\slope \Big|_{F_2 \times F_2} \otimes \slope\Big|_{F_2} \,,\label{degK}
\ee
where the weight of a fractional line bundle
$\slope \in \Pic(X) \otimes_\Z \Q$ is a fractional weight, that is,
an element of $\bA^\wedge \otimes_\Z \Q$. Note that
\eqref{degK} is independent of the $\bA$-linearization of $\slope$.
The dependence of the stable envelope $\Stab_\slope$ on the
slope $\slope$ is indicated for emphasis in the LHS of \eqref{degK}.
The degree of $\Stab \Big|_{F_2 \times F_2}$ is given by
\eqref{Stab_norm} and is independent of $\slope$.

\begin{Remark}\label{Rem1}
  Observe that for a sufficiently generic $\slope$ the inclusion in
\eqref{degK} is necessarily strict, as the inclusion between
fractional shifts of integral polytopes.
\end{Remark}

\subsubsection{}
{\rr
Let us summarize the above definitions in the following result:

\begin{Theorem} \label{themstab}
Let $X$ be a Nakajima variety, then for an arbitrary choice of  chamber 
$\fC \subset \Lie(\bA)$, polarization $T^{1/2} \subset K_{\bT}(X)$ and generic slope $s \in \Pic(X) \otimes_{\matZ} \matQ$ there exists a unique K-theory class $\Stab_{\fC,T^{1/2},s} \in K_{\bT}(X\times X^{\bA})$ which satisfies 

1) support condition (\ref{supcond})

2) degree condition (\ref{degcond}) 

3) normalization condition (\ref{Stab_norm})
\end{Theorem}

\begin{Remark}
Stronger results were obtained since the first release of this paper. 
For the Nakajima varieties a version of above theorem for the {\it elliptic stable envelope} was proven in~\cite{AO}. The existence and uniqueness
of the elliptic stable envelope then implies Theorem \ref{themstab} in the K-theoretic degeneration of elliptic cohomology, see Section 4.5 in \cite{AO}. The existence of the stable envelope under weaker conditions on $X$ was also proved in \cite{OkounkovInductive}. In particular, the existence of polarization of $X$ is replaced in \cite{OkounkovInductive}  by a weaker condition of existence of {\it attracting line bundles}. With these new tools, many constructions of this paper translate to a setting more general than Nakajima varieties.
\end{Remark}

Uniqueness of stable envelopes implies the following
transformation law under duality on $X \times X^\bA$
\begin{equation}
\left(\Stab_{\fC,\, T^{1/2}, \, \slope}\right)^\vee  =
{\rr \hbar^{-\codim(X^{\bA})/4}}
\Stab_{\fC,\,\Topp\,,
	- \slope} \,. \label{Stab_dual}
\end{equation}
Here $\Topp$ is
the opposite polarization \eqref{Topp}.
}

\subsubsection{}
To keep track of the weights of the line bundles $\slope$
restricted to components of the fixed locus, it is convenient
to introduce a locally constant map (a form of moment map)
\begin{equation}
\bmu: X^\bA \to H_2(X,\Z) \otimes \bA^\wedge \,,\label{defmu}
\end{equation}
defined up to an overall translation, such that
$$
\bmu(F_1) - \bmu(F_2) = [C] \otimes v
$$
if there is an irreducible $\bA$ invariant curve $C$ joining $F_1$ and
$F_2$ with tangent weight $v$ at $F_1$. For any $\slope$, we then
have
$$
\textup{weight} \, \slope\big|_{F_1} - \textup{weight} \,
\slope\big|_{F_2} = (\slope,C) \, v \,.
$$
By construction
\begin{equation}
\Stab\Big|_{F_2 \times F_1} \ne 0 \Rightarrow
\bmu(F_1) - \bmu(F_2) \in
H_2(X,\Z)_\textup{eff} \otimes
\bA^\wedge_{>0} \label{supp_bmu} \,,
\end{equation}
where $\bA^\wedge_{>0}$ is the cone of weights positive on $\fC$.


\subsection{Slope $R$-matrices \label{rtotsec}}

\subsubsection{}
Following the sign conventions set in Section 3.1.3 of \cite{MO}, we
define the transposition
$$
K(X \times Y) \owns \cE \mapsto \cE^\tau \in K(Y \times X)
$$
as a permutation of factors together with a sign
$(-1)^{(\dim X - \dim Y)/2}$.

The following is an analog of Theorem 4.4.1 in  \cite{MO}

\begin{Proposition}\label{p_Stab_tau}
  \begin{equation}
    \label{StabStab}
    \Stab^\tau_{-\fC,\, \Topp,\, -\slope}
\circ \Stab_{\fC,\, T^{1/2},\,\slope}  = 1 \,.
  \end{equation}
\end{Proposition}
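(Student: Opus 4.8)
The plan is to compute the composition $P=\Stab^\tau_{-\fC,\, \Topp,\, -\slope}\circ \Stab_{\fC,\, T^{1/2},\,\slope}\in K_G(X^\bA\times X^\bA)$ by restricting it to the fixed loci and showing that it is the identity matrix, following the template of the cohomological Theorem~4.4.1 of \cite{MO}. First I would fix components $F_1,F_2$ of $X^\bA$ and expand the matrix element via equivariant localization on the middle copy of $X$:
\[
P\big|_{F_1\times F_2}=\sum_{F}\frac{\Stab^\tau_{-\fC,\, \Topp,\, -\slope}\big|_{F_1\times F}\;\cdot\; \Stab_{\fC,\, T^{1/2},\,\slope}\big|_{F\times F_2}}{\Ldm \cN_F^\vee}\,,
\]
the sum running over components $F$ of $X^\bA$, where $\cN_F$ is the normal bundle of $F$ in $X$ and $\Ldm \cN_F^\vee$ is the K-theoretic Euler class supplied by localization.

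The first step is triangularity. By the support property \eqref{supp_bmu}, the factor $\Stab_{\fC,\dots}\big|_{F\times F_2}$ vanishes unless $F\le F_2$ in the attracting order determined by $\fC$, while the transposed factor, built from the opposite chamber $-\fC$, vanishes unless $F\ge F_1$ in that same order. Hence only $F$ with $F_1\le F\le F_2$ contribute, so $P\big|_{F_1\times F_2}=0$ unless $F_1\le F_2$, and the diagonal $F_1=F_2=F$ receives a single contribution, from $F=F_1=F_2$.

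Next I would compute the diagonal entries. For $F_1=F_2=F$ the surviving term is the product of the two diagonal normalizations \eqref{Stab_norm} divided by $\Ldm\cN_F^\vee$. Using the Koszul resolution $\cO_{\Attr}\big|_{F\times F}=\cO_{\diag F}\otimes\Ldm\cN_-^\vee$ for $\fC$ together with its $-\fC$ counterpart $\cO_{\diag F}\otimes\Ldm\cN_+^\vee$, the two leading terms multiply to $\Ldm\cN_-^\vee\,\Ldm\cN_+^\vee=\Ldm\cN_F^\vee$ and cancel the denominator. The remaining scalar prefactors --- the signs $(-1)^{\rk T^{1/2}_{>0}}$, the square roots $(\det\cN_-/\det T^{1/2}_{\ne0})^{1/2}$, the transposition sign, and the $(-1)^{\dim X^\bA/2}\hbar^{-\dim X/2}$ supplied by the duality law \eqref{Stab_dual} --- are designed to reconcile exactly, leaving $P\big|_{F\times F}=1$; this is bookkeeping that I expect to be routine.

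The final and hardest step is the vanishing of the strictly off-diagonal entries $F_1<F_2$. Here I would bound the Newton polygon of each summand using the degree condition \eqref{degK} for both envelopes. The factor with slope $\slope$ confines $\deg_\bA\Stab_{\fC}\big|_{F\times F_2}$, after the shift by $\slope\big|_{F_2}-\slope\big|_{F}$, inside the diagonal polygon, whereas the factor built from slope $-\slope$ imposes the opposite inclusion shifted by $\slope\big|_{F}-\slope\big|_{F_1}$. Tracking the shifts through the moment map \eqref{defmu}, the net effect displaces the Newton polygon of $P\big|_{F_1\times F_2}$ by an amount governed by $(\slope,\bmu(F_2)-\bmu(F_1))$, which is nonzero because $F_1\ne F_2$; by Remark~\ref{Rem1} both inclusions are then strict for generic $\slope$, and two strict inclusions in opposite directions can hold only if the Newton polygon is empty, i.e.\ the entry vanishes. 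Hence $P$ is the identity. I expect the real difficulty to be concentrated precisely in this step: turning the two opposite-slope degree bounds into a genuine contradiction requires careful control of the translation ambiguity of Newton polygons and of the slope weights recorded by $\bmu$.
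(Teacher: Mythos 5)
Your proposal is correct and follows essentially the same route as the paper's own proof: localization over the middle copy of $X^\bA$, triangularity from the support condition, cancellation of the two diagonal normalizations \eqref{Stab_norm} against the localization denominator, and vanishing of the off-diagonal entries by combining the two slope degree bounds \eqref{degK} with integrality of the convolution. The only point to sharpen is that what kills the off-diagonal entries is not that $\langle\bmu(F_2)-\bmu(F_1),\slope\rangle$ is \emph{nonzero} but that it is \emph{non-integral} for generic $\slope$ (a nonzero integral displacement would allow a monomial entry, exactly as happens for wall $R$-matrices in Theorem \ref{udepth}); your appeal to Remark \ref{Rem1} and to genericity supplies precisely this fractionality, and the paper packages the same argument as the statement that each localization term is $O(|a|^\mu)$ with $\mu\notin\bA^\wedge$ while the total, being supported properly, is an honest Laurent polynomial.
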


\noindent
Here we do not distinguish between the structure sheaf of the
diagonal and the identity operator by which it acts on the K-theory.

\begin{proof}
Since the support of stable envelopes is the same as in cohomology,
the convolution \eqref{StabStab} is an integral K-theory class on
$X^\bA \times X^\bA$.

Denoting by $\cS$ and $\cS'$ the two
stable envelopes in \eqref{StabStab}, we have
\begin{equation}
\left(\cS'^\tau \circ \cS\right)_{F_3 \times F_1}  =
\sum_{F_1 \ge F_2 \ge F_3}  (-1)^{\frac{\codim F_3}{2}}
\frac{\cS'\big|_{F_2 \times F_3} \otimes \cS \big|_{F_2 \times F_1}}
{\Ldm \cN_{F_2}^\vee}
\label{sumSS}
\end{equation}
by equivariant localization and the support condition, where
$F_i$ are components of the fixed point locus $X^\bA$.

Since the convolution \eqref{sumSS} is integral, its Newton polygon
may be estimated directly from \eqref{sumSS}.  We denote by
$$
\mu = \langle\bmu(F_3)-\bmu(F_1),\slope\rangle \in \bA^\wedge \otimes \Q
$$
the difference of weights of $\slope$ at $F_3$ and $F_1$. We have
$\mu \notin  \bA^\wedge$ for generic $\slope$ unless $F_3 = F_1$
because an ample line bundle will pair nonzero with
$\bmu(F_3)-\bmu(F_1)$.

The
degree bound \eqref{degK} implies each term is $O(|a|^\mu)$ as $a\in
\bA$ goes to infinity in any direction. Since this number is
fractional for $F_3 \ne F_1$ while the asymptotics are integral, it
follows that terms with $F_1 \ne F_3$ in \eqref{sumSS} vanish.

The remaining terms with $F_1 = F_2 =F_3$ are easily seen to give
the identity operator.
\end{proof}

\subsubsection{\label{rootsec}}
In the same way, stable envelopes may be defined for
real slopes $\slope \in H^2(X,\R)$. They depend on the slope
in a locally constant way and change as $\slope$ crosses
certain rational hyperplanes
\begin{equation}
w\overset{\textup{def}}{=} \{s \in H^2(X,\R) : (\slope,\alpha) + n =0 \} \,,
\label{wallL}
\end{equation}
which we will call \emph{walls}. Here
\be
\label{roots}
\bal = (\alpha,n) \in H_2(X,\Z) \oplus \Z
\ee
is an integral affine function on $H^2(X)$, which we call
an \emph{affine root} of $X$. The connected components of
the complements to the walls in $H^2(X,\R)$ are called
\emph{alcoves}.

Below we will see that $\pm \alpha$ is an effective curve class
for any affine root $\bal$. If $n\ne 0$, we set
$$
\bal' = \tfrac{1}{n} \alpha \in H_2(X,\Q) \,.
$$
This depends only on the wall and not on the particular
normalization of its equation.

\subsubsection{ \label{rootRsec}}

Let us consider two slopes $\slope$ and $\slope'$ separated by a single wall $w$.
To examine the change in stable envelopes across the wall, we define the \textit{wall $R$-matrix}:
\be
\label{rootR}
R^{\fC}_{w} = \Stab^{-1}_{\fC,\, T^{1/2},\,\slope'}
\circ \Stab_{\fC,\, T^{1/2},\,\slope} .
\ee
To distinguish $R^{\fC}_{w}$ from its inverse, we assume
$$
\langle \slope' - \slope,\alpha\rangle > 0 \,.
$$
for the positive root $\alpha$ defining the corresponding wall. If we cross the wall
from $\slope$  to $\slope'$ we say that it is \emph{crossed in the positive direction.}

\begin{Theorem} \label{udepth}
We have
$$
R^{\fC}_{w}\Big|_{F_3\times F_1} = 0
$$
unless
\begin{equation}
\bmu(F_1) - \bmu(F_3) =  \bal' \otimes \mu
\label{bmu21}
\end{equation}
where $\bal' \in H_2(X,\Q)_\textup{eff}$ and
$\mu$ is an integral weight of $\bA$ positive on $\fC$. In
this case
$$
\deg_\bA R^{\fC}_{w}\Big|_{F_3\times F_1}  = \mu \,.
$$
If $n=0$ the condition \eqref{bmu21} means $\mu=0$ and
that $\bmu(F_1) - \bmu(F_3)$ is proportional to $\alpha$.
\end{Theorem}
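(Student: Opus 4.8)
The plan is to prove Theorem \ref{udepth} by the localization-plus-degree scheme already used for Proposition \ref{p_Stab_tau}, now run with two slopes straddling the single wall $w$. First I would use Proposition \ref{p_Stab_tau} to rewrite the inverse in \eqref{rootR} as a genuine stable envelope, $\Stab^{-1}_{\fC,\, T^{1/2},\,\slope'}=\Stab^\tau_{-\fC,\,\Topp,\,-\slope'}$, so that
$$
R^\fC_w=\Stab^\tau_{-\fC,\,\Topp,\,-\slope'}\circ\Stab_{\fC,\,T^{1/2},\,\slope}\,.
$$
Since both factors share the cohomological support of stable envelopes, the argument of Proposition \ref{p_Stab_tau} shows that $R^\fC_w$ is an \emph{integral} class on $X^\bA\times X^\bA$; this integrality is what the whole proof plays against. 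I would then expand $R^\fC_w\big|_{F_3\times F_1}$ by equivariant localization exactly as in \eqref{sumSS}, as a sum over intermediate components $F_2$ with $F_1\ge F_2\ge F_3$ of the terms $\Stab_{-\fC,\Topp,-\slope'}\big|_{F_2\times F_3}\otimes\Stab_{\fC,T^{1/2},\slope}\big|_{F_2\times F_1}\big/\Ldm\cN_{F_2}^\vee$.

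Next, the support/moment-map constraint \eqref{supp_bmu}, applied to each factor with its own cone ($\fC$ on the right, $-\fC$ on the left), forces any surviving term to have both $\bmu(F_1)-\bmu(F_2)$ and $\bmu(F_2)-\bmu(F_3)$ in $H_2(X,\Z)_\textup{eff}\otimes\bA^\wedge_{>0}$; adding these gives $\bmu(F_1)-\bmu(F_3)\in H_2(X,\Z)_\textup{eff}\otimes\bA^\wedge_{>0}$. This already yields the effectiveness and positivity, but not the proportionality to $\bal'$ asserted in \eqref{bmu21}, which is the real content.

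The decisive step is a degree estimate. Applying \eqref{degK} to each factor — with slope $\slope$ on the right and $-\slope'$ on the left — bounds the Newton polygon of every term by the polygon of the product of the two \emph{diagonal} restrictions, shifted by the slope weights $\langle\bmu(F_2)-\bmu(F_1),\slope\rangle$ and $\langle\bmu(F_2)-\bmu(F_3),-\slope'\rangle$. Here I would invoke \eqref{Stab_norm}: on the diagonal the two stable envelopes contribute $\Ldm\cN_{-,F_2}^\vee$ and $\Ldm\cN_{+,F_2}^\vee$ (the repelling directions for $\fC$ and for $-\fC$), whose product is precisely $\Ldm\cN_{F_2}^\vee$ and cancels the denominator, leaving only the explicit line-bundle normalizations. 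Writing $d=\bmu(F_1)-\bmu(F_3)$, the residual slope dependence of the degree is $\langle d,\slope\rangle$ on one side and $\langle d,\slope'\rangle$ on the other. Because $R^\fC_w$ depends only on the two alcoves, not on the precise slopes, I may let $\slope,\slope'$ approach a generic point $\slope_w$ of the wall, and the same estimate applied to the opposite crossing $R^{\fC\,-1}_w$ supplies the matching lower bound; the two bounds then pinch the (slope-independent) polygon to a single weight. Genericity of $\slope_w$ \emph{within} the wall hyperplane $(\slope,\alpha)+n=0$ is the crux: $\langle d,\slope_w\rangle$ is fractional — contradicting integrality of $R^\fC_w$ — unless the $H_2$-component of $d$ is proportional to $\alpha$, i.e.\ $d=\bal'\otimes\mu$. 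The wall relation $(\slope_w,\bal')=-1$ then forces $\langle d,\slope_w\rangle=-\mu$, so the degree equals $\mu$, which is integral by integrality of $R^\fC_w$ and positive on $\fC$ by \eqref{supp_bmu}. This is exactly \eqref{bmu21} together with $\deg_\bA R^\fC_w\big|_{F_3\times F_1}=\mu$.

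I expect the main obstacle to be the degree bookkeeping that upgrades the containment \eqref{degK} into the statement that the Newton polygon is a single \emph{point} rather than a genuine polytope. This requires using the bounds from both slopes simultaneously, together with the cancellation of $\Ldm\cN_{F_2}^\vee$ against the diagonal normalizations from \eqref{Stab_norm}, so that no residual spread survives and the two-sided pinch at the wall is tight; keeping track of the strict interior phenomenon noted in Remark \ref{Rem1} is what makes this honest. The case $n=0$ is then automatic from the same computation: on the wall $(\slope_w,\alpha)=0$, so proportionality of $d$ to $\alpha$ forces $\mu=\langle d,\slope_w\rangle=0$, recovering the final assertion of the theorem.
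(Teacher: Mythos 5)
Your skeleton matches the paper's proof — integrality of $R^\fC_w$ as in Proposition \ref{p_Stab_tau}, localization over intermediate components $F_2$ as in \eqref{sumSS}, the K-theoretic degree bound \eqref{degK}, and genericity of the slope within the wall to force $\bmu(F_1)-\bmu(F_3)\in \Q\,\alpha\otimes\bA^\wedge$ — but the central pinch is exactly where your write-up has a gap. The residual slope dependence of the degree bound for the term indexed by $F_2$ is \emph{not} "$\langle d,\slope\rangle$ on one side and $\langle d,\slope'\rangle$ on the other": it is the mixed quantity $c(F_2)=\langle\bmu(F_2)-\bmu(F_1),\slope\rangle+\langle\bmu(F_3)-\bmu(F_2),\slope'\rangle$, which genuinely depends on the intermediate component. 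Eliminating this $F_2$-dependence is the heart of the paper's proof: one rewrites $c(F_2)$ in the two ways \eqref{degF2_1}, \eqref{degF2_2}, as $\langle\bmu(F_3)-\bmu(F_1),\slope\rangle$ (resp.\ $\slope'$) plus a correction pairing a class controlled by the support condition \eqref{supp_bmu} against the ample difference $\pm(\slope-\slope')$, whose sign is thereby fixed; hence $c(F_2)$ is squeezed between $\langle\bmu(F_3)-\bmu(F_1),\slope\rangle$ and $\langle\bmu(F_3)-\bmu(F_1),\slope'\rangle$ uniformly in $F_2$, and letting both slopes approach a generic $x\in w$ pinches everything to $\mu=\langle\bmu(F_3)-\bmu(F_1),x\rangle$. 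Your limit-to-the-wall idea is sound, but what makes it work is the observation (absent from your proposal) that at $\slope=\slope'=x$ the mixed shift telescopes to an $F_2$-independent value; without that, summing over $F_2$ only bounds the polygon by an interval of width $\max_{F_2}c(F_2)-\min_{F_2}c(F_2)$, not by a point.

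The step you inserted instead — getting a "matching lower bound" from the opposite crossing $(R^\fC_w)^{-1}$ — is both unjustified and unnecessary. Unnecessary, because \eqref{degK} is a containment of Newton polytopes and hence already two-sided: once the shift is pinned to the single weight $\mu$, each localization term is $O(|a|^\mu)$ both as $a\to0$ and as $a\to\infty$, and a Laurent polynomial with this property is a monomial proportional to $a^\mu$, vanishing unless $\mu$ is integral; no lower bound on the polygon is needed, since the empty polygon (the zero class) is an admissible outcome and is indeed part of the statement. Unjustified, because polygon bounds on the entries of $(R^\fC_w)^{-1}$ do not constrain the entries of $R^\fC_w$ from the other side: $R^\fC_w$ is unipotent with respect to the partial order, so for non-adjacent components its entry differs from minus the corresponding entry of the inverse by sums of products over intermediate components, and an upper containment for one matrix never yields a lower containment for the other. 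The rest of your argument (effectiveness from \eqref{supp_bmu}, genericity on the wall forcing proportionality to $\alpha$, the normalization $(x,\bal')=-1$ identifying the degree with $\mu$, and the $n=0$ case) agrees with the paper and would go through once the pinch is repaired as above.
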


\noindent
As a corollary of the proof, we will see that
$$
R^{\fC}_{w}\Big|_{F_1 \times F_1} = 1 \,.
$$

\begin{proof}
As in the proof of Proposition \ref{p_Stab_tau}, we see that
$R^{\fC}_{w}$ is an integral K-theory class and we compute its
restriction to $F_3 \times F_1$ by localization as in \eqref{sumSS}.

Consider the localization term corresponding to a component
$F_2$ of $X^\bA$. The slope-dependent part of its degree is
\begin{align}
  \langle\bmu(F_2)&-\bmu(F_1),\slope\rangle + \langle\bmu(F_3)-\bmu(F_2),\slope'\rangle
\hfill
\notag  \\
&=  \langle\bmu(F_3)-\bmu(F_1),\slope'\rangle +
 \langle\bmu(F_2)-\bmu(F_1),\slope - \slope'\rangle  \label{degF2_1}\\
&=  \langle\bmu(F_3)-\bmu(F_1),\slope\rangle +
 \langle\bmu(F_2)-\bmu(F_3),\slope - \slope'\rangle  \label{degF2_2}\,.
\end{align}
Since the ample cone is open, we may assume that
$\pm(\slope-\slope')$ is
ample. If $\slope> \slope'$,  the second summand in \eqref{degF2_1} is a
negative weight, while the second summand in \eqref{degF2_2} is
a positive weight. If $\slope<\slope'$, these conclusions
are reversed. But in either case,
$$
R^{\fC}_{w}\Big|_{F_3\times F_1} = O(|a|^\mu)
$$
as $a\to 0$ or $a\to \infty$, where
$$
\mu = \langle \bmu(F_3)-\bmu(F_1),x \rangle
$$
for $x\in w$ and $a\to 0$ as before
means that $v(a)\to 0$ for every positive weight $v$.
Since this is a Laurent polynomial in $a$, this means vanishing
unless $\mu$ is an integral weight and $R^{\fC}_{w}\Big|_{F_3\times F_1}$
is a monomial.

For generic $\slope$ on the
hyperplane \eqref{wallL} the weight $\mu$ is integral
only if
$$
\bmu(F_1)-\bmu(F_3) \in \Q \, \alpha \otimes \bA^\wedge \,.
$$
From \eqref{supp_bmu} and since $(x,\bal')=-1$
for $n\ne 0$
by construction, we conclude \eqref{bmu21}. If $n=0$ we have
$(x,\alpha)=0$ and hence $\mu=0$.
\end{proof}

\subsection{Root subalgebras}
\label{sec:root-subalgebras}

\subsubsection{ \label{secNak}}
We recall that Nakajima varieties depend on a quiver with a
vertex set $I$, two dimension vectors $\bv,\bw \in \N^I$, and
a stability parameter $\theta\in \R^I$. The complex
deformation parameter $\zeta\in \C^I$, which is the
value of the complex moment map in symplectic reduction,
will always be set to zero in this paper. We fix $\theta$ and
denote
$$
\cM(\bw) = \bigsqcup_\bv \cM_\theta(\bv,\bw) \,.
$$
We take the canonical
polarization from Section 2.2.7 in \cite{MO} as polarization
$T^{1/2}$ of Nakajima varieties.

\subsubsection{ \label{rmatdef}}
Let $W$ be a framing space defining a Nakajima variety with dimension $\bw$. Let us consider its arbitrary
decomposition into a direct sum of subspaces $W=W' \oplus W''$ with dimensions $\bw'$ and $\bw''$. Assume that a torus
$\bA=\C^{\times}$ acts on $W$ scaling  $W'$ with a character $a'$ and $W''$ with a character $a''$. In this situation we
say that $\bA$ \emph{splits the framing} $\bw=a' \bw' +a'' \bw''$.

This action induces an action of $\bA$ on the Nakajima variety $\cM(\bw)$. The basic property of the Nakajima varieties is that the set of the $\bA$ fixed points is the product of Nakajima varieties for the same quiver but different framings:
$$
\cM(\bw)^{\bA} = \cM(\bw') \times \cM(\bw''),
$$
such that after localization:
$$
K_G(\cM(\bw)^{\bA})= K_G(\cM(\bw')) \otimes K_G(\cM(\bw'')).
$$
{\rr One checks that the $\bA$ characters appearing in the normal bundle to $\cM(\bw)^{\bA}$ are of the form $u^{\pm 1}$, where $u=a'/a''$.} Thus, we only have two chambers 
which correspond to $u \to 0$ and $u \to \infty$. We denote them by $+$ and
$-$ respectively. For a slope $\slope$ these give the stable maps:
$$
\Stab_{\pm,\slope}: K_G(\cM(\bw)) \otimes K_G(\cM(\bw')) \to
K_G(\cM(\bw+\bw'))
$$
for any $G$ that commutes with $\bA$. To examine the change of the stable map under the change of the chamber
we introduce the following \textit{total $R$-matrix with slope  $\slope$}:
\begin{align}
\Rtot^{\slope}(u) &= \Stab_{-,\slope}^{-1} \circ \Stab_{+,\,\slope} \,,\label{RL}
\end{align}
One checks that it depends only on the ratio $u$.
Just like the cohomological $R$-matrices, $\Rtot^{\slope}(u)$ acts in a localization
 $K_G(\cM(\bw)) \otimes K_G(\cM(\bw'))$.  However,
the coefficients of
the $u\to 0$ or $u\to \infty$ expansion of $\Rtot^{\slope}(u)$ are operators in
nonlocalized K-theory. The variable $u$ is traditionally  called
the \emph{spectral parameter}. {\rr The operators $\Rtot^{\slope}(u)$ satisfy the Yang-Baxter
 and unitarity equations for arbitrary slope $\slope$, see (9.2.20) and (9.2.22) in \cite{pcmilect} for explicit form of these equations.}

\subsubsection{\label{Rfactor}}
{\rr Let $F_1\neq F_2$ be two components of $X^{\bA}$. Let us consider the degree condition (\ref{degcond}) as the slope $s$ approaches infinity along the ample or anti-ample direction in $\Pic(X)\otimes_{\matZ} \matQ$. By (\ref{degcond}) in this limit the $\bA$ characters appearing in the restriction $\left.\Stab\right|_{F_{1}\times F_{2}}$ approach infinity in $\Lie(\bA)$. Thus, in the suitable topology of power series we have
$$
\lim_{s \to \pm \infty} \left.\Stab_{\pm,s}\right|_{F_{1}\times F_{2}} =0.
$$
Therefore, we can characterize $\Stab_{\pm, \infty}$ as the classes with restrictions (\ref{Stab_norm}) near diagonal and vanishing at non-diagonal terms of 
$X^{\bA} \times X^{\bA}$. Explicitly:
$$
\Stab_{\pm, +\infty}=\Stab_{\pm, -\infty} =  i_{*}\Big( (-1)^{\rk T^{1/2}_{>0}} \Big(  \dfrac{\det(\cN_{\mp})}{\det(T^{1/2}_{\neq 0})}\Big)^{1/2} \, S^{\bullet}(\cN^{\vee}_{\pm})\Big) \in K_{\bT}(X\times X^{\bA})_{loc}
$$ 
where $i$ is the inclusion of diagonal $X^{\bA}\times X^{\bA} \rightarrow X\times X^{\bA}$.
}

We can include the given slope $\slope$ into an doubly infinite sequence
\begin{equation}
\dots  \slope_{-2}, \slope_{-1}, \slope_0 = \slope, \slope_1, \slope_2,
\dots\label{seqL}
\end{equation}
such that
$$
\slope_i \to \pm \infty \,, \quad i \to \pm \infty \,,
$$
where $\slope_i \to +\infty$ means that $\slope_i$ goes to infinity inside
the ample cone of $X$. We can assume that $\slope_i$ and $\slope_{i+1}$
are separated by exactly one wall $w_i$ and that the
sequence $\{\slope_i\}$ crosses each wall once. We can write the following obvious identity:
$$
\begin{array}{l}
\Stab_{+,\,\slope}=\\
\\\Stab_{+,\,+\infty} \cdots \Stab_{+,\,\slope_2} \Stab_{+,\,\slope_2}^{-1} \Stab_{+,\,\slope_1} \Stab_{+,\,\slope_1}^{-1} \Stab_{+,\,\slope}=\\
\\
\Stab_{+,\,+\infty}\cdots R^{+}_{w_2} R^{+}_{w_1} R^{+}_{{\rr w_0}}.
\end{array}
$$
Similarly for the negative chamber:
$$
\begin{array}{l}
\Stab_{-,\,\slope}=\\
\\\Stab_{-,\,-\infty} \cdots \Stab_{-,\,\slope_{\rr-2}} \Stab_{-,\,\slope_{\rr-2}}^{-1} \Stab_{-,\,\slope_{\rr-1}} \Stab_{-,\,\slope_{\rr-1}}^{-1} \Stab_{-,\,\slope}=\\
\\
\Stab_{-,\,-\infty}\cdots (R^{-}_{w_{\rr-3}})^{-1} (R^{-}_{w_{\rr -2}})^{-1} (R^{-}_{w_{\rr -1}})^{-1}.
\end{array}
$$
In the last case we cross the walls in the negative direction and  by our convention from Section \ref{rootRsec}
the corresponding contribution is given by the inverse of the wall $R$-matrix.

{\rr From definitions we find 
$$
\left.\Stab_{+,\infty}(\gamma)\right|_{F}= (-1)^{\rk T^{1/2}_{>0}}
\left( \frac{\det \cN_{-}}{\det T^{1/2}_{\ne 0}}\right)^{1/2}\, \Lambda^{\bullet}_{-} (\cN^{\vee}_{-}) \otimes \gamma
$$ 
$$
\left.\Stab_{-,-\infty}(\gamma)\right|_{F}= (-1)^{\rk T^{1/2}_{<0}}
\left( \frac{\det \cN_{+}}{\det T^{1/2}_{\ne 0}}\right)^{1/2}\, \Lambda^{\bullet}_{-} (\cN^{\vee}_{+}) \otimes \gamma
$$
for any $\gamma$ supported at a component $F\subset X^{\bA}$. The restriction of $\Stab_{\pm,\mp\infty}(\gamma)$ to other components of $X^{\bA}$ vanish and thus $R_{\infty}:=\Stab_{-,-\infty}^{-1} \circ
\Stab_{+,\infty}$ is diagonal in the basis of fixed components with the following matrix elements:
\be
\label{infr}
R_\infty\big|_{F\times F} = (-1)^{\codim(F)/2}
\frac{\prod_{v<0} (v^{1/2} - v^{-1/2})}
{\prod_{v>0} (v^{1/2} - v^{-1/2})}
\ee
where $v$ are the Chern roots of $\cN_{F}$.} All together this gives the following factorization of the total $R$-matrix:
\begin{equation}
\Rtot^{\slope}(u) \stackrel{def}{=}\Stab_{-,\,\slope}^{-1} \Stab_{+,\,\slope}= \overleftarrow{\prod_{\rr i < 0}} \, R^{-}_{w_{i}}  \,\,
R_\infty  \,\, \overleftarrow{\prod_{\rr i \geq 0}} R^{+}_{w_i}
\,,\label{Rfac}
\end{equation}
{\rr where $\overleftarrow{\prod}_{i}$ stands for the product of matrices ordered from right to left as the index $i$ increases}.
The factorization \eqref{Rfac} converges in the topology of the formal power series around $u=\infty$ as will be explained in the next section.

\noindent
Similarly we can factorize the total $R$-matrix into infinite product near $u=0$. We consider:
{\rr 
$$
\begin{array}{l}
\Stab_{+,\,\slope}=\\
\\\Stab_{+,\,-\infty} \cdots \Stab_{ +,\,\slope_{-2}} \Stab_{+,\,\slope_{-2}}^{-1} \Stab_{+,\,\slope_{-1}} \Stab_{+,\,\slope_{-1}}^{-1} \Stab_{+,\,\slope}=\\
\\
\Stab_{+,\,-\infty}\cdots (R^{+}_{w_{-3}})^{-1} (R^{+}_{w_{-2}})^{-1} (R^{+}_{w_{-1}})^{-1}.
\end{array}
$$
}
and
{\rr 
$$
\begin{array}{l}
\Stab_{-,\,\slope}=\\
\\\Stab_{-,\,+\infty} \cdots \Stab_{-,\,\slope_2} \Stab_{-,\,\slope_2}^{-1} \Stab_{-,\,\slope_1} \Stab_{-,\,\slope_1}^{-1} \Stab_{-,\,\slope}=\\
\\
\Stab_{-,\,+\infty}\cdots R^{-}_{w_2} R^{-}_{w_1} R^{-}_{w_0}.
\end{array}
$$}
This gives another factorization:
{\rr \begin{equation}
\Rtot^{\slope}(u) \stackrel{def}{=}\Stab_{-,\,\slope}^{-1} \Stab_{+,\,\slope}=\overrightarrow{\prod_{i\geq 0}} \, (R^{-}_{w_i})^{-1}  \,\,
R_\infty  \,\, \overrightarrow{\prod_{i < 0}} (R^{+}_{w_i})^{-1}
\,,\label{RfacInf}
\end{equation}
with the same $R_\infty$ given by (\ref{infr}). }
We will call these formulas Koroshkin-Tolstoy (KT)
factorizations of total $R$-matrices. {\rr An explicit example of KT factorization for the simplest quiver variety $X=T^{*}\mathbb{P}^1$ can be found in Section \ref{ktexamtp}.} For the quivers of finite type this formula reproduces the factorization of quantum $R$-matrices considered in \cite{KhorTol}.

\subsubsection{\label{triangularity}}
Recall that the partial ordering on the components of the fixed point set coincides with ``ample partial ordering''. If $\theta \in \Pic(X)$ is a choice of ample line bundle, and
$\sigma \in \fC$ is a character of $\bA$ then:
$$
 F_{2} \unlhd F_{1} \ \ \ \Leftrightarrow \ \ \  \langle\theta_{F_1},\sigma \rangle \leq  \langle \theta_{F_2},\sigma \rangle
$$
The choice of the stability parameter $\theta \in \Z^{I}$ for a Nakajima variety
defines a certain ample line bundle.  If the fixed components have the form $F=\cM(\bv,\bw)\times\cM(\bv',\bw')$ then the function defining the ordering takes the following explicit form:
$$
 \langle\theta_{F},\sigma \rangle = \langle \bv, \theta \rangle \sigma + \langle \bv', \theta \rangle \sigma'
$$
All the operators $A$ acting in $K$-theory which we consider in this paper will preserve the total weight, i.e., $A=\bigoplus\limits_{\alpha} \, A_{\alpha}$ with:
$$
A_{\alpha} : K_{\KG}(F_1 ) \longrightarrow K_{\KG}(F_2)
$$
and $F_1=\cM(\bv,\bw)\times\cM(\bv',\bw')$, $F_2=\cM(\bv+\alpha,\bw)\times\cM(\bv'-\alpha,\bw')$. Therefore the difference of ordering function takes the form:
\be
\label{weightdiff}
\langle\theta_{F_2},\sigma \rangle - \langle\theta_{F_1},\sigma \rangle =  \langle \alpha, \theta \rangle(\sigma -\sigma')
\ee
In the present text we will always assume that the fixed components are ordered using the positive chamber $\sigma -\sigma'>0$.  Thus the sign of the difference (\ref{weightdiff}) is given by a sign of $ \langle \alpha, \theta \rangle$.

We will use the following terminology: an operator $A=\bigoplus\limits_{\alpha} A_{\alpha}$ with $A_{\alpha}$ as above
is \textsl{upper-triangular} if $\langle \alpha, \theta \rangle>0$ and \textit{lower-triangular} if $\langle \alpha, \theta \rangle<0$ for all $\alpha\neq 0$. We say that $A$ is \textit{strictly upper-triangular} or \textit{strictly lower-triangular} if in addition
$A_{0}=1$.   For example, the wall $R$-matrices $R^{+}_{w}$ and $R^{-}_{w}$ are strictly upper and strictly lower triangular respectively.  In particular, the Khoroshkin-Tolstoy factorization (\ref{Rfac}) gives a LU decomposition of the total $R$-matrix.

\subsubsection{\label{trian}}
 Let $\cL_w $ be a line bundle on the wall $w$.  The wall $R$-matrices $R^{\pm}_{w}$ are triangular with monomial in spectral parameter $u$ matrix elements:
\begin{equation}
R^{\pm}_{w}\big|_{F_2 \times F_1}  =
\begin{cases}
1\,, \quad & F_1 = F_2 \,, \\
\propto u^{\langle \bmu(F_2)-\bmu(F_1),\cL_w \rangle}\,, \quad
& F_1
\gtrless F_2 \,, \\
0\,, \quad & \textup{otherwise} \,.
\end{cases}\label{degRbal}
\end{equation}
The condition \eqref{supp_bmu} means
$$
R^{\pm}_{w}\to 1\,, \quad  w \to \pm \infty\,,
$$
in the topology of formal power series.

\subsubsection{}

{\rr From (\ref{infr}) we obtain}
\begin{equation}
\lim\limits_{u \rightarrow 0 } R_\infty = \hbar^{-\Omega}\, \ \ \ \lim\limits_{u \rightarrow \infty } R_\infty = \hbar^{\Omega}
\label{Rinfasy}
\end{equation}
where $\Omega$ is the codimension function:
\be \label{codimfun}
\Omega(\gamma)=\dfrac{\codim(F)}{4} \, \gamma
\ee
for a class $\gamma$ supported on the fixed set component $F\in \cM(\bw)^{\bA}$.

\subsubsection{}\label{s_Omega_Nakajima}

For Nakajima varieties, the codimension function
\eqref{codimfun} has the following description.
For a torus $\bA$ splitting the framing $\bw=a' \bw'+a'' \bw''$,
every component $F \in \cM(\bv,\bw)^{\bA}$ is of the
form
\be \label{component}
F = \cM(\bv',\bw') \times \cM(\bv'',\bw'')
\ee
for some dimension vectors $\bv',\bv''$. We have, see e.g.\
Section 2.4.2 in \cite{MO},
\be
\label{codfun}
\Omega=\dfrac{\codim F}{4}  = \dfrac{1}{2} (\bw',\bv'') + \dfrac{1}{2}(\bw'',\bv') - \dfrac{1}{2} (\bv', C \bv'')\,,
\ee
where $C$ is the Cartan matrix of the quiver, see e.g.
Section 2.2.5 of \cite{MO}. The map $\bmu$ has the form:
\begin{equation}
\bmu(F) = \bv' \otimes 1 \label{bmuF}
\end{equation}
where $1\in \bA^\wedge$ is the weight of $u$, see e.g.\ Section 3.2.8
in \cite{MO}.

\subsubsection{\label{walyb}}
Let $w_k$  be a wall labeled by $k\in \matZ$ as in Section \ref{Rfactor} and let $\cL_{w_k}$ be a fractional line bundle at $w_k$. We denote
$$
\widetilde{\Rtot^{\slope}(u)}_k  = U_k^{-1}\,  \Rtot^{\slope}(u) \, U_k
$$
where  $U_k$ is a block diagonal matrix with the block corresponding to component \eqref{component} given by:
$$
U_k\Big|_{F} = (a')^{\langle \bv',\cL_{w_k}\rangle}
(a'')^{\langle \bv'',\cL_{w_k}\rangle}
$$
Similarly we define:
$$
\widetilde{R^{\pm}_{w_i,k}}=U_{k}^{-1} \, {R^{\pm}_{w_i}}\,U_{k}.
$$
If  $F_1=\cM(\bv'_1,\bw'_1) \times \cM(\bv''_1,\bw''_1)$ and 
$F_2=\cM(\bv'_2,\bw'_2) \times \cM(\bv''_2,\bw''_2)$ are two components of $\cM(\bv,\bw)^{\bA}$, then we have
$$
\widetilde{R^{\pm}_{w_i,k}}\Big|_{F_1\times F_2}=R^{\pm}_{w_i}\Big|_{F_1\times F_2}\, \dfrac{(a')^{\langle \bv'_2, \cL_{w_k}\rangle} (a'')^{\langle \bv''_2, \cL_{w_k}\rangle} }{(a')^{\langle \bv'_1, \cL_{w_k}\rangle} (a'')^{\langle \bv''_1, \cL_{w_k}\rangle}}.
$$ 
Noting that $\bv'_1+\bv''_1=\bv'_2+\bv''_2$ we can rewrite this as
$$
\widetilde{R^{\pm}_{w_i,k}}\Big|_{F_1\times F_2}=R^{\pm}_{w_i}\Big|_{F_1\times F_2}\, u^{\langle \bv'_2-\bv'_1,\cL_{w_k}\rangle }
$$
where $u=a'/a''$. From \eqref{degRbal} and \eqref{bmuF} we then find
\begin{equation}
\widetilde{R^{\pm}_{w_i,k}}\big|_{F_1\times F_2}  =
\begin{cases}
1\,, \quad & F_1 = F_2 \,, \\
\propto u^{\langle \bv_2'-\bv_1'
	, \cL_{w_k}-\cL_{w_i} \rangle}\,, \quad
& F_1
\gtrless F_2 \,, \\
0\,, \quad & \textup{otherwise} \,.
\end{cases}\label{degRbalU}
\end{equation}
By construction of the sequence \eqref{seqL} we have:
$$
\langle \alpha, \cL_{w_k}-\cL_{w_i} \rangle 
\gtrless 0 \,, \quad i \lessgtr k \,, \ \ \ \langle \alpha, \cL_{w_k}-\cL_{w_i} \rangle 
= 0 \,, \quad i = k.
$$
when  $\alpha$ is effective. From block-triangularity of   $\widetilde{R^{+}_{w_i,k}}$  we see that in the limit ${u\to \infty}$ all non-diagonal matrix elements vanish for $k<i$. The matrix elements of  $\widetilde{R^{+}_{w_i,k}}$ do not depend on $u$ if $i=k$. In summary we can write it as
\be \label{lim1}
U_k \Big( \lim_{u\to \infty} \widetilde{R^{+}_{w_i,k}}\Big) U_{k}^{-1}= \left\{\begin{array}{ll}
\textrm{DNE}, & k>i\\
R^{+}_{w_i}, & k=i\\
1, & k<i
\end{array}\right.
\ee
where $\textrm{DNE}$ means that the corresponding limit may be undefined in this case. The matrices $\widetilde{R^{-}_{w_i,k}}$ are lower-triangular and similar consideration gives:
\be \label{lim2}
U_k \Big( \lim_{u\to \infty} \widetilde{R^{-}_{w_i,k}}\Big) U_{k}^{-1}= \left\{\begin{array}{ll}
1, & k>i\\
R^{-}_{w_i}, & k=i\\
\textrm{DNE}, & k<i
\end{array}\right.
\ee
Conjugating KT factorization around $u=\infty$ (\ref{Rfac}) by $U_k$  we obtain: 
$$
\widetilde{\Rtot^{\slope}(u)}_k= \overleftarrow{\prod_{\rr i < 0}} \, \widetilde{R^{-}_{w_{i},k}}  \,\,
R_\infty  \,\, \overleftarrow{\prod_{\rr i \geq 0}} \widetilde{R^{+}_{w_i,k}}
$$
From \eqref{lim1}, \eqref{lim2} and \eqref{Rinfasy} we see that $k=0$ and $k=-1$ are the only two choices for which the limit $u\to \infty$ of all factors in this product is well defined.
For these values the limit equals:
\begin{equation}
 U_{k} \, \left(\lim_{u\to \infty} \widetilde{\Rtot^{\slope}(u)_k} \right) \, U^{-1}_{k} =
 \begin{cases}
   \hbar^{\Omega}\,R^{+}_{w_{0}}  \,, \quad & k=0 \,,\\
  R^{-}_{w_{-1}}\,\hbar^{\Omega} \,, \quad & k=-1 \,.
 \end{cases} \label{limRU}
\end{equation}
Arguing similarly for KT-factorization near $u=0$ \eqref{RfacInf} we find:
\begin{equation}
 U_k \, \left(\lim_{u\to 0} \widetilde{\Rtot^{\slope}(u)_k} \right) \, U^{-1}_{k} =
 \begin{cases}
 (R^{-}_{w_0})^{-1}   \hbar^{-\Omega }  \,  \,, \quad & k=0 \,,\\
 \hbar^{-\Omega} \, (R^{+}_{w_{-1}})^{-1} \,, \quad & k=-1 \,.
 \end{cases} \label{limRU2}
\end{equation}
In summary, we see that the wall $R$-matrices for $w_0,w_{-1}$ which are the walls immediately before and after the slope $s$ in \eqref{seqL} can be obtained as limits of $\widetilde{\Rtot^{\slope}(u)}$. As $\widetilde{\Rtot^{\slope}(u)}$ solves the quantum Yang-Baxter  equation for any $s$, the same is true for their limits. We thus obtain:
\begin{Theorem}
\label{thmR}
The wall $R$-matrices multiplied by $\hbar^{\Omega}$:
$$
\hbar^{\Omega} R^{\pm}_{w}, \ \   R^{\pm}_{w} \hbar^{\Omega}
$$
satisfy the quantum Yang-Baxter equation for any wall $w$.
\end{Theorem}
\noindent
In what follows we denote
\be \label{defwall}
\Rwal^{\pm}_{w} = \hbar^{\Omega} R^{\pm}_{w}.
\ee

\subsubsection{\label{unitsectio}}

{\bb

Let us show that K-theoretic $R$-matrices (\ref{RL}) are unitary for an arbitrary slope~$s$. The derivation follows the same steps as in cohomology and we refer to Section 4.5 of \cite{MO} for more details. 
	
Let  $\bA= \mathbb{C}^{\times}$	and let us consider the action of $\bA$ on a Nakajima variety $\cM(\bv,\bw)$ corresponding to the  splitting of the framing $\bw= a \bw' + \bw''$.  Let $F=\cM(\bv',\bw')\times \cM(\bv'',\bw'')$ be a  component of $\cM(\bv,\bw)^{\bA}$. 

Similarly, let us consider the $\bA$-action on $\cM(\bv,\bw)$ corresponding the splitting $\bw=   a \bw''+\bw'$. We denote by 
$F_{21}=\cM(\bv'',\bw'')\times \cM(\bv',\bw')$ the $\bA$-fixed component corresponding to $F$ under this action.

In the second case the $\bA$-action on $\cM(\bv,\bw)$ is the opposite of the $\bA$-action in the first case. This means that the original action of $\bA$ is precomposed with the automorphism
$$
\phi: \bA \to \bA, \ \ \  \phi: a\mapsto a^{-1}.
$$
We note that the correspondence $\Stab_{\fC,T^{1/2},s}$ is exactly the correspondence $\Stab_{-\fC,T^{1/2},s}$ for the opposite action.
Note also that the $\bT$-characters of the normal bundles in these cases are related by:
$$
\left.\cN_{-}\right|_{F}= \phi^{*}(\left.\cN_{+}\right|_{F_{21}}). 
$$
By uniqueness of the stable envelopes we obtain:
\be \label{unitstab}
\left.\Stab_{\fC,T^{1/2},s}\right|_{F\times F'} = \left.\Big(\left.\Stab_{-\fC,T^{1/2},s}\right|_{F_{21}\times F'_{21}}\Big)\right|_{a\to a^{-1}}.
\ee

For an operator $A \in \textrm{End}(K_{\bT}(\cM(\bw'))\otimes K_{\bT}(\cM(\bw'')))$ we denote by 
$
A_{21} \in \textrm{End}(K_{\bT}(\cM(\bw''))\otimes K_{\bT}(\cM(\bw')))$ the operator corresponding to the permuted matrix elements:
$$
(A_{21})_{F,F'}=A_{F_{21},F'_{21}}.
$$

With these notations from (\ref{unitstab}) we obtain the following result:
\begin{Proposition}
K-theoretic $R$-matrices (\ref{RL}) of Nakajima varieties satisfy the unitary condition:	
\be \label{unitarity}
\Rtot^{\slope}(u)=\Rtot^{\slope}(u^{-1})^{-1}_{21}.
\ee

\begin{Remark}
	For an explicit example of identity (\ref{unitstab}) we refer to (\ref{restr1}) and (\ref{restr2}) describing the stable envelopes for $X=T^{*}\mathbb{P}^1$. In notations of this example $X^{\bA}=\{p_1,p_2\}$ with $(p_{1})_{21}=p_2$, $(p_{2})_{21}=p_1$.
	We also encourage the reader to check that (\ref{unitarity}) holds for matrix (\ref{RtotTP}). 
\end{Remark}

\end{Proposition}	

}
\section{Construction of quantum groups  \label{frt}}
As we explain in Section \ref{secNak} the equivariant K-theory of a Nakajima variety provides a set of vector spaces
$K_{\KG}( \cM(\bw) )$ labeled by a dimension vector $\bw\in \Z^{|I|}$. For any splitting of the framing $\bw=u \bw' + \bw''$ our construction gives an $R$-matrix which acts in $ K_{\KG}( \cM(\bw') )\otimes K_{\KG}( \cM(\bw'') )$ and satisfies the quantum Yang-Baxter equation. This is a well known set up for the Faddeev-Reshetikhin-Takhtajan  formalism \cite{FRT}. Using these data the FRT construction provides a triangular Hopf algebra $\Uq(\fgh_Q)$ acting in
$K_{\KG}( \cM(\bw) )$ for all $\bw$.

Similarly, applying the FRT construction to the wall $R$-matrices $\Rwal^{\pm}_{w}$ one constructs a set of triangular Hopf algebras $\Uq(\fg_w)$ which are, in fact, subalgebras of $\Uq(\fgh_Q)$.

The aim of this section is to review the FRT method and to explain the interaction between Hopf structures
of different wall subalgebras $\Uq(\fg_w)$.

\subsection{Quiver algebra $\Uq(\fgh_Q)$}
\subsubsection{}
For a splitting $\bw=u_1 \bw_1+\cdots+u_n \bw_n$ and a slope $\slope \subset H^2( \cM (\bw), \R )$ the construction of Section \ref{rmatdef}  provides a set of $R$-matrices
$$
\Rtot^{\slope}_{V_i,V_j}(u_i/u_j) \subset \textrm{End}\Big( V_{1}\otimes \cdot\cdot\cdot \otimes V_{n} \Big) \otimes \C[u_1^{\pm 1},...,u_n^{\pm 1}], \ \
$$
with $V_{k}=K_{G}(\cM (\bw_k))$  satisfying the Yang-Baxter equation. We denote
$$
V_{i}(u)\stackrel{def}{=}V_{i} \otimes \C[u^{\pm 1}]
$$
and more generally
$$
V_{i_1}(u_1)\otimes\cdot\cdot\cdot \otimes V_{i_n}(u_n) \stackrel{def}{=}V_{i_1}\otimes\cdot\cdot\cdot \otimes V_{i_n} \otimes \C[u_1^{\pm 1},...,u_n^{\pm 1}]
$$
\subsubsection{}
We have a set of vector spaces $\frak{V}$ such that for any pair $V_i,V_j \in \frak{V}$ we have an $R$-matrix $\Rtot^{\slope}_{V_{i},V_{j}}( u_i/u_j )$.

First, we note that this set is closed with respect to the tensor product. The $R$-matrix for the tensor products has the following form:
\be
\label{facproc}
\Rtot^{\slope}_{ \bigotimes\limits_{i \in I}^{\leftarrow} \, V_{i}(u_i), \bigotimes\limits_{j \in J}^{\leftarrow} \, V_{i}(u_i) }= \prod\limits_{i \in I}^{\rightarrow} \prod\limits_{j \in J}^{\leftarrow}\, \Rtot^{\slope}_{V_{i},V_{j}}( u_i/u_j ).
\ee
Second, following \cite{Resh} we can assume that this set contains dual vector spaces  $V^{*}_{i}$ with $R$-matrices defined by the following rules:
$$
\begin{array}{l}
\Rtot^{\slope}_{V^{*}_{1},V_{2}}= ((\Rtot^{\slope}_{V_{1},V_{2}})^{-1})^{*_1}\\
\\
\Rtot^{\slope}_{V_{1},V_{2}^{*}}= ((\Rtot^{\slope}_{V_{1},V_{2}})^{-1})^{*_2}\\
\\
\Rtot^{\slope}_{V^{*}_{1},V_{2}^{*}}= (\Rtot^{\slope}_{V_{1},V_{2}})^{*_{12}}
\end{array}
$$
where $*_k$ means transpose with respect to the $k$-th factor. One checks that the $R$-matrices defined this way
satisfy the quantum Yang-Baxter equation in the tensor product of any three spaces from the set $\frak{V}$.

\subsubsection{}
In the FRT formalism the quantum algebra $\Uq^{\slope}(\fgh_{Q})$ is defined as the subalgebra
$$
\Uq^{\slope}(\fgh_{Q}) \subset \prod\limits_{V\in \frak{V}}\, \textrm{End}( V )
$$
generated by matrix elements of
\be
\label{Ugdef}
\Rtot^{\slope}_{V,V_{0}}(u) \in \textrm{End}(V) \otimes  \textrm{End}(V_0)
\ee
in the ``auxiliary space'' $V_0$ for all choices of $V_0 \in \frak{V}$.


{\bb An element of $\Uq^{\slope}(\fgh_{Q})$ is fixed by a choice of the following data: an auxiliary space $V_{0}$, a finite rank operator
$$
m(a_0) \in \textrm{End}(V_{0})(a_0)
$$
an integer $l\in \mathbb{Z}$ and $i\in\{+,-\}$. The element of $\Uq^{\slope}(\fgh_{Q})$ corresponding to this data acts in a representation $V(a)$ as the following operator:
\be \label{uqgqelems}
\rho^{i}_{V_0,m,l}=\textrm{Coeff}^{\,i}_{a_0^{l}}\Big( \textrm{tr}_{V_0}( 1 \otimes m(a_0)\, \Rtot^{\slope}_{V,V_0}(u) )\Big) \in \textrm{End}(V(a))
\ee
where $\Rtot^{\slope}_{V,V_0}(u)$ is the $R$-matrix acting in $V(a)\otimes V_0(a_0)$ with $u=a/a_0$, and $\textrm{Coeff}^{+}_{a_0^{l}}$, $\textrm{Coeff}^{-}_{a_0^{l}}$ denote the coefficient of $a_0^l$ in the Laurent series expansions near $a_0=0$ or $a_0=\infty$ respectively. Since $m(a_0)$ is of finite rank the trace over the auxiliary space $V_0$  is defined even if it is infinite-dimensional.

The algebra  $\Uq^{\slope}(\fgh_{Q})$ is generated by all $\rho^{i}_{V_0,m,l}$.

 }

{\rr
	\begin{Proposition}
		The algebras $\Uq^{\slope}(\fgh_{Q})$ are isomorphic for all $\slope$.
	\end{Proposition}
	
	\begin{proof}
		Let  $s$ and $s'$ be two slopes separated by a single wall $w$. Enough to show that $\Uq^{\slope}(\fgh_{Q})$ and $\Uq^{\slope'}(\fgh_{Q})$ are isomorphic. From Khoroshkin-Tolstoy factorization we find that
		$$
		\Rtot^{\slope}(u)=(R^{-}_{w}(u))^{-1} \Rtot^{\slope'}(u) R^{+}_{w}(u)=(R^{+}_{w}(u^{-1})_{21})^{-1}\, \Rtot^{\slope'}(u) \,R^{+}_{w}(u)
		$$
		where the last equality is by (\ref{rtransp}) and definition (\ref{defwall}).
		
		It is known that the wall $R$-matrices $R^{+}_{w}$ satisfy the cocycle condition, see Corollary \ref{corrcocy} in Section \ref{cocysec}. Thus, the $R$-matrices $\Rtot^{\slope}(u)$ and $\Rtot^{\slope'}(u)$ provide isomorphic algebras by Theorem 2.3.4 in \cite{Majid}. 
	\end{proof}
}

\begin{Proposition}
	\be
	\label{rtransp}
	\Rwal^{\mp}_{w}=\left.(\Rwal^{\pm}_{w})_{21}\right|_{u=u^{-1}}, \ \ \
	R^{\mp}_{w}=\left.(R^{\pm}_{w})_{21}\right|_{u=u^{-1}} 
	\ee
\end{Proposition}
\begin{proof}
	{\bb The first equality follows from (\ref{unitarity}) together with limits (\ref{limRU}) and (\ref{limRU2}).
		
		In notations of Section \ref{unitsectio}, the codimensions of the torus fixed component $F=\cM(\bv',\bw')\times \cM(\bv'',\bw'')$ and  $F_{21}= \cM(\bv'',\bw'')\times \cM(\bv',\bw')$ in $\cM(\bv,\bw)$ are equal.
		Therefore $\Omega=\Omega_{21}$ which gives the second equality.  
	}
\end{proof}

As the algebras $\Uq^{\slope}(\fgh_{Q})$ are isomorphic for all $s$ we will denote them by~$\Uq(\fgh_{Q})$.


\subsection{Wall subalgebra $\Uq(\fg_{w}) \subset \Uq(\fgh_{Q})$ \label{subalgebra}}


{
{
\subsubsection{}
Let us define the wall algebra:
\be \label{algebrawall}
\Uq(\fg_{w})\subset \prod\limits_{V\in \frak{V}}\, \textrm{End}( V )
\ee
as an algebra generated by the matrix elements of $(\Rwal^{+}_{w})_{V,V_0}$  and of $(\Rwal^{-}_{w})_{V,V_0}^{-1}$
in the auxiliary space $V_0$ for all $V_{0} \in \frak{V}$.

{\bb 

For a choice of an auxiliary space $V_{0}$ and a finite rank operator
$
m \in \textrm{End}(V_{0}) 
$
we have an element of $\Uq(\fg_{w})$ acting in a representation $V(a)$ as the following operator:
\be \label{elemwallal}
\rho^{+}_{V_0,m}= \textrm{tr}_{V_0}( 1 \otimes m\, (\left.\Rwal^{+}_{w})_{V,V_0}\right|_{a_0=1} ) \in \textrm{End}(V).
\ee
Note that by Theorem \ref{udepth} the matrix elements of $(\Rwal^{\pm}_{w})_{V,V_0}$ are monomials in $u=a/a_0$. Thus we do not need to consider all coefficients in the  Laurent series expansion as in (\ref{uqgqelems}). 

Algebra (\ref{algebrawall}) is generated by all such $\rho^{+}_{V_0,m}$ and also $\rho^{-}_{V_0,m}$ which are given by (\ref{elemwallal}) with $\Rwal^{+}_{w}$ substituted by $(\Rwal^{-}_{w})^{-1}$.

}

\subsubsection{\label{generatorssection}}

{\bb

Next we show that $\Uq(\fg_{w})$ is a subalgebra of $\Uq(\fgh_{Q})$. For this we show that all matrix elements (\ref{elemwallal}) appear as matrix elements (\ref{uqgqelems}) for some choices of $l$ and $m$.

Let $w$ be a wall and $s$ be a generic slope obtained by a shift $s=w-\epsilon$ for an infinitesimal  ample $\epsilon$. Let $\Rtot^{s}_{V,V_0}(u)$ with $u=a/a_0$ be the $R$-matrix with slope $s$ acting in $V(a)\otimes V_0(a_0)$.  Let $U$ be the diagonal matrix acting in $V_0(a_0)$ by
	$
	\left.U\right|_{ \cM(\bv_0,\bw_0)}=a_0^{\langle\bv_0, \cL_w \rangle}.
	$
	The action of $U$ by conjugation gives the  decomposition:
	$$
	\textrm{End}(V_0) = \bigoplus_{l} \textrm{End}_l(V_0)
	$$
	with 
	$
	\textrm{End}_l(V_0)=\{  m \in \textrm{End}(V_0): U m U^{-1}   = a_0^l m\}.
	$
	Since $\cL_w$ is a fractional line bundle, the weights $l$ appearing in this decomposition are rational. We denote by $\textrm{End}^{(w)}(V_0)$  the subspace spanned by integral  weights:
	\be \label{integrmat}
	\textrm{End}^{(w)}(V_0) = \bigoplus_{l \in \mathbb{Z}} \textrm{End}_l(V_0).
	\ee
	Let $m\in \textrm{End}_l(V_0)$ for some $l\in \mathbb{Z}$ which is constant in $a_0$.  Let us consider an element (\ref{uqgqelems}) corresponding to this data:
	\be \label{ueleme}
	\rho^{+}_{V_0,m,-l}=\textrm{Coeff}^{+}_{a_0^{-l}}(\textrm{tr}_{V_0}( 1\otimes m\, \Rtot^{s}_{V,V_0}(u)  ))
	\ee
     Since  $U m U^{-1}   = a_0^l m$ we have
	$$
	\rho^{+}_{V_0,m,-l}=\textrm{Const}(\textrm{tr}_{V_0}( 1\otimes m\, (1\otimes U)^{-1}\, \Rtot^{s}_{V,V_0}(u) \, (1\otimes U) ))
	$$
	where $\textrm{Const}$ denotes the constant term in the series expansion at $a_0=0$.
	By (\ref{limRU}), at $a_0=0$ we have the following expansion:
	$$
	(1\otimes U)^{-1}  \Rtot^{s}_{V,V_0}(u) (1\otimes U)  =  (1\otimes U)^{-1} (\textsf{\Rwal}^{+}_{w})_{V,V_0} (1\otimes U) + \dots
	$$
	where $\dots$ denote the higher order terms vanishing at $a_0=0$. Note that by~(\ref{degRbal})
	the first term $ (1\otimes U)^{-1} (\textsf{\Rwal}^{+}_{w})_{V,V_0} (1\otimes U)$ does not depend on $a_0$. 
	Thus, since $m$ does not depend on $a_0$ we have:
	$$
	\rho^{+}_{V_0,m,-l}=\textrm{tr}_{V_0}(1\otimes m \, (1\otimes U)^{-1}  (\textsf{\Rwal}^{+}_{w})_{V,V_0} (1\otimes U) ) =\textrm{tr}_{V_0}(1\otimes m \,   \left.(\textsf{\Rwal}^{+}_{w})_{V,V_0}\right|_{a_0=1} )
	$$
We find that $\rho^{+}_{V_0,m,-l}$ is of the form (\ref{elemwallal}) and therefore represents an element from $\Uq(\fg_{w})$. By Theorem \ref{udepth} the matrix elements of $\textsf{\Rwal}^{+}_{w}$ are non-trivial only for $m\in \textrm{End}_{l}(V_0)$ with integral $l$. Thus, all generators $\rho^{+}_{V_0,m}$ (\ref{elemwallal}) appear as (\ref{uqgqelems})  with $m$ from (\ref{integrmat}).

Applying the same logic to the power series expansion near $a_0=\infty$  we also find that all generators of $\Uq(\fg_{w})$ corresponding to $\rho^{-}_{V_0,m}$ appear in the same way.

By definition, these elements generate $\Uq(\fg_{w})$ and therefore $\Uq(\fg_{w})$ is a subalgebra of $\Uq^{s}(\fgh_{Q})$. Finally, since $\Uq^{s}(\fgh_{Q})$ are isomorphic for all $s$ we obtain the following result:
	
}

\begin{Proposition}
$\Uq(\fg_{w})$ is a subalgebra of $\Uq(\fgh_{Q})$ for every wall~$w$.
\end{Proposition}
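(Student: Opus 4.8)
The plan is to exhibit the generators of $\Uq(\fg_{w})$ — the matrix coefficients of $\Rwal^{+}_{w}$ in an auxiliary space $V_0$ — as operators that already lie in $\Uq^{\slope}(\fgh_{Q})$. The engine is the Khoroshkin--Tolstoy factorization \eqref{Rfac}, which writes the total $R$-matrix as an ordered product of wall factors, together with the spectral-parameter limit formulas \eqref{limRU}. Since $\Uq^{\slope}(\fgh_{Q})$ is generated by the coefficients of the Laurent expansion of $\Rtot^{\slope}_{V,V_0}(u)$ at $u=0$ and $u=\infty$ (see \eqref{Ugdef}), it suffices to recognize the matrix elements of $\Rwal^{+}_{w}$ among these coefficients.

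First I would reduce to the case where $w$ is adjacent to the base slope. Recall that the quiver algebra is independent of $\slope$, so for the given wall $w$ I am free to place the base slope $\slope$ in the alcove immediately on the positive side of $w$ and to choose the sequence \eqref{seqL} so that $w=w_{-1}$, i.e.\ $w$ is the first wall crossed going toward negative slopes. With this choice, formula \eqref{limRU} with $k=-1$ gives
$$
U\,\Big(\lim_{u\to\infty}\widetilde{\Rtot^{\slope}(u)}\Big)\,U^{-1} = \hbar^{\Omega}\,R^{+}_{w_{-1}} = \Rwal^{+}_{w}\,,
$$
so that $\Rwal^{+}_{w}$ is recovered from the total $R$-matrix by a single spectral-parameter limit followed by conjugation by the diagonal operator $U$. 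Because only one wall survives the $u\to\infty$ limit, the off-diagonal structure of $\Rwal^{+}_{w}$ is reproduced exactly, and by Theorem \ref{thmR} what we extract is genuinely a solution of the Yang--Baxter equation.

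The point that needs care is that this extraction stays inside the nonlocalized algebra. By \eqref{degRbalU} the conjugation by $U$ multiplies each matrix element only by a monomial in the spectral variable $u$ and in the equivariant parameters; it therefore rescales the coefficients of the $u=\infty$ expansion of $\Rtot^{\slope}_{V,V_0}(u)$ by elements of the ground ring $K_{\KG}(\pt)$ and by the weight-graded diagonal (Cartan) operators already present in $\Uq^{\slope}(\fgh_{Q})$. The limit $u\to\infty$ then selects a prescribed coefficient of that expansion, which is a generator of $\Uq^{\slope}(\fgh_{Q})$ by definition \eqref{Ugdef}. Hence every matrix coefficient of $\Rwal^{+}_{w}$ lies in $\Uq^{\slope}(\fgh_{Q})=\Uq(\fgh_{Q})$. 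Since these coefficients generate $\Uq(\fg_{w})$ — and since by \eqref{rtransp} the matrix elements of $\Rwal^{-}_{w}$ are obtained from those of $\Rwal^{+}_{w}$ by transposing tensor factors, introducing no new generators — the subalgebra $\Uq(\fg_{w})$ is contained in $\Uq(\fgh_{Q})$.

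The main obstacle is precisely the bookkeeping of the previous paragraph: one must verify that conjugating by $U$ and passing to the $u\to\infty$ coefficient land in the nonlocalized algebra rather than merely in its localization, and that the operators so obtained exhaust all generators of $\Uq(\fg_{w})$ and not a proper subset. Both reduce to the monomiality and triangularity estimates \eqref{degRbal} and \eqref{degRbalU}, which pin down the $u$-degree of every matrix element and guarantee that the limit computes an honest leading coefficient. Compatibility of the auxiliary spaces is automatic, since $\Uq(\fg_{w})$ and $\Uq^{\slope}(\fgh_{Q})$ are both assembled from the same set $\frak{V}$ of vector spaces.
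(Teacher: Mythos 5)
Your argument has a genuine circularity. The key reduction --- ``the quiver algebra is independent of $\slope$, so I am free to place the base slope in the alcove adjacent to $w$'' --- invokes precisely the Corollary that the paper \emph{deduces from} this Proposition: the proof that all the algebras $\Uq^{\slope}(\fgh_{Q})$ coincide rests on knowing that every wall subalgebra $\Uq(\fg_w)$ sits inside $\Uq^{\slope}(\fgh_{Q})$ for a \emph{fixed} $\slope$, together with the fact that each wall contributes to the Khoroshkin--Tolstoy factorization exactly once. At the point where the Proposition is being proved, slope-independence is only a forward-looking announcement, not an available tool. What your limit argument actually establishes is the weaker statement that for each wall $w$ there exists \emph{some} slope (namely one adjacent to $w$) whose algebra contains $\Uq(\fg_w)$; that is enough neither for the Proposition as stated (where $\slope$ is arbitrary and fixed while $w$ ranges over all walls) nor for the slope-independence Corollary built on it.

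The paper avoids this by an induction that never moves the base slope. The wall adjacent to $\slope$ is extracted exactly as you do, via the $u\to 0$ and $u\to\infty$ limits of the diagonally conjugated $\Rtot^{\slope}(u)$, using (\ref{limRU}) and (\ref{limRU2}); this is your argument and it is fine as the base case. For a wall $w_{n+1}$ further away, one writes $\Rtot^{\slope'}(u)= T^{-}\,\Rtot^{\slope}(u)\,(T^{+})^{-1}$ for a slope $\slope'$ between $w_n$ and $w_{n+1}$, where $T^{\pm}$ are finite products of the wall $R$-matrices $R^{\pm}_{w_1},\dots,R^{\pm}_{w_n}$ already known, by the inductive hypothesis, to have matrix elements in $\Uq^{\slope}(\fgh_{Q})$. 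Hence all $u$-coefficients of $\Rtot^{\slope'}(u)$ lie in the \emph{fixed} algebra $\Uq^{\slope}(\fgh_{Q})$, and the limit extraction applied to $\Rtot^{\slope'}$ places $\Uq(\fg_{w_{n+1}})$ inside $\Uq^{\slope}(\fgh_{Q})$ as well. If you replace your relocation of the base slope by this bootstrapping step, the rest of your proof (the monomiality of the conjugation by $U$, integrality of the limits, and the reduction of $\Rwal^{-}_{w}$ to $\Rwal^{+}_{w}$ via transposition) goes through.
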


{\bb 
\begin{Remark}	
	The quantum group elements (\ref{uqgqelems}) are defined as coefficient $a_0^{l}$ of $R$-matrices depending on
$u=a/a_0$. This means that $\rho^{+}_{V_0,m,l}$ acts as a monomials $a^{-l}$ in any representation $V(a)$.
\end{Remark}}


\begin{Remark}
{\bb In the following text we often understand the $R$-matrices and other operators as universal elements of the corresponding quantum groups or their completions. 
In particular, such universal elements do not depend on the evaluation parameters $u$, which are the parameters of the representations, not of the quantum groups. For instance, the unitarity relation (\ref{unitarity}) for the universal $R$-matrix takes the form:
$$
\Rtot^{s}=(\Rtot^{s}_{21})^{-1}
$$
where $21$ denotes the permutation of factors $\Rtot^{s}\in \Uq(\fgh_{Q}) \otimes \Uq(\fgh_{Q})$.
Similarly, the wall $R$-matrices give the universal elements $\textsf{\Rwal}^{\pm}_{w} \in \Uq(\fg_{w})\otimes \Uq(\fg_{w})$. 
Relations (\ref{rtransp}) are understood as 
\be \label{Rtranuniv}
(\textsf{\Rwal}^{\pm}_{w})_{21}=\textsf{\Rwal}^{\mp}_{w}, \ \ \ (R^{\pm}_{w})_{21}=R^{\mp}_{w}
\ee
for these elements.

As an example we refer to explicit formulas for universal R-matrices of ${{\mathscr{U}_{{\rr \sqrt{\hbar}}}(\widehat{\frak{gl}}_2)}}$
given by (\ref{sl2w}) and (\ref{sl2low}), which are related by permutation of tensor factors (\ref{Rtranuniv}).  The generators of this algebra act in a representation $\mathbb{C}^2(a)$ 
as monomials $E_w \sim a^{w}$ and $F_w \sim a^{-w}$. Evaluation of these universal elements in a representation $\mathbb{C}^2(u_1)\otimes \mathbb{C}^2(u_2)$ gives  (\ref{walRTP})
which depends on the spectral parameter $u=u_1/u_2$. The unitarity relation should be understood as (\ref{rtransp}) for these matrices. 
}
\end{Remark}

\subsection{Hopf structures}

\subsubsection{}
The algebra $\Uq(\fgh_{Q})$  carries Hopf structures labeled by the slope $\slope$.
The set $\frak{V}$ is closed with respect to tensor product. It induces the natural projection:
$$
\prod\limits_{V\in \frak{V}}\,\textrm{End}(V) \,\rightarrow \, \prod\limits_{V_1, V_2\in \frak{V}}\,\textrm{End}(V_1\otimes V_2)
$$
which restricts to a coproduct map on matrix elements of $\Rtot^{\slope}(u)$:
$$
\Delta_s :\, \Uq(\fgh_{Q}) \rightarrow \Uq(\fgh_{Q})\hat{\otimes} \Uq(\fgh_{Q})
$$
Note that this map depends on KT factorization of $R$-matrix and thus on the slope $\slope$.

The set $\frak{V}$ is closed with respect to taking dual $*$ and thus we have an antipode map:
$$
S_s: \,  \Uq(\fgh_{Q}) \rightarrow \Uq(\fgh_{Q})
$$
which is the restriction of:
$$
\textrm{End}(V)\stackrel{*}{\longrightarrow}\textrm{End}(V^*)
$$
The set $\frak{V}$ contains the trivial representation $\C$ which, similarly, induces a counit map:
$$
\epsilon_s: \Uq(\fgh_{Q}) \rightarrow \C
$$
The main result of FRT procedure is that $(\Delta_s, S_s, \epsilon_s)$ provides
$\Uq(\fgh_{Q})$ with a Hopf algebra structure for arbitrary
slope $\slope$. The algebra $\Uq(\fgh_{Q})$ becomes a triangular Hopf algebra with the triangular structure $\Rtot^{\slope}(u)$.

\subsubsection{}
The same procedure applied to $\Rwal^{+}_{w}$ in place of $\Rtot^{\slope}(u)$ defines a structure of triangular Hopf algebra ($\Delta_w$, $S_w$, $\epsilon_w$)  on  $\Uq(\fg_w)$. It should be clear from definitions that ($\Delta_w$, $S_w$, $\epsilon_w$) does not necessarily coincide with  restriction of ($\Delta_\slope$, $S_{\slope}$, $\epsilon_\slope$) from the ambient algebra  $\Uq(\fgh_Q)$. The next proposition explains the relation between these Hopf structures.

\begin{Proposition} \label{trstruprop}
Assume that the Khoroshkin-Tolstoy factorization for a total $R$-matrix with slope
$\slope$ starts with some wall $w$, i.e.  has the form:
$$
\Rtot^{\slope}(u)=\cdots \, R^{+}_{w_{1}} \,R^{+}_{w}
$$	
\label{Hopfcoin}
then the Hopf structure ($\Delta_w$, $S_w$, $\epsilon_w$) on $\Uq(\fg_w)$ coincides with the restriction of
($\Delta_\slope$, $S_{\slope}$, $\epsilon_\slope$) from the ambient algebra $\Uq(\fgh_Q)$.
\end{Proposition}

\begin{proof}
	
{\bb Enough to check this statement for coproducts.
Let $V_1 $ and $V_2$ be two representations of $\Uq(\fgh_Q)$. We need to show that for any element $x\in \Uq(\fg_w)$, the identity $\Delta_\slope(x) =\Delta_w(x)$ holds in $\textrm{End}(V_1\otimes V_2)$.

Assume that $x$ is a generator of $\Uq(\fg_w)$ as in Section \ref{generatorssection} corresponding to an auxiliary space $V_0$ and a matrix element $m\in \textrm{End}_{l}(V_0)$. 

By definition $\Delta_\slope(x)$ and $\Delta_w(x)$ act in
$V_1\otimes V_2$ by $\rho^{i}_{V_0,m,-l}$ and $\rho^{i}_{V_0,m}$ defined by formulas (\ref{uqgqelems}) and (\ref{elemwallal}) for $V=V_1\otimes V_2$. But in Section \ref{generatorssection} we proved the equality $\rho^{i}_{V_0,m,-l}=\rho^{i}_{V_0,m}$ for an arbitrary $V$. The Proposition follows since such elements generate  $\Uq(\fg_w)$.

 }

\end{proof}

\begin{Corollary}
 \label{coprdotrel}
If $\slope$ and $w$ are as in the previous proposition then  for $x\in \Uq(\fg_w)$ we have:
\be
\label{crel}
\Rtot^{\slope}(u)\, \Delta_{\slope}(x) \Rtot^{\slope}(u)^{-1} = \textsf{R}^{+}_{w} \Delta_{w} (x) (\textsf{R}^{+}_{w})^{-1} = (\textsf{R}^{-}_{w})^{-1}  \Delta_{w} (x)  \textsf{R}^{-}_{w}
\ee
with $\Rwal^{\pm}_{w}$ as in Theorem \ref{thmR}.
\end{Corollary}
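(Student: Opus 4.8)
The plan is to derive the conjugation relation \eqref{crel} from the intertwining property of the $R$-matrix together with Proposition \ref{Hopfcoin}, which we have just proven. The starting point is the fundamental intertwining identity satisfied by any triangular Hopf algebra with universal $R$-matrix: for the slope Hopf structure, the opposite coproduct is obtained by conjugating the coproduct by $\Rtot^{\slope}(u)$. Concretely, for every $x\in\Uq(\fgh_Q)$ one has
\begin{equation*}
\Rtot^{\slope}(u)\,\Delta_{\slope}(x)\,\Rtot^{\slope}(u)^{-1} = \Delta_{\slope}^{\mathrm{op}}(x)\,.
\end{equation*}
This is a structural consequence of the FRT construction of Section \ref{frt} and is exactly the quasitriangularity axiom. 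The same identity holds at the level of each wall algebra $\Uq(\fg_w)$ with its own Hopf structure and $R$-matrix $\Rwal^{+}_{w}$, giving
\begin{equation*}
\Rwal^{+}_{w}\,\Delta_{w}(x)\,(\Rwal^{+}_{w})^{-1} = \Delta_{w}^{\mathrm{op}}(x)
\end{equation*}
for $x\in\Uq(\fg_w)$.

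\textbf{Next I would invoke Proposition \ref{Hopfcoin}.} Since $\slope$ and $w$ are in the position assumed there (the KT factorization of $\Rtot^{\slope}(u)$ begins with $R^{+}_{w}$), that proposition tells us $\Delta_{\slope}(x)=\Delta_w(x)$ for every $x\in\Uq(\fg_w)$; in particular the opposite coproducts also agree, $\Delta_{\slope}^{\mathrm{op}}(x)=\Delta_w^{\mathrm{op}}(x)$. Combining the two intertwining identities through this common value yields immediately
\begin{equation*}
\Rtot^{\slope}(u)\,\Delta_{\slope}(x)\,\Rtot^{\slope}(u)^{-1} = \Delta_{w}^{\mathrm{op}}(x) = \Rwal^{+}_{w}\,\Delta_w(x)\,(\Rwal^{+}_{w})^{-1}\,,
\end{equation*}
which is the first equality in \eqref{crel}.

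\textbf{For the second equality}, relating the $\Rwal^{+}_{w}$-conjugation to the $(\Rwal^{-}_{w})^{-1}$-conjugation, I would use Proposition \ref{rtransp}, which gives $(\Rwal^{+}_{w})_{21}=\Rwal^{-}_{w}$. The point is that both $\Rwal^{+}_{w}$ and $\Rwal^{-}_{w}$ realize the passage from $\Delta_w$ to $\Delta_w^{\mathrm{op}}$: the $R$-matrix $\Rwal^{+}_{w}$ does so by the quasitriangularity axiom, while $(\Rwal^{-}_{w})^{-1}=((\Rwal^{+}_{w})_{21})^{-1}$ does so because applying the flip $\tau$ to the intertwining identity interchanges the roles of $\Delta_w$ and $\Delta_w^{\mathrm{op}}$ and inverts the $R$-matrix. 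Thus $(\Rwal^{-}_{w})^{-1}\Delta_w(x)\Rwal^{-}_{w}=\Delta_w^{\mathrm{op}}(x)$ as well, completing the chain of equalities.

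\textbf{The main obstacle} is bookkeeping rather than conceptual: one must check that the quasitriangularity (intertwining) identity genuinely holds in the completed FRT algebra $\Uq(\fgh_Q)$ with the slope-dependent coproduct $\Delta_{\slope}$, and in particular that the conjugation by the $u$-dependent operator $\Rtot^{\slope}(u)$ is well-defined on the relevant completion and compatible with the trace-over-auxiliary-space description of $\Delta_{\slope}$ used in the proof of Proposition \ref{Hopfcoin}. The cleanest way to secure this is to verify the intertwining relation directly on generators, i.e.\ on matrix coefficients of $\Rtot^{\slope}_{V,V_0}(u)$ in an auxiliary space $V_0$, where it reduces to the Yang--Baxter equation of Section \ref{rmatdef}; the passage to $\Delta_w$ and to $\Rwal^{\pm}_{w}$ is then controlled by the limits \eqref{limRU} and \eqref{limRU2} already established. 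Once these identifications are in place, \eqref{crel} follows formally.
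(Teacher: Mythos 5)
Your proof is correct, and its two halves compare differently with the paper's argument. For the first equality you follow exactly the paper's route: quasitriangularity of $\Rtot^{\slope}(u)$ gives $\Rtot^{\slope}(u)\Delta_{\slope}(x)\Rtot^{\slope}(u)^{-1}=\Delta^{op}_{\slope}(x)$, Proposition \ref{Hopfcoin} (applied and then flipped) gives $\Delta^{op}_{\slope}(x)=\Delta^{op}_{w}(x)$ for $x\in\Uq(\fg_w)$, and wall quasitriangularity rewrites this as $\Rwal^{+}_{w}\Delta_w(x)(\Rwal^{+}_{w})^{-1}$. For the second equality, however, you take a genuinely different and more economical route. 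The paper's proof re-runs the argument of Proposition \ref{Hopfcoin} with the opposite Khoroshkin--Tolstoy factorization \eqref{RfacInf}, using the limit \eqref{limRU2} to exhibit $(\Rwal^{-}_{w})^{-1}$ as a second triangular structure on $\Uq(\fg_w)$ compatible with the same coproduct $\Delta_w$, and then repeats the first argument with $\Rwal^{+}_{w}$ replaced by $(\Rwal^{-}_{w})^{-1}$. You instead apply the flip to the wall intertwining identity and use the transposition relation \eqref{rtransp}: flipping $\Rwal^{+}_{w}\Delta_w(x)(\Rwal^{+}_{w})^{-1}=\Delta^{op}_w(x)$ yields $\Rwal^{-}_{w}\Delta^{op}_w(x)(\Rwal^{-}_{w})^{-1}=\Delta_w(x)$, i.e.\ $(\Rwal^{-}_{w})^{-1}\Delta_w(x)\Rwal^{-}_{w}=\Delta^{op}_w(x)$, which is exactly what is needed; note this does \emph{not} require $\Rwal^{-}_{w}=(\Rwal^{+}_{w})^{-1}$, only $(\Rwal^{+}_{w})_{21}=\Rwal^{-}_{w}$. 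Your version is a two-line formal manipulation resting solely on facts already established, while the paper's longer detour additionally records the structural fact that $(\Rwal^{-}_{w})^{-1}$ arises as the opposite spectral-parameter limit of the total $R$-matrix and hence itself defines a triangular Hopf structure on the wall algebra --- information the paper reuses elsewhere but which is not needed for \eqref{crel} itself. Your closing caveat about verifying quasitriangularity in the FRT completion is reasonable; the paper simply asserts it as part of the triangular Hopf algebra structure established in Section \ref{frt}, so no gap is created by treating it the same way.
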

\begin{proof}
In any triangular Hopf algebra we have  $\Rtot^{\slope}(u) \Delta_{s}(x) \Rtot^{\slope}(u)^{-1}= \Delta^{op}_{s}(x)$.
But, for $x\in \Uq(\fg_w)$ we have $\Delta^{op}_{s}(x)=\Delta^{op}_{w}(x)= \Rwal^{+}_w \Delta_{w}(x) (\Rwal^{+}_w)^{-1}$.
This proves the first equality. Applying (\ref{Rtranuniv}) we arrive at the second equality.

\end{proof}

\subsubsection{}
Let $\slope$ and $\slope'$ be two slopes and let $\Gamma$ be a path in $H^{2}(X,{\mathbb{R}})$ connecting them.
This path intersects finitely many walls in some order $I_{\Gamma}=\{ w_1,w_2,...,w_n \}$. We define operators:
$$
T^{+} = \prod\limits_{w \in I_{\Gamma}}^{\leftarrow}\, R^{+}_{w}, \ \ \  T^{-} = \prod\limits_{w \in I_{\Gamma}}^{\leftarrow}\, R^{-}_{w}
$$
Then, from Khoroshkin-Tolstoy factorization we obtain:
$$
 \Rtot^{\slope'}(u) T^{+} =T^{-} \Rtot^{\slope}(u)
$$
which implies that coproducts at different slopes are related by:
\be
\label{conjcop}
T^{+} \Delta_{\slope}= \Delta_{\slope'} T^{+}, \ \ \ T^{-} \Delta^{op}_{\slope}= \Delta^{op}_{\slope'}\, T^{-}.
\ee

\subsubsection{\label{infcopsec}}
As a slope $\slope$ approaches infinity (in the ample cone) we obtain a special Hopf structure
with the coproduct which we denote by $\Delta_{\infty}$. The corresponding wall subalgebra
$\Uq(\fg_{\infty})$ is generated by the matrix elements of (\ref{infr}). 
This infinite slope R-matrix is diagonal in the basis of fixed components 
 with matrix elements given by operators of multiplication by tautological
bundles in the equivariant $K$-theory (\ref{infr}). In particular, these operators are elements of $\Uq(\fg_{\infty})$. Moreover, the line bundles $\cL\in \textrm{Pic}(X)$
are group-like: 
\be
\label{infcop}
\Delta_{\infty}(\cL)=\cL \otimes \cL.
\ee

\subsubsection{}
Let $\kappa=(\kappa_1,\kappa_2)$ where each
$\kappa_{i}\in (\frac12\Z)^I$ is
a function on the vertices of the quiver with values in $\frac12\Z$.
Define an operator $\hbar^{\kappa}$
acting in $K_{\KG}(\cM(\bw))$ by multiplication by
$
\hbar^{\langle\kappa_1,\bv\rangle+\langle\kappa_2,\bw\rangle}
$
on the component $\cM(\bv,\bw)$ (recall that the square root $\hbar^{1/2}$ 
exists in the equivariant $K$-theory, see Section \ref{defsect}.). 
As we discussed above, the operators of multiplications by tautological
bundles, and in particular the operators of multiplication by their dimensions 
are elements of $\Uq(\fgh_Q)$. Thus $\hbar^{\kappa}\in \Uq(\fgh_Q)$. 
These elements enjoy the following properties:
\be
\label{antipcart}
\Delta_{w}(\hbar^{\kappa})=\hbar^{\kappa} \otimes \hbar^{\kappa}, \ \ S_{w}(\hbar^{\kappa})=\hbar^{-\kappa}
\ee
Recall that the codimension function $\Omega$ is quadratic in $\bw$, $\bv$ which gives:
\be
\label{antiom}
S_w\otimes S_w(\Omega) = \Omega
\ee
Finally, in any triangular Hopf algebra we have
\be
\label{ssr}
S_w\otimes S_w (\Rwal^{+}_{w}) = \Rwal^{+}_{w}
\ee
and thus from (\ref{antiom}) we conclude:
\be
\label{Rwalanti}
S_w\otimes S_w(R^{+}_{w})= \hbar^{-\Omega}\, R^{+}_{w}\,\hbar^{\Omega}
\ee

\section{Quantum K-theory of Nakajima varieties}\label{s_quant_K}
In this section we recall the main facts about the commuting difference equations
which govern the quasimap count for Nakajima varieties. We refer the reader to
\cite{pcmilect} for a detailed exposition. 
\subsection{Stable quasimaps to Nakajima varieties}
\subsubsection{}
Let us consider a quiver with set of vertices $I$ and $m_{i j}$ arrows from a vertex
$i \in I$ to a vertex $j \in I$. Let $n=|I|$ be the number of vertices.

Recall that a Nakajima variety $\cM(\bv,\bw)$ with dimension vectors
$\bv,\bw \in \mathbb{N}^{n}$ is defined as the following symplectic reduction:
\be \label{git}
\cM(\bv,\bw)=T^{*} M /\!\!/\!\!/\!\!/_{\!\theta} \textsf{G}=\mu^{-1}(0)/\!\!/_{\!\theta}\textsf{G}
\ee
where $M$ is the representation of the quiver
$$
M=\bigoplus\limits_{i,j \in I} \, \textrm{Hom}(V_i,V_j) \otimes Q_{i j} \oplus \bigoplus\limits_{i \in I}\, \textrm{Hom}(W_i,V_i)
$$
by vector spaces $V_i$ of dimensions $\bv_i$ and framing spaces $W_i$ of dimensions $\bw_i$. We denote by $Q_{i j}$ the linear vector space of dimension $m_{i j}$ (the multiplicity space). The representation $M$ is equipped with an obvious action of $\textsf{G}=\prod\limits_{i \in I}\, GL(V_i)$
and 
$$
\mu : T^{*}M \rightarrow \textrm{Lie}(\textsf{G})^{*}
$$
stands for the corresponding moment map. Finally, $\theta \in \Z^n$ denotes the character of $G$ 
$$
\theta : \, (g_i)^{n}_{i=1} \to \prod\limits_{i=1}^{n} \det(g_i)^{\theta_{i}}
$$
which defines a stability parameter for GIT quotient (\ref{git}).

The Nakajima varieties come together with a natural action of a group $\textsf{Aut}$
whose action preserves the symplectic form. Let $G=\bA \times \C^{\times}_{\hbar}$ where
$\bA$ is a maximal torus of $\textsf{Aut}$ and $\C^{\times}_{\hbar}$ is one-dimensional torus scaling the cotangent direction in  (\ref{git}) with a character $\hbar^{-1}$.

\subsubsection{}
The general theory of quasimaps to GIT quotients was developed in \cite{quasimaps}. 
Here we briefly recall this construction specialized to the case of Nakajima quiver varieties, see also Section 4.3 in \cite{pcmilect}.  


A quasimap  
$$
f: C \dashrightarrow X
$$
with a domain $C \simeq \mathbb{P}^{1}$ to a Nakajima variety $X=\cM(\bv,\bw)$
is defined by the following data: 

\begin{itemize}

\item A collection of vector bundles $\qmV_i$, $i \in I$ on $C$ with ranks $\bv_i$.

\item A collection of trivial vector bundles $\qmQ_{i j}$ and ${\qmW}_i$, $i,j \in I$ on  $C$ with ranks $m_{i j}$ and $\bw_i$ respectively

\item  A section 
$$
f\in H^{0}\Big(C, \qmM \oplus \qmM^{*} \otimes \hbar^{-1} \Big)
$$
satisfying the the moment map condition $\mu=0$, where
$$
\qmM=\bigoplus\limits_{i,j \in I} \, \Hom(\qmV_i,\qmV_j) \otimes {\qmQ}_{i j} \oplus \bigoplus\limits_{i \in I}\, \Hom(\qmW_i,\qmV_i).
$$
and $\hbar^{-1}$ stands for a trivial line bundle on $C$ with $G$-equivariant weight $\hbar^{-1}$.
\end{itemize}
The degree of a quasimap is defined as $d=(\textrm{deg}(\qmV_i))_{i=1}^{n} \in \Z^{n}$.

\subsubsection{}
Let $p\in C$ be a point in the domain of a quasimap $f$ and fix a local trivialization of
${\qmQ}_{i,j}$ and $\qmW_i$ at $p$. The value $f(p)$ defines an $\textsf{G}$-orbit in 
$\mu^{-1}(0)$.  This orbit does not necessarily consist of semistable points in $\mu^{-1}(0)$
and thus it only defines an \emph{evaluation map} into a quotient stack:
$$
\textrm{ev}_p:\, f \mapsto f(p) \in \mu^{-1}(0)/\textsf{G}.
$$  
The quotient stack contains the Nakajima variety as an open subset
$$
X= \mu^{-1}(0)_{stable}/\textsf{G}   \subset \mu^{-1}(0)/\textsf{G}.
$$  
A quasimap $f$ is called stable if $f(p) \in X$ for all but finitely many points $p\in C$.
The finite set of points for which $f(p) \notin X$ is called \emph{singularities of the quasimap}.

The moduli space $\qm^{d}(X)$ parameterizes the degree $d$ stable quasimaps up to isomorphism
which is required to be identity on the curve $C$, the multiplicity ${\qmQ}_{i,j}$ and the framing bundles $\qmW_i$  \cite{quasimaps}:
$$
\qm^{d}(X)=\{ \textrm{degree $d$ stable quasimaps to} \ \ X  \}/\cong
$$
This means that moving a point on this moduli space results in varying the bundles $\qmV_i$ and the section
$f$, while the curve $C$, bundles $\qmW_i$ and ${\qmQ}_{ij}$ remain fixed.

Let $\qm^{d}(X)_{\textrm{nonsing} {\kern 3pt  p}}\subset \qm^{d}(X)$ be the open subset of the moduli space corresponding to the stable quasimaps nonsingular at a point $p$.   
By definition this open subset is equipped with the evaluation morphism:
\be
\qm^{d}(X)_{\textrm{nonsing} {\kern 3pt  p}} \stackrel{\ev_p}{\longrightarrow} X.
\ee

The moduli space of \emph{relative quasimaps} $ \qm^{d}(X)_{\textrm{relative} {\kern 3pt  p}}$ is a compactification of the map $\ev_p$ meaning that it fits into the following  commutative diagram:
\[
\xymatrix{
 & \qm^{d}(X)_{\textrm{relative} {\kern 3pt  p}} \ar[dr]^{\tilde{\ev}_p} &  \\
\qm^{d}(X)_{\textrm{nonsing} {\kern 3pt  p}} \ar@{^{(}->}[ur] \ar[rr]^{\ev_p} &  & X
}
\]
with \emph{proper} evaluation map $\tilde{\ev}_p$.  The construction of the moduli space of relative
quasimaps $\qm^{d}(X)_{\textrm{relative} {\kern 3pt  p}}$ is explained in Section 6 of \cite{pcmilect}. It follows similar constructions of relative moduli spaces
in Gromow-Witten theory \cite{GWquas1,GWquas2} and Donaldson-Thomas theory \cite{DTquas}.

\subsection{Difference equations \label{geomqde}}

\subsubsection{}
As explained in \cite{pcmilect} the moduli spaces defined in the previous sections carry natural virtual structure sheaves $\vss$.
Using these virtual sheaves one constructs different enumerative invariants of $X$.
For example, one of the main objects in quantum K-theory is the \textit{capping operator}
which is defined as follows: let us consider the moduli space $\qm^{d}_{\textrm{relative} p_1 \atop \textrm{nonsing} p_2}(X)$ of quasimaps with relative conditions at $p_1 \in C$ and nonsingular at $p_2 \in C$ (we will assume that
$p_1=0$ and $p_2=\infty$ in $C=\mathbb{P}^1$). These two marked points define the evaluation map:
\be \label{sqev}
\ev=\tilde{\ev}_{p_1}\times \ev_{p_2}: \qm^{d}_{\textrm{relative} p_1 \atop \textrm{nonsing} p_2}(X)
 \longrightarrow X \times X  \ee

This moduli space is equipped with an action of $\KG\times \C^{\times}_q$ where the action of $\KG$ comes from its action on $X$ and $\C^{\times}_q$ scales the local coordinate of $C$
at the point $p_1$ with character $q$. Note that this action preserves $p_1$ and $p_2$. The capping operator is defined as the $\KG\times \C^{\times}_q$ equivariant
push-forward:
\be
\label{capp}
\capping=\sum\limits_{d\in \Z^{n}} \, z^d  \ev_{*}\Big( \qm^{d}_{\textrm{relative} p_1 \atop \textrm{nonsing} p_2}(X), \vss \Big) \in K_{\KG\times \C^{\times}_q}(X)^{\otimes 2}_{\textrm{localized}} \otimes \Q[[z]]
\ee
The map (\ref{sqev}) is not proper, as we already mentioned in the previous section. However, it becomes proper on the subset of fixed points $ \qm^{d}_{\textrm{relative} p_1 \atop \textrm{nonsing} p_2}(X)^{\KG\times \C^{\times}_q}$, see \cite{pcmilect}. Thus 
the pushforward (\ref{capp}) is well defined in the localized $K$-theory.

The degrees of the quasimaps are counted with weight $z^d= z_1^{d_1}\cdots z_n^{d_n}$.
The parameters $z_i$ are referred to as K{\"a}hler parameters. 
\subsubsection{\label{tcl1}}
Assume that we fixed some basis in $K_{\KG}(X)$, then the capping operator is represented by a matrix
whose entries are certain power series in K{\"a}hler parameters with coefficients given
by rational functions of equivariant parameters for  $\KG\times \C^{\times}_q$.
{\rr By theorems  8.1.16 and 8.2.20 from \cite{pcmilect}} this matrix is the matrix of fundamental solution of a system of $q$-difference equations\footnote{\rr $\shift(u,z)$ is denoted by $\shift_{\sigma}(u,z)$ in \cite{pcmilect} for a shift $u\to u q^{\sigma}$ by a specific $\bA$-character~$\sigma$.}:
\be
\label{qdeS}
\begin{array}{l}
\capping(u, z q^{\cL})  \cL = \textbf{M}_{\cL}(u,z)  \capping(u , z) \\ 
\\
\capping(u q, z) \textsf{E} (u,z) =\shift(u,z) \capping(u,z) 
\end{array}
\ee
{\rr Here $\cL$ denotes the operator of multiplication by a line bundle $\cL \in \Pic(X)$, $\textsf{E} (u,z)$ is the operator of multiplication by K-theory class given by (8.2.13) in~\cite{pcmilect}. In particular $\textsf{E} (u,z)$ and $\cL$ commute. }

Recall that the $\Pic(X)$ is generated by the tautological line bundles $\cL_i=\det(\tb_i) $, $i=1,\dots,n$.
For a bundle $\cL = \cL_1^{\otimes m_1}\otimes \cdots \otimes \cL^{\otimes m_n}_n$  the following notation is used in (\ref{qdeS}):
$$
z q^{\cL} = (z_1 q^{m_1},\dots,z_n q^{m_n}).
$$

The operators $\shift(u,z)$ shifting the equivariant parameters are called
\textit{shift operators}. The operators $\geomM_{\cL}(u,z)$ corresponding to line bundles $\cL \in \Pic(X)$ are called
the \textit{quantum difference operators}. They are the main object of study in our paper.

\subsubsection{\label{tcl2}}
We can write the system (\ref{qdeS}) in the following equivalent form:
\be \label{shiftequations}
\begin{array}{l}
 \calK\, \capping(u,z)=\capping(u, z)\, \calK^{\infty}\\
\\
 \calA_{\cL}\, \capping(u,z)=\capping(u,z) \, \calA^{\infty}_{\cL}
\end{array}
\ee
with the following $q$-difference operators:
\be \label{qdiffops}
\begin{array}{ll}
 \calK = T^{-1}_{u} \shift(u,z), &  \calK^{\infty} = T^{-1}_{u} \textsf{E}(u,z) \\
\\
\calA_{\cL}= T^{-1}_{\cL} \textbf{M}_{\cL}(u,z) &  \calA^{\infty}_{\cL}= T^{-1}_{\cL} \cL
\end{array}
\ee
where $T_{\cL} f(u,z)=f(u,z q^{\cL})$ and $T_{u} f(u,z)=f(u q ,z)$. As $\cL$ and $\textsf{E}(u,z)$ commute, the consistency of this system of difference equations can be represented in the form of ``zero curvature'' condition:
\be
\label{compat1}
[ \, \calA_{\cL}, \calA_{\cL'} ]=0, \ \ [ \, \calA_{\cL}, \calK ] =0
\ee
where by $[A,B]=A B-B A$ we denote the commutators for $q$-difference operators.

\subsubsection{\label{torusdefsec}}
Let $\bA=\C^{\times}$ be a torus splitting the framing as $\bw=u \bw'+\bw''$. This torus acts on the Nakajima variety
$X=\cM(\bv,\bw)$ with the set of fixed points:
$$
X^{\bA}=\coprod\limits_{\bv'+\bv''=\bv}\,\cM(\bv',\bw')\times \cM(\bv'',\bw'')
$$
The stable map defined in the previous section can be used to identify $K_\KG(X)$ with $K_{\KG}(X^{\bA})$.
After such identification, {\rr the first equation in (\ref{shiftequations})} gets identified with the \textit{quantum Knizhnik-Zamolodchikov equation} ( qKZ ):\footnote{See Theorem 9.3.1 in \cite{MO} for similar statement in the case of equivariant cohomology.}
\begin{Theorem}(\cite{pcmilect}, Section 10)
\label{okth}
Let $\nabla\subset H^{2}(X,\R)$ be the alcove uniquely defined by the conditions:

1) $0 \in H^{2}(X,\R)$ is one of the vertices of $\nabla$

2) $\nabla \subset -C_{\textrm{ample}}$ ( opposite of the ample cone)

\noindent then for all $\slope \in \nabla$ we have\footnote{Note, that we use modified quantum parameter $z$ which differs by a sign:
$$
z^{\bv} \mapsto (-1)^{\textrm{codim}/2}  z^{\bv},
$$
see Theorem 10.2.8 in \cite{pcmilect}. Explicitly, this change of variables amounts to the following substitution of K{\"a}hler parameters:
$$
z_i \mapsto (-1)^{2 \kappa_i } z_i
$$
for canonical vector (\ref{shiftvect}).  To get rid of the minus sign, we will use modified notations in this paper.
:}
$$
Stab^{-1}_{+,T^{1/2},\slope}\,  {\rr \calK} \, Stab_{+,T^{1/2},\slope} = {\rr \qKZ^{s}}
$$
{\rr where $\calK$ is the q-difference operator defined by (\ref{qdiffops}) and ${\rr \qKZ^{s}}$ is the quantum Knizhnik-Zamolodchikov difference operator
\be \label{qkzop}
{\rr \qKZ^{s}}=\hbar_{(1)}^{\lambda} T^{-1}_{u} \Rtot^{\slope}(u)
\ee
for $R$-matrix $\Rtot^{\slope}(u)$ with slope $s$ (\ref{RL}) and $\hbar_{(1)}^{\lambda}$ defined by (\ref{hlam}).}
\end{Theorem}
Therefore, in the stable basis the first equation in (\ref{qdeS}) turns to the standard quantum Knizhnik-Zamolodchikov equation~\cite{FR}

\subsubsection{}
In Section \ref{comdif} we construct a system of difference operators
$$\qA^{\slope}_{\cL}=\T^{-1}_{\cL} \repM^{\slope}_{\cL}(u,z) , \ \ \cL \in \Pic(X)$$
with $\repM^{\slope}_{\cL}(u,z)$ given explicitly in terms of the algebra $\Uq(\widehat{\fg}_Q)$. These operators commute among themselves and with the qKZ operator \eqref{qkzop} for all slopes $\slope\in H^{2}(X,\mathbb{R})$:
\be
\label{compat2}
[ \, \qA^{\slope}_{\cL}, \qA^{\slope}_{\cL'} ]=0, \ \ [ \, \qA^{\slope}_{\cL}, \qKZ^{\slope} ] =0
\ee
We then prove our main result Theorem \ref{mainth}: 
the quantum difference operator $\geomM_{\cL}(u,z)$ is identified with $\repM_{\cL}^{\slope}(u,z)$ for $\slope$ as in Theorem \ref{okth}.
In particular the compatibility condition (\ref{compat1}) is identified with~(\ref{compat2}) for this slope.


\section{Commuting difference operators}

\subsection{Wall Knizhnik-Zamolodchikov equations}

\subsubsection{ \label{csec}}
It will be convenient to introduce a vector $\lambda=(\textbf{t}_1,...,\textbf{t}_n)$ such that $\hbar^{\textbf{t}_{i}}=z_i$, which means that $\lambda$ is a coordinate on a
universal cover $H^{2}(X,\C) \simeq {\mathbb{C}}^{|I|}$ of the K{\"a}hler moduli space.

Let us consider a Nakajima variety $X=\cM(\bv,\bw)$ and denote by $\bA$ a subtorus of the framing torus corresponding to a decomposition:
\be
\label{fset}
X^{\bA}=\coprod\limits_{\bv_1+...+\bv_n=\bv} \, \cM(\bv_1,\bw_1)\times\cdot\cdot\cdot\times \cM(\bv_n,\bw_n)
\ee
In this section we consider rational functions of parameters $z_i$ which take values in $End(K_{G}(X^{\bA}))$.
Using the above notations we will denote such functions as $f(z_i)$ or $f(\lambda)$.


 The first function we need $\hbar_{(k)}^{\lambda}\in End(K_{\bT}(X^{\bA}))$ is defined to be diagonal
in the basis supported on the set fixed points:
\be \label{hlam}
\hbar_{(k)}^{\lambda}(\gamma) = \hbar^{(\lambda, \bv_k)} \gamma =z^{\bv_{k,1}}_{1}\cdot\cdot\cdot \dz^{\bv_{k,n}}_{n}  \gamma
\ee
for a class $\gamma$ supported on a component $F=\cM(\bv_1,\bw_1)\times\cdot\cdot\cdot\times \cM(\bv_n,\bw_n)$.

We will need the so called \textit{dynamical notations} below. Let $\kappa$  be a linear combination of
dimension vectors; the particular combination of importance to us is
$$
\kappa=\frac12( C\bv-\bw)\,,
$$
where  $C$ is the Cartan matrix of the quiver. We define
$f(\lambda+\hat{\kappa}_{(i)})$ by:
$$
f(\lambda+\hat{\kappa}_{(i)})(\gamma) = f(\lambda+\kappa(\bv_i,\bw_i))(\gamma)
$$
for a class $\gamma$ supported on a component $F=\cM(\bv_1,\bw_1)\times\cdot\cdot\cdot\times \cM(\bv_n,\bw_n)$.
We will refer to such a transformation $f(\lambda) \rightarrow f(\lambda+\hat{\kappa}_{(i)})$ as the \textit{dynamical shift} of $f$ by a weight $\kappa$
in the $i$-component. In the case of one component we will omit subscript $(1)$ and write $f(\lambda+\hat{\kappa})$.

Define $q$-difference operators by
$T_{\dz_i} f(\dz_1,...,\dz_i,...,\dz_n)=f(\dz_1,...,\dz_i q,...,\dz_n)$.
We extend it to the action of $\textrm{Pic}(X) \simeq \Z^{n}$ by $q$-difference operators
$T_{\cL}$ as in Sections \ref{tcl1}-\ref{tcl2}.


\subsubsection{}
Below, we use definitions of triangular operators from Section \ref{triangularity}. {\rr The torus $\bA$ is as defined in Section \ref{torusdefsec}.}
\begin{Proposition}
 \label{Jtheorem}
There exist unique strictly upper triangular $J^{+}_{w}(\lambda)$ and strictly lower triangular $J^{-}_{w}(\lambda)$ solutions of the following ABRR equations:
\be
\label{wallKZ}
J^{+}_w(\lambda) \hbar_{(1)}^{-\lambda} \, \textsf{R}^{+}_{w}  =  \hbar_{(1)}^{-\lambda}\, \hbar^{\Omega} J^{+}_w(\lambda), \ \
  \, \textsf{R}^{-}_{w} \hbar_{(1)}^{-\lambda} J^{-}_w(\lambda) = J^{-}_w(\lambda)  \hbar^{\Omega} \hbar_{(1)}^{-\lambda}\,
\ee
Moreover, $J^{\pm}_{w}(\lambda)$ are elements in a completion $\Uq(\fg_w)\widehat{\otimes} \Uq(\fg_w)$  satisfying:
\be
\label{JpJm}
S_w\otimes S_w \Big((J^{+}_w(\lambda))_{21}\Big) = J^{-}_w(\lambda)
\ee
where the subscript $(21)$ stands for the transposition $(a\otimes b)_{(21)}=b\otimes a$ and $S_w$ is the antipode in $\Uq(\fg_w)$.
\end{Proposition}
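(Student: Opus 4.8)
The plan is to solve the two ABRR equations \eqref{wallKZ} by a recursion in the weight grading of Section~\ref{triangularity}, and then to read off \eqref{JpJm} from uniqueness. Write $\Rwal^{+}_{w}=\hbar^{\Omega}R^{+}_{w}$ with $R^{+}_{w}=1+\sum_{\langle\alpha,\theta\rangle>0}(R^{+}_{w})_{\alpha}$ strictly upper triangular, and abbreviate $D=\hbar_{(1)}^{-\lambda}\hbar^{\Omega}$, which is diagonal in the fixed-point basis. Seeking $J^{+}_{w}=1+\sum_{\langle\alpha,\theta\rangle>0}J_{\alpha}$, the first relation of \eqref{wallKZ} is equivalent to $J^{+}_{w}D-DJ^{+}_{w}=-J^{+}_{w}D\,(R^{+}_{w}-1)$, whose weight-$\alpha$ component is
\[
\bigl(1-d_{\alpha}(\lambda)\bigr)\,J_{\alpha}\,D=-\sum_{\langle\beta,\theta\rangle>0}J_{\alpha-\beta}\,D\,(R^{+}_{w})_{\beta}\,,
\]
where $d_{\alpha}(\lambda)$ is the scalar by which $D$ rescales when commuted past a weight-$\alpha$ element (a product of $\hbar^{-(\lambda,\alpha)}$ with a factor coming from $\hbar^{\Omega}$).

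First I would settle existence and uniqueness. The right-hand side above involves only the $J_{\gamma}$ with $\langle\gamma,\theta\rangle<\langle\alpha,\theta\rangle$, and the weights $\alpha$ that occur are effective and bounded by $\bv$, so this is a well-founded induction on $\langle\alpha,\theta\rangle$ with base $J_{0}=1$. For generic $\lambda$ the scalar $1-d_{\alpha}(\lambda)$ is invertible on every fixed component---one avoids the finitely many resonances where $\hbar^{-(\lambda,\alpha)}$ meets the $\hbar^{\Omega}$-factor---so each $J_{\alpha}$ is uniquely determined. This produces the unique strictly upper triangular $J^{+}_{w}(\lambda)$; the second relation of \eqref{wallKZ} is handled verbatim to give the unique strictly lower triangular $J^{-}_{w}(\lambda)$. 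Moreover each $J_{\alpha}$ is assembled out of the matrix elements $(R^{+}_{w})_{\beta}$---which by the FRT construction of Section~\ref{frt} generate $\Uq(\fg_{w})$---together with the Cartan elements $\hbar^{\pm\lambda},\hbar^{\Omega}$ and scalar coefficients; since $\Rwal^{+}_{w}$ is the universal element of $\Uq(\fg_{w})$, an induction shows $J^{\pm}_{w}(\lambda)$ lies in a completion of $\Uq(\fg_{w})\otimes\Uq(\fg_{w})$.

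To obtain \eqref{JpJm} I would apply the flip $(\cdot)_{21}$, which is an algebra homomorphism, to the first relation of \eqref{wallKZ}; by \eqref{rtransp} and $\Omega_{21}=\Omega$ it turns into the same relation with $\Rwal^{-}_{w},\hbar_{(2)}^{-\lambda}$ in place of $\Rwal^{+}_{w},\hbar_{(1)}^{-\lambda}$. Applying $S_{w}\otimes S_{w}$, an anti-homomorphism that reverses products, and using $S_{w}\otimes S_{w}(\Rwal^{\pm}_{w})=\Rwal^{\pm}_{w}$ (from \eqref{ssr} and the fact that $S_{w}\otimes S_{w}$ commutes with the flip), the antipode action on Cartan elements \eqref{antipcart}, and the invariance \eqref{antiom}, the identity becomes
\[
\Rwal^{-}_{w}\,\hbar_{(2)}^{\lambda}\,K=K\,\hbar^{\Omega}\,\hbar_{(2)}^{\lambda}\,,\qquad K:=S_{w}\otimes S_{w}\bigl((J^{+}_{w})_{21}\bigr)\,.
\]
The key observation is that on $\cM(\bv,\bw)^{\bA}$ the fixed-point dimension vectors satisfy $\bv_{1}+\bv_{2}=\bv$, so $\hbar_{(2)}^{\lambda}=\hbar^{(\lambda,\bv)}\hbar_{(1)}^{-\lambda}$ differs from $\hbar_{(1)}^{-\lambda}$ only by the central scalar $\hbar^{(\lambda,\bv)}$, which cancels from both sides. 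The displayed relation is then precisely the second ABRR equation, and since the antipode preserves the weight grading while the flip reverses its sign, $K$ is strictly lower triangular; uniqueness from the previous step therefore forces $K=J^{-}_{w}(\lambda)$, which is \eqref{JpJm}.

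The step I expect to be the main obstacle is the assertion $J^{\pm}_{w}(\lambda)\in\Uq(\fg_{w})\otimes\Uq(\fg_{w})$: the recursion a priori only produces a family of operators in $\prod_{V}\textrm{End}(V)$, and one must argue that these descend to honest elements of the tensor square of the wall algebra, which is exactly where universality of $\Rwal^{+}_{w}$ enters. A secondary point needing care is the precise genericity of $\lambda$ that simultaneously avoids all resonances $1-d_{\alpha}(\lambda)=0$, and the central-scalar bookkeeping relating $\hbar_{(2)}^{\lambda}$ to $\hbar_{(1)}^{-\lambda}$ in the antipode computation.
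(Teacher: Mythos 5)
Your proposal is correct and follows essentially the same route as the paper: existence and uniqueness via a recursion in the weight grading of the triangular operators (with invertibility, for generic $\lambda$, of the relevant diagonal operator minus the identity), membership in $\Uq(\fg_w)^{\otimes 2}$ from $\Rwal^{+}_{w}\in\Uq(\fg_w)^{\otimes 2}$, and then \eqref{JpJm} by applying the flip and $S_w\otimes S_w$ to the first ABRR equation and invoking uniqueness. The only cosmetic difference is that you justify trading $\hbar_{(2)}^{\lambda}$ for $\hbar_{(1)}^{-\lambda}$ via the central scalar $\hbar^{(\lambda,\bv)}$ coming from $\bv_1+\bv_2=\bv$, whereas the paper uses the equivalent conjugation identity $\hbar_{(2)}^{\lambda}X\hbar_{(2)}^{-\lambda}=\hbar_{(1)}^{-\lambda}X\hbar_{(1)}^{\lambda}$ for triangular $X$.
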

\begin{proof}
We write the first ABRR equation in the form:
$$
Ad_{\hbar_{(1)}^{\lambda} \hbar^{-\Omega}} \Big(J^{+}_w(\lambda) \Big)= J^{+}_w(\lambda)  (R^{+}_{w})^{-1}
$$
(recall that  $R^{+}_w$ and $\textsf{R}^{+}_{w}$ are related by Theorem \ref{thmR}).
By assumption $J^{+}_w(\lambda)=\bigoplus\limits_{\langle \alpha, \theta \rangle >0 } \,J^{+}_w(\lambda)_{\alpha}$ where
$\theta$ is the stability parameter of the Nakajima variety. The wall $R$-matrix $R^{+}_w$ is upper triangular, thus, it has the same decomposition.  In the components the last equation is equivalent to the following system:
$$
Ad_{\hbar_{(1)}^{\lambda} \hbar^{-\Omega}} \Big(J^{+}_w(\lambda)_{\alpha} \Big) = J^{+}_w(\lambda)_{\alpha}+\cdots
$$
where $\cdots$ stands for the lower terms $J^{+}_w(\lambda)_{\alpha'}$,  i.e., the terms with
$\langle \alpha, \theta \rangle>\langle \alpha', \theta \rangle$. The operator $Ad_{\hbar_{(1)}^{\lambda} \hbar^{-\Omega}} -1$ is invertible for general $\lambda$, thus we can solve the last system recursively starting from the component of the minimal weight
 $J^{+}_w(\lambda)_{0}=1$. Thus the solution is unique. By construction
of the wall quantum algebra the $R$-matrix $\textsf{R}^{+}_{w}$ is an element of $\Uq(\fg_w)^{\otimes 2}$. Thus, the same is true for $J^{+}_w(\lambda)$.

Next, we apply the antipode $S_w\otimes S_w$ and the transposition to the first ABRR equation and use (\ref{ssr})-(\ref{antiom}) to obtain:
$$
\textsf{R}^{-}_{w} \hbar_{(2)}^{\lambda} S_w\otimes S_w \Big((J^{+}_w(\lambda))_{21} \Big)      =
 S_w\otimes S_w \Big((J^{+}_w(\lambda))_{21} \Big) \hbar_{(2)}^{\lambda}\, \hbar^{\Omega} \,
$$
It is clear that for any upper or lower triangular operator $X$ we have $\hbar_{(2)}^{\lambda} X \hbar_{(2)}^{-\lambda}=\hbar_{(1)}^{-\lambda} X \hbar_{(1)}^{\lambda}$, therefore, the last equation takes the form:
$$
\textsf{R}^{-}_{w} \hbar_{(1)}^{-\lambda} S_w\otimes S_w \Big((J^{+}_w(\lambda))_{21} \Big)      =
 S_w\otimes S_w \Big((J^{+}_w(\lambda))_{21} \Big) \hbar_{(1)}^{-\lambda}\, \hbar^{\Omega}
$$
By uniqueness of the solution we conclude $S_w\otimes S_w \Big((J^{+}_w(\lambda))_{21} \Big)=J^{-}_w(\lambda)$.
\end{proof}

{\bb 
	
Let $f(z)=f(z_1,\dots,z_n)$ be a function of the K\"ahler variables and $\theta=(\theta_1,\dots,\theta_n)$ be the  stability parameter of the Nakajima variety. We denote
$$
f(0_{\theta})=\lim\limits_{z\to 0} f(z^{\theta_1},\dots,z^{\theta_n}), \ \ \
f(\infty_{\theta})=\lim\limits_{z\to \infty} f(z^{\theta_1},\dots,z^{\theta_n}) 
$$
if these limits exist. 

\begin{Proposition} \label{limitJpropos}
$$
J^{+}_{w}(\infty_{\theta})=1, \ \ \ J^{+}_{w}(0_{\theta})=R^{+}_{w}
$$
\end{Proposition}		
\begin{proof}
We write the first ABRR equation in the form:
\be \label{abrr_euasss}
Ad_{\hbar_{(1)}^{\lambda} \hbar^{-\Omega}} \Big(J^{+}_w(\lambda) \Big)= J^{+}_w(\lambda)  (R^{+}_{w})^{-1}
\ee
Let us consider the corresponding components:
$$
J^{+}_w(\lambda)=1+\bigoplus\limits_{\langle \alpha, \theta \rangle >0 } \,J_{\alpha}(z), \ \ (R^{+}_w)^{-1}=1+\bigoplus\limits_{\langle \alpha, \theta \rangle >0 } \,R_{\alpha}
$$
The  $\alpha$-component of (\ref{abrr_euasss}) is
$$
J_{\alpha}(z) z^{\alpha} \hbar^{m} = J_{\alpha}(z)+\sum\limits_{{\gamma+\delta=\alpha} \atop {\langle \gamma,\theta \rangle <\langle\alpha,\theta\rangle}}\, J_{\gamma}(z) R_{\delta} 
$$
for some $m$ and where $z^{\alpha}=z_1^{\alpha_1}\cdots z_n^{\alpha_n}$. Thus
$$
J_{\alpha}(z)=\dfrac{1}{{z^{\alpha}} \hbar^{m} -1 } \sum\limits_{{\gamma+\delta=\alpha} \atop {{\langle \gamma,\theta \rangle <\langle\alpha,\theta\rangle}}}\, J_{\gamma}(z) R_{\delta} 
$$
By induction, assume that $J_{\gamma}(\infty_{\theta})=0$ for all $\gamma\neq 0$ with ${\langle \gamma,\theta \rangle <\langle\alpha,\theta\rangle}$. By triangularity $\langle \alpha, \theta \rangle>0$, and thus
$$
J_{\alpha}(\infty_{\theta})=\lim_{z\to \infty}\, \Big( \dfrac{1}{z^{\langle \alpha, \theta \rangle } \hbar^{m} -1 } \sum\limits_{{\gamma+\delta=\alpha} \atop {\langle \gamma,\theta \rangle <\langle\alpha,\theta\rangle}}\, J_{\gamma}(z^{\theta_1},\dots,z^{\theta_n}) R_{\delta} \Big)=0
$$
Therefore $J^{+}_{w}(\infty_{\theta})=1$.

Similarly, by induction, assume that $J_{\gamma}(0_{\theta})$ exists for all
$\gamma$ with ${\langle \gamma,\theta \rangle <\langle\alpha,\theta\rangle}$.  Then 
$$
J_{\alpha}(0_{\theta})=\lim_{z\to 0}\, \Big( \dfrac{1}{z^{\langle \alpha, \theta \rangle } \hbar^{m} -1 } \sum\limits_{{\gamma+\delta=\alpha} \atop {{\langle \gamma,\theta \rangle <\langle\alpha,\theta\rangle}}}\, J_{\gamma}(z^{\theta_1},\dots,z^{\theta_n}) R_{\delta} \Big)
$$
also exists. We conclude that $J^{+}_{w}(0_{\theta})$ exists.

Let us denote $\tilde{J}^{+}_{w}(\lambda)=\hbar^{\lambda}_{(1)} J^{+}_{w}(\lambda) \hbar^{-\lambda}_{(1)}$. Then
$$
\tilde{J}^{+}_{w}(\lambda)=1+\bigoplus\limits_{\langle \alpha, \theta \rangle >0 } \,\tilde{J}_{\alpha}(z)
$$
with $\tilde{J}_{\alpha}(z) = z^{\alpha} {J}_{\alpha}(z)$. Since $J_{\alpha}(0_{\theta})=\lim\limits_{z\to 0}\,J_{\alpha}(z^{\theta_1},\dots,z^{\theta_n})$ exists and $\langle \alpha, \theta \rangle>0$ we have
$$
\tilde{J}_{\alpha}(0_{\theta})=\lim\limits_{z\to 0}\, J_{\alpha}(z^{\theta_1},\dots,z^{\theta_n}) z^{\langle \alpha, \theta \rangle} =0.
$$
Therefore $\tilde{J}^{+}_{w}(0_{\theta})=1$. 

Finally, we rewrite the ABRR equation in the form:
$$
\tilde{J}^{+}_{w}(\lambda) \textsf{R}^{+}_{w}=\hbar^{\Omega} J^{+}_{w}(\lambda) 
$$
Using above limits at $0_{\theta}$ we obtain:
$$
\textsf{R}^{+}_{w}=\hbar^{\Omega} J^{+}_{w}(0_{\theta})  
$$
and therefore $J^{+}_{w}(0_{\theta})=R^{+}_{w}$. 
	
\end{proof}}

\subsubsection{}
Let $F=\cM(\bv_1,\bw_1)\times \cM(\bv_2,\bw_2)$ and $F^{\prime}=\cM(\bv'_1,\bw'_1)\times \cM(\bv'_2,\bw'_2)$ be two fixed components.
As we discussed in Section \ref{trian} the dependence of matrix elements of a wall $R$-matrix on the equivariant parameter $u$ is given by:
$$
\left.R^{+}_{w}(u)\right|_{F\times F^{\prime}} \sim  u^{\langle \bv_1-\bv_1', \cL_w \rangle}
$$
Thus, for $\textbf{s}$ with $\hbar^{\textbf{s}} =q$ and $\tau_{w}=\textbf{s} \cL_w $ we have
\be
\label{Rshift}
\hbar^{\tau_w}_{(1)}\, R^{+}_{w}(u)\, \hbar^{-\tau_w}_{(1)}=R^{+}_{w}(u q)
\ee
From the previous proposition we obtain:
\be
\label{Jshift}
\hbar^{\tau_w}_{(1)}\, J^{+}_{w}(u )\, \hbar^{-\tau_w}_{(1)}=J^{+}_w(u q)
\ee
Shifting  $\lambda \rightarrow \lambda - \tau_w$ in the ABRR equation (\ref{wallKZ}) and using the previous two identities we find:
$$
J^{+}_w(u,\lambda-\tau_w) \hbar_{(1)}^{-\lambda} \, \textsf{R}^{+}_{w}(u q)  =  \hbar_{(1)}^{-\lambda}\, \hbar^{\Omega} J^{+}_w(u q, \lambda-\tau_w)
$$
and same for $J^{-}_{w}$. Finally, denoting
\be\label{JJ}
\wJ^{\pm}_{w}(\lambda)=J^{\pm}_w(\lambda-\tau_w)
\ee
we rewrite the last relation in the form:
\begin{Proposition} \label{wallprop}
There exist unique strictly upper triangular $\wJ^{+}_{w}(\lambda) \in \Uq(\fg_{w})^{\otimes 2}$ and strictly lower triangular $\wJ^{-}_{w}(\lambda)\in \Uq(\fg_{w})^{\otimes 2}$ solutions of \textbf{wall Knizhnik-Zamolodchikov equations}:
\be
\label{wallKZ2}
\begin{array}{l}
\wJ^{+}_w(\lambda) \hbar_{(1)}^{-\lambda}  T_{u}   \textsf{R}^{+}_{w}  =   \hbar_{(1)}^{-\lambda}\, T_{u} \hbar^{\Omega} \wJ^{+}_w(\lambda) , \\
\\
\textsf{R}^{-}_{w} \hbar_{(1)}^{-\lambda} \,T_{u} \wJ^{-}_w(\lambda) = \wJ^{-}_w(\lambda)  \hbar_{(1)}^{-\lambda}\,T_{u} \hbar^{\Omega}
\end{array}
\ee
where  $T_u f(u)=f(u q)$.
\end{Proposition}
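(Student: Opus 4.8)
The plan is to obtain the Proposition as a reformulation of Proposition \ref{Jtheorem}: the solutions are produced by the change of dynamical variable \eqref{JJ}, $\wJ^{\pm}_w(\lambda) = J^{\pm}_w(\lambda - \tau_w)$, and the content of \eqref{wallKZ2} is precisely that the ABRR solutions are covariantly constant under the unit spectral shift $u \mapsto uq$. So I would take $J^{\pm}_w$ from Proposition \ref{Jtheorem}, set $\wJ^{\pm}_w(\lambda) = J^{\pm}_w(\lambda - \tau_w)$ (keeping the suppressed dependence on $u$), and verify the two equations together with the triangularity and algebra-membership claims.

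For existence I would substitute $\lambda \mapsto \lambda - \tau_w$ into the first ABRR equation of \eqref{wallKZ}. Writing $\hbar_{(1)}^{-(\lambda-\tau_w)} = \hbar_{(1)}^{-\lambda}\hbar_{(1)}^{\tau_w}$, the task is to move the factor $\hbar_{(1)}^{\tau_w}$ to the far right on both sides. On the left, since $\hbar^{\Omega}$ is diagonal and commutes with $\hbar_{(1)}^{\tau_w}$, the conjugation relation \eqref{Rshift} gives $\hbar_{(1)}^{\tau_w}\,\Rwal^{+}_w(u) = \Rwal^{+}_w(uq)\,\hbar_{(1)}^{\tau_w}$; on the right, \eqref{Jshift} gives $\hbar_{(1)}^{\tau_w}\,J^{+}_w(u,\lambda-\tau_w) = J^{+}_w(uq,\lambda-\tau_w)\,\hbar_{(1)}^{\tau_w}$. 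Cancelling the common trailing $\hbar_{(1)}^{\tau_w}$ yields the intermediate identity relating the $u$ and $uq$ arguments displayed just before \eqref{JJ}. The second ABRR equation is handled identically, and is compatible with the antipode-transposition symmetry \eqref{JpJm}, which I would use as a consistency check.

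It then remains to repackage this $u$-versus-$uq$ identity as the operator equation \eqref{wallKZ2}. Here I would use the defining property of the shift operator, $T_u\, A(u) = A(uq)\,T_u$ for any $u$-dependent multiplication operator $A$, so that $T_u \Rwal^{+}_w$ and $T_u \hbar^{\Omega}$ in \eqref{wallKZ2} are read as compositions whose action on a function $f(u)$ reproduces exactly the intermediate identity with $f$ evaluated at $uq$. I expect this bookkeeping --- keeping straight the two distinct shift mechanisms, namely the diagonal conjugation by $\hbar_{(1)}^{\tau_w}$ and the genuine difference operator $T_u$, and placing every factor on the correct side --- to be the only real obstacle; once the orderings are fixed the verification is mechanical.

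Finally, uniqueness and the structural assertions transfer from Proposition \ref{Jtheorem}. The shift $\lambda \mapsto \lambda - \tau_w$ is invertible, and because $\tau_w = \textbf{s}\,\cL_w$ acts diagonally in the fixed-point basis it preserves the weight grading $\wJ^{\pm}_w = \bigoplus_{\alpha}(\wJ^{\pm}_w)_{\alpha}$; hence strict upper (respectively lower) triangularity and membership $\wJ^{\pm}_w \in \Uq(\fg_w)^{\otimes 2}$ are inherited. Since the shift sets up a bijection between solutions of \eqref{wallKZ2} and of \eqref{wallKZ}, the uniqueness proved in Proposition \ref{Jtheorem} gives the uniqueness asserted here.
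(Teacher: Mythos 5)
Your proposal matches the paper's own proof essentially step for step: the paper likewise defines $\wJ^{\pm}_w(\lambda)=J^{\pm}_w(\lambda-\tau_w)$ as in \eqref{JJ}, substitutes $\lambda\mapsto\lambda-\tau_w$ into the ABRR equation \eqref{wallKZ}, and uses the conjugation identities \eqref{Rshift} and \eqref{Jshift} to trade the resulting factors $\hbar^{\tau_w}_{(1)}$ for the spectral shift $u\mapsto uq$, after which the identity is repackaged with the difference operator $T_u$. Your final paragraph transferring uniqueness, triangularity, and membership in $\Uq(\fg_w)^{\otimes 2}$ through the invertible shift is left implicit in the paper, but it is the intended argument.
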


\subsection{Dynamical operators $\repM^{\slope}_{\cL}(\lambda)$ \label{comdif}}

\subsubsection{ \label{Boper}}
The following operator is playing a fundamental role in our paper. For a wall $w$ 
in the hyperplane arrangement (\ref{wallL}) we define:
\be
\label{bdef}
\begin{array}{|c|}\hline
	\\
	\ \ \repM_w(\lambda)=
	\left.\textbf{m} \Big(1 \otimes S_{w}( \, \wJ^{-}_{w}(\lambda)^{-1}\, )
	\Big)\right|_{\lambda\rightarrow\lambda+\kappa}\\
	\\ 
	\hline
\end{array}
\ee

Here $S_{w}$ is the antipode of the Hopf algebra $\Uq(\fg_{w})$ and $\textbf{m}(a\otimes b) \stackrel{def}{=}  a b $.
We denote by $\lambda\rightarrow\lambda+\kappa$ the  dynamical shift by the following vector:
\be
\label{shiftvect}
\kappa=(C \bv-\bw)/2\ee
where $C$ is the Cartan matrix of the corresponding quiver. Note that this operator is well defined
in the evaluation modules (even infinite dimensional) because the operator $\wJ^{-}_{w}(\lambda)$ is lower triangular and thus $\repM_w(\lambda)$ is normally ordered.
Note that by definition $\repM_w(\lambda)$ is an element in a completion of $\Uq(\fg_{w})(z_1,\dots, z_n)$.

{\rr
\begin{Remark}
In Section \ref{bslsec} we compute a universal formula for 	$\repM_w(\lambda)$ in the case of $\sldh$. Up to a difference in notations, this operator coincides with the element of the dynamical quantum group associated to a real root reflection. See Proposition 14 in \cite{EV} for an explicit formula in this case.
Thus, in the case of real roots the operator \eqref{bdef} coincides with the one constructed by Etingof-Varchenko. In contrast with the approach of \cite{EV}, the element (\ref{bdef}) is defined in a more general situation, see examples in Section~\ref{apa} for imaginary roots. 
\end{Remark}}

\subsubsection{\label{conjec1}}
Let $\cL \in \Pic(X)$ be a line bundle. Let us
fix a slope $\slope \in H^{2}(X,\R)$ and choose a path in $H^{2}(X,\R)$ from $\slope$ to $\slope-\cL$.
This path crosses finitely many walls in some order $\{w_{1},w_{2},...,w_m\}$.
For this choice of a slope, line bundle and a path we associate the following operator:
\be
\begin{array}{|c|}
	\hline\\
	\label{dMdef}
	\repM^{\slope}_{\cL}(\lambda)=  \cL \repM_{w_m}(\lambda) \cdot\cdot\cdot \repM_{w_1}(\lambda)\,  \ \   \\
	\\
	\hline
\end{array}
\ee
The symbol $\cL$ on the right side denotes the operator of multiplication by a line bundle in $K_{\KG}(X)$.
By construction, $\repM^{\slope}_{\cL}(\lambda)$ is an element in a completion of $\Uq(\fgh_Q)(z_1,\dots,z_n)$.

We define the $q$-difference operators:
\be
\label{qA}
\qA^{s}_{\cL}=\T^{-1}_{\cL}\, \repM^{\slope}_{\cL}(\lambda).
\ee
In Section \ref{corsec}  we  will show that (\ref{dMdef}) does not depend on a choice of a path and  for every slope $\slope$ the operators (\ref{qA}) commute. Thus,
they provide a representation of $\textrm{Pic}(X)$ by $q$-difference operators.

\subsection{Some properties of $\repM_{w}(\lambda)$ }
In this section we discuss various properties of
the operators (\ref{dMdef}) and associated $q$-difference connection (\ref{qA}). Our approach is close to one used in~\cite{ESS}.
\subsubsection{ \label{cocysec}}
Let $J^{\pm}_w(\lambda)$ be the operators introduced in Proposition~\ref{Jtheorem}. Let us denote $J^{\pm}(\lambda)^{12}=J^{\pm}_{w}(\lambda) \otimes 1$,
$J^{\pm}(\lambda)^{23}=1\otimes J^{\pm}_{w}(\lambda)$, $J^{\pm}(\lambda)^{12,3}=(\Delta_{w} \otimes 1)J^{\pm}_{w}(\lambda) $, $J^{\pm}(\lambda)^{1,23}=(1\otimes \Delta_{w} )J^{\pm}_w(\lambda)$ the operators in the corresponding completion of $\Uq(\fg_{w})^{\otimes 3}$. 
\begin{Theorem} \label{cocycle}
	The operators $J^{\pm}(\lambda)$ satisfy the dynamical cocycle conditions:
	\be
	\begin{array}{c}
		J^{-}(\lambda)^{12,3} J^{-}(\lambda+\hat{\kappa}_{(3)})^{12}=J^{-}(\lambda)^{1,23} J^{-}(\lambda-\hat{\kappa}_{(1)})^{23}\\
		\\
		J^{+}(\lambda+\hat{\kappa}_{(3)})^{12} J^{+}(\lambda)^{12,3}= J^{+}(\lambda-\hat{\kappa}_{(1)})^{23}J^{+}(\lambda)^{1,23}
	\end{array}
	\ee
	with dynamical shift $\kappa= (C \bv-\bw)/2 $ where $C$ is the Cartan matrix of the quiver.
\end{Theorem}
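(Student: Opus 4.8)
The plan is to establish the identity for $J^{+}$ by the same uniqueness mechanism used in the proof of Proposition~\ref{Jtheorem}, and then to deduce the identity for $J^{-}$ from it via the antipode relation \eqref{JpJm}. Throughout I use that $\Rwal^{+}_{w}$ is the universal $R$-matrix of the triangular Hopf algebra $(\Uq(\fg_{w}),\Delta_{w})$, so that the quasitriangularity axioms $(\Delta_{w}\otimes 1)\Rwal^{+}_{w}=(\Rwal^{+}_{w})_{13}(\Rwal^{+}_{w})_{23}$ and $(1\otimes\Delta_{w})\Rwal^{+}_{w}=(\Rwal^{+}_{w})_{13}(\Rwal^{+}_{w})_{12}$ hold. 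I also use that the codimension form is additive under fusion, $\Omega_{(12),3}=\Omega_{13}+\Omega_{23}$ and $\Omega_{1,(23)}=\Omega_{12}+\Omega_{13}$, together with $(\Delta_{w}\otimes 1)\hbar_{(1)}^{-\lambda}=\hbar_{(1)}^{-\lambda}\hbar_{(2)}^{-\lambda}$; both follow directly from \eqref{codfun} and the grouplike character of the Cartan part.

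First I would produce two ``partial fusion'' equations. Applying the algebra homomorphism $\Delta_{w}\otimes 1$ to the first ABRR equation of \eqref{wallKZ} yields
\begin{equation*}
J^{+}(\lambda)^{12,3}\,\hbar_{(1)}^{-\lambda}\hbar_{(2)}^{-\lambda}\,(\Rwal^{+}_{w})_{13}(\Rwal^{+}_{w})_{23}=\hbar_{(1)}^{-\lambda}\hbar_{(2)}^{-\lambda}\,\hbar^{\Omega_{13}}\hbar^{\Omega_{23}}\,J^{+}(\lambda)^{12,3}\,,
\end{equation*}
while applying $1\otimes\Delta_{w}$ gives
\begin{equation*}
J^{+}(\lambda)^{1,23}\,\hbar_{(1)}^{-\lambda}\,(\Rwal^{+}_{w})_{13}(\Rwal^{+}_{w})_{12}=\hbar_{(1)}^{-\lambda}\,\hbar^{\Omega_{12}}\hbar^{\Omega_{13}}\,J^{+}(\lambda)^{1,23}\,.
\end{equation*}
The next step is to multiply each of these by a dynamically shifted copy of the plain two‑fold ABRR \eqref{wallKZ}, read in the factors $(1,2)$ for the left side and in $(2,3)$ for the right side, and to check that the two products $J^{+}(\lambda+\hat{\kappa}_{(3)})^{12}\,J^{+}(\lambda)^{12,3}$ and $J^{+}(\lambda-\hat{\kappa}_{(1)})^{23}\,J^{+}(\lambda)^{1,23}$ each solve one and the same three‑fold ABRR equation, which I expect to take the form
\begin{equation*}
X\,\hbar_{(1)}^{-\lambda}\hbar_{(2)}^{-\lambda}\,(\Rwal^{+}_{w})_{12}(\Rwal^{+}_{w})_{13}(\Rwal^{+}_{w})_{23}=\hbar_{(1)}^{-\lambda}\hbar_{(2)}^{-\lambda}\,\hbar^{\Omega_{12}+\Omega_{13}+\Omega_{23}}\,X\,.
\end{equation*}
Both products are strictly upper triangular with leading term $1$, so the invertibility of $\mathrm{Ad}_{\hbar_{(1)}^{\lambda}\hbar_{(2)}^{\lambda}\hbar^{-\Omega}}-1$ on every nonzero weight space, exactly as in Proposition~\ref{Jtheorem}, forces the solution $X$ to be unique and hence the two products to coincide.

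The arithmetic heart of the argument, and the step I expect to be the main obstacle, is reconciling the two partial fusion equations: they carry \emph{different} Cartan prefactors ($\hbar_{(1)}^{-\lambda}\hbar_{(2)}^{-\lambda}$ versus $\hbar_{(1)}^{-\lambda}$) and distribute the quadratic cross terms $\hbar^{\Omega_{13}}$ differently. The point is that the dynamical shifts $\hat{\kappa}_{(i)}$, with $\kappa=(C\bv-\bw)/2$ serving as the polarization of the quadratic form $\Omega$, are \emph{designed} to absorb precisely the cross terms $\hbar^{\Omega_{13}}$ not produced by the $R$‑matrix factorization: commuting $J^{+}(\lambda+\hat{\kappa}_{(3)})^{12}$ past $\hbar_{(1)}^{-\lambda}$ generates exactly the missing $\Omega_{13}$ contribution, and symmetrically for the right side with $1\leftrightarrow 3$. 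Keeping the factors of $\tfrac12$ and the symmetry of $C$ straight here is the only genuinely delicate bookkeeping; everything else is formal.

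Finally, the cocycle for $J^{-}$ follows from that for $J^{+}$ by applying $S_{w}\otimes S_{w}\otimes S_{w}$ together with the order‑reversing transposition $(123)\mapsto(321)$. This uses that $S_{w}$ is an algebra anti‑automorphism intertwining $\Delta_{w}$ with $\Delta_{w}^{\mathrm{op}}$, that $S_{w}(\hbar^{\kappa})=\hbar^{-\kappa}$ reverses the dynamical shifts $\hat{\kappa}_{(i)}$, and the identity \eqref{JpJm} expressing $S_{w}\otimes S_{w}\bigl((J^{+}_{w}(\lambda))_{21}\bigr)$ as $J^{-}_{w}(\lambda)$. Applied termwise, this operation converts each factorization appearing in the $J^{+}$ cocycle into the corresponding factorization of the $J^{-}$ cocycle, completing the proof.
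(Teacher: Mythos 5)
Your antipode reduction of the $J^{-}$ identity to the $J^{+}$ identity is fine, and your two ``partial fusion'' equations are precisely the two equations in the paper's three-component analogue of Proposition \ref{Jtheorem} (up to replacing $\hbar_{(1)}^{-\lambda}\hbar_{(2)}^{-\lambda}$ by $\hbar_{(3)}^{\lambda}$, which is immaterial when conjugating weight-zero operators). The gap is in your uniqueness step. The single merged equation
\[
X\,\hbar_{(1)}^{-\lambda}\hbar_{(2)}^{-\lambda}\,(\Rwal^{+}_{w})_{12}(\Rwal^{+}_{w})_{13}(\Rwal^{+}_{w})_{23}=\hbar_{(1)}^{-\lambda}\hbar_{(2)}^{-\lambda}\,\hbar^{\Omega_{12}+\Omega_{13}+\Omega_{23}}\,X
\]
does \emph{not} have a unique strictly upper triangular solution, because $\mathrm{Ad}_{\hbar_{(1)}^{\lambda}\hbar_{(2)}^{\lambda}\hbar^{-\Omega}}-1$ fails to be invertible on some nonzero weight spaces. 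A component $X_{\alpha,\beta}$ shifts $(\bv_1,\bv_2,\bv_3)$ by $(\alpha,-\alpha-\beta,\beta)$, so conjugation by $\hbar_{(1)}^{\lambda}\hbar_{(2)}^{\lambda}$ multiplies it by $\hbar^{-\langle\lambda,\beta\rangle}$: all $\lambda$-dependence disappears on components with $\beta=0$, $\alpha\neq0$, and these occur in the operators you are comparing (the prefactor $J^{+}(\lambda+\hat{\kappa}_{(3)})^{12}$ lives entirely in such components). The $\lambda$-independent factor $\hbar^{-\Omega}$ does not rescue genericity: for the $A_1$ quiver with $\bw_1=\bw_2=1$ and $(\bv_1,\bv_2)$ going from $(0,1)$ to $(1,0)$, formula \eqref{codfun} gives zero total change of $\Omega_{12}+\Omega_{13}+\Omega_{23}$, so $\mathrm{Ad}-1$ vanishes identically on that weight space already in the $T^{*}\mathbb{P}^{1}$ example. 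This degeneracy is exactly why the paper demands that $X$ satisfy \emph{both} fused equations simultaneously rather than their merger: the composite of the two associated operators $A_R$, $A_L$ has adjoint $\hbar^{\lambda}_{(1)}\hbar^{-\lambda}_{(3)}$, whose eigenvalue $\hbar^{\langle\lambda,\alpha-\beta\rangle}$ is a nontrivial function of $\lambda$ on every nonzero component of a triangular operator, since there $\alpha-\beta\neq 0$.

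Relatedly, what you defer as ``delicate bookkeeping'' is the actual content of the proof. Each product satisfies its own fused equation essentially by definition; the theorem follows only if each also satisfies the \emph{other} one, and that is the paper's key Lemma. Its proof requires: (i) that $A_R$ and $A_L$ commute, which is the Yang-Baxter equation for $\Rwal^{+}_{w}$; (ii) that a solution of the $A_R$-equation is determined by its degree-zero part in the third factor; and (iii) the computation that conjugating a weight-$\alpha$ operator in the $(1,2)$ factors by $\hbar^{\Omega_{13}}$ multiplies it by $\hbar^{\langle\alpha,\kappa_{(3)}\rangle}$ with $\kappa_{(3)}=(C\bv_3-\bw_3)/2$ --- this, and not ``commuting $J^{+}$ past $\hbar_{(1)}^{-\lambda}$'' (which produces $\lambda$-dependent factors, not $\Omega_{13}$-type ones), is the mechanism converting the cross term into the dynamical shift, after which the two-fold ABRR equation at shifted argument closes the computation. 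So your outline both omits the step that carries the proof and rests on a uniqueness principle that is false as stated; once both are repaired, the argument becomes the paper's.
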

We will need the three-component analog of Proposition \ref{Jtheorem}. We start with the definition of upper/lower triangular
operators acting in a tensor product of three $\Uq(\fg_{w})$ modules.
Let $X=\cM(\bv,\bw)$ - be a Nakajima variety, and let $\bA$ be a torus splitting the framing such that:
\be
\label{cmp}
X^\bA=\coprod\limits_{\bv_1+\bv_2+\bv_3=\bv}\, \cM(\bv_1,\bw_1)\times\cM(\bv_2,\bw_2)\times\cM(\bv_3,\bw_3).
\ee
We say that an operator $A \in  \textrm{End}(K_{\bT}(X^\bA))$ is \textit{upper triangular} if $A=\bigoplus\limits_{{\langle\alpha, \theta \rangle>0} \atop {\langle \beta, \theta \rangle<0} }\,A_{\alpha, \beta}$ where $\theta$ is the stability parameter of the Nakajima variety and:
\begin{align}
\nonumber
A_{\alpha, \beta}: K_{\bT}(\cM(\bv_1,\bw_1)\times\cM(\bv_2,\bw_2)\times\cM(\bv_3,\bw_3)) \rightarrow \\ \nonumber
\\ \nonumber
K_{\bT}(\cM(\bv_1+\alpha,\bw_1)\times\cM(\bv_2+\gamma,\bw_2)\times\cM(\bv_3+\beta,\bw_3))
\end{align}
where $\gamma$ is fixed by the condition $\alpha+\beta+\gamma=0$. Similarly, the operator is lower triangular
if  $A=\bigoplus\limits_{{\langle\alpha, \theta \rangle<0} \atop {\langle \beta, \theta \rangle>0} }\,A_{\alpha, \beta}$ with the same $A_{\alpha,\beta}$ as above.
Finally, we say that an operator is strictly upper or lower triangular if, in addition,
$A_{0,0}=1$. For example, the product of wall $R$-matrices $R^{+,13}_{w} R^{+,12}_{w}$ or $R^{+,13}_{w} R^{+,23}_{w} $ (where the indices indicate in which components of (\ref{cmp}) the $R$-matrices act), are strictly upper triangular.

In the three-component case we have two types of qKZ operators
$\hbar^{\lambda}_{(3)} \textsf{R}_{w}^{+,13} \textsf{R}_{w}^{+,23}$  and $\hbar^{-\lambda}_{(1)} \textsf{R}_{w}^{+,13} \textsf{R}_{w}^{+,12}$ which correspond to the coproducts of the wall qKZ operators in the first or the second component. 
\begin{Proposition}
	If there exists a strictly upper triangular operator $J(\lambda) \in  \textrm{End}(K_{\bT}(X^\bA))$ satisfying:
	\be
	\begin{array}{l}
		J(\lambda)\, \hbar^{\lambda}_{(3)} \textsf{R}_{w}^{+,13} \textsf{R}_{w}^{+,23} =  \hbar^{\lambda}_{(3)} \hbar^{\Omega_{13}+\Omega_{23}}\,J(\lambda)\,  \\
		\\
		J(\lambda)\, \hbar^{-\lambda}_{(1)} \textsf{R}_{w}^{+,13} \textsf{R}_{w}^{+,12} = \hbar^{-\lambda}_{(1)} \hbar^{\Omega_{13}+\Omega_{12}}\, J(\lambda)
	\end{array}
	\ee
	or a strictly lower-triangular operator $J(\lambda) \in  \textrm{End}(K_{\bT}(X^\bA))$ satisfying
	$$
	\begin{array}{l}
	\textsf{R}_{w}^{-,23} \textsf{R}_{w}^{-,13} \hbar^{\lambda}_{(3)}  J(\lambda) =  \,J(\lambda)  \hbar^{\Omega_{23}+\Omega_{13}}\,\hbar^{\lambda}_{(3)}  \\
	\\
	\textsf{R}_{w}^{-,12} \textsf{R}_{w}^{-,13}\, \hbar^{-\lambda}_{(1)} \, J(\lambda) =  J(\lambda) \hbar^{\Omega_{12}+\Omega_{13}}\,\hbar^{-\lambda}_{(1)}
	\end{array}
	$$
	then it is unique.
\end{Proposition}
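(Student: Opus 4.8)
The plan is to transport the recursive uniqueness argument of Proposition~\ref{Jtheorem} to the three--fold tensor product; only the triangular bookkeeping changes. I treat the strictly upper triangular case in detail, the lower triangular one being symmetric (reverse the order of the two $R$--matrices and replace $\textsf{R}^{+}_{w}$ by $\textsf{R}^{-}_{w}$).

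First I would put the first equation into adjoint form. By Theorem~\ref{thmR} we have $\textsf{R}^{+}_{w}=\hbar^{\Omega}R^{+}_{w}$ with $R^{+}_{w}$ strictly upper triangular (Section~\ref{trian}); since conjugation by a diagonal operator preserves strict upper triangularity, moving $\hbar^{\Omega_{23}}$ to the left gives
\[
\textsf{R}_{w}^{+,13}\,\textsf{R}_{w}^{+,23}=\hbar^{\Omega_{13}+\Omega_{23}}\,(1+N),
\]
where $N$ is strictly upper triangular. Writing $E=\hbar^{\lambda}_{(3)}\hbar^{\Omega_{13}+\Omega_{23}}$, the first equation becomes $J(\lambda)\,E\,(1+N)=E\,J(\lambda)$, that is
\[
Ad_{E^{-1}}\bigl(J(\lambda)\bigr)=J(\lambda)\,(1+N)^{-1},
\]
and $(1+N)^{-1}$ is again $1$ plus a strictly upper triangular operator.

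Next I would decompose $J(\lambda)=1+\sum_{\alpha,\beta}J_{\alpha,\beta}(\lambda)$ into weight components and read off the equation in weight $(\alpha,\beta)$:
\[
\bigl(Ad_{E^{-1}}-1\bigr)\bigl(J_{\alpha,\beta}(\lambda)\bigr)=\sum J_{\alpha',\beta'}(\lambda)\,\bigl((1+N)^{-1}\bigr)_{\alpha'',\beta''},
\]
the sum running over $(\alpha'',\beta'')\neq(0,0)$, so that only components with $\langle\alpha',\theta\rangle<\langle\alpha,\theta\rangle$ appear on the right. The operator $Ad_{E^{-1}}$ acts on $J_{\alpha,\beta}$ by the scalar $c\,\hbar^{-(\lambda,\beta)}$ with $c$ independent of $\lambda$. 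Since every nonidentity component of a strictly upper triangular operator has $\langle\beta,\theta\rangle<0$, hence $\beta\neq0$, this scalar is a nonconstant function of $\lambda$ and so $Ad_{E^{-1}}-1$ is invertible for generic $\lambda$. Ordering the weights by the degree $\langle\alpha,\theta\rangle$ and starting from $J_{0,0}=1$, the displayed relation determines each $J_{\alpha,\beta}$ uniquely in terms of the already constructed lower components. This yields uniqueness.

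The one point that requires care is the generic invertibility of $Ad_{E^{-1}}-1$ on every nonzero weight space, and this is where the three--factor triangularity is essential: it is precisely the constraint $\langle\beta,\theta\rangle<0$ (not merely $\alpha+\gamma+\beta=0$) that forces $\beta\neq0$ and makes the adjoint eigenvalue genuinely depend on $\lambda$. Granting this, the first equation alone already pins down $J(\lambda)$; the second equation (and its lower triangular analogues) give an equivalent recursion, and the two are imposed simultaneously only for the existence and compatibility statements, which this proposition does not assert.
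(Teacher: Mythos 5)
There is a genuine and fatal gap: the first equation alone does \emph{not} determine $J(\lambda)$, so your closing claim that ``the first equation alone already pins down $J(\lambda)$'' (with the second equation relegated to existence/compatibility) is false, and the error traces back to your reading of the triangularity definition. In the three-factor setting the class of strictly upper triangular operators must contain non-identity components with $\beta=0$: the paper itself lists $R^{+,13}_{w}R^{+,12}_{w}$ as strictly upper triangular, and the operators to which the Proposition is actually applied in the proof of Theorem~\ref{cocycle}, namely $J^{+}(\lambda+\hat{\kappa}_{(3)})^{12}\,J^{+}(\lambda)^{12,3}$, have components of type $(\alpha,0)$ coming from the factor $J^{+,12}$. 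On such components your operator $Ad_{E^{-1}}$ imposes no constraint at all: its eigenvalue is $c\,\hbar^{-(\lambda,\beta)}$ with $\beta=0$, and in fact $c=1$, because by \eqref{codfun} the function $\Omega_{13}+\Omega_{23}$ is invariant under $(\bv_1,\bv_2,\bv_3)\mapsto(\bv_1+\alpha,\bv_2-\alpha,\bv_3)$. Concretely, if $J$ solves the first equation then so does $KJ$ for \emph{any} strictly upper triangular operator $K$ acting only in factors $1,2$ (for instance $K=R^{+,12}_{w}$): such $K$ commutes with $\hbar^{\lambda}_{(3)}$ and with $\hbar^{\Omega_{13}+\Omega_{23}}$, hence $KJ\,\hbar^{\lambda}_{(3)}\textsf{R}^{+,13}_{w}\textsf{R}^{+,23}_{w} = K\,\hbar^{\lambda}_{(3)}\hbar^{\Omega_{13}+\Omega_{23}}J = \hbar^{\lambda}_{(3)}\hbar^{\Omega_{13}+\Omega_{23}}\,KJ$. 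The solution set of the first equation is therefore parametrized by an arbitrary ``$12$-type'' factor, and uniqueness fails; symmetrically, the second equation alone leaves a ``$23$-type'' ambiguity.

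This is precisely why the paper's proof uses both equations simultaneously: it introduces $A_{R}(X)=\hbar^{-\Omega_{13}-\Omega_{23}}\hbar^{-\lambda}_{(3)}X\hbar^{\lambda}_{(3)}\textsf{R}^{+,13}_{w}\textsf{R}^{+,23}_{w}$ and $A_{L}(X)=\hbar^{-\Omega_{13}-\Omega_{12}}\hbar^{\lambda}_{(1)}X\hbar^{-\lambda}_{(1)}\textsf{R}^{+,13}_{w}\textsf{R}^{+,12}_{w}$, observes that any solution of the system is a fixed point of the composition $A_{R}A_{L}$, and runs the recursion for that single combined equation. The relevant adjoint operator is then $Ad_{\hbar^{\lambda}_{(1)}\hbar^{-\lambda}_{(3)}\hbar^{-\bar{\Omega}}}$, whose eigenvalue on a component $J_{\alpha,\beta}$ is proportional to $\hbar^{(\lambda,\alpha-\beta)}$; every non-identity upper triangular component has $\langle\alpha-\beta,\theta\rangle>0$, hence $\alpha\neq\beta$, so $1-Ad$ is generically invertible on \emph{all} of them, including the $(\alpha,0)$ and $(0,\beta)$ components that each single equation leaves completely free. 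A secondary slip: your recursion bookkeeping (ordering by $\langle\alpha,\theta\rangle$) is also wrong, since $N$ contains components with $\alpha''=0$, $\beta''\neq0$, so the right-hand side involves terms $J_{\alpha,\beta'}$ with the \emph{same} $\alpha$; the recursion must be ordered by $-\langle\beta,\theta\rangle$ (or, as in the paper, by $\langle\alpha-\beta,\theta\rangle$). Ordered correctly, your own scheme exposes the non-uniqueness: the $\beta=0$ components are the minimal ones, and the first equation constrains them by the vacuous identity $J_{\alpha,0}=Ad_{E^{-1}}(J_{\alpha,0})=J_{\alpha,0}$.
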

\begin{proof}
	We prove the upper-triangular case. The lower-triangular case is similar. Following \cite{EV} we introduce the operators:
	$$
	\begin{array}{l}
	A_{R}(X)= \hbar^{-\Omega_{13}-\Omega_{23}} \hbar^{-\lambda}_{(3)} X \hbar^{\lambda}_{(3)} \textsf{R}_{w}^{+,13}  \textsf{R}_{w}^{+,23}, \\
	\\
	A_{L}(X)= \hbar^{-\Omega_{13}-\Omega_{12}} \hbar^{\lambda}_{(1)}  X  \hbar^{-\lambda}_{(1)} \textsf{R}_{w}^{+,13} \textsf{R}_{w}^{+,12}
	\end{array}
	$$
	Assume that there exists an operator $J(\lambda)$ satisfying the conditions of the proposition. Then 
	$A_{R} A_{L} (J(\lambda))=J(\lambda)$. It is enough to check that the solution for this equation is unique.
	We are given that $J(\lambda) = \bigoplus\limits_{{\langle\alpha, \theta \rangle>0} \atop {\langle \beta, \theta \rangle<0} } J_{\alpha,\beta}(\lambda)$,
	and thus this equation has the following form in components:
	\be
	\label{trisystem}
	J_{\alpha,\beta}(\lambda)=Ad_{\hbar^{\lambda}_{(1)}\hbar^{-\lambda}_{(3)}\hbar^{-\bar{\Omega}}}\Big( J_{\alpha,\beta}(\lambda) \Big) +\cdots
	\ee
	where $\bar{\Omega}=2\,\Omega_{13}+\Omega_{23}+\Omega_{12}$ and $\cdots$ stands for the lower terms $J_{\alpha',\beta'}(\lambda)$  with
	$$
	\langle \alpha'-\beta', \theta \rangle  < \langle \alpha-\beta, \theta \rangle
	$$
	Note that the operator $1-Ad_{\hbar^{\lambda}_{(1)}\hbar^{-\lambda}_{(3)}\hbar^{-\bar{\Omega}}}$ is invertible for generic $\lambda$. This means that all $J_{\alpha,\beta}(\lambda)$ can be expressed through the lowest term $J_{0,0}(\lambda)=1$ and therefore they are uniquely determined by (\ref{trisystem}).
\end{proof}

Let $J(\lambda)$ be as in Proposition  \ref{Jtheorem}. It is obvious that
$J^{+}(\lambda+\hat{\kappa}_{(3)})^{12} J^{+}(\lambda)^{12,3} $ is a solution
of $A_{R}(X)=X$. Similarly
$ J^{+}(\lambda-\hat{\kappa}_{(1)})^{23} J^{+}(\lambda)^{1,23} $ is a solution of $A_{L}(X)=X$. Thus, by the previous proposition, to prove Theorem \ref{cocycle} it is enough to prove
the following lemma:
\begin{Lemma}
	$$
	\begin{array}{l}
	X=J^{+}(\lambda+\hat{\kappa}_{(3)})^{12} J^{+}(\lambda)^{12,3}  \ \  \textrm{is a solution of} \ \  A_{L}(X)=X\\
	\\
	X= J^{+}(\lambda-\hat{\kappa}_{(1)})^{23} J^{+}(\lambda)^{1,23}  \ \ \textrm{is a solution of} \ \  A_{R}(X)=X
	\end{array}
	$$
\end{Lemma}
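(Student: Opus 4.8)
The plan is to prove the two displayed assertions by direct substitution, reducing everything to the single-wall ABRR equation of Proposition~\ref{Jtheorem} and the Hopf-algebra properties of $\textsf{R}^{+}_{w}$; the dynamical cocycle identity of Theorem~\ref{cocycle} then follows from the uniqueness in the preceding Proposition, since each product will turn out to solve \emph{both} three-fold equations. Throughout I use that the conditions $A_{R}(X)=X$ and $A_{L}(X)=X$ are exactly the two three-fold ABRR equations written there, and that all operators are weight preserving on the component \eqref{cmp}, so that $\hbar^{-\lambda}_{(1)}\hbar^{-\lambda}_{(2)}\hbar^{-\lambda}_{(3)}$ acts by the scalar $\hbar^{-(\lambda,\bv)}$.

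First I would record the two ``coproduct'' facts that recast the Lemma as a pair of \emph{cross}-equations. Applying $(\Delta_{w}\otimes 1)$ to the first ABRR equation of Proposition~\ref{Jtheorem}, and using $(\Delta_{w}\otimes 1)\textsf{R}^{+}_{w}=\textsf{R}_{w}^{+,13}\textsf{R}_{w}^{+,23}$, the additivity $(\Delta_{w}\otimes 1)\Omega=\Omega_{13}+\Omega_{23}$, the group-like behaviour of $\hbar^{-\lambda}_{(\cdot)}$ under $\Delta_{w}$, and the scalar above to convert $\hbar^{-\lambda}_{(1)}\hbar^{-\lambda}_{(2)}$ into $\hbar^{\lambda}_{(3)}$, one obtains that $J^{+}(\lambda)^{12,3}$ solves $A_{R}(X)=X$; applying $(1\otimes\Delta_{w})$ instead gives that $J^{+}(\lambda)^{1,23}$ solves $A_{L}(X)=X$. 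The prefactors $J^{+}(\mu)^{12}$ and $J^{+}(\mu)^{23}$ preserve $\bv_{1}+\bv_{2}$ and $\bv_{2}+\bv_{3}$ respectively, so they commute with the diagonal factors of the equation they are adjoined to; this is precisely the two ``obvious'' solutions stated just before the Lemma, and they are valid for an \emph{arbitrary} argument $\mu$. The Lemma asserts that these same products solve the \emph{other} equation, and it is this that pins the argument down to $\lambda\pm\hat\kappa$.

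For the first assertion I would substitute $X=J^{+}(\lambda+\hat\kappa_{(3)})^{12}J^{+}(\lambda)^{12,3}$ into $A_{L}(X)=X$ and transport the pair $\textsf{R}_{w}^{+,13}\textsf{R}_{w}^{+,12}$ through the product: $\textsf{R}_{w}^{+,12}$ is absorbed by the single-wall ABRR equation for the outer factor $J^{+}(\,\cdot\,)^{12}$, while $\textsf{R}_{w}^{+,13}$ is absorbed by the already-established identity $A_{R}(J^{+}(\lambda)^{12,3})=J^{+}(\lambda)^{12,3}$, the extra factor $\textsf{R}_{w}^{+,23}$ being reordered away by the Yang--Baxter equation of Theorem~\ref{thmR}. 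The role of the shift is explained by a linearity computation: since $\Omega$ is linear in each dimension vector, the combination $\Omega_{13}+\Omega_{23}$ governing $A_{R}$ depends only on $\bv_{1}+\bv_{2}$ (the $\alpha$-dependent parts of the two summands cancel) and so is invariant under the adjoint action of $J^{+}(\,\cdot\,)^{12}$, which is why no shift was needed there; but the combination $\Omega_{13}+\Omega_{12}$ governing $A_{L}$ is not invariant, and commuting the weight-$\alpha$ component of $J^{+}(\,\cdot\,)^{12}$ past $\hbar^{\Omega_{13}}$ produces exactly the scalar $\hbar^{-(\alpha,\,(C\bv_{3}-\bw_{3})/2)}=\hbar^{-(\alpha,\hat\kappa_{(3)})}$, with $\kappa=(C\bv-\bw)/2$ as in \eqref{shiftvect}. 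The dynamical shift $\lambda\mapsto\lambda+\hat\kappa_{(3)}$ is precisely the correction absorbing this $\bv_{3}$-dependent discrepancy. The second assertion is identical after interchanging the first and third tensor factors, the asymmetry producing the opposite shift $-\hat\kappa_{(1)}$.

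The hard part will be this last matching step: keeping the ordering of the three $R$-matrices and of all the diagonal codimension factors under control while transporting $\textsf{R}_{w}^{+,13}\textsf{R}_{w}^{+,12}$ through the product, so that the spurious $\textsf{R}_{w}^{+,23}$ really cancels via Yang--Baxter and the residual weight-dependent scalar is reproduced \emph{exactly} by the shift $\hat\kappa_{(3)}$ with the correct sign. This is where the precise value $\kappa=(C\bv-\bw)/2$, matched to the codimension form $\Omega$, is indispensable, and it is the computation carried out in \cite{ESS} in the Lie-algebra case that one transcribes to the present FRT setting. Once both assertions are in hand, each of $J^{+}(\lambda+\hat\kappa_{(3)})^{12}J^{+}(\lambda)^{12,3}$ and $J^{+}(\lambda-\hat\kappa_{(1)})^{23}J^{+}(\lambda)^{1,23}$ is a strictly upper triangular solution of both $A_{R}(X)=X$ and $A_{L}(X)=X$, hence they coincide by the uniqueness Proposition, which is the cocycle identity of Theorem~\ref{cocycle}.
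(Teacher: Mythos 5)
Your preliminary reductions are correct and match the paper: applying $\Delta_w\otimes 1$ and $1\otimes\Delta_w$ to the single-wall ABRR equation of Proposition~\ref{Jtheorem} does show that $J^{+}(\lambda)^{12,3}$ solves $A_R(X)=X$ and $J^{+}(\lambda)^{1,23}$ solves $A_L(X)=X$; the observation that $\Omega_{13}+\Omega_{23}$ is unchanged under a weight-$(\alpha,-\alpha,0)$ operator correctly explains why the prefactor $J^{+}(\,\cdot\,)^{12}$ can be adjoined to the $A_R$-equation with no shift; and the conjugation formula $\hbar^{-\Omega_{13}}Z_\alpha\hbar^{\Omega_{13}}=\hbar^{\langle\alpha,\kappa_{(3)}\rangle}Z_\alpha$, $\kappa_{(3)}=(C\bv_3-\bw_3)/2$, is exactly the computation that produces the dynamical shift in the paper. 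The gap is in the central step, which you yourself flag as ``the hard part'' without resolving it. The proposed mechanism --- substitute $X=J^{+}(\lambda+\hat{\kappa}_{(3)})^{12}J^{+}(\lambda)^{12,3}$ into $A_L(X)=X$, absorb $\textsf{R}^{+,12}_{w}$ by the ABRR equation for the outer factor, absorb $\textsf{R}^{+,13}_{w}$ by the identity $A_R(J^{12,3})=J^{12,3}$, and cancel ``the extra $\textsf{R}^{+,23}_{w}$'' by Yang--Baxter --- does not work as stated. The Yang--Baxter relation $\textsf{R}^{12}\textsf{R}^{13}\textsf{R}^{23}=\textsf{R}^{23}\textsf{R}^{13}\textsf{R}^{12}$ only reorders a product of all three factors; it can neither create nor destroy the factor $\textsf{R}^{+,23}_{w}$ that the $A_R$-identity requires. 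Moreover that identity involves conjugation by $\hbar^{-\lambda}_{(3)}$ and the prefactor $\hbar^{-\Omega_{13}-\Omega_{23}}$, whereas $A_L$ involves conjugation by $\hbar^{\lambda}_{(1)}$ and $\hbar^{-\Omega_{13}-\Omega_{12}}$; on weight-preserving operators these conjugations differ by $\mathrm{Ad}(\hbar^{-\lambda}_{(2)})$, with which $J^{+}(\lambda)^{12,3}$ does not commute. Pushing such a substitution through amounts to proving the cocycle identity of Theorem~\ref{cocycle} by brute force, which is precisely what the argument is designed to avoid; and the computation in \cite{ESS} that you appeal to is not a direct substitution either --- it is the uniqueness argument below.

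The missing idea, which closes the gap using only the pieces you already have, is the paper's: since $A_L$ and $A_R$ commute (this is where the Yang--Baxter equation for $\textsf{R}^{+}_{w}$ enters, and it is its only role in the proof), the operator $Y:=A_L(X)$ is again a solution of $A_R(\,\cdot\,)=\,\cdot\,$. Next, solutions of $A_R$ \emph{alone} are uniquely determined by their degree-zero part in the third tensor factor --- the same recursion as in the three-component uniqueness Proposition, run in the grading by $\bv_3$, since $1-\mathrm{Ad}\bigl(\hbar^{\lambda}_{(3)}\bigr)$ is invertible on nonzero weights for generic $\lambda$. Hence $A_L(X)=X$ reduces to the scalar-graded identity $Y_0=X_0$. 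In computing $Y_0$ the factor $J^{+}(\lambda)^{12,3}$ contributes $1$ and $\textsf{R}^{+,13}_{w}$ contributes only its diagonal part $\hbar^{\Omega_{13}}$, giving $Y_{0}=\hbar^{-\Omega_{13}-\Omega_{12}}\,\hbar^{\lambda}_{(1)}\,J^{+}(\lambda+\hat{\kappa}_{(3)})^{12}\,\hbar^{-\lambda}_{(1)}\,\hbar^{\Omega_{13}}\,\textsf{R}^{+,12}_{w}$; now your $\hbar^{\Omega_{13}}$-conjugation formula converts this into $\hbar^{-\Omega_{12}}\,\hbar^{\lambda+\hat{\kappa}_{(3)}}_{(1)}\,J^{+}(\lambda+\hat{\kappa}_{(3)})^{12}\,\hbar^{-\lambda-\hat{\kappa}_{(3)}}_{(1)}\,\textsf{R}^{+,12}_{w}$, and the single-wall ABRR equation at argument $\lambda+\hat{\kappa}_{(3)}$ gives $Y_0=J^{+}(\lambda+\hat{\kappa}_{(3)})^{12}=X_0$. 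This degree-zero reduction is what replaces your unworkable transport of $R$-matrices, and it is the step your outline lacks.
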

\begin{proof}
	As noted above the element $X= J^{+}(\lambda+\hat{\kappa}_{(3)})^{12}J^{+}(\lambda)^{12,3}$ is a solution of $A_{R}(X)=X$. Note that $A_{R}$ and $A_{L}$ commute
	(due to the Yang-Baxter equation for $\Rwal^{+}_{w}$). Thus, $Y=A_{L}(X)$ is also a solution of this equation. The solution of $A_{R}(X)=X$
	is uniquely determined by the degree zero part in the third component. Let us denote this component of $X$ by $X_0$ and similarly for $Y$ by $Y_0$.
	Enough to prove that $X_0=Y_0$. For $X_0$ we obtain:
	$$
	X_0=J^{+}(\lambda+\hat{\kappa}_{(3)})^{12}
	$$
	For $Y_{0}$ we have:
	\be
	\label{inteq}
	Y_{0}=  \hbar^{-\Omega_{13}-\Omega_{12}} \hbar^{\lambda}_{(1)} J^{+}(\lambda+\hat{\kappa}_{(3)})^{12}\, \hbar^{-\lambda}_{(1)} \hbar^{\Omega_{13}} \textsf{R}_{w}^{+,12}
	\ee
	Set $Z=J^{+}(\lambda+\hat{\kappa}_{(3)})^{12}$. By triangularity of $R$-matrix and $J(\lambda)$ it factors
	$Z=\bigoplus\limits_{\alpha \in {\mathbb{N}}^{I}}\, Z_{\alpha}$ with:
	\begin{align}
	\nonumber
	Z_{\alpha}: K_{\KG}\Big(\cM(\bv_1,\bw_1)\times\cM(\bv_2,\bw_2)\times\cM(\bv_3,\bw_3)\Big) \\ \nonumber
	\\ \nonumber
	\rightarrow K_{\KG}\Big(\cM(\bv_1+\alpha,\bw_1)\times\cM(\bv_2-\alpha,\bw_2)\times\cM(\bv_3,\bw_3)\Big)
	\end{align}
	Thus, by the definition of codimension function (\ref{codfun}) we have $\hbar^{-\Omega_{13}} Z_{\alpha} \hbar^{\Omega_{13}}=\hbar^{m_\alpha} Z_{\alpha}$ where
	$$
	\begin{array}{l}
	m_\alpha = \\
	\\
	\frac{1}{2}\Big(\langle \bv_1,\bw_3\rangle+\langle \bv_3,\bw_1 \rangle- \langle \bv_1, C \bv_3\rangle\Big)
	-\frac{1}{2}\Big(\langle \bv_1+\alpha,\bw_3\rangle+\langle \bv_3,\bw_1\rangle- \langle \bv_1+\alpha, C \bv_3\rangle\Big) \\
	\\
	= \langle \alpha, {\kappa}_{(3)}\rangle
	\end{array}
	$$
	with $\kappa_{(3)}=(C \bv_3-\bw_3)/2$. Therefore, using the dynamical notations we can write the equation (\ref{inteq}) in the form:
	$$
	Y_{0}=  \hbar^{-\Omega_{12}} \hbar^{\lambda+\hat{\kappa}_{(3)}}_{(1)} J^{+}(\lambda+\hat{\kappa}_{(3)})^{12}\, \hbar^{-\lambda-\hat{\kappa}_{(3)}}_{(1)} \textsf{R}_{w}^{+,12}
	$$
	As $J^{+}(\lambda)$ satisfies the condition of Proposition \ref{Jtheorem} we obtain $Y_{0}=J^{+}(\lambda+\hat{\kappa}_{(3)})^{12}$. Therefore $Y=X$.
\end{proof}

{\rr
\begin{Corollary} \label{corrcocy}
The wall $R$-matrices $R^{+}_{w}$ satisfy the cocycle condition:
$$
(R^{+}_{w})^{12} (R^{+}_{w})^{12,3}=(R^{+}_{w})^{23} (R^{+}_{w})^{1,23}   
$$  
\end{Corollary}
\begin{proof}
By Proposition \ref{limitJpropos}, $J_w^{+}(0_{\theta})=R^{+}_{w}$. The result follows by evaluating the limit of the second identity in Theorem \ref{cocycle}  at $0_{\theta}$. 
\end{proof}
}

\subsubsection{}
Let us consider the operators:
$$
\tilde{\Wc}_w^{\prime}(\lambda)=\textbf{m}\Big( 1\otimes S_w ( J^{-}_w(\lambda)^{-1}) \Big), \ \
\Wc_w^{\prime}(\lambda)=\textbf{m}_{21}\Big( S_w^{-1} \otimes 1 ( J_w^{-}(\lambda)^{-1}) \Big), $$
where $S_w$ is the antipode of $\Uq(\fg_w)$ and $\textbf{m}(a\otimes b) \stackrel{def}{=}a b $,  $\textbf{m}_{21}(a\otimes b)\stackrel{def}{=} ba $.
We define:
\be
\label{rB}
\tilde{\Wc}_w(\lambda)= \tilde{\Wc}_w^{\prime}( \lambda +\hat{\kappa}), \ \ \Wc_w(\lambda)= \Wc_w^{\prime}( \lambda-\hat{\kappa})
\ee
with $\kappa$ as in Theorem \ref{cocycle}. 
\begin{Theorem}
	\label{coprB}
	$$
	\begin{array}{l}
	1) \ \  \Delta_w \tilde{\Wc}_w(\lambda) = J^{-}_w(\lambda) \,  \Big( \tilde{\Wc}_w (\lambda+\hat{\kappa}_{(2)} ) \otimes \tilde{\Wc}_w (\lambda -\hat{\kappa}_{(1)} ) \Big) \,  J^{+}_w(\lambda) \\
	\\
	2) \ \ \Delta_w \Wc_w(\lambda) = J^{-}_{w}(\lambda) \,  \Big( \Wc_w(\lambda +\hat{\kappa}_{(2)}  ) \otimes \Wc_w(\lambda - \hat{\kappa}_{(1)}) \Big) \, J^{+}_w(\lambda)
	\end{array}
	$$
\end{Theorem}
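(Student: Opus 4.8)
The plan is to prove part (1) by a direct computation of $\Delta_w\tilde{\Wc}_w(\lambda)$ from the definition $\tilde{\Wc}_w'(\lambda)=\textbf{m}(1\otimes S_w)(J^-_w(\lambda)^{-1})$, using the dynamical cocycle identity of Theorem \ref{cocycle} as the engine and the Hopf axioms of the triangular algebra $\Uq(\fg_w)$ for the bookkeeping. Throughout I would work first with the unshifted operator $\tilde{\Wc}_w'$ and reinstate the $\hat{\kappa}$-shift of \eqref{rB} only at the end. Part (2), for $\Wc_w(\lambda)=\Wc_w'(\lambda-\hat{\kappa})$ with $\Wc_w'(\lambda)=\textbf{m}_{21}(S_w^{-1}\otimes 1)(J^-_w(\lambda)^{-1})$, will then follow by the same argument run through $\textbf{m}_{21}$, $S_w^{-1}$, and the opposite shift.

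First I would expand $\Delta_w$ across the definition. Writing $J^-_w(\lambda)^{-1}$ in Sweedler-type notation and using that $\Delta_w$ is an algebra homomorphism together with the antipode--coproduct compatibility $\Delta_w\circ S_w=(S_w\otimes S_w)\circ\Delta_w^{op}$, the element $\Delta_w\tilde{\Wc}_w'(\lambda)$ becomes a contraction of $(\Delta_w\otimes\Delta_w)(J^-_w(\lambda)^{-1})$: apply $S_w$ to the two legs coming from the second tensor factor and multiply them into the two legs coming from the first. This rewrites $\Delta_w\tilde{\Wc}_w'(\lambda)$ purely in terms of the ``mixed'' coproducts $(\Delta_w\otimes 1)$ and $(1\otimes\Delta_w)$ of $J^-_w(\lambda)^{-1}$, with antipodes inserted on the appropriate legs.

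The key step is to feed in the cocycle identity. Inverting the lower-triangular cocycle of Theorem \ref{cocycle},
$$(\Delta_w\otimes 1)(J^-_w(\lambda))\,\big(J^-_w(\lambda+\hat{\kappa}_{(3)})\otimes 1\big)=(1\otimes\Delta_w)(J^-_w(\lambda))\,\big(1\otimes J^-_w(\lambda-\hat{\kappa}_{(1)})\big),$$
expresses one mixed coproduct of $J^-_w(\lambda)^{-1}$ in terms of the other together with single-leg factors $J^-_w(\lambda\pm\hat{\kappa}_{(i)})$. Substituting this into the contraction from the previous step, the antipode axiom $\textbf{m}(1\otimes S_w)\Delta_w=\eta\epsilon$ collapses the ``middle'' tensor slot, leaving exactly two surviving copies of the contracted object, i.e.\ $\tilde{\Wc}_w'$ in each tensor factor but with the dynamical argument shifted by $\pm\hat{\kappa}_{(i)}$ as dictated by the cocycle. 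The left-over single-leg factors assemble into $J^-_w(\lambda)$ on the left, while the factor that survives on the right appears as an antipode-transform of a copy of $J^-_w$; the relation $S_w\otimes S_w((J^+_w(\lambda))_{21})=J^-_w(\lambda)$ of Proposition \ref{Jtheorem} is precisely what identifies it with $J^+_w(\lambda)$. This is the mechanism producing the asymmetric shape $J^-_w(\lambda)(\cdots)J^+_w(\lambda)$ rather than $J^-$ on both sides.

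The main obstacle I expect is the shift bookkeeping. Because $\Delta_w^{op}$ enters when the antipode is pulled through the coproduct, the dynamical arguments $\lambda\pm\hat{\kappa}_{(i)}$ are transported between tensor slots and relabeled, and one must check that they land as $\lambda+\hat{\kappa}_{(2)}$ in the first factor and $\lambda-\hat{\kappa}_{(1)}$ in the second, matching Theorem \ref{coprB}. Here the strict triangularity and weight-grading of $J^\pm_w$ from Proposition \ref{Jtheorem} are essential: on each graded component $\hbar^{\hat{\kappa}_{(i)}}$ acts as an explicit scalar, so the shifts can be matched componentwise, exactly as in the identity $m_\alpha=\langle\alpha,\kappa_{(3)}\rangle$ used in the proof of Theorem \ref{cocycle}. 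Once the shifts are reconciled, reinstating $\tilde{\Wc}_w(\lambda)=\tilde{\Wc}_w'(\lambda+\hat{\kappa})$ and collecting the $\pm\hat{\kappa}_{(i)}$ yields the stated identity, and part (2) follows by the analogous computation with $\textbf{m}_{21}$, $S_w^{-1}$, and the opposite shift.
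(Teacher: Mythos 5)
Your proposal is correct and takes essentially the same route as the paper's own proof: the paper likewise works with the unshifted $\tilde{\Wc}'_w$, writes $\Delta_w\tilde{\Wc}'_w(\lambda)$ as a contraction of $(\Delta_w\otimes\Delta_w)\bigl(J^-_w(\lambda)^{-1}\bigr)$, feeds in the inverted cocycle identity of Theorem \ref{cocycle}, collapses the middle legs via the antipode/counit axiom (Lemma \ref{ellem}), tracks the dynamical shifts through $K=\hbar^{\kappa}$ acting by scalars on weight components, and finally identifies the right-hand factor with $J^+_w(\lambda)$ using $S_w\otimes S_w\bigl((J^+_w)_{21}\bigr)=J^-_w$ before reinstating the $\hat{\kappa}$-shift. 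Part (2) is handled in the paper exactly as you indicate, by the symmetric computation.
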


\begin{proof}
	Let $X(\lambda)=J^{-}_{w}(\lambda)^{-1}$. By Theorem \ref{cocycle}:
	\be
	\label{dualcocycle}
	X^{12}(\lambda+\hat{\kappa}_{(3)}) X^{12,3}(\lambda)= X^{23}(\lambda-\hat{\kappa}_{(1)}) X^{1,23}(\lambda)
	\ee
Set
	$\hbar^{(\lambda, m )}= z^m = z_1^{m_1}\cdot\cdot\cdot {\rr z_n}^{m_n}$ where $m=(m_1,\cdots,m_n)$ is a multi-index.
	We write our operators as power series:
	$$
	X(\lambda)=\sum\limits_{i,m} a_{i,m}\otimes b_{i, m}  z^m, \ \ \ J^{-}_{w}(\lambda)=X^{-1}(\lambda)=\sum\limits_{i,m} \bar{a}_{i,m}\otimes \bar{b}_{i, m}  z^m
	$$
	then
	$$
	\tilde{\Wc}^{\prime}_w(\lambda)= \textbf{m}\Big( 1\otimes S_w ( X(\lambda) ) \Big) = \sum\limits_{i,m} a_{i,m} S_w (b_{i, m})   z^m
	$$
	and in the sumless Sweedler notations we have:
	$$
	\Delta_w \tilde{\Wc}^{\prime}_{w}(\lambda) =  \sum\limits_{i,m} a^{(1)}_{i,m} S_w (b^{(2)}_{i, m}) \otimes a^{(2)}_{i,m} S_w (b^{(1)}_{i, m})   z^m
	$$
	We denote by $\hat{A}$ the following contraction:
	$$\hat{A}( a_1 \otimes a_2 \otimes a_3 \otimes a_4 ) = a_1 S_w(a_4) \otimes a_2 S_w(a_3),$$
	then, obviously $ \Delta_w \tilde{\Wc}^{\prime}_{w}(\lambda) = \hat{A} ( \Delta_w \otimes \Delta_w (X) )$.  From (\ref{dualcocycle}) we have:
	$$
	\Delta_w \otimes 1( X (\lambda) ) = X^{12}(\lambda + \hat{\kappa}_{(3)})^{-1} X^{23}(\lambda-\hat{\kappa}_{(1)}) X^{1,23}(\lambda)
	$$
	or in the components:
	$$
	\Delta_w \otimes 1( X (\lambda))= \sum\limits \,  (\bar{a}_{i,m} \otimes \bar{b}_{i,m} \otimes K^{m}) (K^{-s} \otimes a_{j,s} \otimes b_{j,s})
	(a_{k,l} \otimes b^{(1)}_{k,l} \otimes b^{(2)}_{k,l})\, z^{m+s+l}=
	$$
	$$
	=\sum\limits \,(\bar{a}_{i,m} K^{-s} a_{k,l}  \otimes \bar{b}_{i,m} a_{j,s}  b^{(1)}_{k,l} \otimes K^{m} b_{j,s} b^{(2)}_{k,l}) \, z^{m+s+l}
	$$
	where we denoted by $K=\hbar^{\kappa}$. Now,  $\Delta_w\otimes \Delta_w= (1 \otimes 1 \otimes \Delta_w)(\Delta_w \otimes 1) $ and therefore:
	$$
	\Delta_w \otimes \Delta_w X ({\rr \lambda}) = \sum\limits \,(\bar{a}_{i,m} K^{-s} a_{k,l}  \otimes \bar{b}_{i,m} a_{j,s}  b^{(1)}_{k,l} \otimes K^{m} b^{(1)}_{j,s} b^{(2),(1)}_{k,l}\otimes K^{m} b^{(2)}_{j,s} b^{(2),(2)}_{k,l}) \, z^{m+s+l}
	$$
	Applying contraction $\hat{A}$, taking into account that the antipode $S_w$ is an antihomomorphism and $S_w(K)=K^{-1}$ by (\ref{antipcart}) we obtain:
	$$
	\begin{array}{l}
	\hat{A} ( \Delta_w \otimes \Delta_w X )=\\
	\\
	\sum\limits \, \bar{a}_{i,m} K^{-s} a_{k,l} S_w(b^{(2),(2)}_{k,l}) S_w(b^{(2)}_{j,s}) K^{-m} \otimes  \bar{b}_{i,m} a_{j,s}  b^{(1)}_{k,l}  S_w(b^{(2),(1)}_{k,l}) S_w(b^{(1)}_{j,s}) K^{-m} \, z^{m+s+l}\\
	\\
	=J^{-}_{w}(\lambda-\hat{\kappa}_{(1)}-\hat{\kappa}_{(2)}) \sum\limits \,K^{-s} a_{k,l} S_w(b^{(2),(2)}_{k,l}) S_w(b^{(2)}_{j,s}) \otimes   a_{j,s}  b^{(1)}_{k,l}  S_w(b^{(2),(1)}_{k,l}) S_w(b^{(1)}_{j,s}) \, z^{s+l}
	\end{array}
	$$
	where $J^{-}_{w}(\lambda-\hat{\kappa}_{(1)}-\hat{\kappa}_{(2)})=\sum\, \bar{a}_{i,m} K^{-m}\otimes  \bar{b}_{i,m} K^{-m} z^m  $ and in the last step we used that the whole operator is weight zero and therefore commutes with $K\otimes K$. From the simple Lemma \ref{ellem} below, we obtain:
	$$
	\begin{array}{l}
	\hat{A} ( \Delta_w \otimes \Delta_w X(\lambda) )=\\
	\\
	J^{-}_{w}(\lambda-\hat{\kappa}_{(1)}-\hat{\kappa}_{(2)}) \sum\limits \,K^{-s} a_{k,l} S_w(b_{k,l}) S_w(b^{(2)}_{j,s}) \otimes   a_{j,s} S_w(b^{(1)}_{j,s}) \, z^{s+l}=
	\\
	\\
	J^{-}_{w}(\lambda-\hat{\kappa}_{(1)}-\hat{\kappa}_{(2)}) \tilde{\Wc}_w^{\prime}( \lambda ) \otimes 1 \cdot \Big(  \sum\limits \,K^{-s} S_w(b^{(2)}_{j,s}) \otimes   a_{j,s} S_w(b^{(1)}_{j,s}) \, z^{s} \Big).
	\end{array}
	$$
	Let us consider the contraction defined by $\hat{P}( a_1 \otimes a_2 \otimes a_3 ) = S_w(a_3) \otimes a_1 S_w(a_2)$. For the expression
	in the brackets in the last formula we have:
	$$
	\sum\limits \,K^{-s} S_w(b^{(2)}_{j,s}) \otimes   a_{j,s} S_w(b^{(1)}_{j,s}) \, z^{s} = \hat{P}\Big( X^{\rr 1,23}(\lambda+\hat{\kappa}_{(3)} ) \Big)
	$$
	Again, by (\ref{dualcocycle}) we have:
	$$
	\begin{array}{l}
	X^{1,23}( \lambda+\kappa_{(3)} )=X^{23}(\lambda-\hat{\kappa}_{(1)}+\hat{\kappa}_{(3)} )^{-1} X^{12}(\lambda+2 \hat{\kappa}_{(3)}) X^{12,3}(\lambda +\hat{\kappa}_{(3)})\\
	\\
	=\sum \, K^{-m} a_{j,s} a_{k,l}^{(1)} \otimes  \bar{a}_{i,m} b_{j,s} a_{k,l}^{(2)} \otimes \bar{b}_{i,m} K^{m+2 s} b_{k,l} K^{l} z^{s+m+l}.
	\end{array}
	$$
	Thus
	$$
	\begin{array}{l}
	\hat{P}\Big( X^{1,23}( \lambda+\hat{\kappa}_{(3)}  ) \Big)=\\
	\\
	\sum K^{-l} S_w(b_{k,l}) K^{-m-2 s} S_w(\bar{b}_{i,m}) \otimes  K^{-m} a_{j,s} a_{k,l}^{(1)} S(a_{k,l}^{(2)}) S_w(b_{j,s}) S_w(\bar{a}_{i,m}) z^{s+m+l}
	\end{array}
	$$
	Noting that $a_{k,l}^{(1)} S_w(a_{k,l}^{(2)}) = \epsilon_w( a_{k,l} )$ we find:
	$$
	\begin{array}{l}
	\hat{P}\Big( X^{\rr 1,23}( \lambda+\hat{\kappa}_{(3)} ) \Big)= \sum K^{-m-2 s} S_w(\bar{b}_{i,m}) \otimes  K^{-m} a_{j,s}  S_w(b_{j,s}) S_w(\bar{a}_{i,m}) z^{s+m}\\
	\\
	=\Big(\sum K^{-2 s} \otimes  a_{j,s}  S_w(b_{j,s})  z^{s}\Big) \Big( \sum K^{-m}  S_w(\bar{b}_{i,m}) \otimes  K^{-m}  S_w(\bar{a}_{i,m}) z^{m}\Big)\\
	\\
	= \Big(1\otimes \tilde{\Wc}^{\prime}_{w}(\lambda-2 \hat{\kappa}_{(1)})\Big) S_w\otimes S_w((J^{-}_w)_{21}) (\lambda-\hat{\kappa}_{(1)}-\hat{\kappa}_{(2)})
	\end{array}
	$$
	Overall we obtain the identity:
	$$
	\Delta_w \tilde{\Wc}_{w}^{\prime}(\lambda) =J^{-}_{w}(\lambda-\hat{\kappa}_{(1)}-\hat{\kappa}_{(2)}) \, \Big(\tilde{\Wc}^{\prime}_{w}( \lambda) \otimes \tilde{\Wc}^{\prime}_{w}(\lambda-2 \hat{\kappa}_{(1)}) \Big)\, S_w\otimes S_w((J^{-}_{w})_{21}) (\lambda-\hat{\kappa}_{(1)} -\hat{\kappa}_{(2)})
	$$
	Finally, after shifting $\lambda \rightarrow \lambda +\hat{\kappa}_{(1)}+\hat{\kappa}_{(2)} $ and using (\ref{JpJm}) we obtain  $1)$. The equation $2)$ is obtained similarly.
\end{proof}
\begin{Lemma}
	\label{ellem}
	\be
	\sum S(x^{(2),(2)}) \otimes  x^{(1)}  S(x^{(2),(1)}) = S(x) \otimes 1
	\ee
\end{Lemma}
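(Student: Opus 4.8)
The plan is to prove this as a purely Hopf-algebraic identity, valid in any Hopf algebra $H$ with antipode $S$ and counit $\epsilon$; here $H=\Uq(\fg_w)$ and $S=S_w$. The only ingredients are coassociativity of $\Delta$, the antipode axiom $\sum y^{(1)} S(y^{(2)})=\epsilon(y)\,1$, and the counit axiom $\sum \epsilon(y^{(1)})\,y^{(2)}=y$. No special feature of the wall algebra is used.

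First I would rewrite the iterated coproduct on the left-hand side. The three superscripts $(1)$, $(2,1)$, $(2,2)$ are the legs of $(\mathrm{id}\otimes\Delta)\Delta(x)=\sum x^{(1)}\otimes x^{(2),(1)}\otimes x^{(2),(2)}$. By coassociativity this equals $(\Delta\otimes\mathrm{id})\Delta(x)=\sum x^{(1),(1)}\otimes x^{(1),(2)}\otimes x^{(2)}$, so after relabeling the left-hand side becomes
\[
\sum S(x^{(2)}) \otimes x^{(1),(1)}\, S(x^{(1),(2)}).
\]

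Next I would apply the antipode axiom to the inner coproduct $x^{(1),(1)}\otimes x^{(1),(2)}$ of the element $x^{(1)}$, namely $\sum x^{(1),(1)} S(x^{(1),(2)})=\epsilon(x^{(1)})\,1$. This collapses the second tensor factor and leaves $\sum \epsilon(x^{(1)})\,S(x^{(2)})\otimes 1$. Finally, since $\epsilon(x^{(1)})$ is a scalar and $S$ is linear, I pull it through the antipode and invoke the counit axiom $\sum \epsilon(x^{(1)})\,x^{(2)}=x$ to get
\[
\sum \epsilon(x^{(1)})\,S(x^{(2)}) = S\bigl(\textstyle\sum \epsilon(x^{(1)})\,x^{(2)}\bigr) = S(x),
\]
which is exactly the claimed right-hand side.

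There is no analytic or geometric obstacle here; the identity is entirely formal. The only point requiring care is the bookkeeping of the Sweedler indices — specifically, applying coassociativity in the correct direction so that the pair $(2,1),(2,2)$ is regrouped as the coproduct of a single element, which is precisely what allows the antipode axiom to fire on the second factor. Once the re-bracketing is arranged this way, the remaining two steps are immediate.
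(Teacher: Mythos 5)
Your proof is correct and is essentially identical to the paper's: both regroup the iterated coproduct via coassociativity, fire the antipode axiom $\sum x^{(1),(1)}S(x^{(1),(2)})=\epsilon(x^{(1)})\,1$ on the second factor, and finish with the counit axiom. The only cosmetic difference is that the paper makes your ``relabeling'' step explicit by introducing a contraction map $\hat{C}(a_1\otimes a_2\otimes a_3)=S(a_3)\otimes a_1 S(a_2)$ and applying it to both sides of the coassociativity identity, which is exactly the bookkeeping you describe in words.
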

\begin{proof}
	Consider the contraction $\hat{C}( a_1 \otimes a_2 \otimes a_3 ) = S(a_3) \otimes a_1 S(a_2)$ then, obviously
	$$
	\begin{array}{l}
	\sum S(x^{(2),(2)}) \otimes  x^{(1)}  S(x^{(2),(1)})= \hat{C}\Big( 1\otimes \Delta (\Delta x) \Big)
	
	=\hat{C}\Big( \Delta \otimes 1 (\Delta x) \Big) \\
	\\= S(x^{(2)}) \otimes x^{(1)(1)} S(x^{(1) (2) })
	=S(x^{(2)}) \otimes \epsilon( x^{(1)}) = S(x) \otimes 1
	\end{array}
	$$
\end{proof}

\begin{Corollary}
	The coproduct of the operator $\repM_w(\lambda)$ defined by (\ref{bdef}) has the following form:
	\be\label{Mcop}
	\Delta_w(\repM_w(\lambda)) = \wJ^{-}_{w}(\lambda) \,  \Big( \repM_w(\lambda +\hat{\kappa}_{(2)}  ) \otimes \repM_w(\lambda - \hat{\kappa}_{(1)}) \Big) \, \wJ^{+}_w(\lambda)
	\ee
\end{Corollary}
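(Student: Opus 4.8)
The plan is to deduce (\ref{Mcop}) directly from Theorem \ref{coprB} by a single translation of the dynamical variable, so the real content is already contained in that theorem. First I would record the precise relation between the operator $\repM_w(\lambda)$ of (\ref{bdef}) and the auxiliary operator $\tilde{\Wc}_w(\lambda)$ of (\ref{rB}). Unwinding the definition, $\repM_w(\lambda)=\textbf{m}\bigl(1\otimes S_w(\wJ^{-}_w(\lambda)^{-1})\bigr)\big|_{\lambda\to\lambda+\kappa}$, and since $\wJ^{-}_w(\lambda)=J^{-}_w(\lambda-\tau_w)$ by (\ref{JJ}) with $\tau_w=\textbf{s}\,\cL_w$, the bracket equals $\tilde{\Wc}'_w(\lambda-\tau_w)$ in the notation of Theorem \ref{coprB}. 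Because the shift in (\ref{rB}) and the shift in (\ref{bdef}) are both by the same vector $\kappa=(C\bv-\bw)/2$ of (\ref{shiftvect}) (equivalently, the $\kappa$ of Theorem \ref{cocycle}), applying the remaining $\lambda\to\lambda+\kappa$ gives the clean identity
$$
\repM_w(\lambda)=\tilde{\Wc}_w(\lambda-\tau_w).
$$

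Next I would substitute $\lambda\mapsto\lambda-\tau_w$ into the first identity of Theorem \ref{coprB}. This is legitimate because the coproduct $\Delta_w$ acts on the fixed algebra $\Uq(\fg_w)$ and carries no $\lambda$-dependence, so $\Delta_w\bigl(\tilde{\Wc}_w(\lambda-\tau_w)\bigr)=\bigl(\Delta_w\tilde{\Wc}_w\bigr)(\lambda-\tau_w)$. The right-hand side of Theorem \ref{coprB} then reads
$$
J^{-}_w(\lambda-\tau_w)\,\Bigl(\tilde{\Wc}_w(\lambda-\tau_w+\hat{\kappa}_{(2)})\otimes \tilde{\Wc}_w(\lambda-\tau_w-\hat{\kappa}_{(1)})\Bigr)\,J^{+}_w(\lambda-\tau_w).
$$
By (\ref{JJ}) the two outer factors are precisely $\wJ^{-}_w(\lambda)$ and $\wJ^{+}_w(\lambda)$. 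For the middle factors I would use that the constant translation by $-\tau_w$ commutes with the component-wise dynamical shifts $\hat{\kappa}_{(i)}$, since all three are simply additive shifts of the argument $\lambda$; hence $\tilde{\Wc}_w(\lambda\pm\hat{\kappa}_{(i)}-\tau_w)=\repM_w(\lambda\pm\hat{\kappa}_{(i)})$ by the identity of the first paragraph. Assembling these identifications reproduces exactly (\ref{Mcop}).

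Since every substantive computation is absorbed into Theorem \ref{coprB} and Proposition \ref{wallprop}, I do not expect a genuine obstacle here; the corollary is essentially bookkeeping. The only point that demands care is the simultaneous alignment of the three distinct shifts at play --- the constant translation $\tau_w=\textbf{s}\,\cL_w$ that converts $J^{\pm}_w$ into $\wJ^{\pm}_w$, the fixed weight $\kappa$ of (\ref{shiftvect}), and the dynamical shifts $\hat{\kappa}_{(i)}$ acting in separate tensor factors --- and in particular verifying that $\tau_w$ passes transparently through the dynamical shifts. Getting the order and signs of these translations right is where any error would creep in, so I would double-check the identity $\repM_w(\lambda)=\tilde{\Wc}_w(\lambda-\tau_w)$ component by component before concluding.
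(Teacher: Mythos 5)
Your proposal is correct and is essentially the paper's own proof, which reads in full: ``Shift $\lambda \rightarrow \lambda -\tau_w$ and use definitions (\ref{JJ}) and (\ref{bdef})'' --- you have simply made explicit the bridge identity $\repM_w(\lambda)=\tilde{\Wc}_w(\lambda-\tau_w)$ and the compatibility of the constant shift $\tau_w$ with the dynamical shifts $\hat{\kappa}_{(i)}$, both of which the paper leaves implicit. No gaps; your careful bookkeeping of the three shifts is exactly the content of the corollary.
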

\begin{proof}
	Shift $\lambda \rightarrow \lambda -\tau_w$ and use definitions (\ref{JJ}) and (\ref{bdef}).
\end{proof}

\subsubsection{}
Let us consider the wall qKZ operators as in the Proposition \ref{wallprop}:
\be
\qKZ^{+}_{w} = T_{u} \hbar^{-\lambda}_{(1)} \Rwal^{+}_{w}, \ \ \  \qKZ^{-}_{w} = \Rwal^{-}_{w} T_{u} \hbar^{-\lambda}_{(1)}
\ee
acting in the tensor product of two evaluation modules of $\Uq(\fgh_{Q})$.
\begin{Proposition}
	\be
	\label{Mint}
	\qKZ^{-}_{w} \, \Delta_w(\repM_w(\lambda))=\Delta_w(\repM_w(\lambda)) \, \qKZ^{+}_{w}
	\ee
\end{Proposition}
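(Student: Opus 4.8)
The plan is to substitute the coproduct formula \eqref{Mcop} and the explicit forms of the wall qKZ operators into \eqref{Mint}, and thereby reduce the statement to a single commutator identity. Write the coproduct as $\Delta_w(\repM_w(\lambda)) = \wJ^-_w(\lambda)\, M\, \wJ^+_w(\lambda)$ with $M = \repM_w(\lambda+\hat{\kappa}_{(2)}) \otimes \repM_w(\lambda-\hat{\kappa}_{(1)})$, and introduce the diagonal shift operator
\[
\mathcal{D} = \hbar^\Omega\, \hbar^{-\lambda}_{(1)}\, T_u .
\]
Its three factors mutually commute, because $\hbar^\Omega$ and $\hbar^{-\lambda}_{(1)}$ are diagonal in the basis of $\bA$-fixed components and independent of $u$, while $T_u$ only shifts $u$. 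First I would insert $\qKZ^-_w = \Rwal^-_w\, T_u\, \hbar^{-\lambda}_{(1)}$ on the left of $\Delta_w(\repM_w(\lambda))$ and $\qKZ^+_w = T_u\, \hbar^{-\lambda}_{(1)}\, \Rwal^+_w$ on the right, matching the two sides of \eqref{Mint}.

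The second wall Knizhnik--Zamolodchikov equation \eqref{wallKZ2}, read as $\Rwal^-_w\, \hbar^{-\lambda}_{(1)}\, T_u\, \wJ^-_w = \wJ^-_w\, \mathcal{D}$, turns the left-hand side of \eqref{Mint} into $\wJ^-_w\, \mathcal{D}\, M\, \wJ^+_w$; the first equation of \eqref{wallKZ2}, read as $\wJ^+_w\, \hbar^{-\lambda}_{(1)}\, T_u\, \Rwal^+_w = \mathcal{D}\, \wJ^+_w$, turns the right-hand side into $\wJ^-_w\, M\, \mathcal{D}\, \wJ^+_w$. Since $\wJ^\pm_w$ are strictly triangular, hence invertible, I can cancel the outer factors $\wJ^-_w$ and $\wJ^+_w$, so the whole proposition reduces to
\[
[\,\mathcal{D},\, M\,] = 0 .
\]
Everything up to this point is a purely formal rearrangement using only \eqref{Mcop}, \eqref{wallKZ2} and the commutativity inside $\mathcal{D}$.

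The commutation $[\mathcal{D}, M]=0$ is the heart of the matter and the step I expect to be the main obstacle. The idea is that conjugation by each factor of $\mathcal{D}$ induces an explicit, weight-dependent transformation of the two tensor factors of $M$, and that the dynamical arguments $\lambda\pm\hat{\kappa}_{(i)}$ are tuned precisely so that the net effect is trivial. Conjugation by $\hbar^\Omega$ reproduces the dynamical shifts $\mp\hat{\kappa}_{(i)}$ exactly as in the computation $\hbar^{-\Omega_{13}} Z_\alpha\, \hbar^{\Omega_{13}} = \hbar^{\langle\alpha,\kappa_{(3)}\rangle} Z_\alpha$ from the proof of Theorem \ref{cocycle}, which uses the explicit form \eqref{codfun} of $\Omega$ together with $\kappa = (C\bv-\bw)/2$; conjugation by $T_u$ shifts the dynamical parameter of each factor by $\tau_w = \textbf{s}\,\cL_w$, which is the content of \eqref{Jshift} and the normalization \eqref{JJ} defining $\wJ^\pm_w$; and conjugation by $\hbar^{-\lambda}_{(1)}$ supplies the compensating monomial $z^{-\gamma}$ on each weight component of the first factor. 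I would then check, component by component in the weight decomposition of $M$, that the shifts produced by $\hbar^\Omega$, $T_u$ and $\hbar^{-\lambda}_{(1)}$ on the arguments $\lambda+\hat{\kappa}_{(2)}$ and $\lambda-\hat{\kappa}_{(1)}$ sum to zero, so that $\mathcal{D}\, M\, \mathcal{D}^{-1} = M$. Carrying out this bookkeeping, and verifying that the specific values of $\kappa$, $\Omega$ and $\tau_w$ force the cancellation, is the only nonformal part of the argument.
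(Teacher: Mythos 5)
Your reduction is exactly the paper's own proof: substitute \eqref{Mcop}, use the two equations \eqref{wallKZ2} to pull $\qKZ^{-}_{w}$ and $\qKZ^{+}_{w}$ through $\wJ^{-}_{w}$ and $\wJ^{+}_{w}$ respectively, and cancel the outer triangular factors; what remains, the commutation of $\mathcal{D}=\hbar^{\Omega}\hbar^{-\lambda}_{(1)}T_{u}$ with the middle factor $M$, is precisely the unlabeled equality in the middle of the paper's chain of identities.

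The only point worth adding is that the step you flag as the main obstacle is in fact immediate, and requires no cancellation tuned by $\kappa$, $\Omega$ and $\tau_w$. Since $\repM_w(\lambda)$ is obtained by applying $\textbf{m}\big(1\otimes S_w\big)$ to $\wJ^{-}_{w}(\lambda)^{-1}$, whose components have weights $(\alpha,-\alpha)$, it is a weight-zero operator; hence $M$ is block diagonal with respect to the fixed components and commutes with the diagonal operators $\hbar^{\Omega}$ and $\hbar^{-\lambda}_{(1)}$ for trivial reasons, with no $\hbar$- or $z$-monomials produced at all. For the factor $T_u$, note that by Theorem \ref{udepth} matrix elements of the wall subalgebra carry the monomial $u$-dependence $u^{\langle\alpha,\cL_w\rangle}$ dictated by their weight, so weight-zero elements of $\Uq(\fg_w)$ act in evaluation modules independently of $u$, and $T_u$ commutes with $M$ as well; the dynamical arguments $\lambda\pm\hat{\kappa}_{(i)}$ play no role here, since they are shifts in the $z$-variables, which $\mathcal{D}$ does not move. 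Your component-by-component check would of course discover exactly this (all nonzero-weight components of $M$ vanish), so your plan goes through, just more laboriously than needed.
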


\begin{proof}
	We have
	$$
	\begin{array}{l}
	\qKZ^{-}_{w}\Delta_w(\repM_w(\lambda))\stackrel{(\ref{Mcop})}{=}\\
	\\
	\Rwal^{-}_{w} T_{u} \hbar^{-\lambda}_{(1)}
	\wJ^{-}_{w}(\lambda) \,  \Big( \repM_w(\lambda +\hat{\kappa}_{(2)}  ) \otimes \repM_w(\lambda - \hat{\kappa}_{(1)}) \Big) \, \wJ^{+}_w(\lambda)\stackrel{(\ref{wallKZ2})}{=}\\
	\\   \wJ^{-}_{w}(\lambda) T_{u} \hbar^{-\lambda}_{(1)}  \hbar^{\Omega} \,  \Big(\repM_w(\lambda +\hat{\kappa}_{(2)}  ) \otimes \repM_w(\lambda - \hat{\kappa}_{(1)}) \Big) \, \wJ^{+}_w(\lambda)=\\
	\\
	\wJ^{-}_{w}(\lambda)  \,  \Big( \repM_w(\lambda +\hat{\kappa}_{(2)}  ) \otimes \repM_w(\lambda - \hat{\kappa}_{(1)}) \Big) \, T_{u}  \hbar^{-\lambda}_{(1)}  \hbar^{\Omega} \wJ^{+}_w(\lambda)\stackrel{(\ref{wallKZ2})}{=}\\
	\\
	\wJ^{-}_{w}(\lambda)  \,  \Big( \repM_w(\lambda +\hat{\kappa}_{(2)}  ) \otimes\repM_w(\lambda - \hat{\kappa}_{(1)}) \Big)    \wJ^{+}_w(\lambda) \hbar^{-\lambda}_{(1)}  T_{u} \textsf{R}^{+}_{w} =\\
	\\
	\Delta_w(\repM_w(\lambda))\, \qKZ^{+}_{w}
	\end{array}
	$$
\end{proof}

\begin{Proposition}
	\label{ProM}
	For $\cL \in \Pic(X)$ the operators $\repM_w(\lambda)$ satisfy:
	\be
	\label{dva}
	\cL T^{-1}_{\cL}\,  \repM_{w}(\lambda) =  \repM_{w+\cL}(\lambda) \, \cL T^{-1}_{\cL}.
	\ee

\end{Proposition}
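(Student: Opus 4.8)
The plan is to strip off the $q$-difference shifts, reduce \eqref{dva} to a single conjugation identity for $\repM$, and then derive that identity from the way all the wall data transform under the lattice translation $w\mapsto w+\cL$. First I would unwind the difference operators. With $T_\cL f(\lambda)=f(\lambda+\textbf{s}\cL)$ as in \eqref{lambdashift}, one has $T^{-1}_\cL A(\lambda)=A(\lambda-\textbf{s}\cL)\,T^{-1}_\cL$ for any $\lambda$-dependent operator $A$, so \eqref{dva} is equivalent to the shift-free identity
\[
\repM_{w+\cL}(\lambda)\,\cL = \cL\,\repM_w(\lambda-\textbf{s}\cL)\,,\qquad\text{i.e.}\qquad \repM_{w+\cL}(\lambda)=\cL\,\repM_w(\lambda-\textbf{s}\cL)\,\cL^{-1},
\]
where $\cL$ is multiplication by the line bundle. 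This is what I would actually prove.

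Next I would record what the translation does. The wall $w+\cL$ carries the same curve class, hence the same root subalgebra $\fg_{w+\cL}=\fg_w$, the same antipode $S_{w+\cL}=S_w$ and the same codimension function $\Omega$; only the line bundle on the wall moves, $\cL_{w+\cL}=\cL_w+\cL$, so that $\tau_{w+\cL}=\tau_w+\textbf{s}\cL$. The geometric input is the integral slope-shift rule for stable envelopes: tensoring the defining class by $\cL\boxtimes(\cL|_{X^\bA})^{-1}$ takes $\slope$ to $\slope+\cL$ while preserving support and the normalization \eqref{Stab_norm}, so by uniqueness of stable envelopes $\Stab_{\slope+\cL}=\cL\circ\Stab_\slope\circ(\cL|_{X^\bA})^{-1}$. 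Substituting into the definition \eqref{rootR} of the wall $R$-matrix and cancelling gives
\[
\textsf{R}^{\pm}_{w+\cL} = (\cL|_{X^\bA})\,\textsf{R}^{\pm}_{w}\,(\cL|_{X^\bA})^{-1},
\]
where I use that $\cL|_{X^\bA}$ is diagonal on fixed components and hence commutes with $\hbar^\Omega$ (Theorem \ref{thmR}) and with $\hbar^{-\lambda}_{(1)}$.

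Conjugation by $\cL|_{X^\bA}$ preserves strict triangularity and fixes the diagonal block, so the uniqueness part of Proposition \ref{Jtheorem} promotes the last relation to $J^{\pm}_{w+\cL}(\lambda)=(\cL|_{X^\bA})\,J^{\pm}_w(\lambda)\,(\cL|_{X^\bA})^{-1}$; combining with $\tau_{w+\cL}=\tau_w+\textbf{s}\cL$ and \eqref{JJ} yields
\[
\wJ^{-}_{w+\cL}(\lambda) = (\cL|_{X^\bA})\,\wJ^{-}_{w}(\lambda-\textbf{s}\cL)\,(\cL|_{X^\bA})^{-1}.
\]
I would then feed this into \eqref{bdef} and collapse through $\textbf{m}(1\otimes S_w(-))$. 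Since $\cL$ is group-like with $S_w(\cL)=\cL^{-1}$ (as for the Cartan elements in \eqref{antipcart}) and $S_w$ is an anti-homomorphism, conjugation by $\cL\otimes\cL$ passes through $\textbf{m}(1\otimes S_w(-))$ as conjugation by $\cL$; the constant shift $\kappa=(C\bv-\bw)/2$ of \eqref{shiftvect} is preserved by $\cL$ (which fixes $\bv$) and commutes with the plain shift $\lambda\mapsto\lambda-\textbf{s}\cL$, so it merely rides along. This produces the core identity of the first paragraph, and hence \eqref{dva}.

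The one genuinely delicate point, which I expect to be the main obstacle, is that $\cL|_{X^\bA}$ is not literally the group-like $\cL\otimes\cL$: on a fixed component $\cM(\bv',\bw')\times\cM(\bv'',\bw'')$ it equals $u^{\langle\cL,\bv'\rangle}(\cL\otimes\cL)$, so the collapse in the previous paragraph produces extra spectral factors $u^{-\langle\cL,\alpha\rangle}$ on each weight-$\alpha$ block. These must cancel against the effect of the dynamical shift $\lambda\mapsto\lambda-\textbf{s}\cL$. The reconciliation is exactly the $u\leftrightarrow q$ exchange of \eqref{Rshift}--\eqref{Jshift}, whereby conjugation by $\hbar^{\textbf{s}\cL}_{(1)}$ trades a shift of the spectral parameter for a shift of $\lambda$; matching these two shifts term by term is where the real work lies.
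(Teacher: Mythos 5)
Your main line of argument is exactly the paper's: you reduce \eqref{dva} to the conjugation identity $\repM_{w+\cL}(\lambda)=\cL\,\repM_w(\lambda-\textbf{s}\cL)\,\cL^{-1}$, get $\Stab_{\slope+\cL}$ from $\Stab_\slope$ by twisting with the restricted bundle and invoking uniqueness, deduce $\Rwal^{\pm}_{w+\cL}=(\cL|_{X^\bA})\,\Rwal^{\pm}_{w}\,(\cL|_{X^\bA})^{-1}$, promote this by ABRR uniqueness to $J^{\pm}_{w+\cL}(\lambda)=(\cL|_{X^\bA})\,J^{\pm}_w(\lambda)\,(\cL|_{X^\bA})^{-1}$ (this is the paper's \eqref{lbb}), combine with $\tau_{w+\cL}=\tau_w+\textbf{s}\cL$ and \eqref{JJ}, and collapse through $\textbf{m}\bigl(1\otimes S_w(-)\bigr)$ using group-likeness of the line bundle under the antipode. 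All of that is the paper's proof, correctly reproduced.

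The gap is in your last paragraph, i.e.\ precisely at the point you flag as ``where the real work lies.'' The block-scalar discrepancy $u^{\langle\cL,\bv'\rangle}$ between $\cL|_{X^\bA}$ and $\cL\otimes\cL$ does not, and cannot, cancel against the dynamical shift $\lambda\mapsto\lambda-\textbf{s}\cL$: that shift has already been spent producing the argument $\lambda-\textbf{s}\cL$ in your displayed identity for $\wJ^{-}_{w+\cL}$ (it comes purely from $\tau_{w+\cL}=\tau_w+\textbf{s}\cL$), and in any case it only alters the $z$-dependence by powers of $q$, an equivariant parameter independent of the spectral parameter $u$; the identities \eqref{Rshift}--\eqref{Jshift} trade conjugation by \emph{$q$-power} block scalars built from $\cL_w$ for spectral shifts of wall-$w$ objects, and say nothing about \emph{$u$-power} block scalars built from $\cL$. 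So the term-by-term matching you propose is not available. The correct resolution — the one the paper uses, and it is one line rather than real work — is that after the collapse the operator is weight zero: each term $a_i S_w(b_i)$ coming from $J^{-}_w(\lambda)^{-1}=\sum_i a_i\otimes b_i$ has opposite weights in its two factors, so $\repM_w(\lambda)$ (the paper's $\Wc_w$) is block diagonal with respect to the decomposition of $K_{\KG}(\cM(\bw))$ by dimension vectors. Conjugating a block-diagonal operator by a block scalar such as $u^{\langle\cL,\bv\rangle}$ does nothing, hence conjugation by the twisted bundle $\tilde\cL$ and by the plain bundle $\cL$ agree on $\repM_w$; this is the paper's remark that ``for the one component case the $u$-factors cancel,'' and it closes the argument with no reference to $q$ at all.
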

\begin{proof}
	Let $\bA$ be a torus splitting the framing $\bw=u' \bw'+u'' \bw''$. For a Nakajima variety
	$X=\cM(\bv,\bw)$ the components of $X^{\bA}$ are of the form $F_{i}=\cM(\bv'_i,\bw')\times \cM(\bv''_i,\bw'')$.
	Let us consider the operators:
	$$
	S_{\fC,\slope}=i^{*}_{X^{\bA}} \, \circ \, \Stab_{\fC,T^{1/2}, \slope}  \,: K_{\KG}(X^{\bA}) \longrightarrow K_{\KG}(X^{\bA})
	$$
	where $i_{X^{\bA}}$ is the inclusion map. Let $\cL \in \Pic(X)$ be a line bundle. We denote by $U(\cL)$ a block diagonal operator acting in $K_{\KG}(X^{\bA})$ with the following matrix elements:
	$$
	\left.U(\cL)\right|_{F_{i}\times F_{i}} = \left. \cL\right|_{F_i}
	$$
	Let us consider an operator $\bar{S}_{\fC,\slope}=U(\cL) S_{\fC,\slope} U(\cL)^{-1}$. A conjugation by a diagonal matrix does not change the diagonal elements, thus:
	\be
	\label{hulleq}
	\left.\bar{S}_{\fC,\slope}\right|_{F_i\times F_i }=\left.S_{\fC,\slope}\right|_{F_i\times F_i }
	\ee
	For the non-diagonal elements we have:
	\be
	\begin{array}{l}
		\deg_{\bA} \Big(\left.\bar{S}_{\fC,\slope}\right|_{F_2\times F_1 }\Big) = \deg_{\bA}\Big(\left.S_{\fC,\slope}\right|_{F_2\times F_1 } \dfrac{\left.\cL\right|_{F_2}}{\left.\cL\right|_{F_1}}\Big)\\
		\\ \stackrel{(\ref{degK})}{\subset}  \deg_{\bA}\Big(\left.S_{\fC,\slope}\right|_{F_2\times F_2 } \dfrac{\left. s \otimes \cL\right|_{ F_2}}{\left.s \otimes \cL\right|_{F_1}}\Big)\stackrel{(\ref{hulleq})}{=}
		\deg_{\bA}\Big(\left.\bar{S}_{\fC,\slope}\right|_{F_2\times F_2 } \dfrac{\left. s \otimes \cL\right|_{ F_2}}{\left.s \otimes \cL\right|_{F_1}}\Big)
	\end{array}
	\ee
	Note that the stable map is defined uniquely by these restrictions and thus we conclude: $\bar{S}_{\fC,\slope}=S_{\fC,\slope+\cL}$.
	
	Recall that the  wall $R$-matrices are defined by $
	R^{\pm}_{w} = S_{\pm,\slope_2}^{-1} S_{\pm,\slope_1} $ for two slopes
	$\slope_1$ and $\slope_2$ separated by a single wall $w$. Therefore:
	$$
	U(\cL) R^{\pm}_{w} U(\cL)^{-1}=R^{\pm}_{w+\cL}.
	$$
	Conjugating both sides of ABRR equation (\ref{wallKZ}) by $U(\cL)$ we get:
	$$
	\textsf{R}^{-}_{w+\cL} \hbar_{(1)}^{-\lambda}  U(\cL) J^{-}_w(\lambda) U(\cL)^{-1} =
	U(\cL) J^{-}_w(\lambda) U(\cL)^{-1} \hbar_{(1)}^{-\lambda}\, \hbar^{\Omega}
	$$
	Thus, by uniqueness of the solution of this equation:
	\be
	\label{lbb}
	U(\cL)  J^{-}_w(\lambda) U(\cL)^{-1} =J^{-}_{w+\cL}(\lambda)
	\ee
	Without a loss of generality we can assume that $\cL = \det (\tb_{k})$ is the $k$-th tautological line bundle. Then, we have:
	$$
	U(\cL)=\tilde{\cL} \otimes \tilde{\cL}
	$$
	where $\tilde{\cL}$ is the same tautological bundle twisted by some powers of trivial line bundles $u'$ and $u''$:
	explicitly for the component $F=\cM(\bv',\bw')\times \cM(\bv'',\bw'')$ we have: $\left.\cL\right|_{F} =  (u')^{\bv'_{k}} \cL \otimes (u'')^{\bv''_{k}} \cL $.
	
	Let $(J^{-}_{w}(\lambda))^{-1}=\sum_i a_i \otimes b_i$, $(J^{-}_{w+\cL}(\lambda))^{-1}=\sum_i a^{\prime}_i \otimes b^{\prime}_i$
	so that $\tilde{B}'_w(\lambda)=\sum_i a_i S_w(b_i)$. Then we have:
	$$
	\begin{array}{l}
	\cL \tilde{B}'_w(\lambda) \cL^{-1}=\tilde{\cL} \tilde{B}'_w(\lambda) \tilde{\cL}^{-1}=\sum_i \tilde{\cL} a_i S_w(\tilde{\cL} b_i)=\\
	\\ \textbf{m}\Big(1\otimes S_w  (\sum_i \tilde{\cL} a_i \otimes \tilde{\cL} b_i) \Big)
	\stackrel{(\ref{lbb})}{=}\textbf{m}\Big( 1\otimes S_w ( \sum_i  a^{\prime}_i \tilde{\cL} \otimes  b^{\prime}_i \tilde{\cL}) \Big) \\
	\\ =\sum_i a^{\prime}_i S_w(b^{\prime}_i)=\tilde{B}'_{w+\cL}(\lambda).
	\end{array}
	$$
	In the first equality we substituted $\cL$ by $\tilde{\cL}$ because for the one component case the $u$-factors cancel.
	Thus we proved that:
	$$
	\cL \tilde{B}'_w(\lambda)=\tilde{B}'_{w+\cL}(\lambda) \cL
	$$
	Note that  $\tilde{B}'_w(\lambda)=\repM_w (\lambda-\kappa+\tau_{w})$  and thus:
	$$
	\cL \repM_w (\lambda-\kappa+\tau_{w})=\repM_{w+\cL}(\lambda-\kappa+\tau_{w+\cL}) \cL
	$$
	By definition $\tau_{w+\cL}-\tau_{w}=\textbf{s}\cL$ thus, after substitution $\lambda\rightarrow \lambda+\kappa -\tau_{w}- \textbf{s} \cL $
	we obtain:
	$$
	\cL \repM_w (\lambda-\textbf{s}\cL)=\repM_{w+\cL}(\lambda) \cL
	$$
	which gives (\ref{dva}).
\end{proof}
\subsubsection{}
The following proposition describes the action of the difference operators (\ref{qA})
in the tensor product of two $\Uq(\fgh_Q)$ modules.
\begin{Proposition}
	\label{coprodprop}
	$$
	\Delta_s(\qA^{\slope}_{\cL})=W_{w_0}( \lambda ) W_{w_1}( \lambda ) \cdots W_{w_{m-1}}( \lambda ) \Delta_{\infty}(\cL) T_{\cL}^{-1}
	$$
	where $w_0,\cdots, w_{m-1}$ is the ordered set of walls separating slopes $\slope$ and $\slope+\cL$, $W_{w}( \lambda ) = \Delta_{w}( \repM_{w}(\lambda)) (R^{+}_{w})^{-1}$ and $\Delta_{\infty}$ is the infinite slope coproduct from Section  \ref{infcopsec}.
\end{Proposition}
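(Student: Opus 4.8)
The plan is to compute $\Delta_s$ of the defining product $\repM^{\slope}_{\cL}(\lambda)=\cL\,\repM_{w_m}(\lambda)\cdots\repM_{w_1}(\lambda)$ from (\ref{dMdef}) factor by factor, using that $\Delta_s$ is an algebra homomorphism, and then to reassemble the outcome into the claimed ordered product of the two-site operators $W_w=\Delta_w(\repM_w)(R^{+}_{w})^{-1}$. Since $\qA^{\slope}_{\cL}=\T^{-1}_{\cL}\repM^{\slope}_{\cL}(\lambda)$, I would separate off the shift $\T^{-1}_{\cL}$ — which acts on the common dynamical variable $\lambda$ and will occupy the rightmost slot of the answer — and concentrate on applying $\Delta_s$ to the coefficient $\repM^{\slope}_{\cL}(\lambda)$, keeping track of the argument shift it induces. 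The line bundle factor is the easy one: because the tautological bundles restrict to external tensor products on $X^{\bA}$, every line bundle is group-like, so $\Delta_s(\cL)=\cL\otimes\cL=\Delta_{\infty}(\cL)$ by (\ref{infcop}), independently of the slope. This already produces the factor $\Delta_{\infty}(\cL)\,\T^{-1}_{\cL}$.

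The core step is to express each $\Delta_s(\repM_{w_i}(\lambda))$ through $W_{w_i}$. Here I would invoke Proposition \ref{Hopfcoin}: for the slope adjacent to $w_i$ whose Khoroshkin--Tolstoy factorization begins with $w_i$, the wall coproduct $\Delta_{w_i}$ is exactly the restriction of that slope coproduct, and the explicit formula (\ref{Mcop}) gives $\Delta_{w_i}(\repM_{w_i})=W_{w_i}R^{+}_{w_i}$ together with its dynamical shifts $\lambda\pm\hat{\kappa}_{(j)}$. To bring this to the fixed slope $\slope$ I would transport along the path using the conjugation law (\ref{conjcop}), $\Delta_{\slope}=T^{+}\Delta_{\slope'}(T^{+})^{-1}$ with $T^{+}$ the ordered product of the intervening wall $R$-matrices. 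This writes $\Delta_s(\repM_{w_i})$ as a conjugate of $W_{w_i}R^{+}_{w_i}$ by a definite product of wall $R$-matrices fixed by the factorization.

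I would then multiply these expressions in the order dictated by (\ref{dMdef}) and collapse the conjugating $R$-matrices: the product conjugating the $i$-th factor differs from that of the $(i{+}1)$-st by a single $R^{+}$, which cancels against the trailing $R^{+}_{w_i}$ supplied by $\Delta_{w_i}(\repM_{w_i})=W_{w_i}R^{+}_{w_i}$; telescoping leaves the bare ordered product $W_{w_0}\cdots W_{w_{m-1}}$. To reconcile the walls of the statement (those separating $\slope$ and $\slope+\cL$) with the walls toward $\slope-\cL$ used in (\ref{dMdef}), I would finally move $\cL\,\T^{-1}_{\cL}$ through the product by Proposition \ref{ProM}, $\cL\,\T^{-1}_{\cL}\repM_{w}=\repM_{w+\cL}\,\cL\,\T^{-1}_{\cL}$, which translates each crossed wall by $\cL$ and relabels the factors consistently; combined with the group-like $\Delta_{\infty}(\cL)$ this yields the stated formula.

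The main obstacle will be the bookkeeping in the telescoping step, which is delicate on two counts. First, the dynamical shifts $\hat{\kappa}_{(1)},\hat{\kappa}_{(2)}$ carried by each $W_{w_i}$ do not commute with the conjugating $R$-matrices, so one must track how the arguments $\lambda\pm\hat{\kappa}_{(j)}$ propagate as the $R$-matrices slide past; the weight-grading identity $\hbar^{-\Omega_{13}}Z_{\alpha}\hbar^{\Omega_{13}}=\hbar^{\langle\alpha,\kappa_{(3)}\rangle}Z_{\alpha}$ used in the Lemma after Theorem \ref{cocycle} is exactly the input that converts ordinary $R$-conjugations into dynamical shifts. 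Second, one must watch the direction of the factorization — whether convergence is near $u=0$ or $u=\infty$, i.e. (\ref{Rfac}) versus (\ref{RfacInf}) — since this decides whether $R^{+}$ or $R^{-}$ wall matrices appear and hence whether the cancellation is clean. This is the coproduct-level analogue of the single-wall computation already carried out in the proof of Theorem \ref{flemma}, and I expect the verification to follow the same pattern once the shifts and the factorization direction are aligned.
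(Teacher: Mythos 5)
Your proposal follows the paper's skeleton in outline --- rewrite $\qA^{\slope}_{\cL}$ using Proposition \ref{ProM} so that the walls become those between $\slope$ and $\slope+\cL$, expand $\Delta_s$ of each factor through Proposition \ref{Hopfcoin} and the conjugation law (\ref{conjcop}), and telescope --- but it rests on two false intermediate claims, and the fact that they compensate is exactly the mathematical content you are missing. First, $\Delta_s(\cL)=\cL\otimes\cL$ is \emph{not} true for a finite slope $s$: line bundles are group-like only for the infinite-slope coproduct (\ref{infcop}). By (\ref{conjcop}), $\Delta_s(\cL)$ is conjugate to $\Delta_\infty(\cL)$ by the infinite product of wall $R$-matrices lying between $s$ and $+\infty$, and this conjugation is nontrivial because conjugation by $\cL\otimes\cL$ \emph{translates} walls, $\Delta_\infty(\cL)\,R^{+}_{w}\,\Delta_\infty(\cL)^{-1}=R^{+}_{w+\cL}$. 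Concretely, for $X=T^*\mathbb{P}^1$ and $s\in(-1,0)$ one has $\Delta_s(\mathcal{O}(1))=(R^{+}_{0})^{-1}\bigl(\mathcal{O}(1)\otimes\mathcal{O}(1)\bigr)$ with $R^{+}_{0}\neq 1$ given by (\ref{walRTP}). Second, the telescoping is not clean: writing $\Delta_s(\repM_{w_k})=(R^{+}_{w_0})^{-1}\cdots(R^{+}_{w_{k-1}})^{-1}\Delta_{w_k}(\repM_{w_k})\,R^{+}_{w_{k-1}}\cdots R^{+}_{w_0}$ and multiplying in order, the adjacent tails do cancel and produce the factors $W_{w_k}$, but the \emph{last} factor leaves an uncancelled trailing product $R^{+}_{w_{m-1}}\cdots R^{+}_{w_0}$ sitting immediately to the left of $\Delta_s(\cL)$.

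These two errors cancel one another: the leftover trailing product is precisely the discrepancy between $\Delta_s(\cL)$ and $\Delta_\infty(\cL)$. The identity
$$
R^{+}_{w_{m-1}}\cdots R^{+}_{w_0}\,\Delta_{s}(\cL)\;=\;\Delta_{\infty}(\cL)
$$
is the key Lemma in the paper's proof, and it is not automatic: one writes $\Delta_s(\cL)$ as the infinite conjugation of $\Delta_\infty(\cL)=\cL\otimes\cL$ by all wall $R$-matrices above $s$, and then uses the translation identity $\Delta_\infty(\cL)R^{+}_{w}\Delta_\infty(\cL)^{-1}=R^{+}_{w+\cL}$ to cancel all but the finitely many factors attached to the walls between $\slope$ and $\slope+\cL$. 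Without this Lemma your derivation asserts two false identities whose product happens to be correct; with it, both steps are repaired simultaneously and the argument coincides with the paper's. The remaining ingredients of your outline (Propositions \ref{Hopfcoin} and \ref{ProM}, and the telescoping mechanism itself) do match the paper's proof.
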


\begin{proof}
	First, by definition (\ref{dMdef}) we have:
	$$
	\qA^{\slope}_{\cL}= T_{\cL}^{-1} \cL \,\repM_{w_{-m}}(\lambda)  \cdots \repM_{w_{-2}}(\lambda) \repM_{w_{-1}}(\lambda)
	$$
	where, we denote by $w_{-1},\cdots, w_{-m}$ the ordered set of walls between the slope $\slope$ and $\slope -\cL$.
	By Proposition \ref{ProM}  we know that $T_{\cL}^{-1} \cL  \repM_{w_{k}}(\lambda) =\repM_{w_{k+m}}(\lambda) T_{\cL}^{-1} \cL$ and thus we obtain:
	$$
	\qA^{\slope}_{\cL}= \repM_{w_0}(\lambda) \repM_{w_1}(\lambda) \cdots \repM_{w_{m-1}}(\lambda) \,\cL\, T_{\cL}^{-1}
	$$
	where we denote $w_{k+m}=w_{k}+\cL$ (recall that the hyperplane arrangement is $\textrm{Pic}(X)$ periodic).
	
	Next, for the coproduct we have:
	$$
	\Delta_s(\qA^{\slope}_{\cL})= \Delta_{s}( \repM_{w_0}(\lambda) \repM_{w_1}(\lambda) \cdots \repM_{w_{m-1}}(\lambda) \cL) T_{\cL}^{-1}
	$$
	and by (\ref{conjcop}) the coproducts at different slopes are related as follows
	$$\Delta_{s} ( \repM_{w_k}(\lambda) ) = (R^{+}_{w_{0}})^{-1}  \cdots (R^{+}_{w_{k-1}})^{-1}\Delta_{w_k}(\repM_{w_k}) R^{+}_{w_{k-1}}\cdots  R^{+}_{w_{0}}.$$ Thus we obtain:
	$$
\Delta_{s}(\qA^{\slope}_{\cL}) = \Delta_{w_0}( \repM_{w_0}(\lambda)) (R^{+}_{w_{0}})^{-1}  \cdots \Delta_{w_{m-1}}(\repM_{w_{m-1}}(\lambda)) (R^{+}_{w_{m-1}})^{-1} R^{+}_{w_{m-1}}\cdots R^{+}_{w_{0}}  \Delta_{s}(\cL)  T_{\cL}^{-1}
	$$
	The proposition follows from next Lemma.
\end{proof}
\begin{Lemma}
	Let $w_0,\cdots, w_{m-1}$ be the ordered set of walls between $\slope$ and $\slope+\cL$. Then we have:
	\be
	\Delta_{\infty}( \cL ) =  R^{+}_{w_{m-1}}\cdots R^{+}_{w_{0}}  \Delta_{s}(\cL)
	\ee
\end{Lemma}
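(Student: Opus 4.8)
The plan is to realize $\Delta_{\slope}(\cL)$ and $\Delta_{\infty}(\cL)$ as conjugates of one another by the infinite Khoroshkin--Tolstoy product of positive wall $R$-matrices, and then to use the $\Pic(X)$-periodicity of the wall arrangement so that this infinite product telescopes against itself, leaving only the finite product $R^{+}_{w_m}\cdots R^{+}_{w_0}$.

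First I would set $T=\Stab_{+,+\infty}^{-1}\circ\Stab_{+,\slope}$, which by the factorization \eqref{Rfac} is the convergent ordered product $\cdots R^{+}_{w_2}R^{+}_{w_1}R^{+}_{w_0}$ of the wall $R$-matrices crossed on the way from $\slope$ to $+\infty$ inside the ample cone (convergence in the formal-series topology near $u=\infty$, since $R^{+}_{w}\to 1$ as $w\to+\infty$). Using the geometric meaning of the coproduct as conjugation by the stable envelope together with the fact that $\cL$ becomes group-like at infinite slope, \eqref{infcop}, I get $\Delta_{\slope}(\cL)=T^{-1}\,\Delta_{\infty}(\cL)\,T$, where $\Delta_{\infty}(\cL)=U(\cL)$ is precisely the block-diagonal restriction of $\cL$ to $X^{\bA}$ in the notation of Proposition \ref{ProM} (including its $\bA$-weights). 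Equivalently, this identity is the $\slope'\to+\infty$ limit of the conjugation relation \eqref{conjcop}, with the ordering conventions tracked so as to be compatible with \eqref{infcop}.

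Next I would split the wall sequence: the walls crossed from $\slope$ to $+\infty$ are the finite initial block $w_0,\dots,w_m$ lying between $\slope$ and $\slope+\cL$, followed by the walls crossed from $\slope+\cL$ to $+\infty$, so that $T=T'A$ with $A=R^{+}_{w_m}\cdots R^{+}_{w_0}$ and $T'$ the ordered product over the walls from $\slope+\cL$ to $+\infty$. The key geometric input is periodicity: since the arrangement is $\Pic(X)$-invariant and $\cL$ is ample, translation by $\cL$ identifies the walls from $\slope+\cL$ to $+\infty$ with the $\cL$-translates of the walls from $\slope$ to $+\infty$ in an order-preserving way. Combining this with $U(\cL)\,R^{+}_{w}\,U(\cL)^{-1}=R^{+}_{w+\cL}$ from the proof of Proposition \ref{ProM} gives $T'=U(\cL)\,T\,U(\cL)^{-1}$, and then solving $T=U(\cL)\,T\,U(\cL)^{-1}A$ yields $A=U(\cL)\,T^{-1}\,U(\cL)^{-1}\,T$.

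Everything then collapses: using $\Delta_{\infty}(\cL)=U(\cL)$,
\[
A\,\Delta_{\slope}(\cL)=\bigl(U(\cL)T^{-1}U(\cL)^{-1}T\bigr)\bigl(T^{-1}U(\cL)T\bigr)=U(\cL)\,T^{-1}U(\cL)^{-1}U(\cL)\,T=U(\cL)=\Delta_{\infty}(\cL),
\]
which is the asserted identity. The step I expect to be the main obstacle is the periodicity input of the third paragraph: one must rigorously justify the convergence of the infinite product $T$ and the \emph{exact}, order-preserving identification of the tail of walls with its $\cL$-translate, and confirm that the spectral-parameter ($u$-weight) factors are carried consistently through the limit. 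The clean point that makes this work is that $\Delta_{\infty}(\cL)$ is literally $U(\cL)$ — carrying exactly those weights — so that no separate cancellation of $u$-factors is required and the two appearances of $U(\cL)$ in the computation cancel automatically.
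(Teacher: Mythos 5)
Your proposal is correct and takes essentially the same route as the paper: conjugate $\Delta_{\slope}(\cL)$ into $\Delta_{\infty}(\cL)$ by the infinite Khoroshkin--Tolstoy product of positive wall $R$-matrices (the paper invokes (\ref{conjcop}) with $\slope'=\infty$), then use the $\Pic(X)$-periodicity of the wall arrangement together with the conjugation identity $\Delta_{\infty}(\cL)\,R^{+}_{w}\,\Delta_{\infty}(\cL)^{-1}=R^{+}_{w+\cL}$ to cancel all but the finitely many factors $R^{+}_{w_m}\cdots R^{+}_{w_0}$. Your write-up merely makes explicit the telescoping that the paper compresses into ``we use this identity to cancel all but finitely many factors'' (via $T=U(\cL)\,T\,U(\cL)^{-1}A$), and is if anything more careful about the $\bA$-weight bookkeeping in identifying $\Delta_{\infty}(\cL)$ with the block-diagonal operator $U(\cL)$ of Proposition \ref{ProM}.
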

\begin{proof}
	By (\ref{conjcop}) the coproducts are related as follows:
	$$
	\Delta_{s} ( \cL ) = (R^{+}_{w_{0}})^{-1}  \cdots (R^{+}_{\infty})^{-1} \Delta_{\infty}(\cL) R^{+}_{\infty}\cdots  R^{+}_{w_{0}}
	$$
	By definition $\Delta_{\infty}(\cL) = \cL \otimes \cL$. In particular,
	$$
	\Delta_{\infty}(\cL) R^{+}_{w_k} \Delta_{\infty}(\cL)^{-1}=R^{+}_{w_k+\cL}=R^{+}_{w_{k+m}}
	$$
	We use this identity to cancel all but finitely many factors in the previous expression. 
\end{proof}

\subsubsection{}
{\rr Assume the torus $\bA$ splits the framing $\bw=\bw' u ' + \bw'' u''$. Let $\Rtot^{\slope}(u)$ be be the corresponding $R$-matrix with slope $s$ acting in the tensor product $K_{\bT}(\cM(\bw')) \otimes K_{\bT}(\cM(\bw''))$.} Let us define the qKZ operator with a slope $\slope$ by
\be
\label{qKZd}
\qKZ^{\slope}=\hbar^{\lambda}_{(1)} T_{u}^{-1} \Rtot^{\slope}(u).
\ee
where $u=u'/u''$.
\begin{Theorem}
	\label{flemma} Let $\slope, \slope'$ be two slopes separated by a single wall $w$, then we have:
	\be
	\label{wallchange}
	W^{-1} \qKZ^{\,\slope} W = \qKZ^{\, \slope'} , \ \ \  W^{-1} \qA_{\cL}^{\slope} W = \qA_{\cL}^{\slope'}
	\ee
	where $W= \Delta_{w}(\repM_{w}(\lambda)) (R^{+}_{w})^{-1}  $ and we assume that passing from $\slope$ to $\slope'$ we cross the wall $w$ in the positive direction.
\end{Theorem}
\begin{proof}
	We have
	$$
	\qKZ^{\slope}=\hbar^{\lambda}_{(1)} T_{u}^{-1} \Rtot^{\slope}(u), \ \ \qKZ^{\slope'}=\hbar^{\lambda}_{(1)} T_{u}^{-1} \Rtot^{\slope'}(u), \ \
	W=  \Delta_{w} (\repM_{w}(\lambda)) (R^{+}_{w})^{-1}.
	$$
	We need to check that $\qKZ^{\slope} W=W \qKZ^{\slope'}$. We have:
	$$
	\begin{array}{l}
	\qKZ^{\slope} W=\hbar^{\lambda}_{(1)} T_{u}^{-1} \Rtot^{\slope}(u)  \Delta_{w} (\repM_{w}(\lambda)) (R^{+}_{w})^{-1}=\\
	\\
	\hbar^{\lambda}_{(1)} T_{u}^{-1}  \Delta^{op}_{w} (\repM_{w}(\lambda)) \Rtot^{\slope}(u)(R^{+}_{w})^{-1}  =\\
	\\ \hbar^{\lambda}_{(1)} T_{u}^{-1} (\Rwal^{-}_w)^{-1}\Rwal^{-}_w \Delta^{op}_{w} (\repM_{w}(\lambda))  \Rtot^{\slope}(u)(R^{+}_{w})^{-1}= \\
	\\
	\hbar^{\lambda}_{(1)} T_{u}^{-1} (\Rwal^{-}_w)^{-1} \Delta_{w} (\repM_{w}(\lambda)) \hbar^{\Omega} R^{-}_{w} \Rtot^{\slope}(u)(R^{+}_{w})^{-1}\stackrel{(\ref{Mint})}{=}\\
	\\
	\Delta_{w} (\repM_{w}(\lambda))(R^{+}_{w})^{-1}  \hbar^{\lambda}_{(1)} T_{u}^{-1}  R^{-}_{w} \Rtot^{\slope}(u)(R^{+}_{w})^{-1}=W \qKZ^{\slope'}
	\end{array}
	$$
	where the last equality uses $ \Rtot^{\slope'}(u)= R^{-}_{w}  \Rtot^{\slope}(u) (R^{+}_{w})^{-1}$ because by assumption $\slope$ and $\slope^{\prime}$
	are separated by a single wall $w$.

	Let $s$ and $s'$ be two slopes separated by a single wall $w_0$.
	We choose a path from slope $\slope$ to $\slope+\cL$ crossing some sequence of walls
	$w_0,w_1...,w_{m-1}$. Similarly, the path from $\slope'$ to $\slope'+\cL$ crosses the walls
	$w_1, w_2...,w_{m}$ with $w_m=w_0+\cL$.
	By Proposition \ref{coprodprop} we have:
	$$
	\begin{array}{l}
	\Delta_s(\qA^{\slope}_{\cL}) = W_{w_0}(\lambda)\cdots W_{w_{m-1}}(\lambda) \, \Delta_{\infty}(\cL)\,  T^{-1}_{\cL}\\
	\\
	\Delta_{s'}(\qA^{\slope'}_{\cL}) = W_{w_1}(\lambda)\cdots W_{w_{m}}(\lambda) \, \Delta_{\infty}(\cL)\,  T^{-1}_{\cL}
	\end{array}
	$$
	To finish the proof of the theorem we need to note that
	$$W_{w_0}(\lambda)^{-1} \Delta_{\slope}(\qA^{\slope}_{\cL}) W_{w_0}(\lambda) = \Delta_{\slope'}(\qA^{\slope'}_{\cL}),$$
	{\rr which follows from an identity obtained by applying $\Delta_w$ to (\ref{dva}).}
	 
\end{proof}

\begin{Theorem} \label{thmAK}
	For arbitrary line bundles $\cL,\cL^{'} \in  \textrm{Pic}(X)$ and a slope $\slope$
	 the qKZ operators (\ref{qKZd}) commute with $q$-difference operators (\ref{qA})
	$$
	\ \Delta_{s}(\qA^{\slope}_{\cL}) \qKZ^{\slope}=\qKZ^{\slope}\Delta_{s}(\qA^{\slope}_{\cL}).
	$$
\end{Theorem}
\begin{proof}
Follows from Proposition~\ref{coprodprop}. Indeed, we obtain
$$
\begin{array}{l}
\qKZ^{s} \Delta_{\slope}(\qA^{\slope}_{\cL}) = \qKZ^{s} W_{w_0}( \lambda ) W_{w_1}( \lambda ) \cdots W_{w_{m-1}}( \lambda ) \Delta_{\infty}(\cL) T_{\cL}^{-1}\stackrel{(\ref{wallchange})}{=}\\
\\
W_{w_0}( \lambda ) W_{w_1}( \lambda ) \cdots W_{w_{m-1}}( \lambda ) \qKZ^{s+\cL}  \Delta_{\infty}(\cL) T_{\cL}^{-1}=
\Delta_{\slope}(\qA^{\slope}_{\cL}) \qKZ^{s}
\end{array}
$$
\end{proof}

\subsection{Identification of $\repM^{\slope}_{\cL}(\lambda)$ and $\geomM_{\cL}(u,\lambda)$}
Our main result is the identification of quantum difference operator $\geomM_{\cL}(\lambda)$
with $\repM^{\slope}_{\cL}(\lambda)$\footnote{\rr In this section we often switch from the variables $z$, denoting K\"ahler parameters, to their logarithms $\lambda$ and back. The two are related via (\ref{hlam}).}. Recall that the quantum difference operators $\geomM_{\cL}(u,\lambda)$
for $\cL \in \Pic(X)$ and the shift operator $\shift(u,\lambda)$ form a compatible system of difference equations (\ref{qdeS}). The Theorem \ref{okth} then identifies the shift operator $\shift(u,\lambda)$ with qKZ operator $\qKZ^{s}$ for some canonical choice of the slope $\slope$. We now generalize this theorem to the case of quantum difference operator:

\subsubsection{}
\begin{Theorem}
	\label{mainth}
	Let $\nabla\subset H^{2}(X,\R)$ be the alcove uniquely defined by the conditions:
	
	1) $0 \in H^{2}(X,\R)$ is one of the vertices of $\nabla$

	2) $\nabla \subset -C_{\textrm{ample}}$ ( opposite of the ample cone)
	
	\noindent then for $\slope \in \nabla$ we have:
	$$
	\begin{array}{l}
Stab^{-1}_{+,T^{1/2},\slope}\,  {\rr \calK} \, Stab_{+,T^{1/2},\slope} = {\rr \qKZ^{s}}\\
	\\
	Stab^{-1}_{+,T^{1/2},\slope}\, \calA_{\cL}\, Stab_{+,T^{1/2},\slope} = \qA^{\slope}_{\cL}
	\end{array}
	$$
{\rr 
where $\calK$ and $\calA_{\cL}$ are the quantum difference operators defined by (\ref{qdiffops}), ${\rr \qKZ^{s}}$ 
is qKZ operator (\ref{qKZd}) 
 and 
$$\qA^{\slope}_{\cL}= Const\cdot \T^{-1}_{\cL} \repM^{\slope}_{\cL}(u,z) $$
for some constant $Const$ and $\cL \in \Pic(X)$.
}	
\end{Theorem}
{\rr Equivalently, up to a multiple, the operator $\textbf{M}_{\cL}(u,z)$ from (\ref{qdeS}) 
coincides with  operator (\ref{dMdef}) for the slope $s$ specified in the above theorem.}

\begin{proof}
Let $\bA=\mathbb{C}^{\times}$ be a torus splitting the framing $\bw=u \bw'+ \bw''$.
We denote the components  of $X^{\bA}$ of a Nakajima variety $X=\cM(\bv,\bw)$  by  $F_{\bv'}=\cM(\bv',\bw')\times \cM(\bv'',\bw'')$. Note that we label them by the weight in the first component.
For a line bundle $\cL$ we have two difference operators acting in $K_{\KG}(\bw')\otimes K_{\KG}(\bw'')$ and commuting with the qKZ operator (\ref{qKZd}). First, by Theorem \ref{okth}:
$$
\calA_{\cL} = T^{-1}_{\cL} \textbf{N}^{s}_{\cL}(u,\lambda)
$$
for $\textbf{N}^{s}_{\cL}(u,\lambda)=\Stab^{-1}_{+,T^{1/2},\slope}\, \geomM_{\cL}(u,\lambda)\, \Stab_{+,T^{1/2},\slope}$ commutes with qKZ operator at the slope $\slope$. Second, by  Theorem \ref{thmAK} the operator:
$$
\qA_{\cL}=T_{\cL}^{-1} \repM^{\slope}_{\cL }(u,\lambda)
$$
commutes with the same qKZ operator {\rr (here by $ \repM^{\slope}_{\cL }(u,\lambda)$ we mean the action of the coproduct $\Delta_s(\repM^{\slope}_{\cL }(u,\lambda))$ in $K_{\KG}(\bw')\otimes K_{\KG}(\bw'')$)}. We want to prove that they coincide up to a constant multiple:
$$
\repM^{\slope}_{\cL}(u,\lambda) =\textbf{N}^{s}_{\cL}(u,\lambda) Const
$$

Both $\textbf{N}_{\cL}(u,\lambda)$ and $\repM^{\slope}_{\cL}(u,\lambda)$ are defined in integral $K$-theory, in particular they and their inverses  are Laurent polynomials in $u$. It follows that the operator:
$$
U(u)=\repM^{\slope}_{\cL}(u,\lambda) \textbf{N}^{-1}_{\cL}(u,\lambda)
$$
is a Laurent polynomial in $u$. By construction, this operator commutes with qKZ  at a slope $\slope$ which means that:
$$
U(u q) = \hbar^{\lambda}_{(1)} \Rtot^{\slope}(u) U(u) \Big(\hbar^{\lambda}_{(1)} \Rtot^{\slope}(u) \Big)^{-1}
$$
From Khoroshkin-Tolstoy factorization for the slope $\slope$ $R$-matrix we obtain:
$$
\Rtot^{\slope}(\infty) = \hbar^{\Omega} \prod\limits_{ 0 \in w  }^{\leftarrow} R^{+}_{w}  \ \ \ \Rtot^{\slope}(0) = \prod\limits_{ 0 \in w  }^{\rightarrow} (R^{-}_{w})^{-1} \, \hbar^{-\Omega} \ \
$$
where $R^{+}_{w}$ and $R^{-}_{w}$ are strictly upper and lower triangular wall $R$-matrices.
The products run over walls passing through $0\in H^{2}(X,\R)$. Therefore, the eigenvalues of conjugation by
$\hbar^{\lambda}_{(1)} \Rtot^{\slope}(u)$ at $u=0,\infty$ are either $1$ or $z^{m}\hbar^{m'}$ with $m\neq0$. Solutions in Laurent series in $u$ thus necessarily correspond to eigenvalue $1$. In particular, they are regular at
$u=0$ and $u=\infty$. It follows that $U$ is a constant matrix in $u$.

The constant matrix $U$ commutes with $\hbar^{\lambda}_{(1)} \Rtot^{\slope}(u)$. Diagonalizing the matrix $\hbar^{\lambda}_{(1)} \Rtot^{\slope}(0)$ we find that $U$ is block upper triangular. Similarly diagonalizing
$\hbar^{\lambda}_{(1)} \Rtot^{\slope}(\infty)$ we find that $U$ is block lower triangular. We conclude that
$U$ is block diagonal.

Let us consider the diagonal block $U_{0,0}$ of the matrix $U$ corresponding to the lowest component of the fixed point set:
$$
U_{0,0}: K_{\KG}(F_{0}) \rightarrow K_{\KG}(F_{0})
$$
Since $U$ commutes with qKZ, the block $U_{0,0}$ commutes with the corresponding block
of the $R$-matrix $\Rtot^{\slope}_{0,0}(u)$. From the definition of the $R$-matrix the matrix element $\Rtot^{\slope}_{0,0}(u)$ is the generating function for operators of classical multiplication by tautological classes on $F_{0}$. {\rr As $K_{\bT}(F_0)$ is generated by tautological classes \cite{McgNev}} the operator
$U_{0,0}$ is itself an operator of multiplication by a K-theory class in $K_{\KG}(F_{0})$. To finish the proof it remains to note that:
\be
\label{dimred}
U_{0,0}= U_{F_{0}}
\ee
Where $U_{F_{0}}$ denotes the same operator $U$ for quiver variety $F_{0}$. Indeed, applying (\ref{dimred})
to $X$ in place of $F_{0}$ we conclude that $U$ is an operator of multiplication in $K_{\KG}(X)$. However,
no such nonscalar operator can be diagonal in the stable basis. We conclude that $U=Const$.
\end{proof}

\subsubsection{}
To finish the proof of the theorem we need to prove (\ref{dimred}). It follows from Propositions \ref{pro1} and
\ref{pro2} below.

\begin{Proposition} \label{pro1}
	The matrix of quantum difference operator $\geomM_{\cL}(0,\lambda)$ has the following form:
	\be
	\label{auxlem1}
	\geomM_{\cL}(0,\lambda)_{\bv_2,\bv_1}=0  \ \ \textrm{for} \ \ \bv_1 \neq 0, \ \  \geomM_{\cL}(0,\lambda)_{0,0}=\left.\geomM_{\cL}(\lambda -\kappa)\right|_{F_{0}}
	\ee
\end{Proposition}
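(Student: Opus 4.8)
The plan is to read off the $u\to0$ behaviour of $\geomM_{\cL}(u,\lambda)$ from $\bA$-equivariant localization on the quasimap moduli spaces defining the capping operator, using the relation $\geomM_{\cL}(u,z)=\capping(u,zq^{\cL})\,\cL\,\capping(u,z)^{-1}$ coming from \eqref{qdeS}. The torus $\bA$ splitting $\bw=u\bw'+\bw''$ acts on the relative/nonsingular quasimap space, and its fixed locus decomposes into products of quasimap spaces for the two factors $\cM(\bv',\bw')$ and $\cM(\bv'',\bw'')$, with $\bv=\bv'+\bv''$ and the degree splitting accordingly. An $\bA$-fixed quasimap to $X$ is therefore a pair of quasimaps to these factors, and localization writes each block $\capping(u,z)|_{F_{\bv_2}\times F_{\bv_1}}$ as a sum over such pairs divided by the virtual normal bundle, see \cite{pcmilect}. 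Since $\geomM_{\cL}$ and its inverse lie in integral $K$-theory their matrix elements are Laurent polynomials in $u$, so evaluation at $u=0$ is the extraction of the $u$-degree zero part and is well defined.

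First I would prove the vanishing $\geomM_{\cL}(0,\lambda)_{\bv_2,\bv_1}=0$ for $\bv_1\neq0$. This is the same bookkeeping as the stable-envelope degree bound \eqref{degK} and Theorem~\ref{udepth}: with $\bmu(F_{\bv'})=\bv'$ from \eqref{bmuF} and the attracting/repelling splitting fixed by the chamber $\fC$ (the $+$ chamber of Theorem~\ref{okth}), the localization contributions to a block with source label $F_{\bv_1}$ carry a power of $u$ whose sign is controlled by $\bmu(F_{\bv_1})=\bv_1$ and which is strictly positive once $\bv_1\neq0$. Hence every such block vanishes at $u=0$, and because $\geomM_{\cL}$ is built from $\capping(u,zq^{\cL})$, $\cL$ and $\capping(u,z)^{-1}$, all sharing this column structure, the vanishing of the columns $\bv_1\neq0$ descends to $\geomM_{\cL}(0,\lambda)$.

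Next I would identify the surviving $(0,0)$ block. For $\bv_1=\bv_2=0$ the relevant fixed component is $F_{0}=\cM(0,\bw')\times\cM(\bv,\bw'')\cong\cM(\bv,\bw'')$, and the $\bA$-fixed quasimaps contributing there have trivial first factor $\bv'=0$; these are exactly the quasimaps to the smaller Nakajima variety $F_{0}$, and the virtual structure sheaf of the $X$-quasimaps restricts on this component to that of the $F_{0}$-quasimaps up to the normal directions. Therefore the $(0,0)$ block of $\capping(0,z)$ is the capping operator of $F_{0}$, and $\geomM_{\cL}(0,\lambda)_{0,0}$ is the quantum difference operator $\geomM_{\cL}|_{F_{0}}$ of $F_{0}$ itself --- this is precisely the dimensional reduction \eqref{dimred} that the theorem needs.

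Finally I would pin down the dynamical shift. The operator for $X$ is normalized by the canonical vector $\kappa=(\bw-C\bv)/2$ and by the modified K\"ahler variable of the footnote to Theorem~\ref{okth}, while the operator for $F_{0}$ is normalized by its own canonical vector; the discrepancy between the two normalizations, together with the $\hbar^{\Omega}$-weights of the normal directions of $F_{0}$ in $X$ computed from \eqref{codfun}, reassembles exactly into the shift $\lambda\mapsto\lambda-\kappa-\kappa'$ with $\kappa'=(\bw'-C\bv')/2$, so that $\kappa+\kappa'=(\bw+\bw'-C(\bv+\bv'))/2$. I expect the main obstacle to be twofold: making the positive-$u$-degree estimate of the second paragraph sharp enough to force the column vanishing for \emph{every} target $\bv_2$ (this needs the quasimap degree bounds of \cite{pcmilect}, which are finer than the stable-envelope bounds recalled above), and carrying out the $\hbar$- and $z$-bookkeeping of the last step so that the attracting-direction contributions combine into exactly $-\kappa-\kappa'$ and not merely some other representative of the same weight.
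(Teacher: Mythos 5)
Your overall skeleton --- reading off the $u\to 0$ limit from $\geomM_{\cL}(u,z)=\capping(u,zq^{\cL})\,\cL\,\capping(u,z)^{-1}$ as in \eqref{qdeS}, identifying the $(0,0)$ block with data of $F_0$, and accounting for the shift via normal-direction weights --- is the same as the paper's, and your third paragraph (the $\bA$-fixed quasimaps contributing to the vacuum block live entirely in $F_0$) does match the paper's argument, where it is phrased as ``the curves contributing to $\capping(0,\lambda)_{0,0}$ never break.'' But your mechanism for the column vanishing contains a genuine error. You claim that the blocks of the capping operator with source label $F_{\bv_1}$, $\bv_1\neq 0$, carry strictly positive powers of $u$ and hence vanish at $u=0$, and that this ``column structure'' is shared by $\capping(u,zq^{\cL})$, $\cL$ and $\capping(u,z)^{-1}$ alike. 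This is false and internally inconsistent: the diagonal blocks $\capping_{\bv_1,\bv_1}=1+O(z)$ certainly do not vanish at $u=0$, and a matrix and its inverse cannot both acquire vanishing columns in a limit (if the columns of $\capping(0,z)$ with $\bv_1\neq 0$ vanished, then $\capping(0,z)$ would be singular, forcing $\capping(u,z)^{-1}$ to blow up as $u\to 0$ rather than degenerate the same way). Moreover, the degree bounds you invoke --- \eqref{degK} and Theorem \ref{udepth} --- control stable envelopes and wall $R$-matrices, i.e.\ ratios of stable envelopes; they say nothing about the capping operator, whose $u\to 0$ behaviour is governed by different geometry. What is actually true, and what the paper uses, is weaker and of a different shape: $\capping(0,z)$ exists (Section 9.2 of \cite{pcmilect}) and is block \emph{upper triangular}, the triangularity coming from inspection of the breaking nodes of quasimaps, each of which contributes a weight $(1-q^{m}a^{k})$ that survives the limit only when $k>0$. (Note also that Laurent polynomiality in $u$ alone does not make evaluation at $u=0$ well defined, since negative powers must be excluded; this too requires an argument.)

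The vanishing of the columns $\bv_1\neq 0$ of $\geomM_{\cL}(0,\lambda)$ then comes not from the cappings at all but from the middle factor: in the stable basis the operator of multiplication by an ample $\cL$ satisfies $\cL_{\bv_2,\bv_1}\sim u^{\langle\cL,\bv_2\rangle}$, so at $u=0$ essentially only its $(0,0)$ entry survives. Sandwiching this between the two block-triangular operators $\capping(0,zq^{\cL})$ and $\capping(0,z)^{-1}$ kills every column with $\bv_1\neq 0$ and isolates
$\geomM_{\cL}(0,\lambda)_{0,0}=\capping(0,zq^{\cL})_{0,0}\,\cL\big|_{F_0}\,\bigl(\capping(0,z)^{-1}\bigr)_{0,0}$,
to which one applies the identification $\capping(0,z)_{0,0}=\capping\big|_{F_0}(\lambda-\kappa)$ (with the shift computed as the vertex index limit, Section 7.4 of \cite{pcmilect}) and the difference equation on $F_0$ to obtain the stated form. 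So the gap is concrete: the proof needs triangularity of the capping limit together with the $u$-scaling of $\cL$, and no sharpening of your positive-$u$-degree estimate for the capping blocks can repair the argument, because that estimate is simply not true.
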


\begin{proof}
	First, let us consider the limit $u\rightarrow 0$ in the quantum difference equation (\ref{qdeS}):
	$$
	\geomM_{\cL}(u,z) \capping(u,\lambda) = \capping(u,z q^{\cL}) \cL
	$$
	First, we have $\cL_{\bv_2,\bv_1} \sim u^{\langle \cL, \bv_2 \rangle}$.
	Second, the matrix of  fundamental solution $\capping(0,\lambda)$ is block upper triangular,
	moreover, the ``vacuum matrix element'' has the form $$\capping(0,\lambda)_{0,0} = \left.\capping\right|_{F_{0}}(\lambda-\kappa)$$
	Thus, we conclude that the operator $\geomM_{\cL}(u,\lambda)$ has the form (\ref{auxlem1}).
	
	{\rr The limit $\capping(0,\lambda)$ in the stable basis exists by (10.2.19) from \cite{pcmilect}.}  The
	upper-triangularity statement follows by inspection of the breaking nodes. Every one of them has the weight of the form
	$(1-q^{m} a^{k})$ and it has to be the case that $k>0$ for all of them for the limit to be non-vanishing. In particular, the curves which contribute to $\capping(0,\lambda)_{0,0}$ never break, therefore, stay entirely within the component $F_{0}$.
	Thus
	$\capping(0,\lambda)_{0,0}=\left.\capping\right|_{F_{0}}(\lambda+...)$. The
	exact form of the shift indicated by dots
	can be computed as the index limit computation for the vertex Section 7.4 in \cite{pcmilect} and gives exactly $\kappa$.
\end{proof}

Let us denote $\repM(u)=\repM^{s}_{\cL}(\lambda)$ for the slope $\slope$ as in the Theorem \ref{mainth} and tautological line bundle $\cL$.
\begin{Proposition} \label{pro2}
	\be
	\label{auxlem2}
	\repM(0)_{\bv_2,\bv_1}=0  \ \ \textrm{for} \ \ \bv_1 \neq 0, \ \  \repM(0)_{0,0}=\left.\repM(\lambda -\kappa)\right|_{F_{0}}
	\ee
\end{Proposition}
\begin{proof}
	First by Proposition \ref{coprodprop}, in the tensor product of two $\Uq(\fgh_Q)$ modules  we have:
	\be
	\label{prodform}
	\repM(u)=W_{w_0}( \lambda ) W_{w_1}( \lambda ) \cdots W_{w_{n-1}}( \lambda ) \Delta_{\infty}(\cL)
	\ee
	where $W_{w}( \lambda )=\Delta_w (\repM_{w} )(R^{+}_w)^{-1}$ and $w_0,\cdots w_{n-1}$ is the ordered set of walls crossed by a straight-line path from $\slope$ to $\slope +\cL$.

	By Corollary \ref{Mcop} we have:
	$$
	\Delta_w(\repM_w(\lambda)) = \wJ^{-}_{w}(\lambda) \,  \Big( \repM_w(\lambda +\hat{\kappa}_{(2)}  ) \otimes \repM_w(\lambda - \hat{\kappa}_{(1)}) \Big) \, \wJ^{+}_w(\lambda)
	$$
	Recall that the operators $\wJ^{-}_{w}(\lambda)$ and $R^{+}_{w}$ are triangular with the following matrix elements:
	$$
	\wJ^{\pm}_{w}(\lambda)=\bigoplus \limits_{{s=0}, \atop { \pm \langle \alpha, \theta \rangle>0}}^{\infty}\, J_{s \alpha} , \ \ \  R^{\pm}_{w}(\lambda)=\bigoplus \limits_{{s=0}, \atop {\pm \langle \alpha,  \theta \rangle>0}}^{\infty}\, R_{s \alpha}
	$$
	where $\theta$ is the stability parameter of the quiver and $\alpha$ is the root defining the wall $w$:
	$$
	w=\{ x \in H^{2}(X,\mathbb{R})| \langle x, \alpha \rangle =m  \}.
	$$
	The matrix elements are of the form:
	$$
	J_{s \alpha}, R_{s \alpha} : K_{\KG}(F_\bv)\longrightarrow K_{\KG}(F_{\bv+s \alpha})
	$$
	and by Theorem \ref{udepth} they have the following dependence on the equivariant parameter $u$:
	$$
	J_{s \alpha},R_{s \alpha} \sim u^{s \langle \alpha, \cL_w \rangle}.
	$$
	where $\cL_w$ is a line bundle on the wall $w$. We conclude that the matrix elements of $W_{w}(\lambda)$ have the following form:
	\be\label{Wudep}
	W_w(\lambda)_{\bv_2,\bv_1} \sim u^{\langle s \alpha , \cL_w\rangle }, \ \  \textrm{if} \ \  \bv_2=\bv_1+s \alpha.
	\ee
	{}From (\ref{prodform}) we see that the matrix element $\repM_{\bv_2,\bv_1}$ has the form:
	$$
	\repM_{\bv_2,\bv_1}=\sum\limits_{s_0,\cdots,s_{n-1}=0}^{\infty}\, \repM_{\bv_2,\bv_1}(s_0,\cdots,s_{n-1})
	$$
	where $\repM_{\bv_2,\bv_1}(s_0,\cdots,s_{n-1})$ is the contribution of the following combination
	of matrix elements:
	
	$$
	\begin{array}{l}
	\repM_{\bv_2,\bv_1}(s_0,\cdots,s_{n-1}):\,\\
	\\
{\rr	K_{\KG}(F_{\bv_1}) {\stackrel{\Delta_{\infty}(\cL)}{\longrightarrow}}} K_{\KG}(F_{\bv_1}) {\stackrel{W_{w_{n-1}}(\lambda)}{\longrightarrow}} K_{\KG}(F_{\bv_1+s_{n-1} \alpha_{n-1}})
	{\stackrel{W_{w_{n-2}}(\lambda)}{\longrightarrow}} K_{\KG}(F_{\bv_1+s_{n-1} \alpha_{n-1}+s_{n-2} \alpha_{n-2}})  \\
	\\
{\stackrel{W_{w_{n-3}}(\lambda)}{\longrightarrow}} 	\cdots {\stackrel{W_{w_0}(\lambda)}{\longrightarrow}} K_{\KG}(F_{\bv_2})
	\end{array}
	$$
	such that
	\be
	\label{conslaw}
	s_0 \alpha_0+\cdots+s_{n-1} \alpha_{n-1} = \bv_2-\bv_1
	\ee
	From (\ref{Wudep}) we see that this matrix element has the following dependence on the spectral parameter:
	$
	\repM_{\bv_2,\bv_1}(s_0,\cdots,s_{n-1}) \sim u^{d_{\bv_2,\bv_1}(s_0,\cdots,s_{n-1})},
	$
	with exponent:
	\be
	d_{\bv_2,\bv_1}(s_0,...,s_{n-1})=s_0 \langle \alpha_0 , \cL_0 \rangle+...+s_{n-1} \langle \alpha_{n-1} , \cL_{n-1} \rangle +
	\langle \bv_1 , \cL_n \rangle
	\ee
	where we denote by $\cL_i$ the point at which the straight-line path $(\slope, \slope +\cL)$ intersects the wall $w_i$ and $\cL_n=\cL$. The last term $\langle \bv_1 , \cL_n \rangle$ comes from $\Delta_{\infty}(\cL)$ which is a diagonal operator with diagonal matrix elements
	$\Delta_{\infty}(\cL)_{\bv_1,\bv_1}\sim u^{\langle \bv_1, \cL \rangle}$.


	By our choice, we can assume that the slope $\slope$ lies in an arbitrarily small neighborhood
	of $0 \in H^{2}(X,\R)$. Thus we can assume that $\cL_0=0$ and write:
	\be
	\begin{array}{l}
		d_{\bv_2,\bv_1}(s_0,...,s_{n-1})=\\
		\\
		s_1 \langle \alpha_1 , \cL_1 -\cL_0 \rangle+...
		+s_{n-1} \langle \alpha_{n-1} , \cL_{n-1}-\cL_0 \rangle +\langle \bv_1 , \cL_n -\cL_{0}\rangle
	\end{array}
	\ee
	We rewrite this equality in the following form:
	\be
	\label{mneq}
	\begin{array}{l}
		d_{\bv_2,\bv_1}(s_0,...,s_{n-1})=\\
		\\
		\langle \bv_1, \cL_n-\cL_{n-1} \rangle+\\
		\\
		\langle \bv_1+s_{n-1} \alpha_{n-1}, \cL_{n-1}-\cL_{n-2} \rangle+\\
		\\
		+\cdots + \\
		\\
		\langle  \bv_1+s_{n-1} \alpha_{n-1} +\cdots+s_{1} \alpha_{1}, \cL_1-\cL_0 \rangle
	\end{array}
	\ee
	Now, we have the set of inequalities:
	\be
	\begin{array}{l}
		\bv_1+s_{n-1} \alpha_{n-1} \geq 0\\
		\\
		\bv_1+s_{n-1} \alpha_{n-1}+s_{n-2} \alpha_{n-2}  \geq 0\\
		\\
		\cdots \\
		\\
		\bv_1+s_{n-1} \alpha_{n-1}+\cdots +s_{1} \alpha_{1}\geq 0
	\end{array}
	\ee
	where $\bv \geq 0$ means that the inequality holds for all components of the dimension vector:  $\bv_i\geq 0$.
	If they are not satisfied, the matrix element  $\repM_{\bv_2,\bv_1}(s_0,s_1,...,s_{n-1})$ vanishes
	as the corresponding operator annihilates any class supported on component $F_{\bv_1}$.
	
{\rr By construction of $\cL_i$ we have $\langle \bv, \cL_{i}-\cL_{i-1} \rangle\geq 0$ for $\bv\geq 0$ and  $\langle \bv, \cL_{i}-\cL_{i-1} \rangle>0$ for $\bv>0$. We conclude that for $\bv_1>0$}	
$$
	d_{\bv_2,\bv_1}(s_0,...,s_{n-1}) \geq \langle \bv_1, \cL_n-\cL_{n-1} \rangle>0
$$
and therefore
	$$
	\lim\limits_{u\rightarrow 0} \repM_{\bv_2,\bv_1} = 0 \ \  \textrm{for} \ \ \bv_1 \neq 0.
	$$
Next, let us analyze the case $\bv_2=\bv_1=0$. Substituting
	$\bv_1=0$ into (\ref{mneq}) we see that $d_{\bv_2,\bv_1}(s_0,...,s_{n-1})=0$ only when
	$s_1=s_2=\cdots=s_{n-1}=0$. Thus, from (\ref{conslaw}) we conclude: $s_0 \alpha_0 = \bv_2=0$,
	so that $s_0=0$. It means that only the diagonal matrix elements (all $s_i=0$) of
	$W_{w_k}(\lambda)$ contribute to the vacuum matrix element $\repM(u)_{0,0}$. From (\ref{prodform}) we obtain:
	$$
	\repM(0)_{0,0}=\repM_{w_0}(\lambda-\kappa) \cdots \repM_{w_{n-1}}(\lambda-\kappa) \cL  = \left.\repM^{\slope}_{\cL^{-1}}\right|_{F_{0}}(\lambda-\kappa)
	$$
	The proposition is proven.
\end{proof}

\subsubsection{\label{corsec}}

\begin{Corollary} \label{cor3}
	The operator $\repM^{\slope}_{\cL}(\lambda)$ does not depend on the choice of path made in (\ref{dMdef}).
\end{Corollary}
\begin{proof}
	Let $\repM^{\slope}_{\cL}(\lambda)$ and $\repM^{\slope}_{\cL}(\lambda)^{'}$ be two elements given by formula (\ref{dMdef}) corresponding to different choices of
	a path from $\slope$ to $\slope+\cL$. Assume that the slope $\slope$ belongs to the anti-fundamental alcove $\nabla \subset -C_{\textrm{ample}}$ 
	as in the theorem above. By  Theorem \ref{mainth}, $D=\repM^{\slope}_{\cL}(\lambda)^{'} \repM^{\slope}_{\cL}(\lambda)^{-1}$
	is a constant.  Recall that the wall operators $\repM_w(\lambda)$ are normally ordered (see Section \ref{Boper}).  It means that for a component of minimal weight $\gamma$  we have $\repM_w(\lambda) \gamma =\gamma$. Thus $D(\gamma)=\gamma$ and the constant is $1$. Finally, by Theorem \ref{flemma} this statement holds true for arbitrary slope. 
\end{proof}

\begin{Corollary}
	For arbitrary line bundles $\cL,\cL' \in \Pic(X)$ and  slopes $\slope \in H^{2}(X,\R)$ the corresponding
	$q$-difference operators commute:
	$$
	\qA^{\slope}_{\cL} \qA^{\slope}_{\cL'} = \qA^{\slope}_{\cL'} \qA^{\slope}_{\cL}
	$$
\end{Corollary} 
\begin{proof}
	By Proposition \ref{ProM} $\qA^{\slope}_{\cL} \qA^{\slope}_{\cL'}$ and $\qA^{\slope}_{\cL'} \qA^{\slope}_{\cL}$ 
	give an operator $\qA^{\slope}_{\cL+\cL'}$ with two different choices of a path for $\repM^{\slope}_{\cL+\cL'}(\lambda)$.
	The result is independent on the choice of a path  by  Corollary~\ref{cor3}.   
\end{proof}

\section{Cotangent bundles to Grassmannians \label{s_ExamplesG} }
In this section we consider the simplest quiver, which consists of one vertex. 
In this case the dimension vectors are given by a couple of natural numbers $(\bv,\bw)=(k,n) \in \N^2$, and the corresponding varieties
are isomorphic to cotangent bundles to Grassmannians  of $k$ -dimensional subspaces in $n$-dimensional space:
\be
\cM(\bv,\bw)=T^{*} Gr(k,n)
\ee
The framing torus $\bA\simeq({\C^{\times}})^{n}$ acts on $W=\C^n$ in a standard way. This induces an action
of $\bA$ on $T^{*} Gr(k,n)$. Note that this action preserves the symplectic form on $T^{*} Gr(k,n)$.  Let us denote by  $\KG=\bA\times {\C^{\times}} $  where the extra factor acts by scaling the fibers of the cotangent bundle. This torus scales the symplectic form with character which we denote $\hbar$.

\subsection{Algebra $\Uq(\fgh_{Q})$ and wall subalgebras $\Uq(\fg_w)$}

\subsubsection{}
Let us denote
\be
\label{wholeX}
X= \cM(\bw)=\coprod\limits_{\bv}\, \cM(\bv,\bw)=\coprod\limits_{k=0}^{n}\, T^{*} Gr(k,n)
\ee
Note that $\cM(1)$ is a variety consisting of two points, thus $K_{\KG}(\cM(1))$ is two dimensional over $K_{\KG}(pt)$.
Therefore, if the torus $\bA$ splits the framing as $\bw=u_1+\cdots +u_n$ then we have:
\be
\label{tenpr}
K_{\KG}(X) = \C^{2}(u_1) \otimes \cdots \otimes \C^{2}(u_n)
\ee
so that the total dimension is $2^n$. Note that $T^{*} Gr(k,n)^{\bA}$ consists of $n!/k!/(n-k)!$ points, such that $X^{\bA}$ is a set of $2^n$ points $p_i$. The fixed point basis of (localized) $K_{\KG}(X)$ consists of sheaves ${\cal{O}}_{p_i}$.

\subsubsection{}
We start from the case $n=2$. We have:
$$
X=\textrm{pt}\, \cup \,T^*\mathbb{P}^{1} \,\cup\, \textrm{pt}
$$
where $\textrm{pt}$ stands for a Nakajima variety consisting of one point. Therefore, the only nontrivial block of the $R$-matrix corresponds to $T^*\mathbb{P}^{1}$.
The action of torus $\KG=\bA \times \C^{\times}$ is represented in Fig. \ref{picTP}. In this picture $p_1$ and $p_2$ are two fixed points, corresponding to the points $z=0$ and $z=\infty$ of the base $\mathbb{P}^{1}\subset T^*\mathbb{P}^{1}$. We also specify explicitly the characters of the tangent spaces to $T^*\mathbb{P}^{1}$ at the fixed points. For example the tangent space at $p_1$ is spanned by the tangent space to the base with character $u_1/u_2$ and
the tangent space to the cotangent fiber with character $u_2/(u_1\hbar)$.

To compute the stable envelopes of the fixed points we need to choose a polarization $T^{1/2}$ and a chamber $\fC$.
We choose the positive chamber $\fC$ such that $u_1/u_2\to 0$. The arrows in Fig.\ref{picTP} represent the attracting and repelling directions with respect to this chamber. We choose a polarization $T^{1/2}$ given by the cotangent directions.

We have  $H^{2}(T^*\mathbb{P}^{1},\R)=\R$, thus we identify the set of slopes with real numbers
$\slope \in \R$.

\begin{figure}[h]
	\begin{center}
\includegraphics[scale=0.25]{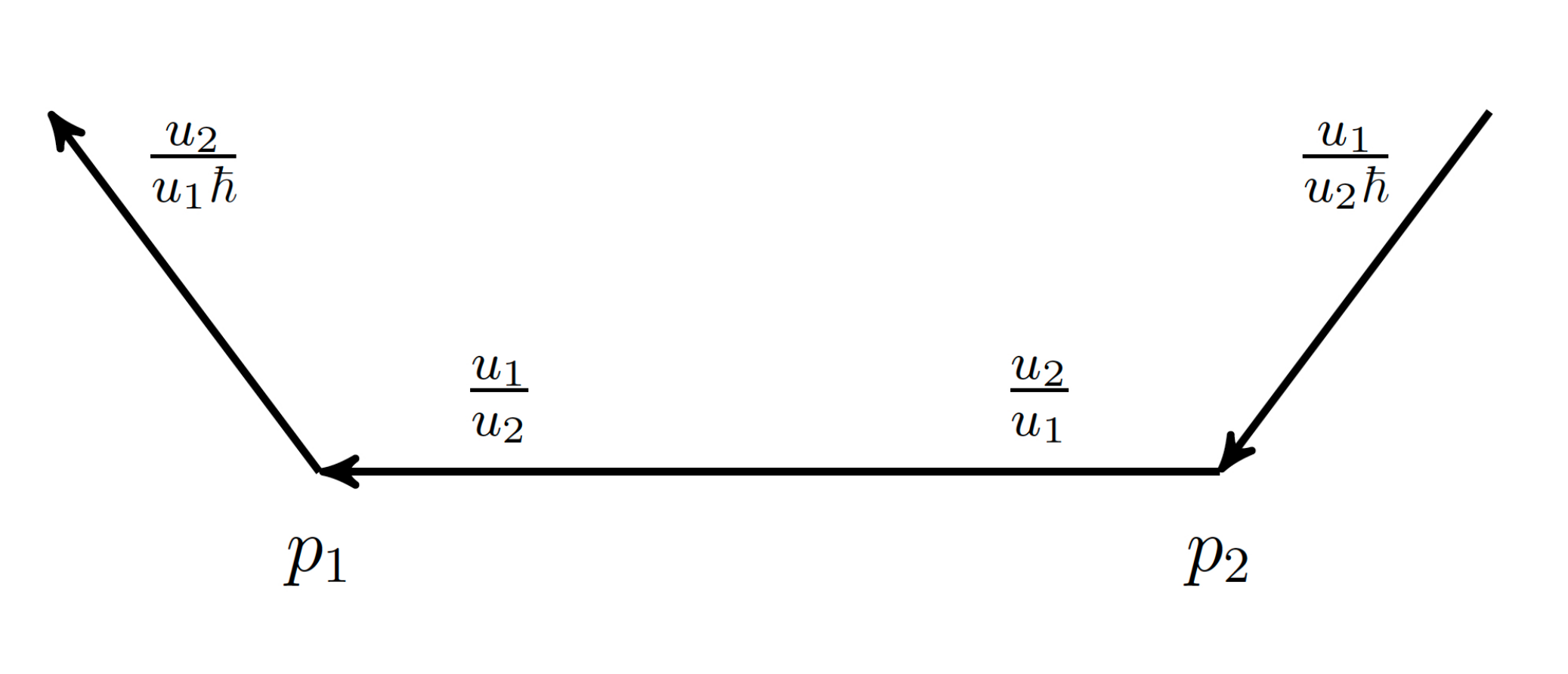}
\end{center}

\caption{Toric representation of $T^{*} {\mathbb{P}^1}$. Arrows represent the repelling and attractive directions
		with respect to the chamber $\fC= u_1/u_2 \to 0$.   \label{picTP}}
\end{figure}

\subsubsection{\label{stabbassec}}
Let us consider the restrictions of the stable envelopes to the fixed components. By (\ref{Stab_norm}) we have:
$$
\left. \textrm{Stab}_{\fC,T^{1/2}, s}(p)\right|_{p}= (-1)^{\textrm{rk} T^{1/2}_{>0}} \Big( \dfrac{\det \cN_{-}}{\det T^{1/2}_{\neq 0}} \Big)^{\frac{1}{2}}\, \Ldm \cN_{-}^\vee
$$
By definition $\cN_{-}$ is the repelling part of the normal bundle to $p$, $T^{1/2}_{>0}$ is the attracting part of the polarization and $T^{1/2}_{\neq 0}$ is the non-stationary part of the polarization.

{\bb 

From the Fig.\ref{picTP} at $p_2$ we have  $\cN_{-}=u_2/u_1$, $\textrm{rk} T^{1/2}_{>0}=1$, $T^{1/2}_{\neq 0}=u_1/(u_2\hbar)$. Thus we find:
\be \label{stabres}
\left. \textrm{Stab}_{\fC,T^{1/2}, s}(p_2)\right|_{p_2}=(1-u_2/u_1) \hbar^{1/2}
\ee
The support condition for the stable envelopes gives $\left. \textrm{Stab}_{\fC,T^{1/2}, s}(p_2)\right|_{p_1}=0$. The unique K-theory class with these restrictions at the fixed points equals
$$
\begin{array}{|c|}
\hline\\
\ \  \textrm{Stab}_{\fC,T^{1/2}, s}(p_2)=(1-{\cal{O}}(1)/u_1) \sqrt{h} \ \ \\ 
\\
\hline
\end{array} 
$$
where ${\cal{O}}(1)$ is the tautological bundle restricting to the fixed points by the rule $\left.{\cal{O}}(1)\right|_{p_i}=u_i$

Next, from Fig.\ref{picTP} at $p_1$ we find:
$\cN_{-}=u_2/u_1/\hbar$, $\textrm{rk} T^{1/2}_{>0}=0$, $T^{1/2}_{\neq 0}=\cN_{-}=u_2/u_1/\hbar$. Thus:
\be \label{p1rest}
\left. \textrm{Stab}_{\fC,T^{1/2}, s}(p_1)\right|_{p_1}=1-\hbar u_1/u_2
\ee
The fractional line bundle corresponding to slope $s$ is ${\cal{O}}(1)^{s}$.
The degree condition (\ref{degK}) for the point $p_1$ gives:
$$
\deg_{\bA}\Big( \left. \textrm{Stab}_{\fC,T^{1/2}, s}(p_1)\right|_{p_2}\Big) \subset
\deg_{\bA}\Big( \left. \textrm{Stab}_{\fC,T^{1/2}, s}(p_2)\right|_{p_2} \times  \dfrac{\left.{\cal{O}}(1)^{s}\right|_{p_2}}{\left.{\cal{O}}(1)^{s}\right|_{p_1}}\Big)
$$
Thus by (\ref{stabres}):
$$
\deg_{\bA}\Big( \left. \textrm{Stab}_{\fC,T^{1/2}, s}(p_1)\right|_{p_2}\Big) \subset
\deg_{\bA}\Big( (1-u_2/u_1) \sqrt{h} \, (u_2/u_1)^{s} \Big)=(s,s+1)
$$
For generic $s$ this condition implies that $\left.\textrm{Stab}_{\fC,T^{1/2}, s}(p_1)\right|_{p_2}$ is a monomial
\be \label{monres}
\left.\textrm{Stab}_{\fC,T^{1/2}, s}(p_1)\right|_{p_2}=c(\hbar) (u_2/u_1)^{\lfloor s+1 \rfloor}
\ee
with unknown coefficient $c(\hbar)$. 

The points $p_1$ and $p_2$ are connected by an equivariant $\mathbb{P}^1$ with weights of the tangent spaces given by $(u_1/u_2)^{\pm 1}$. This means that for any equivariant K-theory class $F$, we have
$
\left.F\right|_{p_1} = \left.F\right|_{p_2}$ at $u_1/u_2=1$.
Applying this to $F=\textrm{Stab}_{\fC,T^{1/2}, s}(p_1)$, from (\ref{p1rest}) and (\ref{monres}) we obtain
$$
c(\hbar)=1-\hbar
$$
We conclude that
\be \label{res2}
\left.\textrm{Stab}_{\fC,T^{1/2}, s}(p_1)\right|_{p_2}=(1-\hbar ) (u_2/u_1)^{\lfloor s+1 \rfloor}
\ee
The unique K-theory class which has  restrictions (\ref{p1rest}) and (\ref{res2}) equals

$$
\begin{array}{|c|}
\hline\\
\textrm{Stab}_{\fC,T^{1/2}, s}(p_1)=(1-\hbar {\cal{O}}(1)/u_2) \left(\dfrac{{\cal{O}}(1)}{u_1}\right)^{\lfloor 1+s\rfloor}\\
\\
\hline
\end{array}
$$
}

\subsubsection{}
For the opposite chamber $-\fC$ we have $u_1/u_2 \to \infty$. It means that in Fig.~\ref{picTP} all arrows are reversed. In particular the stable envelope for $-\fC$ is obtained from the last formula by
permuting the fixed points:
\be
\begin{array}{|l|}
	\hline\\
	\textrm{Stab}_{-\fC,T^{1/2}, s}(p_1)=(1-{\cal{O}}(1)/u_2) \sqrt{h} \\
	\\
	\textrm{Stab}_{-\fC,T^{1/2}, s}(p_2)=(1-\hbar {\cal{O}}(1)/u_1) \left(\dfrac{{\cal{O}}(1)}{u_2}\right)^{\lfloor 1+s\rfloor}\\
	\\
	\hline
\end{array}
\ee
\subsubsection{}
In agreement with our general theory we see that the stable envelopes are locally constant functions of
the parameter $s$. From the last set of formulas we see that it changes only when $s$ crosses an integer point.
We conclude that the set of walls can be identified with $\Z \subset \R$ and thus alcoves are of the form $(w,w+1) \subset \R$.

The alcove specified by Theorem \ref{mainth} has the form $\nabla=(-1,0)$. To compute the $R$-matrix corresponding to
this alcove we choose $\slope \in \nabla$, then in the basis of fixed points ordered as $[p_2,p_1]$, from the above formulas we compute:
\be \label{restr1}
i^{*} \textrm{Stab}_{\fC,T^{1/2}, s}=
\left[ \begin {array}{cc}  \left( 1-{u}^{-1} \right) \sqrt {\hbar}&1-\hbar
\\ \noalign{\medskip}0&1-\hbar u\end {array} \right]\\
i^{*} \textrm{Stab}_{-\fC,T^{1/2}, s}=
\left[ \begin {array}{cc} 1-\hbar u^{-1}&0\\ \noalign{\medskip}1-\hbar&
\left( 1-u \right) \sqrt {\hbar}\end {array} \right] \label{restr2}
\ee
where we denote $u=u_1/u_2$ and $i^{*}$ is the operation of restriction to fixed points. The total $R$-matrix
for slope $\slope$ is defined as follows:
$$
\Rtot^{s}(u)=\textrm{Stab}_{-\fC,T^{1/2}, s}^{-1} \textrm{Stab}_{\fC,T^{1/2}, s}=(i^{*} \textrm{Stab}_{-\fC,T^{1/2}, s})^{-1} \ (i^{*}  \textrm{Stab}_{\fC,T^{1/2}, s})
$$
and we obtain:
\be
\label{RtotTP}
\begin{array}{|l|}
	\hline \\
	\Rtot^{s}(u)=\left[ \begin {array}{cc} {\dfrac { \left( 1-u \right) \hbar^{\frac{1}{2}}}{\hbar-u
	}}&{\dfrac {u \left( \hbar-1 \right) }{\hbar-u}}\\ \noalign{\medskip}{\dfrac {\hbar-1}{\hbar-u}}&{\dfrac { \left( 1-u \right) \hbar^{\frac{1}{2}}}{\hbar-u}}\end {array}
	\right]\\
	\\
	\hline
\end{array}
\ee
\subsubsection{}
The wall $R$-matrices  are defined by (\ref{rootR}) and similarly to what we have above:
$$
R^{\pm}_{w}=(i^{*} \textrm{Stab}_{\pm \fC,T^{1/2}, s'})^{-1} \ (i^{*}  \textrm{Stab}_{\pm \fC,T^{1/2}, s})
$$
where $s$ and $s'$ are two slopes separated by a wall $w$. Let $w$ be an integer representing the wall and $s=w-\epsilon$, $s'=w+\epsilon$
for sufficiently small $\epsilon$ (obviously enough  to take $0<\epsilon<1$). Then from the above formulas we obtain:
\be
\label{walRTP}
\begin{array}{|c|}
	\hline \\
	R^{+}_{w}=\left[ \begin {array}{cc} 1&{\dfrac {1-\hbar}{{u}^{w}\sqrt {\hbar}}}
	\\ \noalign{\medskip}0&1\end {array} \right] \ \ \,R^{-}_{w}= \left[ \begin {array}{cc} 1&0\\ \noalign{\medskip}{\dfrac {{u}^{w}
			\left( 1-\hbar \right) }{\sqrt {\hbar}}}&1\end {array} \right]\\
	\\
	\hline
\end{array}
\ee
{\rr Observe that these matrices are related by transposition as in (\ref{rtransp}).}

\subsubsection{ \label{ktexamtp}}
The KT factorization of $R$-matrix $\slope \in \nabla$ has the form (\ref{Rfac}):
\be
\Rtot^{s}(u)= \prod\limits_{w<0}^{\rightarrow} \, R^{-}_{w} R_{\infty} \prod\limits_{w \geq 0}^{\leftarrow} \, R^{+}_{w}
\ee
This infinite product is convergent in the topology of power series in $u^{-1}$. From (\ref{walRTP}) we obtain:
$$
U=\prod\limits_{w \geq 0}^{\leftarrow} \, R^{+}_{w} =  \cdots R^{+}_{1} R^{+}_{0}=
\left[ \begin {array}{cc} 1&{\dfrac {1-\hbar}{\sqrt {\hbar}}(1+u^{-1}+\cdots)}
\\ \noalign{\medskip}0&1\end {array} \right]=\left[ \begin {array}{cc} 1&{\dfrac {(1-\hbar)u}{\sqrt {\hbar}\,(u-1)}}
\\ \noalign{\medskip}0&1\end {array} \right]
$$
$$
L= \prod\limits_{w<0}^{\rightarrow} \, R^{-}_{w}=R^{-}_{-1} R^{-}_{-2}\cdots=\left[ \begin {array}{cc} 1&0\\ \noalign{\medskip}{\dfrac {
		\left( 1-\hbar \right) }{\sqrt {\hbar}}}(u^{-1}+\cdots)&1\end {array} \right]=
\left[ \begin {array}{cc} 1&0\\ \noalign{\medskip}{\dfrac {
		\left( 1-\hbar \right) }{\sqrt {\hbar}\,(u-1)}}&1\end {array} \right]
$$
Finally, the infinity slope $R$-matrix is given by (\ref{infr}). The attracting and repelling directions are obvious from Fig. \ref{picTP} and we obtain:
$$
R_{\infty}=
-\left[\begin{array}{cc}
\dfrac{u^{-\frac{1}{2}}-u^{\frac{1}{2}}}{u^{\frac{1}{2}}\hbar^{-\frac{1}{2}}-u^{-\frac{1}{2}}\hbar^{\frac{1}{2}}}&0\\
0&\dfrac{u^{-\frac{1}{2}}\hbar^{-\frac{1}{2}}-u^{\frac{1}{2}}\hbar^{\frac{1}{2}}}{u^{\frac{1}{2}}-u^{-\frac{1}{2}}}
\end{array}\right]
$$
One easily checks that in agreement with (\ref{RtotTP}) we have $\Rtot^{\slope}(u)=L \, R_{\infty}\, U$. This gives canonical LU decomposition of the $R$-matrix.

\subsubsection{}
The $R$-matrix for the whole Nakajima variety $X$ given by (\ref{wholeX}) is of the form:
$$
\Rtot^{\slope}(u)=
\left[\begin{array}{cccc}
1&&&\\
&\Rtot^{\slope}_{T^{*}{\mathbb{P}^1}}&&\\
&&&1
\end{array}\right] =
\left[\begin{array}{cccc}
1&0&0&0\\
0&{\dfrac { ( 1-u ) \hbar^{\frac{1}{2}}}{\hbar-u}}&{\dfrac {u ( \hbar-1 ) }{\hbar-u}}&0\\
0&{\dfrac {\hbar-1}{\hbar-u}}&{\dfrac { ( 1-u ) \hbar^{\frac{1}{2}}}{\hbar-u}}&0\\
0&0&0&1
\end{array}\right]
$$
Up to a scalar multiple one recognizes the standard $R$-matrix for ${{\mathscr{U}_{\rr \sqrt{\hbar}}(\widehat{\frak{gl}}_2)}}$ acting in the tensor product
of two fundamental evaluation modules $\C^2(u_1)\otimes \C^2(u_2)$. We conclude that the quiver algebra
corresponding to cotangent bundles to Grassmannians is $\Uq(\fgh_{Q})={{\mathscr{U}_{{\rr \sqrt{\hbar}}}(\widehat{\frak{gl}}_2)}}$.

\subsubsection{}
The codimension function (\ref{codfun}) for $X$ is given, obviously, by the following diagonal matrix:
$$
\hbar^{\Omega}=\textrm{diag}(1,\hbar^{\frac{1}{2}},\hbar^{\frac{1}{2}},1)
$$
We obtain that the wall $R$-matrices defined by the Theorem \ref{thmR} have the following explicit form:
$$
\Rwal^{+}_{w}=\left[\begin{array}{cccc}
1&0&0&0\\
0&\hbar^{\frac{1}{2}}&(1-\hbar)u^{-w}&0\\
0&0&\hbar^{\frac{1}{2}}&0\\
0&0&0&1
\end{array}\right]
$$
In particular all wall $R$-matrices are conjugated to the zeroth one by a line bundle:
\be
\label{cojnR}
\Rwal^{+}_{w} = {\cal{O}}(w)  \Rwal^{+}_{0}   {\cal{O}}(w)^{-1}
\ee
with ${\cal{O}}(w)=\textrm{diag}(1, u_2^{w},u_1^w, 1)$. One recognizes that up to a multiple
$\Rwal^{+}_{0}$ coincides with the standard $R$-matrix for $\mathscr{U}_{{\rr \sqrt{\hbar}}}(\frak{sl}_2)$ in the tensor product of two
fundamental representations. Thus, the wall subalgebra, which is built by FRT procedure from this $R$-matrix
is $\Uq(\fg_0)\simeq \mathscr{U}_{{\rr \sqrt{\hbar}}}(\frak{sl}_2)$. As the $R$-matrices for other walls are conjugates of $\Rwal^{+}_{0}$, we conclude that
$\Uq(\fg_w)\simeq \mathscr{U}_{{\rr \sqrt{\hbar}}}(\frak{sl}_2)$ for arbitrary wall $w$.

\subsubsection{\label{subshsec}}
{\rr To get rid of the square roots it is convenient to change the notations $\hbar\to \hbar^{2}$, which we assume starting from here and to the end of this section.} With this notation we have the algebra $\Uq(\fgh_{Q})=\sldh$ and  a set of subalgebras $\Uq(\fg_w) \simeq \Uq(\frak{sl}_2)$
indexed by walls $w\in \Z$. It is convenient to organize this data as follows: let $E$, $F$ and $K$ be the standard generators of $\Uq(\frak{sl}_2)$ which we understand as $\Uq(\fg_0)$. Then by (\ref{cojnR})
the wall subalgebra $\Uq(\fg_w)$ is generated by $E_{w}$, $F_{w}$ and $K$:
\be
\label{wallsh}
E_{w}={\cal{O}}(w) E {\cal{O}}(w)^{-1}, \ \  F_{w}= {\cal{O}}(w) F {\cal{O}}(w)^{-1}.
\ee
Let us denote $x^{+}(w) =E_{w} , \ \  x^{-}(w)=F_{-w}$. One can check that the relations among these generators can be summarized as the Drinfeld's realization of  $\sldh$:  the algebra $\sldh$  is an associative algebra with $1$ generated over $\C(\hbar)$ by the elements
$x^{\pm}(k), a(l), K^{\pm 1}$ ($ k\in \Z, l\in \Z\setminus \{0\} $) with the following relations:
\be
\begin{array}{l}
	K  K ^{-1}= K^{-1} K =1\\
	\\
	{[}a {(k)},a {(m)}{]}=0, {[}a {(k)},K^{\pm}{]}=0\\
	\\
	K x^{\pm } (k) K^{-1}= \hbar^{\pm 2} x^{\pm } (k)\\
	\\
	{[}x^{+} (k),x^{-} (l){]} = \dfrac{1}{\hbar - \hbar^{-1}} \Big(\psi (k + l) -\varphi (k + l)\Big)\\
	\\
	{[}a (k), x^{\pm} (l){]} =\pm \dfrac{{[}2 k{]}_{\hbar}}{k} x^{\pm} (l+k)
\end{array}
\ee
with
$$
\begin{array}{l}
\sum\limits_{m=0}^{\infty} \,\psi(m) z^{-m}  =K \exp\left( (\hbar-\hbar^{-1}) \sum\limits_{k=1}^{\infty}\, a(k) z^{-k}  \right)\\
\\
\sum\limits_{m=0}^{\infty} \,\varphi(-m) z^{m}  =K^{-1} \exp\left( -(\hbar-\hbar^{-1}) \sum\limits_{k=1}^{\infty}\, a(-k) z^{k} \right)
\end{array}
$$
{\rr and $\hbar$-number $[n]_{\hbar}:=(\hbar^{n}-\hbar^{-n})/(\hbar-\hbar^{-1})$.}

It may be convenient to visualize $\sldh$ and its subalgebras as in the Figure~\ref{pic1} :  the wall $\Uq(\fg_{w})$ corresponds to a line with integer slope $w$.

\begin{figure}[h]
\begin{center}
	\includegraphics[scale=0.35]{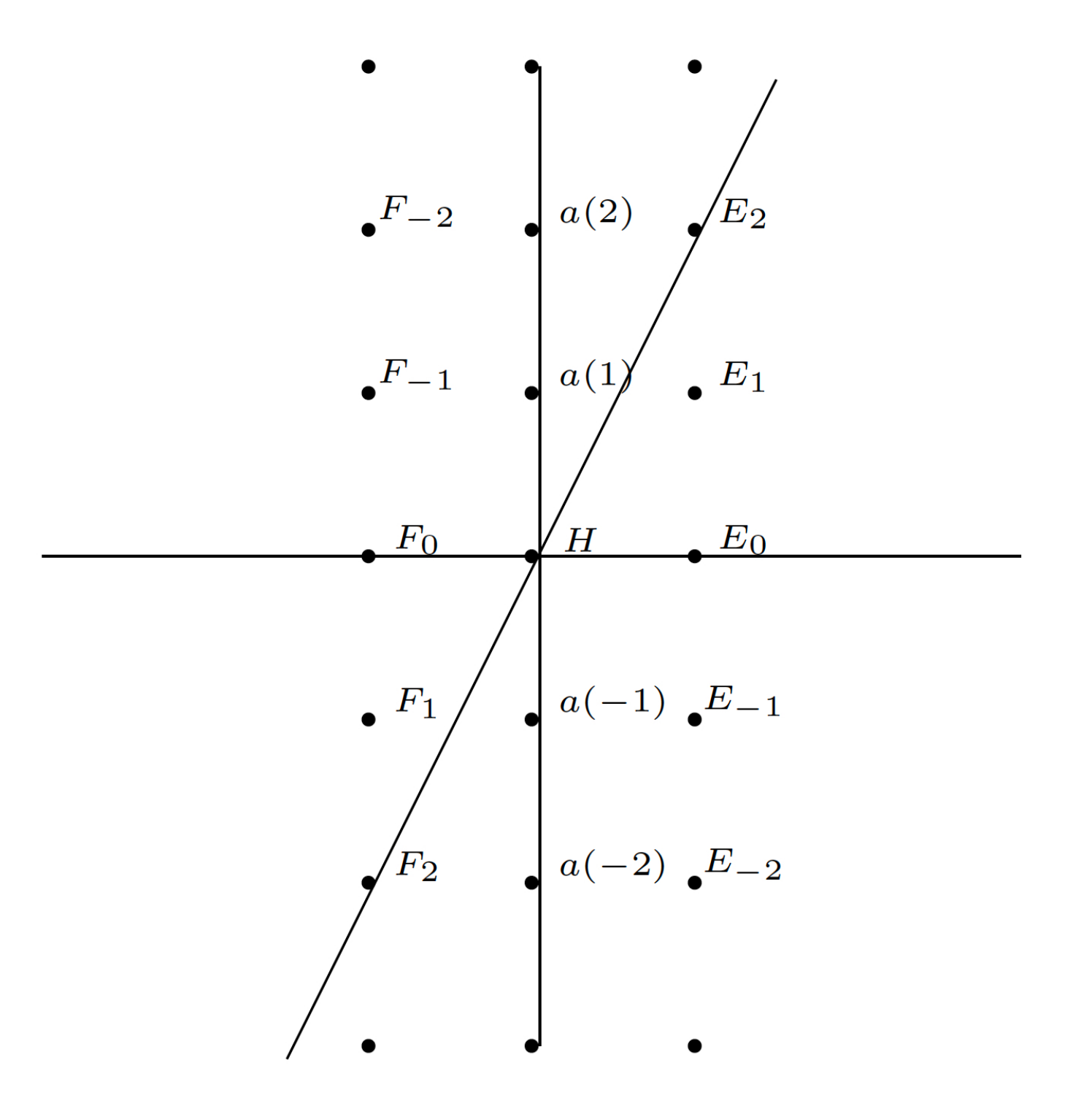}
	\end{center}

	\caption{The structure of $\sldh$. The line through zero corresponds to the slope $2$ subalgebra
		$ U_{q}({\frak{sl}}_2) \subset \sldh $  generated by $E_2, F_{2}, K$.  \label{pic1}}
\end{figure}

\subsection{R -matrices \label{UnivRmatri}}
\subsubsection{}
To write the formulas for $R$-matrices for a general variety (\ref{wholeX}) it is enough to substitute all formulas from the previous section by their ``universal'' versions.

The universal $R$-matrix for $\Uq(\frak{sl}_{2})$ is well known:
\be
\label{sl2R}
\Rwal=\hbar^{-H \otimes H/2} \, \sum\limits_{k=0}^{\infty}  \dfrac{(-1)^k (\hbar-\hbar^{-1})^k \hbar^{-k(k-1)/2}}{[k]_\hbar!}  F^k  \otimes E^k
\ee
with $[k]_{\hbar}! = [1]_\hbar [2]_\hbar \dots [k]_\hbar$ and $H$ related to $K$ as $K=\hbar^{H}$.  Up to a scalar multiple the codimension function is given by\footnote{\rr Note the substitution $\hbar\to \hbar^{2}$ on the left side of this equality which was introduced at the begining of Section \ref{subshsec}. We have $\hbar^{2 \Omega}=\diag(1,\hbar,\hbar,1)$ and $\hbar^{-H \otimes H/2}=\diag(\hbar^{-1/2},\hbar^{1/2},\hbar^{1/2},\hbar^{-1/2})=\hbar^{-1/2}\hbar^{2 \Omega}$.}
$\hbar^{2 \Omega}=\hbar^{-H \otimes H/2}$ thus, we conclude that there is the following universal formula for the wall $R$-matrices:
\be
\label{sl2w}
R^{+}_{w}=\sum\limits_{k=0}^{\infty}  \dfrac{(-1)^k (\hbar-\hbar^{-1})^{k} \hbar^{-k(k-1)/2}}{[k]_\hbar!}  F_w^k  \otimes E_w^k, \ \
\ee
The lower triangular wall $R$-matrix is obtained by transposition $R^{-}_{w} = (R^{+}_{w})_{21}$:

\be
\label{sl2low}
R^{-}_{w}=\sum\limits_{k=0}^{\infty}  \dfrac{(-1)^k (\hbar-\hbar^{-1})^{k} \hbar^{-k(k-1)/2}}{[k]_\hbar!}  E_w^k  \otimes F_w^k, \ \
\ee

\subsubsection{}
The KT factorization (\ref{Rfac}) provides the following universal formula for the total $R$-matrix:
\be
\label{fct}
\Rtot^{\slope}(u)=\prod\limits_{w<\slope}^{\longleftarrow} R^{-}_{w} \,R_{\infty}\, \prod\limits_{w\geq \slope}^{\longleftarrow} R^{+}_{w}
\ee
with $R^{\pm}_{w}$ with given explicitly by $(\ref{sl2w})$. The  $R$ -matrix $R_{\infty}$ is the operator of  multiplication by the class of normal bundles (\ref{infr}). It can be conveniently expressed in terms of generators $a(n)$ corresponding to the infinite slope
in the Fig\ref{pic1}:
$$
R_{\infty}=c \, \hbar^{H \otimes H/2}\, \exp\Big( (\hbar-\hbar^{-1}) \sum\limits_{n=1}^{\infty} \, \frac{n}{[2 n]_\hbar} \,a(-n) \otimes a(n)   \Big)
$$
where $c$ is some scalar multiple depending on normalization.

\subsection{The quantum difference operator $\geomM_{\cL}(z)$}
\subsubsection{}
By definition $\hbar^{\lambda}$ acts on $K$-theory of $\cM(1)=\cM(1,1) \coprod \cM(0,1)$  as:
$$
\hbar^{\lambda} =\left\{
\begin{array}{cc}
z & \textrm{on} \ \ \cM(1,1) \\
1   & \textrm{on}  \ \ \cM(0,1)
\end{array}\right. \ \  \Leftrightarrow \ \
\hbar^{\lambda} =
\left(
\begin{array}{cc}
z & 0 \\
0   & 1
\end{array}\right) = z^{\frac{1}{2}}\, z^{H/2}
$$
From this and (\ref{sl2w}) we see that the ABRR equation  for $\Uq(\frak{sl}_2)$ takes the following form:
$$
J^{+}(z) z^{-H\otimes 1/2}\, \Rwal \, =\, z^{-H\otimes 1/2}\, \hbar^{-H\otimes H/2}  J^{+}(z)
$$
with $\Rwal$ given by (\ref{sl2R}). This is an equation for strictly upper triangular operator $J(z)$, which means that:
$$
J^{+}(z)=1+\sum\limits_{k=1}^{\infty}\, J^{+}_k (z)\, F^k \otimes E^k
$$
The Proposition \ref{Jtheorem} says that the ABRR equation determines the coefficients $J_k (z)$ uniquely. Computation gives:
$$
J^{+}(z)=\sum\limits_{k=0}^{\infty} \,\dfrac{(-1)^k \hbar^{-k(k-1)/2} (\hbar-\hbar^{-1})^k}{[k]_\hbar! \prod\limits_{i=1}^{k}( 1- z^{-1} K \otimes K^{-1} \hbar^{2 i} )} \, F^{k}\otimes E^{k}
$$

\subsubsection{}
By definition (\ref{JJ}) we have $\textbf{J}^{+}_{w}(\lambda)=J^{+}_{w}(\lambda-\tau_w)$. In our case $\tau_w=\textbf{s} w$  and this corresponds to a shift $z\to z\hbar^{-\textbf{s} w} = z q^{-w}$ for integer wall $w$. We conclude that:
\be
\label{Jexp}
\textbf{J}_w^{+}(z)=\sum\limits_{k=0}^{\infty} \,\dfrac{(-1)^k \hbar^{-k(k-1)/2} (\hbar-\hbar^{-1})^k}{[k]_\hbar! \prod\limits_{i=1}^{k}( 1- z^{-1} q^w K \otimes K^{-1} \hbar^{2 i} )} \, F_w^{k}\otimes E_w^{k}
\ee

\subsubsection{}
The operator $\repM_{w}(z)$ is given by (\ref{bdef}). To compute it, we need the formulas for antipode $S_w$
of $\Uq(\fg_w)$. They can be obtained directly from the wall $R$-matrix (\ref{sl2w}). First, from
$1\otimes\Delta (\Rwal)=\Rwal_{13}\Rwal_{12}$ and $\Delta\otimes 1 (\Rwal)=\Rwal_{13}\Rwal_{23}$ we obtain:
\begin{small}
\be \label{sl2copr}
\Delta(E)=K^{-1} \otimes E +E\otimes 1, \ \ \Delta(F)=1 \otimes F +F\otimes K, \ \ \Delta(K)=K\otimes K
\ee
\end{small}
and thus the antipode corresponding to this coproduct has the form:
$$
S(E)=- K E, \ \ S(F)=-F K^{-1}, \ \ S(K)=K^{-1}
$$

\subsubsection{}
The lower triangular solutions of the ABRR equation can be computed from (\ref{Jexp}) by
$\textbf{J}_w^{-}(z) = S_w\otimes S_w ( \textbf{J}_w^{+}(z)_{21} )$,
{\rr which gives:
$$
\textbf{J}_w^{-}(z)=\sum\limits_{k=0}^{\infty} \dfrac{(-1)^k \hbar^{-2 k^2 -k(k-1)/2} (\hbar-\hbar^{-1})^k K^k\otimes K^{-k}}{[k]_{\hbar}! \prod\limits_{i=1}^{k} (1-z^{-1} q^{w} K\otimes K^{-1} \hbar^{2 i -4k})} \, E^k \otimes F^k
$$
To compute the inverse of this operator we write
$$
\textbf{J}_w^{-}(z)^{-1}=1 + \sum\limits_{m=1}^{\infty} \, a_m\, E^m \otimes F^m
$$
and determine the unknown coefficients $a_n$ from the equation $\textbf{J}_w^{-}(z)^{-1} \textbf{J}_w^{-}(z) =1$. Comparing coefficients of $E^n \otimes F^n$ we find the following system of linear equations:
$$
\sum_{k+m=n}\, a_m \dfrac{(-1)^k \hbar^{-2 k^2 -k(k-1)/2 -4km} (\hbar-\hbar^{-1})^k K^k\otimes K^{-k}}{[k]_{\hbar}! \prod\limits_{i=1}^{k} (1-z^{-1} q^w K\otimes K^{-1} \hbar^{2i -4k -4m})}=0, \ \ \ n=1,2,\dots
$$
The coefficients $a_m$ are determined uniquely from this system. For instance, for $n=1$ we obtain 
$$
a_{1}=\dfrac{\hbar^{-2} (\hbar-\hbar^{-1}) K\otimes K^{-1} }{1-z^{-1} q^w K\otimes K^{-1} \hbar^{-2}}
$$
For $n=2$ we have
$$
\begin{aligned}
a_2-a_1 \dfrac{\hbar^{-6}(\hbar-\hbar^{-1}) K\otimes K^{-1} }{(1-z^{-1} q^w K\otimes K^{-1} \hbar^{-6})}+\\
+\dfrac{\hbar^{-9} (\hbar-\hbar^{-1})^2 K^2\otimes K^{-2}}{(\hbar+\hbar^{-1})(1-z^{-1} q^w K\otimes K^{-1}\hbar^{-6})(1-z^{-1} q^w K\otimes K^{-1}\hbar^{-4})}=0
\end{aligned}
$$
which gives
$$
a_2=\dfrac{\hbar^{-7} (\hbar-\hbar^{-1})^2 K^{2} \otimes K^{-2}}{[2]_\hbar! (1-z^{-1} q^w K\otimes K^{-1} \hbar^{-2}) (1-z^{-1} q^w K\otimes K^{-1} \hbar^{-4})}
$$
In general
$$
a_k=\dfrac{\hbar^{-\frac{k(3k+1)}{2}} (\hbar-\hbar^{-1})^k K^{k} \otimes K^{-k}}{[k]_{\hbar}! \prod\limits_{i=1}^{k} (1-z^{-1} q^w K\otimes K^{-1} \hbar^{-2 i})}.
$$
which can be proved by induction on $k$. 
}
Finally, we obtain:
$$
\textbf{m}\Big( 1\otimes S_w (\textbf{J}_w^{-}(z)^{-1})  \Big)=\sum\limits_{k=0}^{\infty}\, \dfrac{(-1)^k (\hbar-\hbar^{-1})^k \hbar^{-k(k+3)/2} }{[k]_\hbar! \prod\limits_{i=1}^{k} (1-z^{-1} q^w  K^2 \hbar^{-2 i} )} K^k E_w^k F_w^k.
$$
\subsubsection{ \label{bslsec}}
To compute the operator $\repM_{w}(z)$ we need to shift parameter $z$ by $\kappa$. By definition
$\kappa=(C \bv-\bw)/2$. Enough to compute the action of $\kappa$ in one evaluation module $\C^2(u)$ of $\sldh$.
This module corresponds to $\bw=1$. The Cartan matrix corresponding to our case is $C=2$. We therefore find:
$$
\kappa=\left\{\begin{array}{ll}
1/2 & \textrm{on}  \ \ \cM(1,1)\\
-1/2 & \textrm{on} \ \ \cM(0,1)
\end{array}\right. \ \ \Leftrightarrow
{\kappa = \left(\begin{array}{cc} 1/2 &0 \\ 0&-1/2 \end{array}\right)} = H/2
$$
Thus, we conclude that the shift $\lambda \rightarrow \lambda + \hat{\kappa}$ is given by\footnote{\rr The factor $2$ in $\hbar^{2\kappa}$ is from our conventions introduced at the beginning of Section \ref{subshsec}.}
{\rr 
$$
z\to z \hbar^{2 \kappa}=z \hbar^{H}=z K^{1}
$$
}
Thus, from the definition (\ref{bdef}) we obtain:
$$
\repM_{w}(z)=\sum\limits_{k=0}^{\infty}\, \dfrac{(-1)^k (\hbar-\hbar^{-1})^k \hbar^{-k(k+3)/2} }{[k]_\hbar! \prod\limits_{i=1}^{k} (1-z^{-1} q^w  K \hbar^{-2 i} )} K^k E_w^k F_w^k
$$
\subsubsection{}
The alcove specified by  Theorem \ref{mainth} corresponds to the interval $\nabla =(-1,0)$.
Let $s \in \nabla$ and $\cL ={\cal{O}}(1)$.
There is only one wall $w=-1$ between $s$ and $s-1$. Thus, the definition (\ref{dMdef})
and  Theorem \ref{mainth} give the following explicit formula for the quantum difference operator:
\be
\label{anssl2}
\begin{array}{|c|}
	\hline\\
	\\
	\ \ \textbf{M}_{{\cal{O}}(1)}(z)={\rr  Const}\,{\cal{O}}(1)\sum\limits_{k=0}^{\infty}\, \dfrac{(-1)^k (\hbar-\hbar^{-1})^k \hbar^{-k(k+3)/2} }{[k]_\hbar! \prod\limits_{i=1}^{k} (1-z^{-1} q^{-1} K \hbar^{-2 i} )} K^k E_{-1}^k F_{-1}^k \ \ \\
	\\
	\hline
\end{array}
\ee
{\rr We expect that the constant factor in Theorem \ref{mainth} is $Const=1$ for the case $k\leq n/2$ and non-trivial for $k>n/2$.\footnote{This expectation is in agreement with explicit computations of capped vertex functions \cite{pcmilect} for the first values of $k$ and $n$.} In the rest of this section we assume that $Const=1$ for simplicity.}
\subsubsection{}
Using (\ref{wallsh}) we can also rewrite this operator as:

\be \label{moper2}
\textbf{M}_{{\cal{O}}(1)}(z)=\Big(\sum\limits_{k=0}^{\infty}\, \dfrac{(-1)^k (\hbar-\hbar^{-1})^k \hbar^{-k(k+3)/2} }{[k]_\hbar! \prod\limits_{i=1}^{k} (1-z^{-1} q^{-1} K \hbar^{-2 i} )} K^k E^k F^k\Big)  {\cal{O}}(1).\ \ 
\ee
This form is particularly convenient for explicit computations as it expresses the difference operator
through the standard $\Uq(\frak{sl}_2)$.
\subsubsection{}
{\rr An important feature of quasimap quantum K-theory of Nakajima varieties is the degeneration formula, see Section 6.5 in \cite{pcmilect}. This formula relates the count of quasimaps from a curve $C$ and from its nodal degeneration $C \to C_1 \cup _p C_2$. The main element of the degeneration formula is the ``glue operator" $\textbf{G}$ defined by (6.5.20) in \cite{pcmilect}.  We have the following result:
\begin{Theorem}[Corollary 8.1.19, \cite{pcmilect}]
\be \label{glumatKth}
\begin{array}{ll}
\lim\limits_{q\to 0} \geomM_{\cL}(z)=\cL \\
\\
\lim\limits_{q\to\infty}\, \geomM_{\cL}(z q^{-1}) \cL^{-1}=\mathbf{G}
\end{array}
\ee
\end{Theorem}
It is elementary to check that the first limit in this Theorem is in agreement with our formula (\ref{moper2}). From the second limit we obtain a formula for the glue operator in terms of representation theory:
$$
\textbf{G}=\sum\limits_{k=0}^{\infty}\, \dfrac{(-1)^k (\hbar-\hbar^{-1})^k \hbar^{-k(k+3)/2} }{[k]_\hbar! \prod\limits_{i=1}^{k} (1-z^{-1} K \hbar^{-2 i} )}K^k E^k F^k.
$$
}

\subsubsection{}
{\rr Let us compute the matrices of the operator $\textbf{M}_{{\cal{O}}(1)}(z/q)$ for the first few cases. Let $e_1$ and $e_2$ be the standard basis of $\mathbb{C}^2$ with standard action of $\sldh$:
$$
E  e_1 =0,  \ \ E e_2= e_1, \ \ F e_1=e_2, \ \ F e_2 =0, \ \ K e_1 = \hbar e_1, \ \ K e_2= \hbar^{-1} e_2
$$
The K-theory of $T^{*} \mathbb{P}^1$ corresponds to the $0$-weight subspace of $\mathbb{C}^2(u_1) \otimes \mathbb{C}^2(u_2)$. We use the stable map to identify the basis  $e_1\otimes e_2$ and $e_2\otimes e_1$ in this space with the basis of stable envelopes for an  anti-canonical slope $s\in (-1,0)$ which we computed in Section \ref{stabbassec} . As $E^2=F^2=0$ and $\Delta(K)=K\otimes K =1$ we have:
$$
\repM_{0}(z)=1-\frac{(\hbar-\hbar^{-1}) \hbar^{-2}}{ 1-z^{-1} \hbar^{-2}} \Delta(E)  \Delta(F)
$$
where $\Delta$ is coproduct (\ref{sl2copr}). In the basis $e_1\otimes e_2,e_2\otimes e_1$  we compute
$$
\Delta(E)  \Delta(F)= \left[\begin{array}{ll}
\hbar^{-1}&1\\
1 & \hbar
\end{array}\right]
$$
Thus, in the stable basis we have:
$$
\repM_{0}(z)_{stab}=\left[ \begin {array}{cc} {\dfrac {{\hbar}^{4}z-z{\hbar}^{2}-{\hbar}^{2}+z}{{\hbar}^{2
		} \left( z{\hbar}^{2}-1 \right) }}&{\dfrac { \left( \hbar-\hbar^{-1} \right) z}{ 1-z{\hbar}^{2} }}\\ \noalign{\medskip}{\dfrac { \left( \hbar-\hbar^{-1} \right) z}{ 1-z{\hbar}^{2} }}&{\dfrac {1-z}{1-z{\hbar}^{2}}}\end {array} \right] 
$$
Next, the matrix of the operator of multiplication by ${{\cal{O}}(1)}$ in the basis of fixed points equals:
$$
{{\cal{O}}(1)}_{fp}=\left[\begin{array}{cc}
u_1 & 0 \\
0 & u_2
\end{array}\right]
$$
To compute the action of this operator in the stable basis we use explicit formulas from Section \ref{stabbassec}. The transition matrix between the basis of fixed points and the stable basis for $s\in (-1,0)$ is computed by:\footnote{{\rr Note that we need to substitute $\hbar\to \hbar^{2}$ in the geometric formulas to relate them to the action of $\sldh$, as we explain in Section \ref{subshsec}. }}
\be \label{Tmat}
T_{i,j} = \dfrac{\left.\textrm{Stab}_{\fC,T^{1/2}, s}(p_j)\right|_{p_i}}{ \Lambda^{\bullet}(T_{p_i} X^{\vee})} = \left[ \begin {array}{cc} {\dfrac {u_{{1}}}{-u_{{2}}+u_{{1}}}}&0
\\ \noalign{\medskip}-{\dfrac { \left( \hbar-1 \right)  \left( \hbar+1 \right) 
		u_{{2}}u_{{1}}}{ \left( -u_{{2}}+u_{{1}} \right)  \left( u_{{2}}{\hbar}^{2
		}-u_{{1}} \right) }}&{\dfrac {u_{{2}}\hbar}{u_{{2}}{\hbar}^{2}-u_{{1}}}}
\end {array} \right]
\ee
Thus, the action of ${{\cal{O}}}(1)$ in the stable basis is given by
$$
{{\cal{O}}}(1)_{stab}=T^{-1}\, {{\cal{O}}(1)}_{fp}\, T =  \left[ \begin {array}{cc} u_{{1}}&0\\ \noalign{\medskip}{\dfrac {
		\left( \hbar-1 \right)  \left( \hbar+1 \right) u_{{1}}}{\hbar}}&u_{{2}}
\end {array} \right]
$$
Finally, we compute:
$$
\textbf{M}_{{\cal{O}}(1)}(z/q)_{stab}=\repM_{0}(z)_{stab} {{\cal{O}}}(1)_{stab}= \left[ \begin {array}{cc} {\dfrac {u_{{1}} \left( z-1 \right) }{z{h}^{
			2}-1}}&{\dfrac { \left( \hbar-\hbar^{-1} \right)  zu_{{2}}}{
		\left(1- z{\hbar}^{2} \right) }}\\ \noalign{\medskip} {\dfrac { \left( \hbar-\hbar^{-1}
		\right)  u_{{1}}}{ \left( 1-z{\hbar}^{2} \right) }}&{
	\dfrac { \left( z-1 \right) u_{{2}}}{z{\hbar}^{2}-1}}\end {array} \right] 
$$

For $T^{*}\mathbb{P}^2$ the computation is the same: we consider the subspace spanned by 
$e_1 \otimes e_2 \otimes e_2, e_2\otimes e_1 \otimes e_2, e_2 \otimes e_2 \otimes e_1$ in  $\mathbb{C}^2(u_1) \otimes \mathbb{C}^2(u_2) \otimes \mathbb{C}^2(u_3)$. The element $K$ acts on this subspace via 
$\Delta(K)=K\otimes K\otimes K$, i.e.,  as multiplication by $\hbar^{-1}$. As $F^2=0$ we find: 
$$
\repM_{0}(z)=1-\frac{(\hbar-\hbar^{-1}) \hbar^{-3}}{ 1-z^{-1} \hbar^{-3}} \Delta^{2}(E) \Delta^{2}(F)
$$
The computation gives: 
$$
\Delta^{2}(E) \Delta^{2}(F)=\left[\begin{array}{lll}
\hbar^{-2} & \hbar^{-1} &1 \\
\hbar^{-1} & 1 & \hbar \\
1 & \hbar & \hbar^{2}
\end{array}\right]
$$
Next, computing the stable envelopes for $T^{*}\mathbb{P}^2$ as in Section  \ref{stabbassec} for $s\in (-1,0)$ would give
$$
\begin{array}{l}
\textrm{Stab}_{\fC,T^{1/2}, s}(p_1)=(1-{\cal{O}}(1) \hbar^{2}/u_2)(1-{\cal{O}}(1) \hbar^{2}/u_3)\\
\textrm{Stab}_{\fC,T^{1/2}, s}(p_2)=\hbar (1-{\cal{O}}(1) /u_1)(1-{\cal{O}}(1) \hbar^{2}/u_3)\\
\textrm{Stab}_{\fC,T^{1/2}, s}(p_3)= \hbar^{2} (1-{\cal{O}}(1) /u_1)(1-{\cal{O}}(1)/u_2)
\end{array}
$$
Using these formulas we find:
$$
{{\cal{O}}}(1)_{stab}=T^{-1}\, {{\cal{O}}(1)}_{fp}\, T = \left[ \begin {array}{ccc} u_{{1}}&0&0\\ \noalign{\medskip}{\dfrac {
		\left( \hbar^2-1 \right) u_{{1}}}{\hbar}}&u_{{2}}&0
\\ \noalign{\medskip} \left( \hbar^2-1 \right)   u_{{1}}&{
	\dfrac {u_{{2}} \left( \hbar^2-1 \right)  }{\hbar}}&u_{{3}}
\end {array} \right]
$$
where 
$$
{{\cal{O}}(1)}_{fp}=\left[\begin{array}{lll}
u_1&0&0\\
0&u_2&0\\
0&0&u_3
\end{array}\right]
$$
and $T$ is the transition matrix computed as in (\ref{Tmat}). Combining all this together for $\textbf{M}_{{\cal{O}}(1)}(z/q)_{stab}=\repM_{0}(z)_{stab} {{\cal{O}}}(1)_{stab}$ we find:
$$
\textbf{M}_{{\cal{O}}(1)}(z/q)_{stab}= \left[ \begin {array}{ccc} {\dfrac { \left( 1-\hbar z \right) u_{{1}} }{1-z{\hbar}
		^{3}}}&{\dfrac { \left( \hbar-\hbar^{-1} \right) \hbar zu_{{2}}}{1-z{
			\hbar}^{3}}}&{\dfrac { \left( \hbar-\hbar^{-1} \right)  zu_{{3}}}{
		 1-z{\hbar}^{3} }}\\ \noalign{\medskip}{\dfrac {
		\left( \hbar-\hbar^{-1} \right) u_{{1}} }{ \left( 1-z{\hbar}^{3} \right) 
}}&{\dfrac { \left( 1-\hbar z \right) u_{{2}} }{1-z{\hbar}^{3}}}&
{\dfrac {\left( \hbar-\hbar^{-1} \right) \hbar zu_{{3}}}{1-z{\hbar}^{3}}}
\\ \noalign{\medskip}{\dfrac { \left( \hbar-\hbar^{-1} \right) \hbar 
		u_{{1}}}{1-z{\hbar}^{3}}}&{\dfrac { \left( \hbar-\hbar^{-1} \right) u_{{2}}  }{1- z{\hbar}^{3} }}&{\dfrac { \left( 1-\hbar z
		\right) u_{{3}}}{1-z {\hbar}^{3}}}\end {array} \right].
$$}

\section{Instanton moduli spaces  \label{apa}}

In this section we consider the example of Jordan quiver: the quiver consisting of one vertex and a single loop.
The dimension vectors are given by two non-negative integer numbers $\bv=m$, $\bw=r$. The corresponding variety
$\cM(m,r)$ is the moduli space of framed rank $r$ torsion-free sheaves ${\cal{F}}$ on $\mathbb{P}^2$ with fixed second Chern class $c_{2}({\cal{F}})=m$. A framing of a sheaf ${\cal{F}}$ is a choice of an isomorphism:
\be
\label{fram}
\phi: \left.{\cal{F}}\right|_{L_{\infty}} \to {\mathscr{O}}^{\oplus r }_{L_{\infty}}
\ee
where $L_{\infty}$ is the line at infinity of $\C^{2} \subset \mathbb{P}^2$. This moduli space is usually referred to as instanton moduli space.

Let $\bA \simeq (\C^{\times})^r$ be the framing torus acting on $\cM(m,r)$ by changing the isomorphism (\ref{fram}).
This torus acts on the instanton moduli space preserving the symplectic form.

Let us denote by $\KG=\bA\times (\C^{\times})^2$ where the second factor acts on $\C^2 \subset \mathbb{P}^2 $ by scaling the coordinates. This induces an action of $\KG$ on~$\cM(m,r)$. The action of this torus scales the symplectic form with a character which we denote by $\hbar$.

We denote the equivariant parameters corresponding to $\bA$ by $u_1,\cdots,u_r$, and to torus $\KG/\bA$ by $t_1, t_2$ such that the weight of the symplectic form is:
$$
\hbar = t_1 t_2
$$
\subsection{Algebra $\Uq(\fgh_{Q})$ and wall subalgebras $\Uq(\fg_w)$}

\subsubsection{}
In the special case $r=1$ the instanton moduli space is isomorphic to the Hilbert scheme of $m$ points on the complex plane $\cM(m,1)=\textrm{Hilb}^{m}(\C^2)$.  As a vector space, the $K$-theory of Hilbert schemes can be identified with polynomials in an infinite number of variables.
\be\label{fc}
\bigoplus\limits_{m=0}^{\infty}\, K_{\KG}(\textrm{Hilb}^{m}(\C^2)  )= \textsf{F}(u_1) \stackrel{def}{=} \Q[p_1,p_2,\cdots] \otimes \Q[u_1^{\pm 1}, t_1^{\pm 1},t_2^{\pm 1}]
\ee
If we introduce a grading in the polynomial ring $\Q[p_1,p_2,\cdots]$ by $\deg(p_k)=k$. Then the $m$-th term
on the left  side of (\ref{fc}) corresponds to degree $m$.
\subsubsection{}
The fixed point set $\textrm{Hilb}^{m}(\C^2)^{\KG}$ is discrete.
Its elements are labeled by partitions $\nu$ with  $|\nu|=m$. The structure sheaves of the fixed points ${\cal{O}}_{\nu}$ form a basis of the localized $K$-theory. The polynomials representing the elements of this basis
under isomorphism (\ref{fc}) are the Macdonald polynomials $P_{\nu}$ in Haiman normalization \cite{Haim}.
To fix the norms we write the first several Macdonald polynomials here:
\begin{align*}
P_{[1]}=p_1,  \ \ P_{[2]}=\frac{1+ t_1}{2}\, p_1^2 +\frac{1-t_1}{2}\, p_2, \ \
P_{[1,1]}=\frac{1+ t_2}{2} \,p_1^2 +\frac{1-t_2}{2} \,p_2\\
\\
P_{[3]}=\frac{(1+t_1)(1+t_1+t_1^2)}{6} \, p_1^3 +\frac{(1-t_1)(1+t_1+t_1^2)}{2} \, p_1 p_2 +\frac{(1+t_1)(1-t_1)^2}{3} \, p_3\\
\\
P_{[1,1,1]}=\frac{(1+t_2)(1+t_2+t_2^2)}{6} \, p_1^3 +\frac{(1-t_2)(1+t_2+t_2^2)}{2} \, p_1 p_2 +\frac{(1+t_2)(1-t_2)^2}{3}\,  p_3\\
\\
P_{[2,1]}=\frac{1+ t_1 t_2 +2 t_1 +2 t_2}{6} \, p_1^3 +\frac{1-t_1 t_2}{ 2}\,  p_2 p_1 +\frac{(1-t_1)(1-t_2)}{3}\,  p_3
\end{align*}

\subsubsection{}
Assume, that the torus $\bA$ splits the framing by $\bw=u_1+\cdots+u_r$ then in the notations of Section \ref{rmatdef}
we obtain:
\be
\label{Kfock}
\bigoplus\limits_{n=0}^{\infty}\, K_{\KG}( \cM(n,r)  ) =  \textsf{F}(u_1)\otimes \cdots \otimes  \textsf{F}(u_r)
\ee
\subsubsection{}
Let us set $\hZ =\matZ^2$, $\hZ^{*}=\hZ\setminus \{(0,0)\}$ and:
$$
\hZ^{+}=\{ (i,j)\in \hZ; i>0 \ \ \textrm{or} \ \ i=0,\ \ j>0 \}, \ \ \hZ^{-}=-\hZ^{+}
$$
Set
$$n_{k}=\dfrac{(t_1^{\frac{k}{2}}-t_1^{-\frac{k}{2}})(t_2^{\frac{k}{2}}-t_2^{-\frac{k}{2}})
	(\hbar^{-\frac{k}{2}}-\hbar^{\frac{k}{2}})}{k}$$
and for vector $\textbf{a}=(a_1,a_2) \in \hZ$ denote by $\textrm{deg}(\textbf{a})$ the greatest common divisor of $a_1$ and $a_2$, {\rr in particular $\textrm{deg}((m,0))=\textrm{deg}((0,m))=m$}. We set $\epsilon_\textbf{a}=\pm 1$ for $\textbf{a}\in \hZ^{\pm}$. For a pair of non-collinear vectors we set $\epsilon_{\textbf{a},\textbf{b}}=\textrm{sign}(\det(\textbf{a},\textbf{b}))$.

\noindent

The ``toroidal'' algebra $\gt$ is an associative algebra with $1$ generated by elements $e_{\textbf{a}}$ and $K_\textbf{a}$ with $\textbf{a} \in \hZ^{\rr  \ast}$, subject to the following relations \cite{SchifVas}:
\begin{itemize}
	\item  elements $K_\textbf{a}$ are central and
	$$K_0=1, \ \ \ K_\textbf{a} K_\textbf{b}=K_{\textbf{a}+\textbf{b}}$$
	\item if  $\textbf{a}$, $\textbf{b}$ are two collinear vectors then:
	\be
	\label{colin}
	[e_\textbf{a},e_\textbf{b}]=\delta_{\textbf{a}+\textbf{b}} \dfrac{K^{-1}_\textbf{a}-K_\textbf{a}}{n_{\textrm{deg}(\textbf{a})}}
	\ee
	\item if $\textbf{a}$ and $\textbf{b}$ are such that $\textrm{deg}(\textbf{a})=1$ and the triangle $\{(0,0), \textbf{a},\textbf{a}+ \textbf{b}\}$
	has no interior lattice points then
	$$
	[e_\textbf{a},e_\textbf{b}]=\epsilon_{\textbf{b},\textbf{a}} K_{\alpha(\textbf{b},\textbf{a})} \, \dfrac{\Psi_{\textbf{a}+\textbf{b}}}{n_1}
	$$
	where
	$$
	\alpha(\textbf{a},\textbf{b})=\left\{\begin{array}{ll}
	\epsilon_{\textbf{a}}(\epsilon_{\textbf{a}} \textbf{a} +\epsilon_\textbf{b} \textbf{b} -\epsilon_{\textbf{a}+\textbf{b}}(\textbf{a}+\textbf{b})  )/2 & \textrm{if} \ \ \epsilon_{\textbf{a},\textbf{b}}=1\\
	\epsilon_{\textbf{b}}(\epsilon_{\textbf{a}} \textbf{a} +\epsilon_\textbf{b} \textbf{b} -\epsilon_{\textbf{a}+\textbf{b}}(\textbf{a}+\textbf{b})  )/2 & \textrm{if} \ \ \epsilon_{\textbf{a},\textbf{b}}=-1
	\end{array}\right.
	$$
	and elements $\Psi_\textbf{a}$ are defined by:
	$$
	\sum\limits_{k=0}^{\infty} \, \Psi_{k \textbf{a}} z^k = \exp\Big( \sum\limits_{i=1}^{\infty}\, n_i \, e_{i\, \textbf{a}}\, z^i \Big)
	$$
	for $\textbf{a}\in \hZ$ such that $\textrm{deg}(\textbf{a})=1$.
\end{itemize}

\subsubsection{}
For $w\in \Q \cup \{\infty\}$ we denote by $d(w)$ and $n(w)$ the denominator and numerator of $w$.
We set $d(\infty)=0$ and $n(\infty)=1$. From (\ref{colin}) we see that
$$
\alpha^{w}_{k}= e_{(d(w)k, n(w)k)}, \ \ k \in \Z {\rr \setminus \{0 \}}
$$
generate a Heisenberg subalgebra of $H_w\subset \gt$ with the following relations:
$$
[\alpha^{w}_{-k},\alpha^{w}_{k}]=\dfrac{ K_{(1,0)}^{k d(w) } -K_{(1,0)}^{-k d(w) } }{n_k}
$$
It is convenient to visualize the algebra $\gt$ as in the Figure \ref{pic2}. The Heisenberg subalgebras of $\gt$
are labeled by $w \in \mathbb{Q}$ and correspond to lines with slope $w$ in this picture.
\hspace{2mm}
\begin{figure}[ht]
	\begin{center}

		\begin{center}
		\includegraphics[scale=0.4]{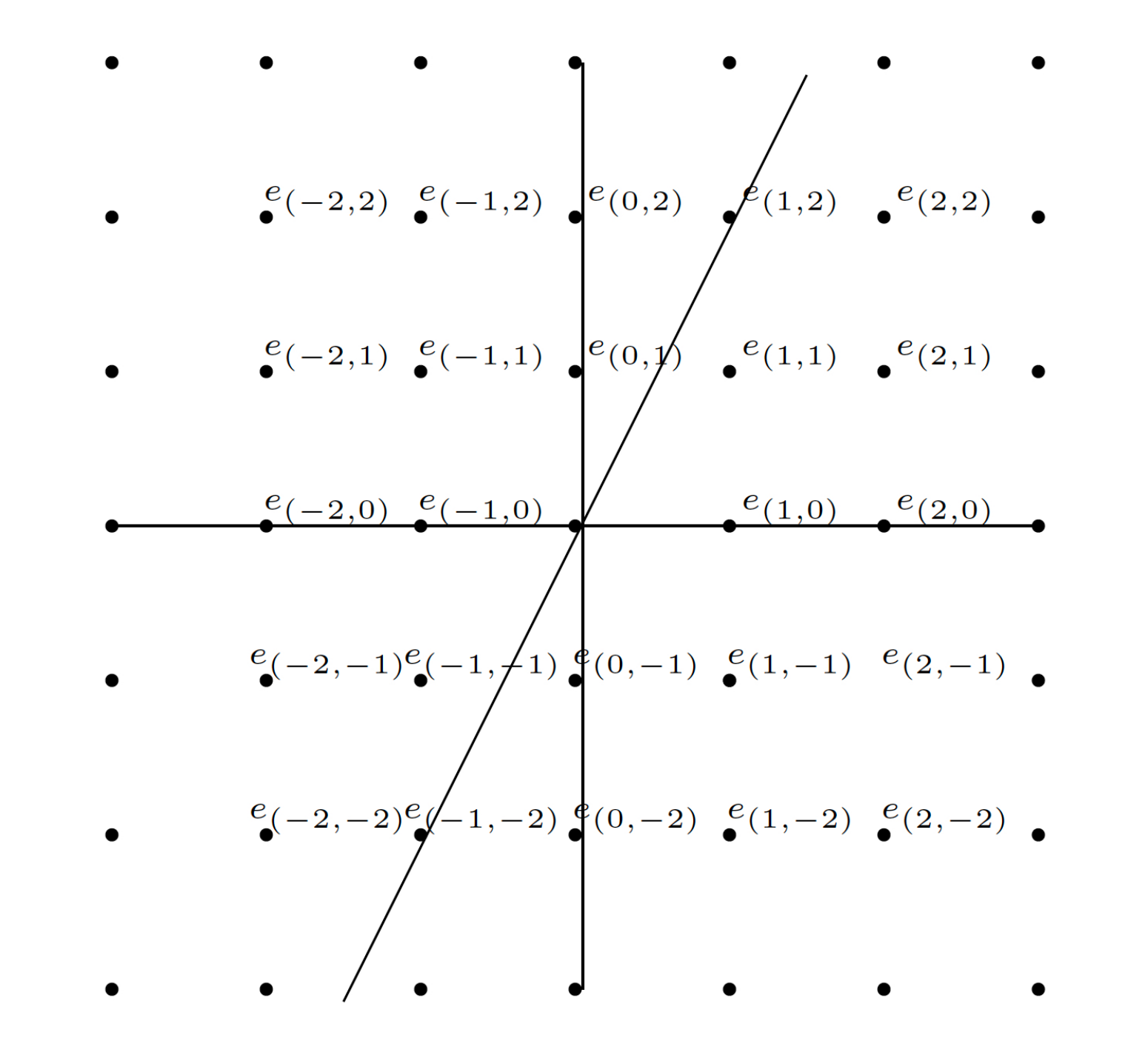}	
		\end{center}

		\caption{The line with slope $2$ corresponds to the Heisenberg subalgebra
			generated by $e_{ k,2 k}$ for $k\in \Z \setminus \{0\}$.  \label{pic2}}
	\end{center}
\end{figure}

\subsubsection{}
The action of $\gt$ on the $K$-theory (\ref{fc}) was constructed
in \cite{SchifVas}. The central elements  act in this representation by:
\be
K_{(1,0)}=t_1^{-\frac{1}{2}} t_2^{-\frac{1}{2}}, \ \ \ K_{(0,1)}=1
\ee
In particular, the ``vertical'' generators commute in this representation:
$$
[ e_{(0,m)}, e_{(0,n)}]=0
$$
The ``horizontal'' Heisenberg subalgebra:
$$
[ e_{(m,0)}, e_{(n,0)}]=  \dfrac{-m}{(t_1^{m/2} - t_1^{-m/2} )(t_2^{m/2} - t_2^{-m/2} ) } \, \delta_{n+m}
$$
acts explicitly as follows:
\be
\label{f1}
e_{(m,0)}=\left\{\begin{array}{rl}
	\dfrac{1}{(t_1^{m/2}-t_1^{-m/2})(t_2^{m/2}-t_2^{-m/2})}\,  p_{-m} & m<0 \\
	\\
	-m \dfrac{\partial}{\partial p_m} &   m>0
\end{array}\right.
\ee
The action of vertical subalgebra is diagonal in Macdonald polynomials:
\be
\label{f2}
e_{(0,l)} (P_\lambda) =  u_1^{-l} \textrm{sign}(l) \left( \dfrac{1}{ 1-t_1^{-l}} \sum\limits_{i=1}^{\infty} \, t_1^{-l \lambda_i} t_2^{-l(i-1)}   \right) P_\lambda
\ee
The infinite sum here should be understood as the series expansion of a rational function:
$$
\sum\limits_{i=1}^{\infty} \, t_1^{-l \lambda_i} t_2^{-l(i-1)}=\sum\limits_{i=1}^{length(\lambda)} \, t_1^{-l \lambda_i} t_2^{-l(i-1)}+\dfrac{t_2^{- length(\lambda) l }}{1-t_2^{-l}}.
$$
It is clear that $e_{(0,l)}$ and $e_{(l,0)}$ generate the whole $\gt$. Thus, the last two formulas
determine the action of $\gt$ on the Fock space.

\subsubsection{}
{\rr It is expected that the geometric algebra $ \Uq(\fgh_{Q})$ is isomorphic to $\gt$, see \cite{Neg} for discussion. Among other things, this isomorphism implies that the $R$-matrix of $ \Uq(\fgh_{Q})$ evaluated in the tensor product of the Fock modules coincides with the geometric $R$-matrix for the instanton moduli spaces. In particular, comparing the ``universal formula''  for the $R$-matrix of $\gt$ obtained in \cite{NegRmatrix} with the KT-factorization (\ref{Rfac}), we find that the wall $R$-matrices of $ \Uq(\fgh_{Q})$ coincide with the $R$-matrices of the slope Heisenberg algebras $H_w$ (to see that it  is enough to compare the limits (\ref{limRU2}) of the $R$-matrices). This way, this leads to an isomorphism of the wall subalgebras $\Uq(\fg_w) \subset  \Uq(\fgh_{Q})$ and Heisenberg subalgebras $H_w \subset \gt$.


In the remaining part of this section we derive formulas for the quantum difference equation for the instanton moduli spaces assuming the above isomorphism exists.
}

\subsection{R-matrices}
\subsubsection{\label{heisenbdef}}
Recall that the quantum Heisenberg algebra is an algebra generated by elements $e$, $f$ and a central element $K$ modulo the following relations:
\be
\label{hei}
[e,e]=[f,f]=0, \ \  [e,f]=\dfrac{K-K^{-1}}{c-c^{-1}}
\ee
The Fock space  $\textsf{F}=\Q[x]\otimes \Q[c^{\pm 1}]$ is a natural module over the Heisenberg algebra
with the following action:
$$
e(p) = x p, \ \ f (p) = -\dfrac{d p}{d x }, \ \  K (p) = c p
$$
so that $c$ is a formal parameter fixing the value of central element $K$ in $\textsf{F}$. The Heisenberg algebra is a Hopf algebra with the following coproduct:
$$
\begin{array}{l}
\Delta(e)=e \otimes 1 + K^{-1} \otimes e \\
\\
\Delta(f)=f \otimes K + 1 \otimes f\\
\\
\Delta(K)=K\otimes K
\end{array}
$$
antipode:
$$
S(e)=-K e, \ \  S(f)= -K^{-1} f, \ \ S(K)=K^{-1}
$$
and counit:
$$
\varepsilon(e)=\varepsilon(f)=0, \ \ \varepsilon(K)=1
$$
We consider the tensor product $\textsf{F}\otimes \textsf{F} =\Q[x,y]\otimes \Q[c^{\pm 1}]$, and define codimension function by $c^\Omega (x^i y^j) =c^{i+j} x^i y^j $.
We consider the following upper and lower triangular R-matrices.
$$
\Rwal^{+}=c^{-\Omega} \exp(-(c-c^{-1})\,f \otimes e), \ \ \ \Rwal^{-}=c^{-\Omega} \exp(-(c-c^{-1})\,e \otimes f)
$$
\begin{Proposition}
	The R-matrices satisfy the QYBE in $\textsf{F}^{\otimes 3}$:
	$$
	\Rwal^{\pm}_{23}\Rwal^{\pm}_{13}\Rwal^{\pm}_{12} =  \Rwal^{\pm}_{12}\Rwal^{\pm}_{13}\Rwal^{\pm}_{23}
	$$
	and have the following properties:
	$$
	\Rwal^{+} \Delta =\Delta_{21} \Rwal^{+}, \ \ \  \Rwal^{-} \Delta_{21} =\Delta \Rwal^{-}
	$$
	where $\Delta_{21}$ is the opposite coproduct, and
	$$
	\begin{array}{l}
	1 \otimes \Delta ( \Rwal^{+} )=  \Rwal^{+}_{13}\Rwal^{+}_{12}, \ \
	\Delta  \otimes 1 ( \Rwal^{+} )= \Rwal^{+}_{13}\Rwal^{+}_{23}\\
	\\
	1 \otimes \Delta ( \Rwal^{-} )=\Rwal^{-}_{12}\Rwal^{-}_{13}, \ \
	\Delta  \otimes 1 ( \Rwal^{-} )= \Rwal^{-}_{23}\Rwal^{-}_{13}
	\end{array}
	$$
\end{Proposition}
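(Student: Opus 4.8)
\emph{Overall strategy.} I would prove all the assertions for $\Rwal^{+}$ first and then deduce the $\Rwal^{-}$ statements by symmetry. The structural observation that makes this possible is that $\Rwal^{-}=(\Rwal^{+})_{21}$: since $c^{\Omega}$ is invariant under the flip of the two tensor factors and $(f\otimes e)_{21}=e\otimes f$, one has $(\Rwal^{+})_{21}=c^{-\Omega}\exp(-(c-c^{-1})\,e\otimes f)=\Rwal^{-}$, in agreement with Proposition \ref{rtransp}. Conjugating any identity for $\Rwal^{+}$ by the permutation of the factors then turns it into the corresponding identity for $\Rwal^{-}$: the intertwiner relation $\Rwal^{+}\Delta=\Delta_{21}\Rwal^{+}$ becomes $\Rwal^{-}\Delta_{21}=\Delta\Rwal^{-}$, the two comultiplicativity (hexagon) identities for $\Rwal^{+}$ become those for $\Rwal^{-}$, and QYBE for $\Rwal^{+}$ becomes QYBE for $\Rwal^{-}$ after reversing the order of the three factors. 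Moreover, once the two hexagon identities and the intertwiner relation are in hand, QYBE is automatic by Drinfeld's standard argument: $\Rwal^{+}_{12}\Rwal^{+}_{13}\Rwal^{+}_{23}=\Rwal^{+}_{12}\,(\Delta\otimes1)(\Rwal^{+})=(\Delta_{21}\otimes1)(\Rwal^{+})\,\Rwal^{+}_{12}=\Rwal^{+}_{23}\Rwal^{+}_{13}\Rwal^{+}_{12}$, where the middle equality applies $\Rwal^{+}_{12}\Delta(x)=\Delta_{21}(x)\Rwal^{+}_{12}$ to each leg of the inner $\Rwal^{+}$ and the last uses the flip of the first hexagon. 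Thus the entire Proposition reduces to the intertwiner relation and the two hexagon identities for $\Rwal^{+}$.

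\emph{Computational tools and the intertwiner relation.} Two elementary inputs drive everything. First, $c^{\Omega}$ is the grading operator, so conjugation by it rescales the Weyl pair: $c^{-\Omega}(e\otimes1)c^{\Omega}=c^{-1}(e\otimes1)$, and likewise $c^{-1}$ on $1\otimes e$, while the $f$'s are rescaled by $c$; and $c^{\Omega}$ commutes with $f\otimes e$, which has total degree $0$. Second, the only nontrivial relation $[e,f]=(K-K^{-1})/(c-c^{-1})$ is central, so $\exp(-(c-c^{-1})f\otimes e)$ commutes past $e$ and $f$ cheaply: $[\,-(c-c^{-1})f\otimes e,\ e\otimes1\,]=(c-c^{-1})\,1\otimes e$ is itself central, whence $\exp(-(c-c^{-1})f\otimes e)\,(e\otimes1)\,\exp((c-c^{-1})f\otimes e)=e\otimes1+(c-c^{-1})\,1\otimes e$, while $1\otimes e$ and $f\otimes1$ are fixed and $1\otimes f$ is shifted symmetrically. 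Combining the rescaling with this shift, a one-line check on each generator verifies $\Rwal^{+}\Delta(x)=\Delta_{21}(x)\Rwal^{+}$. For $x=K$ it is immediate, $K$ being central and grouplike; for $x=e$ the coefficient of $1\otimes e$ matches because the shift $+(c-c^{-1})$ produced by the exponential combines with the factor $K^{-1}=c^{-1}$ appearing in $\Delta(e)=e\otimes1+K^{-1}\otimes e$ to give $(c-c^{-1})+c^{-1}=c$, exactly reproducing $\Delta_{21}(e)$, and symmetrically for $f$ using $\Delta(f)=f\otimes K+1\otimes f$.

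\emph{The hexagon identities (the crux).} The efficient route is to recognize $\Rwal^{+}$ as the canonical element of a Hopf pairing. Let $H_{+}=\langle e,K^{\pm1}\rangle$ and $H_{-}=\langle f,K^{\pm1}\rangle$ be the Hopf subalgebras generated by the raising and lowering parts. The bilinear form fixed by $\langle e,f\rangle\propto 1/(c-c^{-1})$ together with the pairing of the grouplike/Cartan part encoded by $c^{-\Omega}$ is a Hopf pairing, i.e.\ multiplicative with respect to $\Delta$ in each slot. Writing $\Rwal^{+}=c^{-\Omega}\sum_{k}\frac{(-(c-c^{-1}))^{k}}{k!}f^{k}\otimes e^{k}$, this is precisely $\sum_i u_i\otimes u^i$ for dual bases $\{u_i\}\subset H_{-}$, $\{u^i\}\subset H_{+}$, and for such a canonical element the two hexagon identities $(\Delta\otimes1)\Rwal^{+}=\Rwal^{+}_{13}\Rwal^{+}_{23}$ and $(1\otimes\Delta)\Rwal^{+}=\Rwal^{+}_{13}\Rwal^{+}_{12}$ are exactly the multiplicativity axioms of the pairing. \emph{The main obstacle is the Cartan factor:} one must check that the grading operator $c^{-\Omega}$ is the correct pairing of the grouplike parts, so that it reproduces precisely the powers of $K$ generated when $\Delta(e)$ and $\Delta(f)$ are substituted and the resulting exponentials are normal-ordered via the grading conjugation above. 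Concretely, $(\Delta\otimes1)\exp(-(c-c^{-1})f\otimes e)$ factors as $\exp(-(c-c^{-1})\,f_{1}K_{2}e_{3})\exp(-(c-c^{-1})\,f_{2}e_{3})$ because the two summands of $\Delta(f)\otimes e$ commute, which is the product of the exponential parts of $\Rwal^{+}_{13}$ and $\Rwal^{+}_{23}$ up to the insertion of $K_{2}$; verifying that this and the remaining $K$-powers are exactly absorbed by $(\Delta\otimes1)c^{-\Omega}$ is the one genuine computation, the Heisenberg analogue of the $\mathfrak{sl}_2$ check carried out for \eqref{sl2w} in the previous section.

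\emph{Antipode consistency and closure.} Finally, the interplay of the flip $\Rwal^{-}=(\Rwal^{+})_{21}$ with the antipode $S(e)=-Ke,\ S(f)=-K^{-1}f,\ S(K)=K^{-1}$ gives the remaining $\Rwal^{-}$ hexagons $(1\otimes\Delta)\Rwal^{-}=\Rwal^{-}_{12}\Rwal^{-}_{13}$ and $(\Delta\otimes1)\Rwal^{-}=\Rwal^{-}_{23}\Rwal^{-}_{13}$, and the relation $\Rwal^{-}\Delta_{21}=\Delta\Rwal^{-}$, directly from the established $\Rwal^{+}$ identities; as a consistency check one also verifies $(S\otimes S)\Rwal^{+}=\Rwal^{+}$, which reflects the symmetry of $c^{\Omega}$ and of $f\otimes e$ under the antipode-induced involution. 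This closes the argument, with all four families of relations in the Proposition following from the intertwiner relation and the single Cartan-bookkeeping computation flagged above.
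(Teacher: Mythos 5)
The paper offers no proof of this Proposition at all: it is stated as a bare property of the explicit operators, the verification being left as a routine check in the quantum Heisenberg algebra. So there is no argument of the paper to compare yours against; judged on its own terms, your outline is correct and is the structured route one would want. The reductions are all valid: $\Rwal^{-}=(\Rwal^{+})_{21}$ because the flip fixes $c^{-\Omega}$ and sends $f\otimes e$ to $e\otimes f$; conjugation by permutations of the three tensor factors then transports the intertwiner relation, both hexagons, and QYBE from $\Rwal^{+}$ to $\Rwal^{-}$ (note the antipode you invoke in your closing paragraph is not actually needed for this step --- permutations alone suffice); Drinfeld's argument correctly reduces QYBE to the hexagons plus the intertwiner relation; and your generator check of $\Rwal^{+}\Delta=\Delta_{21}\Rwal^{+}$ is right, since $[-(c-c^{-1})f\otimes e,\,e\otimes 1]=(K-K^{-1})\otimes e$ is central and acts as $(c-c^{-1})\,1\otimes e$ on the Fock modules, where $K=c$.

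The one thing you must still nail down is exactly the point you flagged and deferred: the meaning of $(\Delta\otimes 1)(c^{-\Omega})$. Since $c^{-\Omega}$ is a grading operator rather than an element of the algebra's tensor square, this requires a definition, and the correct one is forced by viewing $\Omega$ as the pairing of the central charge of one factor with the degree operator of the other (both primitive under $\Delta$) --- equivalently by its geometric origin as the codimension function --- under which central charges add: on $\textsf{F}\otimes\textsf{F}$ the element $K$ acts by $c^{2}$, and
$$
(\Delta\otimes 1)\left(c^{-\Omega}\right)=c^{-\Omega_{13}}\,c^{-\Omega_{23}},
\qquad\text{i.e. multiplication by } c^{-(i+j+2k)} \text{ on } x^{i}y^{j}z^{k},
$$
\emph{not} multiplication by $c^{-(i+j+k)}$. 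With the naive total-degree reading the hexagon is actually false, off by $c^{-\deg_{3}}$; with the correct reading, the factor $K_{2}=c$ inserted by $\Delta(f)$ in $\exp(-(c-c^{-1})\,f_{1}K_{2}e_{3})$ is precisely the factor produced when $c^{-\Omega_{23}}$ is commuted past $\exp(-(c-c^{-1})\,f_{1}e_{3})$, and $(\Delta\otimes 1)(\Rwal^{+})=\Rwal^{+}_{13}\Rwal^{+}_{23}$ closes. So the deferred computation does succeed and your proof is sound, but as written the crux is asserted rather than carried out; adding the short verification above (and its mirror for $1\otimes\Delta$) completes the argument.
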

\subsubsection{}
The Picard group $\Pic(X)=\Z$ is generated by ${\cal{O}}(1)$. It acts on $H^{2}(X,\R)=\R$ by shifts.
The explicit computation of stable map for $\cM(m,r)$ {\rr \cite{SmirnovEllipticHilbert,DinkinsAffineQuivers}}   shows that $\Stab^{\slope}$ is a locally constant function which changes only at the walls:
$$
\textrm{walls} = \{ w=\frac{a}{b} \in \R \, :\, a \in \Z, \ \  \,b\in \{1,2,...,m\}  \}
$$
Therefore, the set of walls for $\cM(r) = \coprod\limits_{m=0}^{\infty}  \cM(m,r) $  is identified
with rational numbers $\Q \subset \R$.

\subsubsection{}
We conclude that the $R$-matrix $R^{+}_w$ for the wall $w\in \Q$ corresponding to the Heisenberg subalgebra $\Uq(\fg_w)$ takes the form:
\be \label{rplusbos}
R^{+}_{w}=\prod\limits_{k={\rr 1}}^{\infty} \,\exp(- n_k\, \alpha^{w}_{k} \otimes \alpha^{w}_{-k} ) = \exp\Big(- \sum\limits_{k={\rr 1}}^{\infty}\, n_k\, \alpha^{w}_{k} \otimes \alpha^{w}_{-k} \Big)
\ee
The lower triangular $R$-matrix is obtained by the transposition:
\be \label{rminbos}
R^{-}_{w}=\prod\limits_{k={\rr 1}}^{\infty} \,\exp( - n_k\, \alpha^{w}_{-k} \otimes \alpha^{w}_{k} ) = \exp\Big(- \sum\limits_{k={\rr 1}}^{\infty}\, n_k\, \alpha^{w}_{-k} \otimes \alpha^{w}_{k} \Big)
\ee
As the central element of the elliptic Hall algebra acts in the Fock space by $K_{(1,0)}=\hbar^{-1/2}$, the central parameter $c$ of the quantum Heisenberg algebra generated by $e=\alpha^{w}_{-k}$ and $f=\alpha^{w}_{k}$
is given by $c=\hbar^{-k d(w)/2}=( t_1 t_2)^{-k d(w)/2}$.

\subsubsection{}
Let us fix a slope $\slope\in H^{2}(X,\R)=\R$. The Khoroshkin-Tolstoy factorization (\ref{Rfac})
provides the following universal formula for the total $R$-matrix:
\be
\Rtot^{\slope}(u)=\prod\limits_{{w\in \Q}\atop {w<\slope}}^{\rightarrow} R^{-}_{w} \,R_{\infty}\, \prod\limits_{{w\in \Q}\atop {w>\slope}}^{\leftarrow} R^{+}_{w}
\ee
The infinite slope $R$ -matrix $R_{\infty}$ is the operator of multiplication by normal bundles (\ref{infr}). From explicit formula for action of $\alpha^{\infty}_k$ (\ref{f2}) we can obtain:
$$
R_{\infty}=\exp\Big( -\sum\limits_{k=1}^{\infty}\, n_k\, \alpha^{\infty}_{-k} \otimes \alpha^{\infty}_{k} \Big)
$$
This, together with formulas from the previous section give the following universal expression
for a slope $\slope$ $R$-matrix:
$$
\Rtot^{\slope}(u)=\prod\limits_{w \in \Q  \cup\{\infty\}}^{\leftarrow \slope} \, \exp\Big( -\sum\limits_{k=1}^{\infty}\, n_k\, \alpha^{w}_{-k} \otimes \alpha^{w}_{k} \Big).
$$
As we mentioned above, this universal factorization of toroidal $R$-matrix is expected to coincide with one obtained in \cite{NegRmatrix}. 
{\rr 
\begin{Remark}
The geometric $R$-matrices associated to a Nakajima variety with a quiver $Q$ can be expressed as infinite products of $R$-matrices associated with the universal cover quiver $\widehat{Q}$, see Section 4.3 in \cite{MO}. This leads to infinite product formulas for $R$-matrices different from the KT-factorization described above. For the Jordan quiver $Q$, the universal cover $\widehat{Q}$ is the $A_{\infty}$-type quiver, and thus the $R$-matrices for the instanton moduli factor to infinite products of the $\cU_{\hbar}(\widehat{\frak{gl}}_{\infty})$ $R$-matrices. In equivariant cohomology an example of such factorization is considered in \cite{InstR}. A similar formula holds in equivariant $K$-theory. 
\end{Remark}}

\subsection{The quantum difference operator $\geomM_{\cL}(z)$}
\subsubsection{}
{\rr In this section we derive the solution of the ABRR equation. We assume that $\bA$ splits the framing by $r=r_1 u_1+ r_2 u_2$ so that
$$
K_{\KG}(\cM(r)^{\bA})=\textsf{F}^{\otimes r_1}(u_1) \otimes \textsf{F}^{\otimes r_2}(u_2).
$$	
Let $F = \cM(m_1, r_1) \times \cM(m_2, r_2)$ be a component of $\cM(m, r)^{\bA}$. As $\dim \cM(m, r) =2 r m $  we obtain that the corresponding eigenvalue of $\Omega$ equals:
\be \label{omrr}
\Omega = \frac{\codim(F)}{4}=\frac{2 r m -2 r_1 m_1 -2 r_2 m_2  }{4}= \frac{m_1 r_2 + m_2 r_1}{2}.
\ee	
The ABRR equation (\ref{wallKZ}) for a wall $w\in \Q$ takes the form:
\be
\hbar^{\Omega} \label{abrrinst}
R^{-}_{w}  \hbar^{-\lambda}_{(1)} J^{-}_{w}(z) = J^{-}_w(z) \hbar^{\Omega} \hbar^{-\lambda}_{(1)}
\ee
We are looking for a strictly lower-triangular solution $J^{-}_{w}(z) \in \Uq(\fg_w)^{\otimes 2}$ which means that
$J^{-}_{w}(z)$ is of the form:
$$
J^{-}_{w}(z)=\exp\Big( \sum\limits_{k=1}^{\infty}\, J_k(z)\, \alpha^{w}_{-k} \otimes \alpha^{w}_{k} \Big)
$$
We have:
\be \label{abrrtrans}
R^{-}_{w}  \hbar^{-\lambda}_{(1)} J^{-}_{w}(z) \hbar^{\lambda}_{(1)} =\hbar^{-\Omega} J^{-}_w(z) \hbar^{\Omega} 
\ee
and
\be \label{conj1}
 \hbar^{-\lambda}_{(1)} J^{-}_{w}(z) \hbar^{\lambda}_{(1)}=\exp\Big( \sum\limits_{k=1}^{\infty}\, J_k(z) z^{-k d(w)}\, \alpha^{w}_{-k} \otimes \alpha^{w}_{k} \Big).
\ee
We note that $\alpha^{w}_{-k} \otimes \alpha^{w}_{k}$ acts by
$$
 K_{\bT}(\cM(m_1, r_1) \times \cM(m_2, r_2)) \longrightarrow K_{\bT}(\cM( m_1+ k d(w),r_1) \times \cM( m_2-k d(w),r_2))
$$
Thus, (\ref{omrr}) for the corresponding matrix element  we have:
$$
\hbar^{-\Omega} \alpha^{w}_{-k} \otimes \alpha^{w}_{k} \hbar^{\Omega} =\hbar^{\frac{k d(w) r_1 - k d(w) r_2}{2}}\, \alpha^{w}_{-k} \otimes \alpha^{w}_{k}.
$$

We note that $K_{(1,0)}$ acts on $\textsf{F}$ by the scalar $\hbar^{-1/2}$ and thus it acts on $\textsf{F}^{\otimes r}$ via $\Delta^r(K_{1,0})=K_{(1,0)}^{\otimes r}$, i.e., by the scalar $\hbar^{-r/2}$. In this view, we can write the last equation in universal form
$$
\hbar^{-\Omega} \alpha^{w}_{-k} \otimes \alpha^{w}_{k} \hbar^{\Omega} =K_{(1,0)}^{- k d(w)} \otimes K^{k d(w)}_{(1,0)}\,  \alpha^{w}_{-k} \otimes \alpha^{w}_{k}.
$$
We conclude
\be \label{conj2}
\hbar^{-\Omega} J^{-}_w(z) \hbar^{\Omega} = \exp\Big( \sum\limits_{k=1}^{\infty}\, J_k(z)\,K_{(1,0)}^{-k d(w)} \otimes K^{k d(w)}_{(1,0)}\, \alpha^{w}_{-k} \otimes \alpha^{w}_{k} \Big).
\ee
Substituting (\ref{conj1}) , (\ref{conj2}) and (\ref{rminbos}) to the ABRR equation (\ref{abrrtrans}) gives the linear system  for the coefficients $J_k(z)$:
$$
-n_k + J_{k}(z) z^{-k d(w)} =J_{k}(z) K^{-kd(w)}_{(1,0)} \otimes K^{k d(w)}_{(1,0)} 
$$
which gives 
$$
J^{-}_{w}(z)=\exp\Big(- \sum\limits_{k=1}^{\infty}\, \dfrac{n_k\,K_{(1,0)}^{{\rr kd(w)}}\otimes K_{(1,0)}^{{\rr -kd(w)}} }{1-z^{-k d(w)} K_{(1,0)}^{{\rr kd(w)}}\otimes K_{(1,0)}^{{\rr -kd(w)}}  }\, \alpha^{w}_{-k} \otimes \alpha^{w}_{k} \Big).
$$
}
\subsubsection{}
The shift $\lambda \rightarrow \lambda -\tau_w$ corresponds to substitution $z\rightarrow z  q^{-w}$.
Thus by definition (\ref{JJ}) we obtain:
$$
\textbf{J}^{-}_{w}(z)=\exp\Big(- \sum\limits_{k=1}^{\infty}\, \dfrac{n_k\,K^{k d(w)}_{(1,0)}\otimes K^{-k d(w)}_{(1,0)} }{1-z^{-k d(w)} q^{k n(w)} K^{k d(w)}_{(1,0)}\otimes K^{-k d(w)}_{(1,0)}  }\, \alpha^{w}_{-k} \otimes \alpha^{w}_{k} \Big)
$$
{\rr and 
$$
\textbf{J}^{-}_{w}(z)^{-1}=\exp\Big( \sum\limits_{k=1}^{\infty}\, \dfrac{n_k\,K^{k d(w)}_{(1,0)}\otimes K^{-k d(w)}_{(1,0)} }{1-z^{-k d(w)} q^{k n(w)} K^{k d(w)}_{(1,0)}\otimes K^{-k d(w)}_{(1,0)}  }\, \alpha^{w}_{-k} \otimes \alpha^{w}_{k} \Big).
$$}
\subsubsection{}
From Section \ref{heisenbdef} it is clear that the antipode of $\Uq(\fg_w)$ has the following form:
$$
S_w (\alpha^{w}_{k})=- K_{( 1,0)}^{-k d(w)} \alpha^{w}_{k}\\
$$
From this we obtain:
$$
\textbf{m}\big( 1\otimes S_{w} (  \textbf{J}^{-}_w(z)^{-1}   ) \big)=\, : \exp\Big(-\sum\limits_{k=1}^{\infty}\, \dfrac{n_k \,K_{( 1,0)}^{k d(w)}}{1-z^{-k d(w)}q^{k n(w)} K_{( 1,0)}^{2 k d(w)} }\,  \alpha^{w}_{-k} \alpha^{w}_{k} \Big) :
$$
The symbol $::$ stands for the normal ordering meaning that all ``annihilation''  operators $\alpha^{w}_{k}$ with $k>0$ act first.

\subsubsection{}
The Cartan matrix of the Jordan quiver is $C=0$  and therefore
$\kappa=(C \bv -\bw)/2 =-r/2 $. Thus the shift $\lambda \rightarrow \lambda +\kappa $ corresponds to
$z \to z \hbar^{-r/2} =z\, K_{(1,0)}$.
From (\ref{bdef}) we obtain:
$$
\textbf{B}_{w}(z)=: \exp\Big(- \sum\limits_{k=1}^{\infty}\, \dfrac{n_k \,K_{( 1,0)}^{k d(w)}}{1-z^{-k d(w)}q^{k n(w)} K_{( 1,0)}^{k d(w)} }\,  \alpha^{w}_{-k} \alpha^{w}_{k} \Big):
$$

\subsubsection{}
Let $\cL={\cal{O}}(1)$ be the generator of the Picard group.  Let $\nabla \subset \R$ be the alcove specified by Theorem \ref{mainth}. If $\slope \in \nabla$, then the interval ${\rr (\slope-\cL,\slope)}$ contains all walls $w\in \Q$ such that
$-1\leq w<0$. {\rr We assume that $Const$  in Theorem \ref{mainth} for the case of $\cM(n,r)$ is trivial for all values of $n$ and $r$.}\footnote{This expectation is in agreement with explicit computations of the capped vertex functions \cite{pcmilect} for the first several values of $n$ and $r$.}
Therefore, by definition (\ref{dMdef}) we obtain the following explicit formula for quantum difference operator:
\be
\label{qdeins}
\begin{array}{|c|}
	\hline\\
	\geomM_{{\cal{O}}(1)}(z)={\cal{O}}(1)\!\!\!\!\prod\limits_{{w \in \Q} \atop { -1\leq w<0 }}^{\leftarrow}\!\!\!:\!\exp\Big(- \sum\limits_{k=1}^{\infty} \dfrac{n_k \,\hbar^{- k r d(w)/2}}{1-z^{-k d(w)}q^{k n(w)} \hbar^{-k r d(w)/2} }\,  \alpha^{w}_{-k} \alpha^{w}_{k} \Big)\!\!:\\  
	\\
	\hline
\end{array} \ \ \ 
\ee
where we used that in the K-theory of instanton moduli space $\cM(m,r)$ the central element acts by the scalar
$K_{(1,0)}=\hbar^{-r/2}$.
\subsubsection{}
Let us consider some limits of the difference operator. First, for all terms in
in the previous formula $d(w)>0$ and $n(w)<0$. Thus we have:
$$
\lim\limits_{q\to 0}\, \geomM_{{\cal{O}}(1)}(z) =\lim\limits_{z\to 0}\, \geomM_{{\cal{O}}(1)}(z) = {\cal{O}}(1)
$$
Second, to compute the limit of $\geomM_{{\cal{O}}(1)}(z q^{-1})$ as $q\rightarrow \infty$ we note that for all terms in (\ref{qdeins}) $d(w)+n(w)\geq 0$. Moreover $d(w)+n(w)=0$ only for $w=-1$.
We conclude that:
$$
\lim\limits_{q\to \infty}\, \geomM_{{\cal{O}}(1)}(z q^{-1}) ={\cal{O}}(1) : \exp\Big(- \sum\limits_{k=1}^{\infty}\, \dfrac{n_k \,\hbar^{ -k r /2}}{1-z^{-k} \hbar^{-k r/2 } }\,  \alpha^{-1}_{-k} \alpha^{-1}_{k} \Big):
$$
{\rr From (\ref{glumatKth}) and ${\cal{O}}(1) \alpha^{w}_{k} {\cal{O}}(1) ^{-1}=\alpha^{w+1}_{k}$ we find a formula for the glue operator in this case:}
$$
\textbf{G}=: \exp\Big( -\sum\limits_{k=1}^{\infty}\, \dfrac{n_k \,\hbar^{ -k r /2}}{1-z^{-k} \hbar^{-k r/2 } }\,  \alpha^{0}_{-k} \alpha^{0}_{k} \Big):
$$
The action of ``horizontal'' Heisenberg algebra $\alpha^{0}_{k}$ on the $K$-theory is given by
(\ref{f1}). Using these formula glue operator can be easily computed explicitly.
\subsubsection{}
Let us consider the example of $X=\textrm{Hilb}^{2}(\C^2)$. The walls which contribute to (\ref{qdeins}) are $w=-1$ and $w=-1/2$. The quantum difference operator takes the form:
$$
\calA_{{{\cal{O}}}(1)}=  T^{-1}_z {{\cal{O}}}(1) \textbf{B}_{-1}(z)\textbf{B}_{-\frac{1}{2}}(z)
$$
Using the identity (\ref{dva}) we can also write it in the form:
$$
\calA_{{{\cal{O}}}(1)}=  \textbf{B}_{0}(z)\textbf{B}_{\frac{1}{2}}(z)  {{\cal{O}}}(1) T^{-1}_z
$$
which means that:
$$
\geomM_{{{\cal{O}}}(1)}(z q^{-1})=\textbf{B}_{0}(z)\textbf{B}_{\frac{1}{2}}(z)  {{\cal{O}}}(1)
$$
Similarly, for $X=\textrm{Hilb}^{3}(\C^2)$ we have:
$$
\geomM_{{{\cal{O}}}(1)}(z q^{-1})=\textbf{B}_{0}(z)\textbf{B}_{\frac{1}{3}}(z) \textbf{B}_{\frac{1}{2}}(z)\textbf{B}_{\frac{2}{3}}(z) {{\cal{O}}}(1)
$$

\subsubsection{}
{\rr The torus acting on $X$ is two-dimensional. The corresponding coordinates are $t_1$ and $t_2$. The framing torus does not act on $X$ since $r=1$. We consider the one-dimensional torus corresponding to $\ker(\hbar)$. The coordinate on this torus is given by $t_1/t_2$. 
For this torus let $\Stab_{\pm}(\lambda)$ be the stable envelope of a fixed point $\lambda$ with a slope from the anti-canonical alcove, chambers $(t_1/t_2)^{\pm}\to 0$ and the standard polarization. Up to a multiple, as the elements of the Fock space, $\Stab_{+}(\lambda)$  and $\Stab_{-}(\lambda)$ coincide with the so called plethystic Schur polynomials:
$$
s_{\lambda}\Big(\frac{p_1}{1-t_1}, \frac{p_2}{1-t_1^2},\dots\Big),\ \ \ 	s_{\lambda}\Big(\frac{p_1}{1-t_2^{-1}}, \frac{p_2}{1-t_2^{-2}},\dots\Big)
$$
respectively. Here $s_{\lambda}(p_1,p_2,\dots)$ denotes the standard Schur polynomial associated with a partition $\lambda$. See Proposition 3.3 in \cite{KononoOkounkov} for a proof. 

Using a computer we find the following explicit examples in the basis of plethystic Schur polynomials corresponding to the chamber $(t_1/t_2)\to 0$.\footnote{We use a Maple package, implemented by the second author, which computes the action of $\gt$ on the Fock space. 
The package is available from the author upon request. } If the basis of partitions of $2$ is ordered as $[1,1],[2]$ we compute:
} 

$$
\begin{array}{l}
\textbf{B}_{0}(z \hbar^{1/2} )={\dfrac {z-1}{ \left( {z}^{2}{t_{{1}}}^{2}{t_{{2}}}^{2}-1 \right)
		\left( zt_{{1}}t_{{2}}-1 \right) }}
\left[ \begin {array}{cc} {z}^{2}t_{{1}}t_{{2}}-1&- \left( t_{{1}}t_{
	{2}}-1 \right) z\\ \noalign{\medskip}- \left( t_{{1}}t_{{2}}-1
\right) z&{z}^{2}t_{{1}}t_{{2}}-1\end {array} \right]\\
\\
\\
\textbf{B}_{\frac{1}{2}}(z \hbar^{1/2} )=1+{\dfrac {{z}^{2} \left( t_{{1}}t_{{2}}-1 \right) }{{z}^{2}{t_{{1}}}^{2}
		{t_{{2}}}^{2}-q}}
\left[ \begin {array}{cc} -1&t_{{2}}\\ \noalign{\medskip}t_{{1}}&-t_{
	{1}}t_{{2}}\end {array} \right]\\
\\
\\
{{\cal{O}}}(1)=\left[ \begin {array}{cc} t_{{2}}&0\\ \noalign{\medskip}-t_{{1}}t_{{2
}}+1&t_{{1}}\end {array} \right]
\end{array}
$$
If the basis of partitions of $3$ is ordered as $[1,1,1],[2,1],[3]$ we compute:
\begin{small}
	
	$$
	\begin{array}{l}
	\textbf{B}_{0}(z \hbar^{1/2})={\frac { \left( z-1 \right)  \left( zt_{{1}}t_{{2}}+1 \right)  \left(
			t_{{1}}t_{{2}}-1 \right) }{ \left( {z}^{3}{t_{{1}}}^{3}{t_{{2}}}^{3}-1
			\right)  \left( {z}^{2}{t_{{1}}}^{2}{t_{{2}}}^{2}-1 \right)  \left( z
			t_{{1}}t_{{2}}-1 \right) }}
	\times \\
	\\
	\left[ \begin {array}{ccc} {\frac { \left( {z}^{2}t_{{1}}t_{{2}}-1
			\right)  \left( {z}^{3}{t_{{1}}}^{2}{t_{{2}}}^{2}-1 \right) }{
			\left( t_{{1}}t_{{2}}-1 \right)  \left( zt_{{1}}t_{{2}}+1 \right) }}&
	-z \left( {z}^{2}t_{{1}}t_{{2}}-1 \right) &{\frac {{z}^{2} \left( z{t_
				{{1}}}^{2}{t_{{2}}}^{2}-1 \right) }{zt_{{1}}t_{{2}}+1}}
	\\ \noalign{\medskip}-z \left( {z}^{2}t_{{1}}t_{{2}}-1 \right) &{
		\frac {{z}^{4}{t_{{1}}}^{2}{t_{{2}}}^{2}-{z}^{3}{t_{{1}}}^{2}{t_{{2}}}
			^{2}+{z}^{2}{t_{{1}}}^{2}{t_{{2}}}^{2}-2\,{z}^{2}t_{{1}}t_{{2}}+{z}^{2
			}-z+1}{t_{{1}}t_{{2}}-1}}&-z \left( {z}^{2}t_{{1}}t_{{2}}-1 \right)
	\\ \noalign{\medskip}{\frac {{z}^{2} \left( z{t_{{1}}}^{2}{t_{{2}}}^{2
			}-1 \right) }{zt_{{1}}t_{{2}}+1}}&-z \left( {z}^{2}t_{{1}}t_{{2}}-1
	\right) &{\frac { \left( {z}^{2}t_{{1}}t_{{2}}-1 \right)  \left( {z}^
			{3}{t_{{1}}}^{2}{t_{{2}}}^{2}-1 \right) }{ \left( t_{{1}}t_{{2}}-1
			\right)  \left( zt_{{1}}t_{{2}}+1 \right) }}\end {array} \right]
	\\
	\\
	\\
	\textbf{B}_{1/3}(z \hbar^{1/2})=1+{\frac {{z}^{3} \left( t_{{1}}t_{{2}}-1 \right) }{{z}^{3}{t_{{1}}}^{3}
			{t_{{2}}}^{3}-q}}
	\left[ \begin {array}{ccc} -1&t_{{2}}&-{t_{{2}}}^{2}
	\\ \noalign{\medskip}t_{{1}}&-t_{{1}}t_{{2}}&t_{{1}}{t_{{2}}}^{2}
	\\ \noalign{\medskip}-{t_{{1}}}^{2}&{t_{{1}}}^{2}t_{{2}}&-{t_{{2}}}^{2
	}{t_{{1}}}^{2}\end {array} \right]
	\\
	\\
	\\
	\textbf{B}_{1/2}(z \hbar^{1/2})=1+{\frac {{z}^{2}t_{{1}}t_{{2}} \left( t_{{1}}t_{{2}}-1 \right) }{{z}^{2
			}{t_{{1}}}^{2}{t_{{2}}}^{2}-q}}\times\\
	\\
	\left[ \begin {array}{ccc} -{\frac {t_{{1}}t_{{2}}+t_{{1}}-1}{{t_{{1}
			}}^{2}t_{{2}}}}&{\frac {t_{{1}}t_{{2}}-1}{{t_{{1}}}^{2}}}&{\frac {t_{{
					2}}}{{t_{{1}}}^{2}}}\\ \noalign{\medskip}{\frac { \left( t_{{1}}t_{{2}
			}-1 \right)  \left( t_{{1}}t_{{2}}+t_{{1}}-1 \right) }{{t_{{2}}}^{2}{t
				_{{1}}}^{2}}}&-{\frac { \left( t_{{1}}t_{{2}}-1 \right) ^{2}}{{t_{{1}}
			}^{2}t_{{2}}}}&-{\frac {t_{{1}}t_{{2}}-1}{{t_{{1}}}^{2}}}
	\\ \noalign{\medskip}-{\frac { \left( t_{{1}}t_{{2}}+t_{{1}}-1
			\right)  \left( t_{{1}}t_{{2}}-t_{{1}}-1 \right) }{t_{{1}}{t_{{2}}}^{
				2}}}&{\frac { \left( t_{{1}}t_{{2}}-1 \right)  \left( t_{{1}}t_{{2}}-t
			_{{1}}-1 \right) }{t_{{1}}t_{{2}}}}&{\frac {t_{{1}}t_{{2}}-t_{{1}}-1}{
			t_{{1}}}}\end {array} \right]
	\\
	\\
	\\
	\textbf{B}_{2/3}(z \hbar^{1/2})=1+ {\frac {{z}^{3} \left( t_{{1}}t_{{2}}-1 \right) t_{{1}}t_{{2}}}{{z}^{3
			}{t_{{1}}}^{3}{t_{{2}}}^{3}-{q}^{2}}}\times \\
	\\
	\left[ \begin {array}{ccc} {\frac {t_{{1}}{t_{{2}}}^{2}-t_{{1}}-t_{{2
		}}}{{t_{{1}}}^{2}t_{{2}}}}&-{\frac {t_{{2}} \left( t_{{1}}t_{{2}}-t_{{
					1}}-1 \right) }{{t_{{1}}}^{2}}}&-{\frac {{t_{{2}}}^{2}}{{t_{{1}}}^{2}}
	}\\ \noalign{\medskip}-{\frac { \left( t_{{1}}t_{{2}}+t_{{1}}-1
			\right)  \left( t_{{1}}{t_{{2}}}^{2}-t_{{1}}-t_{{2}} \right) }{{t_{{2
			}}}^{2}{t_{{1}}}^{2}}}&{\frac { \left( t_{{1}}t_{{2}}+t_{{1}}-1
			\right)  \left( t_{{1}}t_{{2}}-t_{{1}}-1 \right) }{{t_{{1}}}^{2}}}&{
		\frac {t_{{2}} \left( t_{{1}}t_{{2}}+t_{{1}}-1 \right) }{{t_{{1}}}^{2}
	}}\\ \noalign{\medskip}{\frac { \left( {t_{{1}}}^{2}+t_{{1}}t_{{2}}-1
			\right)  \left( t_{{1}}{t_{{2}}}^{2}-t_{{1}}-t_{{2}} \right) }{t_{{1}
			}{t_{{2}}}^{2}}}&-{\frac { \left( t_{{1}}t_{{2}}-t_{{1}}-1 \right)
			\left( {t_{{1}}}^{2}+t_{{1}}t_{{2}}-1 \right) }{t_{{1}}}}&-{\frac {t_
			{{2}} \left( {t_{{1}}}^{2}+t_{{1}}t_{{2}}-1 \right) }{t_{{1}}}}
	\end {array} \right]\\
	\\
	\\
	{{\cal{O}}}(1)=\left[ \begin {array}{ccc} {t_{{2}}}^{3}&0&0\\ \noalign{\medskip}-
	\left( t_{{1}}t_{{2}}-1 \right)  \left( t_{{2}}+1 \right) t_{{2}}&t_{
		{1}}t_{{2}}&0\\ \noalign{\medskip} \left( t_{{1}}t_{{2}}-1 \right)
	\left( t_{{2}}{t_{{1}}}^{2}+{t_{{2}}}^{2}t_{{1}}-1 \right) &- \left(
	t_{{1}}t_{{2}}-1 \right)  \left( t_{{1}}+1 \right) t_{{1}}&{t_{{1}}}^{
		3}\end {array} \right]
	\end{array}
	$$
	
\end{small}
{\rr
\subsubsection{}
The operators $\textbf{B}_{w}(z)$ have remarkable symmetries and applications which are far from obvious. The explicit formulas for matrices of $\textbf{B}_{w}(z)$ simplify drastically if computed in the ``mixed" stable basis.
Let us denote by $\overset{\curvearrowright}{\textbf{B}}_{w}(z)$  the matrix of the operator $\textbf{B}_{w}(z)$ in the  mixed stable basis: the input is the stable basis before a wall~$w$, 
\be 
s^{w-\epsilon}_{\lambda}:=\Stab_{+, w-\epsilon}(\lambda)
\ee
and the output in the stable basis after $w$: 
\be \label{staft}
s^{w+\epsilon}_{\lambda}:=\Stab_{+,w+\epsilon}(\lambda).
\ee
for small enough $\epsilon$.
Explicitly, we have
\be \label{mixedbasdef}
\textbf{B}_{w}(z)(  s^{ w-\epsilon}_{\lambda} )  = \sum\limits_{\mu} \, \overset{\curvearrowright}{\textbf{B}}_{w}(z)_{\mu,\lambda} \, s^{ w+\epsilon}_{\mu}
\ee
One can show that the  matrix elements of $\overset{\curvearrowright}{\textbf{B}}_{w}(z)$ only depend on
parameters $z$ and $\hbar=t_1 t_2$ but are independent on the equivariant parameter $a=t_1/t_2$. Moreover, the matrix $\overset{\curvearrowright}{\textbf{B}}_{w}(z)$ coincides with the $K$-theoretic $R$-matrix of the {\it cyclic quiver variety} with $d(w)$ vertices.  This variety appears as a subvariety in the ``symplectic dual'' Hilbert scheme. Both $z$ and $\hbar$ play a role of equivariant parameters of a certain torus acting on the dual side. We refer to Theorem 12 in \cite{SmirnovQDE2} for a proof. The examples of the corresponding K-theoretic $R$-matrices for cyclic quiver varieties can be found in Appendix $D$ of \cite{SmirnovQDE2}.

We note also that the operator 
$$
\textbf{B}_{w}:=\lim_{z\to \infty} \textbf{B}_{w}(z)
$$
describes the monodromy of the quantum differential equation for $\textrm{Hilb}^{n}(\matC^2)$ \cite{OP}, around a loop containing the singularity $z_w=\exp(2 \pi i w)$. This means, in particular, that the operators $\textbf{B}_{w}$ provide a representation of the fundamental group
$$
\pi_1(\mathbb{P}^1 \setminus \{\textrm{singularities of qde for }  \textrm{Hilb}^{n}(\matC^2)\},0 )
$$
in the Fock space. We refer to Theorem 17 of \cite{SmirnovQDE2} for details a proof. A categorical version of these results is a topic of ongoing research \cite{BO}. }

\bibliographystyle{abbrv}

\bibliography{bib}

\def\cprime{$'$} \def\cprime{$'$}
\begin{thebibliography}{10}

\bibitem{AgannagicOkounkovBethe}
M.~Aganagic and A.~Okounkov.
\newblock Quasimap counts and {B}ethe eigenfunctions.
\newblock {\em Mosc. Math. J.}, 17(4):565--600, 2017.

\bibitem{AO}
M.~Aganagic and A.~Okounkov.
\newblock Elliptic stable envelopes.
\newblock {\em J. Amer. Math. Soc.}, 34(1):79--133, 2021.

\bibitem{abrr}
D.~Arnaudon, E.~Buffenoir, E.~Ragoucy, and P.~Roche.
\newblock Universal solutions of quantum dynamical {Y}ang-{B}axter equations.
\newblock {\em Lett. Math. Phys.}, 44(3):201--214, 1998.

\bibitem{Bal}
M.~Balagovi{\'c}.
\newblock Degeneration of trigonometric dynamical difference equations for
  quantum loop algebras to trigonometric {C}asimir equations for {Y}angians.
\newblock {\em Comm. Math. Phys.}, 334(2):629--659, 2015.

\bibitem{BF}
R.~Bezrukavnikov and M.~Finkelberg.
\newblock Wreath {M}acdonald polynomials and the categorical {M}c{K}ay
  correspondence.
\newblock {\em Camb. J. Math.}, 2(2):163--190, 2014.
\newblock With an appendix by Vadim Vologodsky.

\bibitem{BK}
R.~Bezrukavnikov and D.~Kaledin.
\newblock Fedosov quantization in positive characteristic.
\newblock {\em J. Amer. Math. Soc.}, 21(2):409--438, 2008.

\bibitem{BL}
R.~Bezrukavnikov and I.~Losev.
\newblock Etingof's conjecture for quantized quiver varieties.
\newblock {\em Invent. Math.}, 223(3):1097--1226, 2021.

\bibitem{BM}
R.~Bezrukavnikov and I.~Mirkovi{\'c}.
\newblock Representations of semisimple {L}ie algebras in prime characteristic
  and the noncommutative {S}pringer resolution.
\newblock {\em Ann. of Math. (2)}, 178(3):835--919, 2013.

\bibitem{BO}
R.~Bezrukavnikov and A.~Okounkov.
\newblock In preparation.

\bibitem{BMO}
A.~Braverman, D.~Maulik, and A.~Okounkov.
\newblock Quantum cohomology of the {S}pringer resolution.
\newblock {\em Adv. Math.}, 227(1):421--458, 2011.

\bibitem{quasimaps}
I.~t. Ciocan-Fontanine, B.~Kim, and D.~Maulik.
\newblock Stable quasimaps to {GIT} quotients.
\newblock {\em J. Geom. Phys.}, 75:17--47, 2014.

\bibitem{CK}
D.~A. Cox and S.~Katz.
\newblock {\em Mirror symmetry and algebraic geometry}, volume~68 of {\em
  Mathematical Surveys and Monographs}.
\newblock American Mathematical Society, Providence, RI, 1999.

\bibitem{Dinkins3Dmir}
H.~{Dinkins}.
\newblock {3d mirror symmetry of the cotangent bundle of the full flag
  variety}.
\newblock {\em arXiv e-prints}, page arXiv:2011.08603, Nov. 2020.

\bibitem{DinkinsAffineQuivers}
H.~{Dinkins}.
\newblock {Elliptic stable envelopes of affine type $A$ quiver varieties}.
\newblock {\em arXiv e-prints}, page arXiv:2107.09569, July 2021.

\bibitem{PE}
P.~Etingof.
\newblock Symplectic reflection algebras and affine {L}ie algebras.
\newblock {\em Mosc. Math. J.}, 12(3):543--565, 668--669, 2012.

\bibitem{ESS}
P.~Etingof, T.~Schedler, and O.~Schiffmann.
\newblock Explicit quantization of dynamical {$r$}-matrices for finite
  dimensional semisimple {L}ie algebras.
\newblock {\em J. Amer. Math. Soc.}, 13(3):595--609 (electronic), 2000.

\bibitem{EV}
P.~Etingof and A.~Varchenko.
\newblock Dynamical {W}eyl groups and applications.
\newblock {\em Adv. Math.}, 167(1):74--127, 2002.

\bibitem{FR}
I.~B. Frenkel and N.~{\relax Yu}. Reshetikhin.
\newblock {Quantum affine algebras and holonomic difference equations}.
\newblock {\em Commun. Math. Phys.}, 146:1--60, 1992.

\bibitem{GL}
A.~Givental and Y.-P. Lee.
\newblock Quantum {$K$}-theory on flag manifolds, finite-difference {T}oda
  lattices and quantum groups.
\newblock {\em Invent. Math.}, 151(1):193--219, 2003.

\bibitem{GV}
A.~Givental and V.~Tonita.
\newblock The {H}irzebruch-{R}iemann-{R}och theorem in true genus-0 quantum
  {K}-theory.
\newblock In {\em Symplectic, {P}oisson, and noncommutative geometry},
  volume~62 of {\em Math. Sci. Res. Inst. Publ.}, pages 43--91. Cambridge Univ.
  Press, New York, 2014.

\bibitem{Haim}
M.~Haiman.
\newblock Notes on {M}acdonald polynomials and the geometry of {H}ilbert
  schemes.
\newblock In {\em Symmetric functions 2001: surveys of developments and
  perspectives}, volume~74 of {\em NATO Sci. Ser. II Math. Phys. Chem.}, pages
  1--64. Kluwer Acad. Publ., Dordrecht, 2002.

\bibitem{Dkad}
D.~Kaledin.
\newblock Derived equivalences by quantization.
\newblock {\em Geom. Funct. Anal.}, 17(6):1968--2004, 2008.

\bibitem{KhorTol}
S.~M. Khoroshkin and V.~N. Tolstoy.
\newblock Universal {$R$}-matrix for quantized (super)algebras.
\newblock {\em Comm. Math. Phys.}, 141(3):599--617, 1991.

\bibitem{KononoOkounkov}
Y.~Kononov, A.~Okounkov, and A.~Osinenko.
\newblock Correction to: {T}he 2-leg vertex in {K}-theoretic {DT} theory.
\newblock {\em Comm. Math. Phys.}, 388(2):1129, 2021.

\bibitem{SmirnovKononov1}
Y.~{Kononov} and A.~{Smirnov}.
\newblock {Pursuing quantum difference equations I: stable envelopes of
  subvarieties}.
\newblock {\em arXiv e-prints}, page arXiv:2004.07862, Apr. 2020.

\bibitem{SmirnovKononov2}
Y.~{Kononov} and A.~{Smirnov}.
\newblock {Pursuing quantum difference equations II: 3D-mirror symmetry}.
\newblock {\em arXiv e-prints}, page arXiv:2008.06309, Aug. 2020.

\bibitem{KPZS}
P.~Koroteev, P.~P. Pushkar, A.~V. Smirnov, and A.~M. Zeitlin.
\newblock Quantum {K}-theory of quiver varieties and many-body systems.
\newblock {\em Selecta Math. (N.S.)}, 27(5):Paper No. 87, 40, 2021.

\bibitem{KorZetIInstantons}
P.~{Koroteev} and A.~M. {Zeitlin}.
\newblock {3d Mirror Symmetry for Instanton Moduli Spaces}.
\newblock {\em arXiv e-prints}, page arXiv:2105.00588, May 2021.

\bibitem{GWquas1}
J.~Li.
\newblock Stable morphisms to singular schemes and relative stable morphisms.
\newblock {\em J. Differential Geom.}, 57(3):509--578, 2001.

\bibitem{GWquas2}
J.~Li.
\newblock A degeneration formula of {GW}-invariants.
\newblock {\em J. Differential Geom.}, 60(2):199--293, 2002.

\bibitem{DTquas}
J.~Li and B.~Wu.
\newblock Good degeneration of {Q}uot-schemes and coherent systems.
\newblock {\em Comm. Anal. Geom.}, 23(4):841--921, 2015.

\bibitem{Majid}
S.~Majid.
\newblock {\em Foundations of quantum group theory}.
\newblock Cambridge University Press, Cambridge, 1995.

\bibitem{MNOP1}
D.~Maulik, N.~Nekrasov, A.~Okounkov, and R.~Pandharipande.
\newblock Gromov-{W}itten theory and {D}onaldson-{T}homas theory. {I}.
\newblock {\em Compos. Math.}, 142(5):1263--1285, 2006.

\bibitem{MNOP2}
D.~Maulik, N.~Nekrasov, A.~Okounkov, and R.~Pandharipande.
\newblock Gromov-{W}itten theory and {D}onaldson-{T}homas theory. {II}.
\newblock {\em Compos. Math.}, 142(5):1286--1304, 2006.

\bibitem{ObloMal}
D.~Maulik and A.~Oblomkov.
\newblock Quantum cohomology of the {H}ilbert scheme of points on
  {$A_n$}-resolutions.
\newblock {\em J. Amer. Math. Soc.}, 22(4):1055--1091, 2009.

\bibitem{MOOP}
D.~Maulik, A.~Oblomkov, A.~Okounkov, and R.~Pandharipande.
\newblock Gromov-{W}itten/{D}onaldson-{T}homas correspondence for toric
  3-folds.
\newblock {\em Invent. Math.}, 186(2):435--479, 2011.

\bibitem{MO}
D.~Maulik and A.~Okounkov.
\newblock Quantum groups and quantum cohomology.
\newblock {\em Ast\'{e}risque}, (408):ix+209, 2019.

\bibitem{McgNev}
K.~McGerty and T.~Nevins.
\newblock Kirwan surjectivity for quiver varieties.
\newblock {\em Inventiones mathematicae}, 212(1), Nov 2017.

\bibitem{NakALE}
H.~Nakajima.
\newblock Instantons on {ALE} spaces, quiver varieties, and {K}ac-{M}oody
  algebras.
\newblock {\em Duke Math. J.}, 76(2):365--416, 1994.

\bibitem{NakQv}
H.~Nakajima.
\newblock Quiver varieties and {K}ac-{M}oody algebras.
\newblock {\em Duke Math. J.}, 91(3):515--560, 1998.

\bibitem{Nakfd}
H.~Nakajima.
\newblock Quiver varieties and finite-dimensional representations of quantum
  affine algebras.
\newblock {\em J. Amer. Math. Soc.}, 14(1):145--238, 2001.

\bibitem{NegRmatrix}
A.~{Negu{\c{t}}}.
\newblock {The R-matrix of the quantum toroidal algebra}.
\newblock {\em arXiv e-prints}, page arXiv:2005.14182, May 2020.

\bibitem{Neg}
A.~Negut.
\newblock {\em Quantum {A}lgebras and {C}yclic {Q}uiver {V}arieties}.
\newblock ProQuest LLC, Ann Arbor, MI, 2015.
\newblock Thesis (Ph.D.)--Columbia University.

\bibitem{NO}
N.~Nekrasov and A.~Okounkov.
\newblock Membranes and sheaves.
\newblock {\em Algebr. Geom.}, 3(3):320--369, 2016.

\bibitem{NS2}
N.~A. Nekrasov and S.~L. Shatashvili.
\newblock {Quantum integrability and supersymmetric vacua}.
\newblock {\em Prog. Theor. Phys. Suppl.}, 177:105--119, 2009.

\bibitem{NS1}
N.~A. Nekrasov and S.~L. Shatashvili.
\newblock Supersymmetric vacua and {B}ethe ansatz.
\newblock {\em Nuclear Phys. B Proc. Suppl.}, 192/193:91--112, 2009.

\bibitem{pcmilect}
A.~Okounkov.
\newblock Lectures on {K}-theoretic computations in enumerative geometry.
\newblock In {\em Geometry of moduli spaces and representation theory},
  volume~24 of {\em IAS/Park City Math. Ser.}, pages 251--380. Amer. Math.
  Soc., Providence, RI, 2017.

\bibitem{OkounkovNonabelian}
A.~{Okounkov}.
\newblock {Nonabelian stable envelopes, vertex functions with descendents, and
  integral solutions of $q$-difference equations}.
\newblock {\em arXiv e-prints}, page arXiv:2010.13217, Oct. 2020.

\bibitem{OkounkovInductive}
A.~Okounkov.
\newblock Inductive construction of stable envelopes.
\newblock {\em Lett. Math. Phys.}, 111(6):Paper No. 141, 56, 2021.

\bibitem{OP}
A.~Okounkov and R.~Pandharipande.
\newblock The quantum differential equation of the {H}ilbert scheme of points
  in the plane.
\newblock {\em Transform. Groups}, 15(4):965--982, 2010.

\bibitem{PP1}
R.~Pandharipande and A.~Pixton.
\newblock Gromov-witten/pairs correspondence for the quintic 3-fold.
\newblock {\em Journal of the American Mathematical Society}, 30:389--449,
  2012.

\bibitem{PushkarZeitlinBaxter}
P.~P. Pushkar, A.~V. Smirnov, and A.~M. Zeitlin.
\newblock Baxter {$Q$}-operator from quantum {$K$}-theory.
\newblock {\em Adv. Math.}, 360:106919, 63, 2020.

\bibitem{Resh}
N.~Y. Reshetikhin.
\newblock Quasitriangular {H}opf algebras and invariants of links.
\newblock {\em Algebra i Analiz}, 1(2):169--188, 1989.

\bibitem{FRT}
N.~Y. Reshetikhin, L.~A. Takhtadzhyan, and L.~D. Faddeev.
\newblock Quantization of {L}ie groups and {L}ie algebras.
\newblock {\em Algebra i Analiz}, 1(1):178--206, 1989.

\bibitem{RimanyiRozansky}
R.~{Rimanyi} and L.~{Rozansky}.
\newblock {New quiver-like varieties and Lie superalgebras}.
\newblock {\em arXiv e-prints}, page arXiv:2105.11499, May 2021.

\bibitem{RSVZj}
R.~Rim\'{a}nyi, A.~Smirnov, A.~Varchenko, and Z.~Zhou.
\newblock Three-dimensional mirror self-symmetry of the cotangent bundle of the
  full flag variety.
\newblock {\em SIGMA Symmetry Integrability Geom. Methods Appl.}, 15:Paper No.
  093, 22, 2019.

\bibitem{RimanyiWeber}
R.~Rim\'{a}nyi and A.~Weber.
\newblock Elliptic classes on {L}anglands dual flag varieties.
\newblock {\em Commun. Contemp. Math.}, 24(1):Paper No. 2150014, 15, 2022.

\bibitem{RSVZid}
R.~Rimányi, A.~Smirnov, A.~Varchenko, and Z.~Zhou.
\newblock {Three-Dimensional Mirror Symmetry and Elliptic Stable Envelopes}.
\newblock {\em International Mathematics Research Notices}, 02 2021.
\newblock rnaa389.

\bibitem{SchifVas}
O.~Schiffmann and E.~Vasserot.
\newblock The elliptic {H}all algebra and the {$K$}-theory of the {H}ilbert
  scheme of {$\Bbb A^2$}.
\newblock {\em Duke Math. J.}, 162(2):279--366, 2013.

\bibitem{RimanyiShouBowVarieties}
Y.~Shou.
\newblock {\em Bow {V}arieties---{G}eometry, {C}ombinatorics, {C}haracteristic
  {C}lasses}.
\newblock ProQuest LLC, Ann Arbor, MI, 2021.
\newblock Thesis (Ph.D.)--The University of North Carolina at Chapel Hill.

\bibitem{InstR}
A.~Smirnov.
\newblock {On the Instanton R-matrix}.
\newblock {\em Comm. In Math. Phys.}, 345:703–740, 2016.

\bibitem{SmirnovEllipticHilbert}
A.~Smirnov.
\newblock Elliptic stable envelope for {H}ilbert scheme of points in the plane.
\newblock {\em Selecta Math. (N.S.)}, 26(1):Paper No. 3, 57, 2020.

\bibitem{SmirnovQDE2}
A.~{Smirnov}.
\newblock {Quantum differential and difference equations for
  $\mathrm{Hilb}^{n}(\mathbb{C}^2)$}.
\newblock {\em arXiv e-prints}, page arXiv:2102.10726, Feb. 2021.

\bibitem{SmirnovZhuHypertoric}
A.~Smirnov and Z.~Zhou.
\newblock 3d mirror symmetry and quantum {$K$}-theory of hypertoric varieties.
\newblock {\em Adv. Math.}, 395:Paper No. 108081, 61, 2022.

\bibitem{TV3}
V.~Tarasov and A.~Varchenko.
\newblock Difference equations compatible with trigonometric {KZ} differential
  equations.
\newblock {\em Internat. Math. Res. Notices}, (15):801--829, 2000.

\bibitem{TV1}
V.~Tarasov and A.~Varchenko.
\newblock Duality for {K}nizhnik-{Z}amolodchikov and dynamical equations.
\newblock {\em Acta Appl. Math.}, 73(1-2):141--154, 2002.
\newblock The 2000 Twente Conference on Lie Groups (Enschede).

\bibitem{TV2}
V.~Tarasov and A.~Varchenko.
\newblock Dynamical differential equations compatible with rational {$qKZ$}
  equations.
\newblock {\em Lett. Math. Phys.}, 71(2):101--108, 2005.

\bibitem{Tol1}
V.~Toledano~Laredo.
\newblock The trigonometric {C}asimir connection of a simple {L}ie algebra.
\newblock {\em J. Algebra}, 329:286--327, 2011.

\bibitem{Varan}
M.~Varagnolo.
\newblock Quiver varieties and {Y}angians.
\newblock {\em Lett. Math. Phys.}, 53(4):273--283, 2000.

\end{thebibliography}

\newpage

\noindent
Andrei Okounkov\\
Department of Mathematics, Columbia University\\
New York, NY 10027, U.S.A.\\

\vspace{-12 pt}

\noindent
Institute for Problems of Information Transmission\\
Bolshoy Karetny 19, Moscow 127994, Russia\\

\vspace{-12 pt}

\noindent
Laboratory of Representation
Theory and Mathematical Physics \\
Higher School of Economics \\
Myasnitskaya 20, Moscow 101000, Russia

\vspace{12 mm}

\noindent
Andrey Smirnov\\
Department of Mathematics, University of North Carolina\\
Chapel Hill, NC, 27599 U.S.A.\\

\vspace{-12 pt}

\noindent
Steklov Mathematical Institute of Russian Academy of Sciences\\
Moscow, Gubkina 8, Russia, 117966.

\end{document}